\numberwithin{equation}{section}
\newtheorem{thm}{Theorem}[section]
\newtheorem{assume}{A\!}
\newtheorem{Bassume}{B\!}
\newtheorem{prop}[thm]{Proposition}
\newtheorem{lemma}[thm]{Lemma}
\newtheorem{remark}[thm]{Remark}
\newenvironment{proof32}
{\begin{proof}[Proof of Theorem 3.2]}
	{\end{proof}}
\newenvironment{proof73}
{\begin{proof}[Proof of Theorem 7.3]}
	{\end{proof}}
\let\origthanks\thanks
\renewcommand\thanks[1]{\begingroup\let\rlap\relax\origthanks{#1}\endgroup}
\newcommand{\blue}[1]{{ \color{blue}{{#1}}}}  
\definecolor{green}{RGB}{40, 100, 0}
\begin{document}

\author{Hyungbin Park\thanks{Corresponding author. Department of Mathematical Sciences and RIMS, Seoul National University, 1, Gwanak-ro, Gwanak-gu, Seoul, Republic of Korea. Email: hyungbin@snu.ac.kr,  hyungbin2015@gmail.com} \ and \ Stephan Sturm\thanks{Department of Mathematical Sciences, Worcester Polytechnic Institute,
	100 Institute Rd, Worcester, MA, USA. Email: ssturm@wpi.edu} 
}
\title{A sensitivity analysis of the long-term expected utility\\ of optimal portfolios}

\maketitle	
 
\abstract{ 
This paper discusses the sensitivity of the long-term expected utility of optimal portfolios for an investor with constant relative risk aversion. Under an incomplete market given by a factor model, we consider the utility maximization problem with long-time horizon. The main purpose is to find the long-term sensitivity, that is, the extent how much the optimal expected utility is affected in the long run for small changes of the underlying factor model.  The factor model induces a specific eigenpair of an operator, and this eigenpair does not only characterize the long-term behavior of the  optimal expected utility but also provides an explicit representation of the expected utility on a finite time horizon. We conclude that this eigenpair therefore determines the long-term sensitivity. As examples, explicit results for several market models such as the Kim--Omberg model for stochastic excess returns and the Heston stochastic volatility model are presented.}

\setlength{\parskip}{6pt}
 
\noindent \textbf{2010 Mathematics Subject Classification}: 91G10, 93E20, 49L20, 60J60

\setlength{\parskip}{6pt}

\noindent\textbf{JEL Classification}: G11, C61

\setlength{\parskip}{6pt}

\noindent\textbf{Keywords}: Portfolio optimization, sensitivity analysis, spectral analysis, ergodic Hamilton--Jacobi--Bellman equation, Hansen--Scheinkman decomposition  

\setlength{\parskip}{6pt}

\section{Introduction}
\label{sec:intro}

Finding an optimal investment strategy is an important topic in mathematical finance. There are several ways to formulate the optimal investment problem and one of the commonly accepted formulations is the use of utility function. An agent wants to maximize the expectation of the utility $U$ by trading assets  in a market. This paper also concerns this formulation of optimal expected utility, that is,
\begin{equation}
\label{eqn:max_U}
	\sup_{\Pi\in\mathcal{X}}\mathbb{E}^\mathbb{P}\bigl[U(\Pi_T)\bigr]
\end{equation}
for $\mathcal{X}$ the family of  wealth processes of admissible portfolios.

The analysis of this problem depends on the market completeness/incompleteness. The complete market case is relatively easy to find the optimal expected utility (see Section \ref{eqn:complete_markets}), whereas the incomplete market case is more complicated and requires advanced techniques. This paper deals with an incomplete market modeled by a factor model. Such factor models are widely used in the quantitative finance literature. In the following we provide first an overview of the topic of the paper, review the relevant literature and present the relative straightforward case of a complete market given by one-dimensional diffusion model.

\subsection{Overview}

The main purpose of this paper is to develop a sensitivity analysis of the long-term optimal expected utility. We consider two kinds of sensitivities. The first is the sensitivity with respect to the initial factor, e.g., the current spot volatility if the factor process is modeling the evolution of the volatility. For the initial value $\chi=X_0$ of the factor process, we study the behavior of
\[
	\frac{\partial}{\partial\chi}\sup_{\Pi\in\mathcal{X}}\mathbb{E}^\mathbb{P}\bigl[U(\Pi_T)\bigr]
\]
for large $T.$ The second is the sensitivity with respect to a change in the drift or volatility function, e.g., reversion speed, mean reversion level and volatility of volatility for a mean-reverting volatility process. Let $\epsilon$ be a perturbation parameter and consider a perturbed asset price  $S^\epsilon$ with $S=S^0.$ Denote by $\mathcal{X}^\epsilon$ the family of  wealth processes of admissible portfolios with the perturbed asset model $S^\epsilon.$ The precise meanings of $S^\epsilon$ and $\mathcal{X}^\epsilon$ are discussed in Section \ref{sec:delta} and \ref{sec:para_perturb}. For the long-term sensitivity, we are interested in the behavior of 
\[
	\frac{\partial}{\partial\epsilon}\Big\vert_{\epsilon=0}\sup_{\Pi\in\mathcal{X}^\epsilon}\mathbb{E}^\mathbb{P} \bigl[U(\Pi_T)\bigr]
\]
for large $T.$

To achieve this, we combine several techniques: the duality approach (\cite{kramkov1999asymptotic}), the dynamic programming principle, the ergodic Hamilton--Jacobi--Bellman (HJB) equation (\cite{knispel2012asymptotics}), the Hansen--Scheinkman decomposition (\cite{hansen2009long}, \cite{qin2016positive})
and results on sensitivities for long-term cash flows (\cite{park2015sensitivity}). The asymptotic behavior of the sensitivities of \eqref{eqn:max_U} can be characterized by a solution pair $(\lambda,\phi)$ of an ergodic HJB equation. Theorem \ref{thm:decompose} provides an exact representation of the optimal expected utility on a finite time horizon in terms of the asymptotic parameters $(\lambda,\phi)$ with a multiplicative error term. Besides being the main tool for the derivation of the results for the sensitivities, we believe this result is of interest on its own and might be of use for further analysis.  
A precise formulation of the results and a detailed discussion on how the mentioned techniques can be brought together to achieve these results will be given in Section \ref{sec:heuristic}.

To make the objective of this paper clear and discern the problem at hand from similar problems, let us make the formulation we study precise: We consider the problem	
\[
	\lim_{T\rightarrow \infty} \frac{1}{T}\frac{\partial}{\partial\epsilon}\Big\vert_{\epsilon=0}\ln\Bigl\vert \sup_{\Pi\in\mathcal{X}^\epsilon}\mathbb{E}^\mathbb{P} \bigl[U(\Pi_T)\bigr]\Bigr\vert
\]
for an investor with constant relative risk aversion larger than $1$, i.e. utility function $U(x)=\frac{x^p}{p}$, $p<0$. Specifically, we calculate the normalized asymptotic  behavior of the derivative on a log scale. This is different from the problem to optimize the long-term growth rate, where one optimizes over the normalized optimal growth rate on a logarithmic scale and then analyzes its sensitivity. While both questions are economically meaningful, we focus in the current paper on the first type of sensitivity.

\subsection{Related literature}

Many authors have worked on the optimal long-term investment problem: \cite{fleming1995risk} solve the optimization problem of the long-term growth of expected utility for an investor with constant relative risk aversion by reformulating it as an infinite-time horizon risk-sensitive control problem. \cite{guasoni2012portfolios} develops a method to derive optimal portfolios explicitly in a general diffusion model of incomplete markets for an investor with power utility. \cite{liu2013portfolio} explain how to compute optimal portfolios using stochastic control and  convex duality. Special emphasis is placed on long-horizon asymptotics that lead to particularly tractable results. \cite{robertson2015large} study the large time behavior of solutions to semi-linear Cauchy problems with quadratic gradients. Their analysis has direct applications to risk-sensitive control and long-term portfolio choice problems.

Sensitivity analysis of optimal investment for fixed time horizon has also attracted many authors: \cite{kramkov2006sensitivity} conduct a sensitivity analysis of the optimal expected utility with respect to a small change in initial capital or in a portfolio constraint. \cite{larsen2007stability} investigate the stability of utility-maximization in complete and incomplete markets under small perturbations. They identify the topologies on the parameter process space and the solution space under which utility-maximization is a continuous operation. \cite{backhoff2014some} conduct a first order sensitivity analysis of some parameterized stochastic optimal control problems. Their main tool is the one-to-one correspondence between the adjoint states appearing in a weak form of the stochastic Pontryagin principle and the Lagrange multipliers associated to the state equation. \cite{larsen2014expansion} study the first-order approximation for the power investor's value function and its second-order error is quantified in the framework of an incomplete financial market.
\cite{mostovyi2017sensitivity} investigate the sensitivity of the optimal expected utility in a continuous semimartingale market with respect to small changes in the market price of risk. For a general utility function, they derive a second-order expansion of the value function, a 	first-order approximation of the terminal wealth, and construct trading strategies. 
\cite{mostovyi2018asymptotic} develops a sensitivity analysis for the expected utility maximization problem with respect to small perturbations in the numeraire in an incomplete market model, where under an appropriate numeraire the stock price process is driven by a sigma-bounded semimartingale. The author also establishes a second-order expansion of the value function and a first-order approximation
of the terminal wealth. \cite{monin2014optimal} explore ``portfolio Greeks," which measure the sensitivities of an investor's optimal wealth to changes in cumulative excess
stock return, time, and other market parameters. \cite{veraguas2015sensitivity} study the issue of sensitivity with respect to model parameters for the problem of utility maximization from final wealth in an incomplete Samuelson model for utility functions of positive-power type by reformulating  the maximization problem in terms of a convex-analytical support function of a weakly-compact set.

This paper is closely related to and builds upon \cite{park2015sensitivity} who  investigates the long-term sensitivity of the expectation 
\[
	\mathbb{E}\Bigl[e^{-\alpha \int_0^t\theta^2(S_u)\,du}\Bigr]
\]
for a  perturbation of the underlying stochastic process $S.$ In   complete markets, we can use the result of his paper because the optimal expected utility can be expressed by an expectation of this form as we will see in Section \ref{eqn:complete_markets}. In incomplete markets, however, the optimal expected utility cannot be expressed in the above form,
thus one cannot rely on  his   result. We have to use in the current paper more advanced and complicated techniques to tackle the case of incomplete market models.

The current paper is structured as follows.  In the remainder of Section \ref{sec:intro} we discuss the case of a complete market model as a warm up. Section \ref{sec:model_setup} provides the model set up and specifies the market model and the optimization problem. The main idea of this paper is presented in Section \ref{sec:heuristic} using a heuristic argument. In Section \ref{sec:examples}, we display two examples: the Kim--Omberg model and the Heston model. The dual formulation of the utility maximization problem and the Hansen--Scheinkman decomposition are discussed in Section \ref{sec:utility_max} to provide rigorous results in the following: Sensitivity with respect to the initial factor is studied in Section \ref{sec:delta}, and those with respect to the drift and volatility are presented in Section \ref{sec:drift_vol}. Section \ref{sec:conclusion} summarizes the results of this paper. Proofs and detailed calculations are given in appendices.

\subsection{Complete markets}
\label{eqn:complete_markets}

As a warm up, this section discusses the long-term sensitivity of the optimal expected utility in a complete market as this follows easily from \cite{park2015sensitivity}. He investigates the long-term sensitivity of the expectation 
\[
	\mathbb{E}\Bigl[e^{-\alpha \int_0^t\theta^2(S_u)\,du}\Bigr]
\]
for a real number $\alpha,$ a continuous function $\theta$ and an underlying asset process $S$. We  show that the optimal expected utility in a complete market can be expressed as
this form of expectation, and so the results of \cite{park2015sensitivity} directly applied.

We consider the following market model: The price $S$ of a risky asset (e.g., stock) satisfies 
\[
	dS_t =b(S_t)\,dt+\varsigma(S_t)\,dW_t,\qquad S_0=s,
\]
with  $b$ and $\varsigma$ continuous functions, $\varsigma$ positive, such that this SDE has a unique non-explosive strong solution. Here, the process $W$ is a Brownian motion under the physical probability measure $\mathbb{P}.$ Without loss of generality we assume that the short interest rate is zero, so the market price of risk is
\[
	\theta_t:=\theta(S_t)=\frac{b(S_t)}{\varsigma(S_t)}.
\]
An investor with constant relative risk aversion $1-p$, $p<0$, aims to maximize the expected power utility at the terminal time
\begin{equation} \label{eqn:complete_U}
\mathcal{U}(\chi,T):=	\sup_{\Pi\in\mathcal{X}}\mathbb{E}^\mathbb{P}\bigl[U(\Pi_T)\bigr]=\frac{\,1\,}{p}\inf_{\Pi\in\mathcal{X}}\mathbb{E}^\mathbb{P}\bigl[\Pi_T^p\bigr].
\end{equation} 
By the homotheticity of power utility we can assume without loss of generality unit initial capital. Let $\mathbb{P}^*$ be the unique risk-neutral measure, and denote by $L_T$ the Radon--Nikod\'{y}m derivative on $\mathcal{F}_T,$ that is,
\[
	L_T=\frac{d\mathbb{P}^*}{d\mathbb{P}\,}\Big\vert_{\mathcal{F}_T}.
\]
It is known that the optimal investment portfolio  value is 
\[
	\hat{\Pi}_T=c_T(U')^{-1}(L_T)
\]
where $c_T$ is a constant determined by the budget constraint
\[
	1=\mathbb{E}^{\mathbb{P}^*}\bigl[\hat{\Pi}_T\bigr]=c_T\mathbb{E}^{\mathbb{P}^*}\bigl[(U')^{-1}(L_T)\bigr]=c_T\mathbb{E}^{\mathbb{P}^*}\Bigl[L_T^{1/(p-1)}\Bigr].
\]
Thus the optimal expected utility is
\begin{align*}
	\mathbb{E}^\mathbb{P}\bigl[U(\hat{\Pi}_T)\bigl]
	&=\mathbb{E}^\mathbb{P}\bigl[U\bigl(c_T\, U'^{-1}(L_T)\bigl)\bigl]=\frac{1}{p}c_T^p\mathbb{E}^\mathbb{P}\bigl[L_T^{p/(p-1)}\bigl]\\
	&=\frac{1}{p} \frac{\mathbb{E}^\mathbb{P}\bigr[L_T^{p/(p-1)}\bigr]}{\mathbb{E}^{\mathbb{P}^*}\bigl[L_T^{1/(p-1)}\bigr]^p}
	=\frac{1}{p} \frac{\mathbb{E}^{\mathbb{P}^*}\bigl[L_T^{1/(p-1)}\bigr]}{\mathbb{E}^{\mathbb{P}^*}\bigl[L_T^{1/(p-1)}\bigr]^p}
	=\frac{1}{p} \mathbb{E}^{\mathbb{P}^*}\bigl[L_T^{1/(p-1)}\bigr]^{1-p}.
\end{align*}

This expectation can be expressed in terms of the market price of risk since the Radon--Nikod\'{y}m derivative $L_T$ is
\[
	L_t=e^{-\int_0^t\theta_s\,dW_s-\frac{1}{2}\int_0^t\theta_s^2\,ds}=e^{-\int_0^t\theta_s\,dW_s^*+\frac{1}{2}\int_0^t\theta_s^2\,ds},\qquad 0\leq t\leq T,
\]
with $W_t^*:=W_t+\int_0^t\theta_s\,ds$ a $\mathbb{P}^*$-Brownian motion. If we define a measure $\hat{\mathbb{P}}$ from $\mathbb{P}^*$ with the Girsanov kernel $\frac{1}{1-p}\theta_t,$ then   
\begin{align*}
	\mathbb{E}^\mathbb{P}\bigr[U\bigl(\hat{\Pi}_T\bigr)\bigr] 
	&=\frac{1}{p} \mathbb{E}^{\mathbb{P}^*}\bigl[L_T^{1/(p-1)}\bigr]^{1-p}
	=\frac{1}{p} \mathbb{E}^{\mathbb{P}^*}\Bigl[e^{\frac{1}{1-p}\int_0^T\theta_s\,dW_s^*+\frac{1}{2(p-1)}\int_0^T\theta_s^2\,ds}\Bigr]^{1-p}\\
	&=\frac{1}{p}\cdot\mathbb{E}^{\hat{\mathbb{P}}}\Bigl[e^{\frac{p}{2(1-p)^2}\int_0^T\theta_s^2\,ds}\Bigr]^{1-p}
	=\frac{1}{p} \mathbb{E}^{\hat{\mathbb{P}}}\Bigl[e^{-\alpha \int_0^T\theta_s^2\,ds}\Bigr]^{1-p} 
\end{align*}
with $\alpha:=-\frac{p}{2(1-p)^2}.$ The $\hat{\mathbb{P}}$-dynamics of $S$ is 
\begin{equation}
\label{eqn:S-under-Phat}
	dS_{t}=\Bigl(b(S_{t})+\frac{p\varsigma(S_t)\theta(S_t)}{1-p}\Bigr) dt+\varsigma(S_{t})\, d\hat{W}_t
\end{equation}
where $\hat{W}$ is a $\hat{\mathbb{P}}$-Brownian motion. Thus the sensitivity analysis of the optimal expected utility boils down to the sensitivity analysis of
\begin{equation}
\label{eqn:v_s_T}
	v(s,T):=\mathbb{E}^{\hat{\mathbb{P}}}\Bigl[e^{-\alpha \int_0^T\theta^2(S_u)\,du}\Bigr]
\end{equation}
for $s=S_0$ which can be done using the results of  \cite{hansen2009long} and \cite{park2015sensitivity} for the underlying process \eqref{eqn:S-under-Phat}. 

From the Hansen--Scheinkman decomposition, one can find an eigenvalue and an eigenfunction $(\lambda,\phi)$ (called the recurrent eigenpair) of the pricing operator $\mathcal{P}_T$, defined by
\[
	\mathcal{P}_T \phi(s)  = e^{-\lambda T} \phi(s)
\]
where
\[
	\mathcal{P}_T \phi(s) = \mathbb{E}^{\hat{\mathbb{P}}} \Bigl[e^{-\alpha \int_0^T\theta^2(S_u)\,du} \phi(S_T) \, \Big\vert S_0 = s\Bigr].
\]
They characterize the long-term behavior of $v(s,T)$. Specifically, under some assumptions, the limit
\[
	\lim_{T\rightarrow\infty} \frac{v(s,T)}{e^{-\lambda T}\phi(s)}
\]
exists and is independent of $s.$ For the long-term initial-value sensitivity, one can show that  
\[
	\lim_{T\rightarrow\infty}\frac{\partial}{\partial s} \ln v(s,T)=\frac{\phi'(s)}{\phi(s)}.
\]
From this one can derive the actual sensitivity of the expected utility noting
\[
	\frac{\partial}{\partial s} \ln \bigl\vert \mathcal{U}(s,T)\bigr\vert =
	\frac{\partial}{\partial s} \ln \Bigl(-\frac{1}{p} v^{1-p}(s,T)\Bigr) = (1-p) \frac{\partial}{\partial s} \ln v(s,T).
\]
For the parameter sensitivity with respect to the drift and volatility, let $\epsilon$ be the perturbation parameter in the drift or volatility, and denote by $v_\epsilon(s,T)$ the corresponding expectation in Eq.\eqref{eqn:v_s_T}. Using the family of recurrent eigenpairs $(\lambda_\epsilon,\phi_\epsilon)_{\epsilon >0},$ one can prove that   
\[
	\lim_{T\rightarrow\infty}\frac{1}{T}\frac{\partial}{\partial\epsilon}\Big\vert_{\epsilon=0}\ln v_\epsilon(s,T)=-\frac{\partial\lambda_\epsilon}{\partial\epsilon}\Big\vert_{\epsilon=0}
\]
from which the sensitivity of the actual expected utility can be inferred as above by multiplying with $1-p$.

We emphasize that the main line argument in this section cannot be applied to incomplete markets. Our reasoning relies on the fact that in the complete market case the dual optimization problem is posed over a single risk-neutral measure and thus has a trivial solution. This cannot be generalized to a factor diffusion model describing an incomplete market where we have an optimization problem over infinitely many risk-neutral measures.
%

For the rest of this section, we investigate an example of a complete market as discussed above. We study the long-term sensitivity of the optimal expected utility when the underlying asset follows an Ornstein--Uhlenbeck process. 
This is often assumed when modeling commodities as gold, silver and oil. Assume that the asset follows 
\begin{equation}
\label{eqn:OU}
	dS_t=(\mu-bS_t)\,dt+\varsigma \,dW_t,\qquad S_0=s,
\end{equation} 
under the physical measure and the short interest rate is zero. Then, the market price of risk is
\[
	\theta(S_t):=\frac{\mu-bS_t}{\varsigma},
\]
which connects the two Brownian motions, under the physical measure and under the risk-neutral measure, via
\[
	dW_t^*=dW_t+\theta(S_t)\,dt.
\]
We want to analyze the value function $v(s,T)$ given in Eq.\eqref{eqn:v_s_T}. In this case the generator corresponding to the asset price dynamics under $\hat{\mathbb{P}}$ is given by
\[
	\mathcal{L}\phi(s)=\frac{1}{2}\varsigma^2\phi''(s)+\frac{\mu-bs}{1-p}\phi'(s)-\alpha \theta^2(s)\phi(s)
\]
and one can show that the recurrent eigenvalue $\lambda$ and the recurrent eigenfunction $\phi$ of $-\mathcal{L}$ are \begin{align*}
	\lambda =\frac{b(\sqrt{1-p}-1)}{2(1-p) }, \qquad \phi(s) =e^{-\frac{1}{2}As^2-Bs},
\end{align*}
where
\[
A=\frac{b(\sqrt{1-p}-1)}{(1-p)\varsigma^2}, \qquad B=- \frac{\mu(\sqrt{1-p}-1)}{(1-p) \varsigma^2} .
\]
\begin{thm}
 	Under the Ornstein--Uhlenbeck model  in Eq.\eqref{eqn:OU},  the long-term sensitivities  of the optimal expected utility in Eq.\eqref{eqn:complete_U} are
 	\begin{align*}
		 	\lim_{T\rightarrow\infty}\frac{\partial }{\partial s}\ln \bigl\vert\mathcal{U}(s,T)\bigr\vert & =(1-p)\frac{\phi'(s)}{\phi(s)}=-(1-p)(As+B),\\ 
		 	\lim_{T\rightarrow\infty}\frac{1}{T}\frac{\partial }{\partial\mu}\ln \bigl\vert \mathcal{U}(s,T)\bigr\vert
		 	& =-(1-p)\frac{\partial\lambda }{\partial \mu}=0,\\ 
		 	\lim_{T\rightarrow\infty}\frac{1}{T}\frac{\partial }{\partial b}\ln \bigl\vert\mathcal{U}(s,T)\bigr\vert
		 	& =-(1-p)\frac{\partial\lambda }{\partial b}=-\frac{\sqrt{1-p}-1}{2},\\ 
		 	\lim_{T\rightarrow\infty}\frac{1}{T}\frac{\partial }{\partial \varsigma}\ln \bigl\vert \mathcal{U}(s,T)\bigr\vert
		 	& =-(1-p)\frac{\partial\lambda }{\partial \varsigma}=0.\\ 
 	\end{align*}
\end{thm}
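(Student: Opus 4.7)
The plan is to apply the general complete-market results already derived in this section to the Ornstein--Uhlenbeck specification. Two ingredients are in hand: the reduction $\mathcal{U}(s,T) = \frac{1}{p} v(s,T)^{1-p}$ with $v$ as in Eq.\eqref{eqn:v_s_T}, which gives $\frac{\partial}{\partial y}\ln|\mathcal{U}| = (1-p)\frac{\partial}{\partial y}\ln v$ for any $y \in \{s,\mu,b,\varsigma\}$; and the long-term sensitivity identities
\[
\lim_{T\to\infty}\frac{\partial}{\partial s}\ln v(s,T) = \frac{\phi'(s)}{\phi(s)}, \qquad \lim_{T\to\infty}\frac{1}{T}\frac{\partial}{\partial\epsilon}\Big\vert_{\epsilon=0}\ln v_\epsilon(s,T) = -\frac{\partial\lambda_\epsilon}{\partial\epsilon}\Big\vert_{\epsilon=0}
\]
from \cite{hansen2009long} and \cite{park2015sensitivity}. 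With these in place, the proof reduces to three tasks: verifying the stated recurrent eigenpair, differentiating $\lambda$ in the parameters, and checking that the hypotheses of the cited sensitivity results are met by the OU family.

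For the eigenpair, write $\phi(s) = e^{-\frac{1}{2}As^2 - Bs}$ so that $\phi'/\phi = -(As+B)$ and $\phi''/\phi = (As+B)^2 - A$. Substituting into $\mathcal{L}\phi + \lambda\phi = 0$ and matching the coefficients of $s^2$, $s^1$, and $s^0$ produces a Riccati-type system in $(A, B, \lambda)$. The quadratic in $A$ has two roots; the one yielding the recurrent eigenpair in the Hansen--Scheinkman sense is singled out by requiring that the twisted diffusion $dS_t = \bigl(\tfrac{\mu - bS_t}{1-p} - \varsigma^2(AS_t + B)\bigr)\,dt + \varsigma\,dW_t^\phi$ be recurrent, which forces $A > 0$ and hence selects $A = \frac{b(\sqrt{1-p}-1)}{(1-p)\varsigma^2}$; the expressions for $B$ and $\lambda$ then follow by back-substitution, with the constant term simplifying to $\lambda = \frac{1}{2}\varsigma^2 A$ after cancellations among the $\mu$-dependent contributions.

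The four limits are then immediate consequences. Since $\lambda = \frac{b(\sqrt{1-p}-1)}{2(1-p)}$ depends only on $b$ and $p$, we have $\partial_\mu\lambda = 0$ and $\partial_\varsigma\lambda = 0$ by inspection, whereas $\partial_b\lambda = \frac{\sqrt{1-p}-1}{2(1-p)}$, so $-(1-p)\partial_b\lambda = -\frac{\sqrt{1-p}-1}{2}$; and the initial-value sensitivity equals $(1-p)\phi'(s)/\phi(s) = -(1-p)(As+B)$, matching the claim.

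I expect the genuinely non-routine point to be verifying that the hypotheses of the \cite{park2015sensitivity} sensitivity theorem are met by the perturbed OU family, so that the $T\to\infty$ limit commutes with $\partial_\epsilon$. Because the twisted process remains a linear Gaussian diffusion whose effective mean-reversion rate stays bounded away from zero for $\epsilon$ near $0$, and because $(\lambda_\epsilon,\phi_\epsilon)$ depend analytically on $(\mu,b,\varsigma)$, this should reduce to a direct check of the integrability and smoothness assumptions in that paper; nevertheless it is the only step in the argument that is not purely mechanical.
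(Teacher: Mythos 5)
Your proposal is correct and follows essentially the same route as the paper, which itself gives no separate proof of this theorem but relies exactly on the reduction $\mathcal{U}=\frac{1}{p}v^{1-p}$, the recurrent eigenpair of $-\mathcal{L}$ (your root selection via recurrence of the twisted diffusion and the simplification $\lambda=\frac{1}{2}\varsigma^2A$ both check out), and the two limit identities from \cite{hansen2009long} and \cite{park2015sensitivity}. Your closing remark that the only non-mechanical step is verifying the hypotheses of the cited sensitivity results for the OU family is fair; the paper leaves that verification implicit as well.
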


\section{Model setup}
\label{sec:model_setup}

The model setup of the current paper is as follows: Let $(\Omega,\mathcal{F},(\mathcal{F}_t)_{t\ge0},\mathbb{P})$ be the canonical path space of a two-dimensional Brownian motion $(W_{1,t},W_{2,t})_{t\ge0}.$ The filtration $(\mathcal{F}_t)_{t\ge0}$ is the usual completion of the natural filtration of $(W_{1,t},W_{2,t})_{t\ge0}.$ The measure $\mathbb{P}$  is referred to as the physical measure. The dynamics of the risky asset    is  given by the following stochastic differential equations (SDEs) 
\begin{align}
	 dS_t & = b(X_t)S_t\,dt+\varsigma(X_t)S_t\,dW_{1,t},\qquad \qquad \qquad \qquad S_0=1, \label{eqn:sv_model_stock} \\
	 dX_t & = m(X_t)\,dt+\sigma_1(X_t)\,dW_{1,t}+\sigma_2(X_t)\,dW_{2,t}, \qquad X_0=\chi, \label{eqn:SDE_X} 
\end{align}
which is a typical way to define a stochastic factor model. The processes $S$ and $X$ describe an asset price and its underlying factor process, respectively. The five functions $m,$ $\sigma_1,$ $\sigma_2,$ $b,$ $\varsigma$ and the real number $\chi$ satisfy the following assumptions. Let $(\ell,r)$ be an open interval in $\mathbb{R}$ for $-\infty\leq \ell<r\leq \infty.$
\begin{assume}
\label{assume:SDE_X}
	Let $\chi\in(\ell,r)$ and let $m,$ $\sigma_1,$ $\sigma_2$ be continuous functions on $(\ell,r)$ such that $\sigma_1^2+\sigma_2^2>0.$ The SDE \eqref{eqn:SDE_X} has a unique non-explosive (i.e., $\mathbb{P}[X_t \in (\ell,r) \text{ for all } t \geq 0]=1$) strong solution $X$.  
\end{assume}

\begin{assume}
\label{assume:b_v}
	The functions $b,$ $\varsigma$  are continuous  and $\varsigma$ is strictly positive on $(\ell,r).$ 
\end{assume}

\noindent Under these assumptions the asset price process is well-defined and can be written as
\[
	S_t=e^{\int_0^t(b-\frac{1}{2}\varsigma^2)(X_s)\,ds+\int_0^t\varsigma(X_s)\,dW_{1,s}}.
\]

\begin{assume}
\label{eqn:NFLVR}
	For each fixed time $T,$ there exists a probability measure on $\mathcal{F}_T$ such that the discounted asset price process is a local martingale on $[0,T]$. 
\end{assume}

\noindent It is well-known that this assumption is equivalent to the absence of arbitrage in the market in the sense of no free lunch with vanishing risk (\cite{delbaen1994general}).

Without loss of generality we will assume that the short interest rate is zero so that the value of the money market account is one at all time $t.$ The market price of risk is then given by
\begin{equation}
\label{eqn:MPR}\theta_t=\theta(X_t)=\frac{b(X_t)}{\varsigma(X_t)}.
\end{equation}
An investor wants to maximize the expected utility of the value of their portfolio at terminal time $T$ by trading the asset and the money market account. A  portfolio is a  predictable processes ${\psi}$ which is $S$-integrable. The value process $\Pi={\Pi}^\psi$ of the  portfolio ${\psi}$ is
\[
	\Pi_t= \Pi_0+\int_0^t {\psi}_{u}\,dS_u, \qquad 0\leq t\leq T.
\]
We denote by $\mathcal{X}$ the family of nonnegative value processes with initial wealth $\Pi_0$ equal to  $1$, that is,
\begin{equation}
\label{eqn:wealth_proc}
	\mathcal{X}=\{\Pi^\psi\geq0: \psi \textnormal{ is a portfolio and } \Pi_0^\psi=1\}. 
\end{equation} 
The investor is assumed to have constant relative risk aversion $1-p>1$, i.e., the utility function corresponding to their  preferences is of negative power type
\[
	U(x)=\frac{x^p}{p},\qquad p<0.
\]
For given initial capital, the goal of the investor is to maximize the expected value at the terminal wealth, that is,
\begin{equation}
\label{eqn:primal}
	\sup_{\Pi\in\mathcal{X}}\mathbb{E}^\mathbb{P}\bigl[U(\Pi_T)\bigr]=\frac{\,1\,}{p}\inf_{\Pi\in\mathcal{X}}\mathbb{E}^\mathbb{P}\bigl[\Pi_T^p\bigr].
\end{equation} 
Without loss of generality we can assume that the initial capital is equal to one, thanks to the homotheticity of the investor's preferences.

%

\section{Heuristic arguments and main results}
\label{sec:heuristic}

The main purpose of the current paper is to investigate two types of long-term sensitivity with respect to the perturbation of $S$ and $X.$  One is the sensitivity with respect to the initial value $\chi=X_0$ of the factor process \eqref{eqn:SDE_X},
\begin{equation}
\label{eqn:delta}
	\frac{\partial}{\partial \chi}\ln \Bigl\vert \sup_{\Pi\in\mathcal{X}}\mathbb{E}^\mathbb{P}[U(\Pi_T)]\Bigr\vert .
\end{equation} 
The other type concerns the sensitivities with respect to the five functions $m,$ $\sigma_1,$ $\sigma_2,$ $b,$ $\varsigma.$  Let $m_\epsilon,$ $\sigma_{1,\epsilon},$ $\sigma_{2,\epsilon},$ $b_\epsilon,$ $\varsigma_\epsilon$ be perturbed functions with perturbation parameter $\epsilon$ (for a precise definition, see Section \ref{sec:para_perturb}). Denote by $S^\epsilon$ the perturbed asset process induced by these perturbed functions, and consider the family $\mathcal{X}^\epsilon$ of wealth processes given by Eq.\eqref{eqn:wealth_proc} generated by the perturbed asset process $S^\epsilon.$ The sensitivity of interest is that with respect to the $\epsilon$-perturbation,
\begin{equation}
\label{eqn:eps_pert}
	\frac{\partial}{\partial \epsilon}\Big\vert_{\epsilon=0}\ln \Bigl\vert \sup_{\Pi\in\mathcal{X}^\epsilon}\mathbb{E}^\mathbb{P}[U(\Pi_T)]\Bigr\vert.
\end{equation}  

\begin{remark}
	 We note that the assumption $S_0=1$ in Eq.\eqref{eqn:sv_model_stock} does not restrict the generality of the results. In fact, in the factor model, the optimal expected utility is independent of the initial value of the stock price as the stock dynamics scale linearly. This is in contrast to the results for the complete market case in Section \ref{eqn:complete_markets}, as there also drift and volatility functions depend on the stock price.
\end{remark}

In the following, we will present the main ideas how to derive the long-term initial-factor sensitivity by surveying the essential steps of the argument. The technical details are relegated to  Section \ref{sec:delta}. 
 
\begin{enumerate}[(i)]
	\item From the dual formulation of utility maximization problem (\cite{kramkov1999asymptotic}, details will be surveyed in Section \ref{sec:dual}), we know that
	\begin{equation}
		\label{eqn:U}
			\mathcal{U}(\chi,T):=	\sup_{\Pi\in\mathcal{X}}\mathbb{E}[U(\Pi_T)] =\frac{1}{\,p\,}\Bigl(\mathbb{E}^\mathbb{P}\bigl[\hat{Y}_T^q\bigr]\Bigr)^{1-p} 
		\end{equation}
	for some nonnegative supermartingale $\hat{Y}$ and $q:=-{p}/{(1-p)};$ define 
	\[
			v(\chi,T):= \mathbb{E}^\mathbb{P}\bigl[\hat{Y}_T^q\bigr] = \mathbb{E}^\mathbb{P}\bigl[\hat{Y}_T^q \, \big\vert \, X_0 = \chi\bigr].
		\]
 	\item The sensitivity in Eq.\eqref{eqn:delta} is 
 	\begin{equation}  
		 	\frac{\partial}{\partial\chi} \ln \Bigl\vert \sup_{\Pi\in\mathcal{X}}\mathbb{E}^\mathbb{P}\bigl[U(\Pi_T)\bigr]\Bigr\vert
		 	=(1-p)\frac{\partial}{\partial\chi} \ln  v(\chi,T),
		\end{equation}
	so it suffices to evaluate  the long-term behavior of	
	\[
			\frac{\partial}{\partial \chi} \ln  v(\chi,T).
		\]  
 	\item The function $v$ satisfies a HJB equation (details are given in Section \ref{sec:dual}).  
	\item The function $v$ can be approximated by a solution pair $(\lambda,\phi)$ of an ergodic HJB equation (see Eq.\eqref{eqn:EBE}) in the sense that $e^{-\lambda T}\phi(\chi)$ is asymptotically equal to $v(\chi,T)$ up to  a constant factor, that is,
	\[
			v(\chi,T)\simeq e^{-\lambda T}\phi(\chi)
		\]
	(where we use the notation $f_T\simeq g_T$ to denote that the limit $\lim_{T\rightarrow\infty}\frac{f_T}{g_T}$ for two positive functions $f_T$ and $g_T$ converges to a positive constant).
	To derive this result, we rely on the HJB representation of $v$ derived in (iii).
	\item  By taking the partial derivative to the above asymptotics, one can anticipate that
	\begin{equation}
		\label{eqn:long_term_delta_value}
			\frac{\partial}{\partial \chi} \ln  v(\chi,T)\simeq \frac{\phi'(\chi)}{\phi(\chi)}
		\end{equation} 
	and this is indeed one of the main results of this paper and is stated in detail in Theorem \ref{thm:delta}. This approach is motivated by Section 3 in  \cite{park2015sensitivity}.
	\item To make this asymptotic result rigorous, one needs to control the error terms. This can be done using a probabilistic representation of the function $v$,
	\[
			v(\chi,T)=e^{-\lambda T}\phi(\chi)\,\mathbb{E}^{\mathbb{Q}}\Bigl[\frac{1}{\phi (X_T)}\, e^{\int_0^Tf(X_s,s;T)\,ds} \Bigr],
		\]
	for a probability measure $\mathbb{Q}$ and a continuous function $f.$ The precise result is given in Theorem \ref{thm:decompose}, the proof relies on an adaption of the Hansen--Scheinkman decomposition to the current context. Thus, by taking the partial derivative directly, we get
	\[
			\frac{\partial}{\partial\chi} \ln v(\chi,T)=\frac{\phi'(\chi)}{\phi(\chi)} +\frac{\partial}{\partial \chi} \ln  \mathbb{E}^{\mathbb{Q}}\Bigl[\frac{1}{\phi (X_T)}\, e^{\int_0^Tf(X_s,s;T)\,ds} \Bigr].
		\]
	Under reasonable conditions the error term 
	\[
			\frac{\partial}{\partial \chi} \ln  \mathbb{E}^{\mathbb{Q}}\Bigl[\frac{1}{\phi (X_T)}\, e^{\int_0^Tf(X_s,s;T)\,ds} \Bigr]
		\]
	 goes to zero as $T\to\infty$ and we obtain Eq.\eqref{eqn:long_term_delta_value}, the desired result.
\end{enumerate}

The following theorem is the main result on the sensitivity with respect to the initial-factor. The proof will be given in Section \ref{sec:delta}.

\begin{thm}
\label{thm:delta}
	Assume A\ref{assume:SDE_X} -- \ref{assume:Q} (stated in Sections \ref{sec:model_setup} and \ref{sec:dual}) and additionally that  the map
	\[
		\chi \mapsto \mathbb{E}^{\mathbb{Q}} \Bigl[\frac{1}{\phi (X_T)}\, e^{\int_0^Tf(X_s,s;T)\,ds} \, \Big\vert \, X_0 = \chi	\Bigr]
		\] 
	is continuously differentiable with derivative converging to zero as $T\to\infty.$ Then 
	\begin{equation}
		\label{eqn:long_term_delta}
			\lim_{T\rightarrow\infty}\frac{\partial }{\partial \chi}\ln v(\chi,T)=\frac{\phi'(\chi)}{\phi(\chi)}.
		\end{equation}
\end{thm}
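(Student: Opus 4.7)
The plan is to derive the result as a direct consequence of Theorem~\ref{thm:decompose} and then take a logarithmic $\chi$-derivative. From the multiplicative representation supplied by that theorem,
\[
v(\chi,T) = e^{-\lambda T}\,\phi(\chi)\,R(\chi,T), \qquad R(\chi,T) := \mathbb{E}^{\mathbb{Q}}\Bigl[\frac{1}{\phi(X_T)}\,e^{\int_0^T f(X_s,s;T)\,ds}\,\Big\vert\,X_0=\chi\Bigr],
\]
with $\phi$ strictly positive and $R(\chi,T)>0$, I would take logarithms of both sides and differentiate in $\chi$. This step is legitimate because $\phi$ is $C^1$ (as an eigenfunction of an elliptic operator with continuous coefficients) and the additional hypothesis of the theorem provides continuous differentiability of $R(\cdot,T)$. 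I thereby obtain the identity
\[
\frac{\partial}{\partial\chi}\ln v(\chi,T) = \frac{\phi'(\chi)}{\phi(\chi)} + \frac{\partial_\chi R(\chi,T)}{R(\chi,T)},
\]
so the theorem reduces to showing that the second summand on the right vanishes as $T\to\infty$.

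The numerator $\partial_\chi R(\chi,T)$ tends to zero by the additional hypothesis. For the denominator, I would invoke the ergodic behaviour built into the Hansen--Scheinkman decomposition underlying Theorem~\ref{thm:decompose}: under the standing assumptions A\ref{assume:SDE_X}--A\ref{assume:Q}, $R(\chi,T)$ converges as $T\to\infty$ to a strictly positive limit, so in particular it is bounded away from zero for $T$ sufficiently large. This is the same ergodic property that renders the asymptotic equivalence $v(\chi,T)\simeq e^{-\lambda T}\phi(\chi)$ announced in step (iv) of the heuristic discussion meaningful. Combining the two observations yields
\[
\lim_{T\to\infty}\frac{\partial_\chi R(\chi,T)}{R(\chi,T)} = 0,
\]
and hence \eqref{eqn:long_term_delta}.

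The step I expect to require the most care is extracting the lower bound on $R(\chi,T)$ from the standing assumptions, because the theorem as stated only explicitly controls the numerator of the error term. One has to unpack the Hansen--Scheinkman machinery to rule out that $R(\chi,T)$ collapses to zero along some subsequence; once a uniform positive lower bound is in hand the rest of the argument is a routine logarithmic differentiation, with no further interchange of limits required.
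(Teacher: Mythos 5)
Your argument is correct and coincides with the paper's own proof: differentiate the decomposition of Theorem \ref{thm:decompose} logarithmically, kill the numerator of the error term by hypothesis, and bound the denominator away from zero. The only remark worth adding is that the positive lower bound on $R(\chi,T)$ requires no unpacking of the Hansen--Scheinkman machinery: it follows immediately by combining A\ref{assume:long-term} with Eq.\eqref{eqn:v_decomposition}, since $R(\chi,T)=v(\chi,T)/(e^{-\lambda T}\phi(\chi))\to C>0$.
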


\begin{remark}
	This result is very similar in spirit to the results by \cite[Eq.(1.4) and Theorem 2.11]{robertson2015large}. They also discuss asymptotic behavior of the type as Eq.\eqref{eqn:long_term_delta}. Their approach as well as the assumptions needed are however different from the current paper.
\end{remark}

For the second topic of the paper, the sensitivities with respect to small perturbation parameters, we proceed in the same way and provide an overview of the main steps of the argument; the technical details will be given at Section \ref{sec:drift_vol}. 
\begin{enumerate}
	\item[(i') -- (iv')] For each $\epsilon,$ we can follow the approach of the sensitivity analysis with respect to the initial factor. Specifically conducting steps (i) -- (iv) as above and defining $v_\epsilon(\chi,T)$ and $(\lambda_\epsilon,\phi_{\epsilon})$ accordingly, we obtain
	\[
			v_\epsilon(\chi,T)\simeq e^{-\lambda_\epsilon T}\phi_\epsilon(\chi).
		\]
	\item[(v')] By taking the partial derivative to the above asymptotics, we have
	\begin{equation}
		\label{eqn:long_term_para_perturb}
			\frac{1}{T}\frac{\partial}{\partial \epsilon}\Big\vert_{\epsilon=0} \ln v_\epsilon(\chi,T) \simeq -\frac{\partial}{\partial \epsilon}\Big\vert_{\epsilon=0} \lambda_\epsilon.
		\end{equation} 
	\item[(vi')] The function $v_\epsilon(x,T)$ has the probabilistic representation
	\[
			v_\epsilon(\chi,T)=e^{-\lambda_\epsilon T}\phi_\epsilon(\chi)\,\mathbb{E}^{\mathbb{Q}_\epsilon}\Bigl[\frac{1}{\phi_\epsilon(X_T^\epsilon)}\, e^{\int_0^Tf_\epsilon(X_s^\epsilon,s;T)\,ds} \Bigr].
		\]
 Thus, by taking the partial derivative, it follows that
	\[
			\frac{1}{T}\frac{\partial}{\partial\epsilon}\Big\vert_{\epsilon=0}\ln v_\epsilon(\chi,T)
			=-\frac{\partial \lambda_\epsilon}{\partial\epsilon}\Big\vert_{\epsilon=0} +\frac{1}{T}\frac{\partial}{\partial\epsilon}\Big\vert_{\epsilon=0}\ln\phi_\epsilon(\chi)
			+\frac{1}{T} \frac{\partial}{\partial\epsilon}\Big\vert_{\epsilon=0}\ln\mathbb{E}^{\mathbb{Q}_\epsilon} \Bigl[\frac{1}{\phi_\epsilon(X_T^\epsilon)}\, e^{\int_0^Tf_\epsilon(X_s^\epsilon,s;T)\,ds}\Bigr].
		\]			
	The second term goes to zero as $T \to \infty$ and under reasonable conditions also the error term 
	\[
			\frac{1}{T}\frac{\partial}{\partial \epsilon}\Big\vert_{\epsilon=0} \ln \mathbb{E}^{\mathbb{Q}_\epsilon}\Bigl[\frac{1}{\phi_\epsilon(X_T^\epsilon)}\, e^{\int_0^Tf_\epsilon(X_s^\epsilon,s;T)\,ds} \Bigr]
		\]
	vanishes as $T \to \infty$, thus we obtain Eq.\eqref{eqn:long_term_para_perturb}.
\end{enumerate}

\begin{thm}
\label{thm:para_sensitivity}
	Assume B\ref{bassume:perturb} -- \ref{bassume:HS_eps}, conditions (i) -- (iii) in Theorem \ref{thm:total_chain} and additionally that the map
	\[
			\epsilon \mapsto \mathbb{E}^{\mathbb{Q}_\epsilon} \Bigl[\frac{1}{\phi(X_T^\epsilon)}\, e^{\int_0^Tf(X_s^\epsilon,s;T)\,ds} \Bigr]
		\]
	 is continuously differentiable at $\epsilon=0$ with
	\begin{equation} 
			\frac{1}{T}\frac{\partial}{\partial\epsilon}\Big\vert_{\epsilon=0}\mathbb{E}^{\mathbb{Q}_\epsilon} \Bigl[\frac{1}{\phi(X_T^\epsilon)}\, e^{\int_0^Tf(X_s^\epsilon,s;T)\,ds} \Bigr]
		\end{equation}
	converging to zero as $T\to\infty.$ Then  
	\[
			\lim_{T\rightarrow \infty}\frac{1}{T} \frac{\partial}{\partial \epsilon}\,\Big\vert_{\epsilon =0}\ln v_\epsilon(\chi,T)
			=-\frac{\partial \lambda_\epsilon}{\partial\epsilon}\Big\vert_{\epsilon=0}.
		\]
\end{thm}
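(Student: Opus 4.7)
The plan is to execute the heuristic steps (i')--(vi') rigorously, paralleling the proof of Theorem \ref{thm:delta}. The starting point is the probabilistic representation coming from the $\epsilon$-version of the Hansen--Scheinkman decomposition (Theorem \ref{thm:decompose}), which is guaranteed by assumption B\ref{bassume:HS_eps}:
\[
    v_\epsilon(\chi,T) = e^{-\lambda_\epsilon T}\,\phi_\epsilon(\chi)\, R_\epsilon(\chi,T), \qquad R_\epsilon(\chi,T) := \mathbb{E}^{\mathbb{Q}_\epsilon}\Bigl[\frac{1}{\phi_\epsilon(X_T^\epsilon)}\, e^{\int_0^T f_\epsilon(X_s^\epsilon,s;T)\,ds}\Bigr].
\]
Taking logarithms, differentiating in $\epsilon$ at $\epsilon=0$, and dividing by $T$ gives the decomposition
\[
    \frac{1}{T}\frac{\partial}{\partial\epsilon}\Big\vert_{\epsilon=0}\ln v_\epsilon(\chi,T)
    = -\frac{\partial \lambda_\epsilon}{\partial\epsilon}\Big\vert_{\epsilon=0}
    + \frac{1}{T}\frac{\partial}{\partial\epsilon}\Big\vert_{\epsilon=0}\ln\phi_\epsilon(\chi)
    + \frac{1}{T}\frac{\partial}{\partial\epsilon}\Big\vert_{\epsilon=0}\ln R_\epsilon(\chi,T),
\]
so the task reduces to showing the last two summands tend to zero as $T\to\infty$.

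The middle term is immediate: under the smoothness of the family $\epsilon\mapsto\phi_\epsilon(\chi)$ provided by B\ref{bassume:perturb} and conditions (i)--(iii) of Theorem \ref{thm:total_chain}, the quantity $\frac{\partial}{\partial\epsilon}\vert_{\epsilon=0}\ln\phi_\epsilon(\chi)$ is a finite constant independent of $T$, so the $1/T$ prefactor sends it to zero. For the remaining term I would write
\[
    \frac{1}{T}\frac{\partial}{\partial\epsilon}\Big\vert_{\epsilon=0}\ln R_\epsilon(\chi,T)
    = \frac{1}{R_0(\chi,T)}\cdot \frac{1}{T}\frac{\partial}{\partial\epsilon}\Big\vert_{\epsilon=0} R_\epsilon(\chi,T),
\]
and argue that (a) the numerator converges to zero by the final hypothesis of the theorem, and (b) the denominator $R_0(\chi,T)$ is bounded away from zero for large $T$. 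Property (b) is exactly the content of Theorem \ref{thm:decompose} applied at $\epsilon=0$, which identifies $R_0(\chi,T)$ as converging to a strictly positive constant. Combining (a) and (b) shows the third term vanishes, and passing to the limit $T\to\infty$ yields the claimed identity.

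The main technical obstacle I expect is a small inconsistency between the hypothesis---phrased with the unperturbed $\phi$ and $f$ inside the expectation---and the natural probabilistic representation of $v_\epsilon$, whose integrand carries $\phi_\epsilon$ and $f_\epsilon$. I would resolve this by adding and subtracting the unperturbed integrand, so the extra pieces that arise are controlled via the continuous differentiability of $\epsilon\mapsto(\lambda_\epsilon,\phi_\epsilon,f_\epsilon)$ at $\epsilon=0$ from assumption B\ref{bassume:perturb}. These auxiliary contributions are either $T$-independent at the order of the $\epsilon$-derivative (hence killed by the $1/T$ factor), or they can be absorbed into an $o(1)$ term uniform in $T$ using the same Hansen--Scheinkman bounds that underpin (b). Aside from this bookkeeping, the argument is essentially the $\epsilon$-analogue of the proof of Theorem \ref{thm:delta} and requires no new machinery.
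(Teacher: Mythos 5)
Your outline is essentially the paper's own argument: the paper proves this statement via Theorem \ref{thm:total_chain}, writing $v_\epsilon(\chi,T)=V(\epsilon,\epsilon,\epsilon,\epsilon)$ for $V(\epsilon_1,\epsilon_2,\epsilon_3,\epsilon_4)=e^{-\lambda_{\epsilon_1}T}\phi_{\epsilon_2}(\chi)\,w_{\epsilon_3,\epsilon_4}(\chi,T)$ and applying the chain rule, so that the derivative splits into the $\lambda$-term, the $\phi$-term (killed by $1/T$), the functional-perturbation term $\frac{1}{T}\frac{\partial}{\partial\eta}\big\vert_{\eta=0}w_{\eta,0}$ (killed by condition (ii)), and the process/measure-perturbation term $\frac{1}{T}\frac{\partial}{\partial\epsilon}\big\vert_{\epsilon=0}w_{0,\epsilon}$ (killed by condition (iii), which is the theorem's displayed hypothesis). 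Your $R_\epsilon=w_{\epsilon,\epsilon}$ lumps the last two together and you then split them by adding and subtracting; that is the same decomposition in different order, and the denominator argument via A5 matches the paper.

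The one place your write-up goes wrong is the final paragraph's justification of the split. The piece left over after subtracting the unperturbed integrand is exactly $\frac{\partial}{\partial\eta}\big\vert_{\eta=0}w_{\eta,0}(\chi,T)$, which is an expectation of a genuinely $T$-dependent path functional: it is neither ``$T$-independent at the order of the $\epsilon$-derivative'' nor controlled by the continuous differentiability of the coefficients in B\ref{bassume:perturb}. Its vanishing after division by $T$ is precisely the content of condition (ii) of Theorem \ref{thm:total_chain}, and the validity of the two-variable chain rule for $w_{\epsilon,\epsilon}$ requires the joint continuity of both partials on $I^2$ stated in conditions (ii) and (iii). Since you assume those conditions, the proof closes, but you should cite condition (ii) as the operative hypothesis rather than B\ref{bassume:perturb} or a claim of $T$-independence.
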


\section{Examples}
\label{sec:examples}

Before implementing the sketched program rigorously, we want to show in this section which results can actually be achieved in specific examples. The power of our approach is demonstrated by deriving explicit formulas for the Kim--Omberg model of stochastic excess returns and the Heston stochastic volatility model.

\subsection{The Kim--Omberg model}
\label{sec:KO_model}

In the Kim--Omberg model (\cite{kim1996dynamic})  the asset price $S$ and the stochastic excess returns $X$ satisfy 
\begin{align}\label{eqn:KO_model}
	dS_t & =\mu X_tS_t\,dt+\varsigma S_t\,dW_{1,t}, \qquad  S_0 =1, \nonumber\\
	dX_t &=k(\overline{m}-X_t)\,dt+\sigma \,dZ_t,\qquad  X_0=\chi
\end{align} 
for correlated Brownian motions $W_{1}$ and $Z$ with correlation parameter $\rho \in(-1,1)$. Here the parameters for the reversion speed $k,$ the volatilities $\varsigma,$ $\sigma$ are positive and the return $\mu,$ the mean reversion level $\overline{m}$ are real numbers. 

This fits into the standard model by setting $\sigma_1 = \rho \sigma,$ $\sigma_2 = \sqrt{1-\rho^2} \sigma$ and $W_{2,t}=\frac{1}{\sqrt{1-\rho^2}}Z_t-\frac{\rho}{\sqrt{1-\rho^2}}W_{1,t}$ so that
\[
	dX_t=k(\overline{m}-X_t)\,dt+\sigma_1 \,dW_{1,t}+\sigma_2 \,dW_{2,t}, \qquad X_0=\chi.
\]    
The market price of risk is given as $\theta_t:=\frac{\mu}{\varsigma}X_t.$ Define
\begin{equation} 
\alpha_1=k+\frac{q\mu\sigma_1}{\varsigma}, \qquad \alpha_2=\sigma_1^2+\frac{\sigma_2^2}{1-q}, \qquad \alpha_3=k\overline{m}, \qquad \alpha_4=\sqrt{\alpha_1^2+q(1-q)\alpha_2\mu^2/\varsigma^2}.
\end{equation} 
and
\begin{equation} 
	B=\frac{\alpha_4-\alpha_1}{\alpha_2}, \qquad C=\frac{\alpha_3(\alpha_4-\alpha_1)}{\alpha_2\alpha_4}
\end{equation}  
for $q$ being the dual exponent of the utility function, $q = -\frac{p}{1-p}$. Then the recurrent eigenpair is 
\begin{align*}
	\lambda &=-\frac{1}{2}\alpha_2C^2+\alpha_3C+\frac{1}{2}\sigma^2B,\\
	\phi(x) &= e^{-\frac{1}{2}Bx^2 -Cx}.
\end{align*}  

\begin{thm}\label{thm:KO_asymp}
	In the  Kim--Omberg model, assume the parameters satisfy
	\begin{equation}
	\label{eqn:KO_para_condi}k+\frac{q\mu\rho\sigma}{\varsigma}+\frac{B\sigma^2}{2}>0.
	\end{equation}   Then   the long-term sensitivities of the optimal expected utility in Eq.\eqref{eqn:U} are given by 
	\begin{align*}
			\lim_{T\rightarrow\infty}\frac{\partial }{\partial \chi}\ln \bigl\vert \mathcal{U}(s,T)\bigr\vert & = -(1-p)(B\chi+C), \\
			\lim_{T\rightarrow\infty}\frac{1}{T}\frac{\partial }{\partial k}\ln \bigl\vert\mathcal{U}(s,T)\bigr\vert & = (1-p)\alpha_2\Bigl(\frac{\overline{m}}{\alpha_3}-\frac{\alpha_4+\alpha_1}{\alpha_4^2 }\Bigr)C^2-(1-p)\Bigl( 2\overline{m} -\frac{\alpha_3(\alpha_4+\alpha_1)}{\alpha_4^2 }\Bigr)C+\frac{(1-p)\sigma^2B}{2\alpha_4},\\
			\lim_{T\rightarrow\infty}\frac{1}{T}\frac{\partial }{\partial \overline{m}}\ln \bigl\vert\mathcal{U}(s,T)\bigr\vert & = \frac{\alpha_2}{\alpha_3}(1-p)kC^2-2(1-p)kC,\\ 
			\lim_{T\rightarrow\infty}\frac{1}{T}\frac{\partial }{\partial \mu}\ln\bigl\vert \mathcal{U}(s,T)\bigr\vert =&\;-\frac{ p\sigma\alpha_1\alpha_3^2( \rho\varsigma\alpha_4^2 -k\rho\varsigma\alpha_1-\mu\sigma\alpha_1) }{\varsigma^2\alpha_2\alpha_4^4}
			-\frac{p\sigma^3(\rho\varsigma\alpha_4-q\rho\varsigma-\mu\sigma)}{2\varsigma^2\alpha_2\alpha_4 } , \\ 		
		\end{align*} 	
	\begin{align*}
			\lim_{T\rightarrow\infty}\frac{1}{T}\frac{\partial }{\partial \varsigma}\ln\bigl\vert \mathcal{U}(s,T)\bigr\vert  &= \frac{p q\mu^2\sigma^2\alpha_1\alpha_3^2(\rho\varsigma\alpha_4^2-k\rho\varsigma\alpha_1  -\mu\sigma\alpha_1)   }{\varsigma^3\alpha_2\alpha_4^4}
			+\frac{p\mu\sigma^3(\rho\varsigma\alpha_4-k\rho\varsigma-\mu\sigma) }{2\varsigma^3\alpha_2\alpha_4}  \\
			\lim_{T\rightarrow\infty}\frac{1}{T}\frac{\partial }{\partial \rho}\ln \bigl\vert \mathcal{U}(s,T)\bigr\vert & =
			-\alpha_2pC^2\Bigl(\frac{(k-\alpha_4)\mu\sigma}{\varsigma\alpha_4(\alpha_4-\alpha_1)}+\frac{\rho \sigma^2}{(1-q)\alpha_2}-\frac{k\mu\sigma}{\varsigma\alpha_4^2}\Bigr)\\
			&+\alpha_3pC\Bigl(\frac{(k-\alpha_4)\mu\sigma}{\varsigma\alpha_4(\alpha_4-\alpha_1)}+\frac{2\rho \sigma^2}{(1-q)\alpha_2}-\frac{k\mu\sigma}{\varsigma\alpha_4^2}\Bigr)\\
			&+\frac{1}{2}p\sigma^2B\Bigl(\frac{(k-\alpha_4)\mu\sigma}{\varsigma\alpha_4(\alpha_4-\alpha_1)}+\frac{2\rho \sigma^2}{(1-q)\alpha_2}\Bigr),\\
			\lim_{T\rightarrow\infty}\frac{1}{T}\frac{\partial }{\partial \sigma}\ln \bigl\vert \mathcal{U}(s,T)\bigr\vert =&\;\alpha_2\Bigl(\frac{p\mu(\rho\varsigma\alpha_4-k\rho\varsigma-\mu\sigma)}{\varsigma^2\alpha_4(\alpha_4-\alpha_1)}- \frac{1-p}{\sigma} +\frac{p\mu(k\rho\varsigma+\mu\sigma)}{\varsigma^2\alpha_4^2} \Bigr)C^2\\
			&-\alpha_3\Bigl(\frac{p\mu(\rho\varsigma\alpha_4-k\rho\varsigma-\mu\sigma)}{\varsigma^2\alpha_4(\alpha_4-\alpha_1)}- \frac{2(1-p) }{\sigma} +\frac{p\mu(k\rho\varsigma+\mu\sigma)}{\varsigma^2\alpha_4^2} \Bigr)C\\ &-\frac{1}{2}p\mu\sigma^2\Bigl(\frac{\rho\varsigma\alpha_4-k\rho\varsigma-\mu\sigma}{\varsigma^2\alpha_4(\alpha_4-\alpha_1)}\Bigr)B.
		\end{align*} 
\end{thm}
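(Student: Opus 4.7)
The plan is to apply Theorem \ref{thm:delta} and Theorem \ref{thm:para_sensitivity} to the Kim--Omberg model after explicitly identifying the recurrent eigenpair $(\lambda, \phi)$. Once the eigenpair is known, the initial-factor sensitivity follows directly from $\phi'(\chi)/\phi(\chi)$, and each parameter sensitivity reduces to computing a partial derivative of $\lambda$ via the chain rule and multiplication by $1-p$.

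First I would verify the Ansatz $\phi(x) = e^{-\frac{1}{2}Bx^2 - Cx}$ for the ergodic HJB equation associated with the dual problem. Substituting this form together with the linear drift $k(\overline{m}-x)$, constant volatility $\sigma$, and linear market price of risk $\theta(x) = \mu x/\varsigma$, and collecting terms in $x^2$, $x$ and constants, the ergodic equation decouples into a scalar Riccati equation for $B$, a linear equation for $C$, and a formula for $\lambda$. The Riccati equation has two roots; the root giving an eigenfunction integrable against the ergodic distribution of the twisted factor process is $B = (\alpha_4-\alpha_1)/\alpha_2$, and back-substitution yields $C = \alpha_3(\alpha_4-\alpha_1)/(\alpha_2\alpha_4)$ and $\lambda = -\frac{1}{2}\alpha_2 C^2 + \alpha_3 C + \frac{1}{2}\sigma^2 B$. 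Condition \eqref{eqn:KO_para_condi} rewrites as $\alpha_1 + B\sigma^2/2 > 0$, which is precisely the statement that the factor process under the twisted measure $\mathbb{Q}$ remains a strictly mean-reverting Ornstein--Uhlenbeck process and is therefore ergodic; this is what is needed to check the integrability and error-term hypotheses in Theorem \ref{thm:delta}, as well as the analogous conditions uniformly in $\epsilon$ for Theorem \ref{thm:para_sensitivity}.

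For the initial-factor sensitivity, Theorem \ref{thm:delta} then gives $\lim_{T\to\infty}\frac{\partial}{\partial\chi}\ln v(\chi,T) = \phi'(\chi)/\phi(\chi) = -(B\chi+C)$, and multiplying by $1-p$ via the identity $\mathcal{U}(\chi,T) = \frac{1}{p}v(\chi,T)^{1-p}$ from Eq.\eqref{eqn:U} delivers the first displayed formula. The remaining six formulas all come from Theorem \ref{thm:para_sensitivity}, which reduces each $\lim_{T\to\infty}\frac{1}{T}\frac{\partial}{\partial\epsilon}\ln v_\epsilon(\chi,T)$ to $-\frac{\partial\lambda_\epsilon}{\partial\epsilon}\big|_{\epsilon=0}$; one then differentiates $\lambda = -\frac{1}{2}\alpha_2 C^2 + \alpha_3 C + \frac{1}{2}\sigma^2 B$ with respect to each $\epsilon \in \{k, \overline{m}, \mu, \varsigma, \rho, \sigma\}$ through the dependencies $\alpha_1, \alpha_2, \alpha_3, \alpha_4 \mapsto B, C$, and multiplies by $-(1-p)$. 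For $\overline{m}$ this is straightforward because only $\alpha_3$ depends on $\overline{m}$; for $k$ only $\alpha_1$ (and hence $\alpha_4$) and $\alpha_3$ are affected; the formulas for $\mu, \varsigma, \rho, \sigma$ require keeping track of all four $\alpha_i$ and produce the more elaborate expressions in the statement.

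The main obstacle is not the algebraic differentiation, which, although tedious, is mechanical, but rather the verification that the hypotheses of Theorem \ref{thm:para_sensitivity} hold across the full parameter family -- specifically the $o(T)$-decay as $T\to\infty$ of $\frac{1}{T}\frac{\partial}{\partial\epsilon}\big|_{\epsilon=0}\mathbb{E}^{\mathbb{Q}_\epsilon}\bigl[\phi(X_T^\epsilon)^{-1}\, e^{\int_0^T f(X_s^\epsilon,s;T)\,ds}\bigr]$. For this one needs uniform ergodic contraction of the twisted Ornstein--Uhlenbeck factor in a neighborhood of $\epsilon=0$, which follows from the openness of \eqref{eqn:KO_para_condi} together with smooth dependence of the twisted drift and volatility on the parameters, so the mean-reversion speed is bounded away from zero uniformly in $\epsilon$. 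With those checks in place, the stated formulas follow by assembling the chain-rule derivatives of $\lambda$.
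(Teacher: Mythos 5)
Your skeleton matches the paper's: identify the eigenpair $(\lambda,\phi)$ from the ergodic HJB equation, get the initial-factor sensitivity from $\phi'/\phi$ via Theorem \ref{thm:delta}, and reduce each parameter sensitivity to $-\partial_\epsilon\lambda_\epsilon|_{\epsilon=0}$ via Theorem \ref{thm:para_sensitivity}, then multiply by $1-p$. However, there are two genuine gaps. First, you treat the $\sigma$-sensitivity on the same footing as $k,\overline{m},\mu,\varsigma,\rho$, but $\sigma$ is the diffusion coefficient of the factor process, so perturbing it changes the volatility of $X$, and Theorem \ref{thm:rho} (the engine behind condition (iii) of Theorem \ref{thm:total_chain}) explicitly assumes the volatility functions are \emph{not} perturbed. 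The paper handles $\sigma$ separately (Proposition \ref{prop:KO_sigma_1}) by applying the Lamperti transform $\ell(x)=(x-\chi)/\sigma$ so that the transformed process $\check{X}$ has unit volatility and the $\sigma$-dependence is pushed entirely into the drift and the functionals, after which the drift-perturbation machinery of Section \ref{sec:vega} / Theorem \ref{thm:vega_vari} applies. Note that $\rho$, by contrast, genuinely is a pure drift perturbation here because the $\mathbb{Q}$-dynamics of $X$ depend on $\sigma_1,\sigma_2$ only through $\sigma=\sqrt{\sigma_1^2+\sigma_2^2}$, which is $\rho$-independent; your proposal does not distinguish these cases.

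Second, the assertion that the hypotheses of Theorems \ref{thm:delta} and \ref{thm:para_sensitivity} ``follow from uniform ergodic contraction'' and openness of \eqref{eqn:KO_para_condi} skips the actual content of the paper's Appendix \ref{app:KO}. One must first verify the four martingale conditions A\ref{assume:P}--A\ref{assume:Q} (done via a linear-growth criterion for stochastic exponentials) before the decomposition $v=e^{-\lambda T}\phi\,\mathbb{E}^{\mathbb{Q}}[\cdots]$ even exists. The condition \eqref{eqn:KO_para_condi} is then used quantitatively, not just qualitatively: since $\phi^{-u}(x)=e^{\frac{u}{2}Bx^2+uCx}$ grows like a Gaussian with the wrong sign, the uniform boundedness of $\Gamma_u(x,T)=\mathbb{E}^{\mathbb{Q}}[\phi^{-u}(X_T)e^{u\int_0^T f}]$ requires that the stationary variance $\sigma^2/(2(k+q\mu\sigma_1/\varsigma+B\sigma^2))$ of $X$ under $\mathbb{Q}$ be strictly smaller than $1/(uB)$ for some $u>1$, which is exactly $k+q\mu\rho\sigma/\varsigma+B\sigma^2/2>0$ (Lemma \ref{eqn:KO_exponent_u}). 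On top of this one needs the exponential decay rates $|B-\beta(t)|,|C-\gamma(t)|=O(e^{-2\alpha_4 t})$ to control $f$ and $f_x$, the first-variation process estimate, and the moment bounds of Lemmas \ref{lem:measure_change_KO}--\ref{lem:KO_finite_inte_power}. These verifications are the bulk of the proof and are where the stated hypothesis actually enters; calling them a consequence of smooth parameter dependence understates what must be checked.
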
 
\noindent The proof of these asymptotic results can be found at Appendix \ref{app:KO}.

\subsection{The Heston model}
\label{sec:Heston_model}

In the Heston stochastic volatility model (\cite{heston1993closed}) the asset price $S$ and the stochastic variance process $X$ satisfy 
\begin{align*}
	dS_t &=\mu X_tS_t\,dt+\varsigma\sqrt{X_t} S_t\,dW_{1,t}, \qquad S_0=1,\\
	dX_t &=k(\overline{m}-X_t)\,dt+\sigma \sqrt{X_t} \,dZ_t,\qquad X_0=\chi	
\end{align*} 
for correlated Brownian motions $W_{1}$ and $Z$ with correlation parameter $\rho \in(-1,1)$. Here the parameters for the  reversion speed $k,$ the mean reversion level $\overline{m},$ the volatilities $\varsigma,$ $\sigma$ are positive, and the return $\mu$ is a real number. 
Assume the Feller condition $2k\overline{m}> \sigma^2$, which ensures that the zero boundary of $X$ is inaccessible.

This fits into the standard model by setting $\sigma_1 = \rho \sigma,$ $\sigma_2 = \sqrt{1-\rho^2} \sigma$ and $W_{2,t}=\frac{1}{\sqrt{1-\rho^2}}Z_t-\frac{\rho}{\sqrt{1-\rho^2}}W_{1,t}$ so that
\[
	dX_t=k(\overline{m}-X_t)\,dt+\sigma_1 \sqrt{X_t} \,dW_{1,t}+\sigma_2 \sqrt{X_t} \,dW_{2,t}, \qquad X_0=\chi.
\]   
The market price of risk is $\theta_t:=\frac{\mu}{\varsigma}\sqrt{X_t}.$
Define
\[
	\beta_1:=k+\frac{q\mu \rho\sigma}{\varsigma}, \qquad \beta_2:=\sqrt{\beta_1^2+ q (1-q\rho^2)\mu^2\sigma^2/\varsigma^2 },
\]
and
\[
	B=\frac{(1-q)(\beta_2-\beta_1) }{(1-q\rho^2) \sigma^2}.
\]
Then the recurrent eigenpair is
\[
	\lambda=k\overline{m}B, \qquad \phi(x) = e^{-Bx}.
\]

\begin{thm}\label{thm:Heston}
	In the  Heston model,  assume the Feller condition $2k\overline{m}> \sigma^2$ and
		\begin{equation}
		\label{eqn:Hes_para_condi}
		k+\frac{q\mu\rho\sigma}{\varsigma}>0.
		\end{equation} 
	Then 
	 the long-term sensitivities  of the optimal expected utility in Eq.\eqref{eqn:U} are
	\begin{align}
			\lim_{T\rightarrow\infty}\frac{\partial }{\partial \chi}\ln \bigl\vert \mathcal{U}(s,T)\bigr\vert & =-(1-p)B,  \\
			\lim_{T\rightarrow\infty}\frac{1}{T}\frac{\partial }{\partial k}\ln \bigl\vert \mathcal{U}(s,T)\bigr\vert & =(1-p)\overline{m}B(\frac{k}{\beta_2}-1),\\
			\lim_{T\rightarrow\infty}\frac{1}{T}\frac{\partial }{\partial \overline{m}}\ln \bigl\vert \mathcal{U}(s,T)\bigr\vert &=-(1-p)kB, \\ 
			\lim_{T\rightarrow\infty}\frac{1}{T}\frac{\partial }{\partial \mu}\ln \bigl\vert\mathcal{U}(s,T)\bigr\vert &=\frac{k\overline{m}q(\rho\varsigma\beta_2-k \rho  \varsigma- \mu\sigma) }{(1-q\rho^2)\sigma\varsigma^2\beta_2} , \\ 			
			\lim_{T\rightarrow\infty}\frac{1}{T}\frac{\partial }{\partial \varsigma}\ln \bigl\vert \mathcal{U}(s,T)\bigr\vert &=\frac{ k\overline{m}p\mu\sigma B (\rho\varsigma\beta_2-\rho k \varsigma-\mu\sigma) }{\varsigma^3\beta_2(\beta_2-\beta_1)},\\
			\lim_{T\rightarrow\infty}\frac{1}{T}\frac{\partial }{\partial \rho}\ln \bigl\vert\mathcal{U}(s,T)\bigr\vert &=k\overline{m}B\Bigl(-\frac{p\mu \sigma (\beta_2-k)}{\varsigma\beta_2(\beta_2-\beta_1)}+\frac{2p\rho}{1-q\rho^2}\Bigr),\\
			\lim_{T\rightarrow\infty}\frac{1}{T}\frac{\partial }{\partial \sigma}\ln \bigl\vert\mathcal{U}(s,T)\bigr\vert &=k\overline{m}B\Bigl(\frac{2(1-p)}{\sigma}+\frac{p\mu(k \rho\varsigma+\mu\sigma-\rho\varsigma\beta_2) }{\varsigma^2\beta_2(\beta_2-\beta_1)}\Bigr).
		\end{align}	
\end{thm}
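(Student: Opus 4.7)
My plan is to reduce Theorem \ref{thm:Heston} to a direct application of Theorems \ref{thm:delta} and \ref{thm:para_sensitivity}, once the stated pair $(\lambda,\phi)$ has been identified as the recurrent eigenpair and the hypotheses of those theorems have been verified in the Heston setting. If this reduction succeeds, the initial-factor sensitivity is simply $(1-p)\phi'(\chi)/\phi(\chi) = -(1-p)B$, and each parameter sensitivity reduces to computing $-(1-p)\partial\lambda_\epsilon/\partial\epsilon|_{\epsilon=0}$ with $\lambda = k\overline{m}B(k,\overline{m},\mu,\varsigma,\rho,\sigma)$.

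To verify the eigenpair, I would substitute the ansatz $\phi(x) = e^{-Bx}$ into the ergodic HJB equation attached to the dual value function for the Heston model. Because all coefficients of the Heston generator and of $\theta^2(x) = \mu^2 x/\varsigma^2$ are affine in $x$, and because $\phi'/\phi = -B$ and $\phi''/\phi = B^2$ are constants, the ergodic HJB equation reduces (after dividing by $\phi$) to an affine function of $x$. Its constant part forces $\lambda = k\overline{m}B$, while the coefficient of $x$ yields a quadratic in $B$ whose admissible root is the stated $B = (1-q)(\beta_2-\beta_1)/((1-q\rho^2)\sigma^2)$. The Feller condition $2k\overline{m}>\sigma^2$ keeps the origin inaccessible, and Eq.\eqref{eqn:Hes_para_condi} (equivalently $\beta_1>0$) ensures mean-reversion of the Hansen--Scheinkman-tilted dynamics of $X$, hence positive recurrence under the associated measure $\mathbb{Q}$.

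With the eigenpair pinned down, the hypotheses of Theorems \ref{thm:delta} and \ref{thm:para_sensitivity} are verified by exploiting that the tilted factor process remains an affine CIR-type diffusion with an invariant gamma law; the exponentially-affine form of $\phi$ together with continuity of $f$ turns the relevant expectations into tractable moment-generating functionals whose $\chi$-derivatives vanish as $T\to\infty$ and whose $\epsilon$-derivatives are of order $o(T)$. The parameter derivatives are then mechanical: $\partial_\epsilon \beta_1$ is immediate, $\partial_\epsilon \beta_2$ follows from differentiating the identity $\beta_2^2 = \beta_1^2 + q(1-q\rho^2)\mu^2\sigma^2/\varsigma^2$, and both propagate through $B$ by the quotient rule. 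The identity $\beta_2 - \beta_1 = (\beta_2^2 - \beta_1^2)/(\beta_2 + \beta_1)$ is what produces the recurring denominator $\beta_2(\beta_2-\beta_1)$ visible on the right-hand sides of the claimed formulas.

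The main obstacle I expect lies not in the algebra but in justifying differentiation under the expectation uniformly in $\epsilon$ on the unbounded state space $(0,\infty)$ with the degenerate diffusion coefficient $\sigma\sqrt{x}$ at the origin. One must check that the perturbed CIR-type dynamics under $\mathbb{Q}_\epsilon$ stay non-explosive and positively recurrent uniformly in $\epsilon$ near $0$, and invoke standard analyticity-and-moment bounds for affine processes to pass derivatives inside the expectation. Once these structural facts are in place, the remaining parameter-by-parameter chain-rule computations are routine and can be relegated to an appendix.
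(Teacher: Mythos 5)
Your overall strategy coincides with the paper's: identify $(\lambda,\phi)=(k\overline{m}B,\,e^{-Bx})$ by substituting the exponential-affine ansatz into the ergodic HJB equation, verify the hypotheses of Theorems \ref{thm:delta} and \ref{thm:para_sensitivity}, and read off each sensitivity as $-(1-p)\,\partial\lambda/\partial(\cdot)$. The algebra you describe (differentiating $\beta_2^2=\beta_1^2+q(1-q\rho^2)\mu^2\sigma^2/\varsigma^2$ and propagating through $B$) is exactly how the closed-form right-hand sides arise. However, there is one concrete gap: you treat all parameters uniformly through Theorem \ref{thm:para_sensitivity}, but the machinery behind that theorem (Theorem \ref{thm:rho}) only handles perturbations of the \emph{drift} of the factor process under $\mathbb{Q}_\epsilon$. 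Perturbing $\sigma$ changes the diffusion coefficient $\sigma\sqrt{x}$, so that route fails for the $\sigma$-sensitivity. The paper resolves this with the Lamperti transform $\check{X}_t=\tfrac{2}{\sigma}\sqrt{X_t}$ (Theorem \ref{thm:vega_vari}), which converts the volatility perturbation into a perturbation of the drift and of the initial value of $\check{X}$, each of which is then handled by the drift-perturbation theorem and the initial-factor result respectively. Your proposal needs this extra step, or some substitute for it, before the $\sigma$ formula is justified.

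Two further points are glossed over in a way that matters for the Heston model specifically. First, the $\mathbb{Q}$-dynamics of $X$ is \emph{time-inhomogeneous} (its drift contains $\beta(T-t)$), so it is not literally an affine process with a gamma invariant law; the paper instead sandwiches $X$ between two genuine CIR processes $L_t\le X_t\le U_t$ via a comparison theorem and transfers all moment and exponential-moment bounds from $U$ and $L$. Second, the required bound on $\mathbb{E}^{\mathbb{Q}}[\phi^{-u}(X_T)]=\mathbb{E}^{\mathbb{Q}}[e^{uBX_T}]$ holds only for $u$ below an explicit threshold determined by $\upsilon_U=k+\tfrac{q\mu\sigma_1}{\varsigma}+\sigma^2 B$; this is precisely where the standing condition $k+\tfrac{q\mu\rho\sigma}{\varsigma}>0$ enters, and it forces the H\"older exponents in the verification of Theorem \ref{thm:rho} to be chosen carefully (the paper uses $v=2$, $u=2$, rather than arbitrary exponents as in the Kim--Omberg case). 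Your appeal to ``standard analyticity-and-moment bounds for affine processes'' points in the right direction but does not by itself close either issue.
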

\noindent The proof of these asymptotic results can be found at Appendix \ref{app:Heston}.  

\begin{remark}
\label{remark:para_condition}
	The conditions in  Eq.\eqref{eqn:KO_para_condi} and  Eq.\eqref{eqn:Hes_para_condi} are there to guarantee that the process $X$ is still mean-reverting under the measures relevant for the analysis (details are discussed in the Appendices \ref{app:KO} and \ref{app:Heston}). This condition is in spirit similar to the conditions one finds in the long term analysis of implied volatility in these models where the asymptotic regime depends on the mean-reversion property under the share measure (see, e.g., \cite[Theorem 2.1]{forde2011large} and \cite[Section 6.1]{keller2011moment}.		
\end{remark}

\section{Utility maximization problem}
\label{sec:utility_max}

We provide a mathematical background for the heuristic argument given in Section \ref{sec:heuristic}. First we discuss the dual formulation of the utility maximization problem and its characterization via the solution of an HJB equation. Then we introduce the ergodic HJB equation who can characterize the long-run problem and analyze it in terms of its eigenpair. Finally we generalize the Hansen--Scheinkmann decomposition to functionals of time-inhomogeneous Markov process to lay the ground for the following sensitivity analysis. On the way we make precise the assumptions that are needed for our conclusions.

\subsection{Dual formulation and HJB equations}   
\label{sec:dual}

One of the main ideas is to employ the dual formulation of the utility maximization problem as presented in \cite{kramkov1999asymptotic}. We recall (see Eq.\eqref{eqn:primal}) the primal problem of utility maximization is
\begin{equation}
	\sup_{\Pi\in\mathcal{X}}\mathbb{E}^\mathbb{P}\bigl[U(\Pi_T)\bigr]
\end{equation} 
This primal problem is related to the following dual formulation, which is a minimization problem
\begin{equation}
\label{eqn:lambda}
	\inf_{Y\in\mathcal{Y}}\mathbb{E}^\mathbb{P}\bigl[V(Y_T)\bigr]= \inf_{Y\in\mathcal{Y}}\mathbb{E}^\mathbb{P}\bigl[-{Y_T^q}/{q}\bigr],
\end{equation}
where $q=-\frac{p}{1-p}$ is the conjugate exponent of $p$ and $V(y)=-\frac{y^q}{q}$ is the dual conjugate of the utility function $U.$  Here, $\mathcal{Y}$ is the family of nonnegative semimartingales $Y$ with $Y_0=1$   such that the product $(X_tY_t)_{t\ge0}$ is a supermartingale for any $X\in\mathcal{X}.$ Denote  by  $\hat{Y}$ the optimal element in $\mathcal{Y}$ (Theorem 2.2 in \cite{kramkov1999asymptotic} guarantees the existence of this optimum) and define
\begin{equation}
\label{eqn:v}
	v(\chi,T):= \mathbb{E}^\mathbb{P}\bigl[\hat{Y}_T^q \bigr] = \mathbb{E}^\mathbb{P}\bigl[\hat{Y}_T^q \, \big\vert \, X_0=\chi\bigr]. 
\end{equation}
We emphasize that here $\chi$ is the initial value of the factor process. Note that the function $v$ is not the actual dual value function but a constant multiple of it. This follows from normalizing the dual initial condition which can be done thanks to the homotheticity of the power function $y^q$. From   Eq.(4.10)   in \cite{larsen2014expansion}, we know
\begin{equation}\label{eqn:dual_form}
	\sup_{\Pi\in\mathcal{X}}\mathbb{E}\bigl[U(\Pi_T)\bigr]   =\frac{1}{\,p\,}v^{1-p}(\chi,T)
\end{equation}  
so that the long-term growth rate of the optimal expected utility in Eq.\eqref{eqn:long_term_rate} is 
\begin{equation}
	\lim_{T\rightarrow\infty}\frac{1}{T}\ln \Bigl\vert\sup_{\Pi\in\mathcal{X}}\mathbb{E}^\mathbb{P}\bigl[U(\Pi_T)\bigr]\Bigr\vert =(1-p)\lim_{T\rightarrow\infty}\frac{1}{T}\ln v(\chi,T).
\end{equation} 

Under some conditions, we can characterize the function $v$ as a solution of a HJB equation
\begin{equation}
\label{eqn:HJB}
	v_t=\frac{1}{2}\bigl(\sigma_1^2(x)+\sigma_2^2(x)\bigr)v_{xx} +\sup_{\xi\in\mathbb{R}}\bigl\{l(\xi,x)v+h(\xi,x)v_x\bigr\}, \qquad v(x,0)=1 
\end{equation}
where
\begin{align}
\label{eqn:l_h}
	l(\xi,x) & :=-\frac{q}{2}(1-q)\bigl(\theta^2(x)+\xi^2\bigr) \nonumber \\ h(\xi,x) & := m(x)-q\theta(x)\sigma_1(x)-q\xi\sigma_2(x).
\end{align}
Moreover, the optimal element $\hat{Y}\in\mathcal{Y}$ of Eq.\eqref{eqn:lambda} can be expressed as
\begin{equation}
\label{eqn:opt}
	\hat{Y}_t=e^{-\int_0^t\theta(X_s)\,dW_{1,s}-\frac{1}{2}\int_0^t\theta^2(X_s)\,ds-\int_0^t\hat{\xi}(X_s,s;T)\,dW_{2,s}-\frac{1}{2}\int_0^t\hat{\xi}^2(X_s,s;T)\,ds}, \qquad 0\leq t\leq T,
\end{equation}
where $\theta(X_t):=\frac{b(X_t)}{\varsigma(X_t)}$ is the market price of risk and 
\begin{equation}\label{eqn:opt_xi}
	\hat{\xi}(x,t;T) :=-\frac{\sigma_2(x)v_x(x,T-t)}{(1-q)v(x,T-t)} 
\end{equation} 
is the optimal control of the HJB equation \eqref{eqn:HJB}. Under appropriate conditions the function $v$ can be approximated using a solution pair $(\lambda,\phi)$ of
\begin{equation}\label{eqn:EBE}
	-\lambda\phi(x)=\frac{1}{2}\bigl(\sigma_1^2(x)+\sigma_2^2(x)\bigr)\phi_{xx}+\sup_{\xi\in\mathbb{R}}\bigl\{l(\xi,x)\phi+h(\xi,x)\phi_x\bigr\}
\end{equation}  
which is called the ergodic HJB equation. It is noteworthy that the real number $\lambda$ and the function $\phi$ can be regarded as an eigenvalue and an eigenfunction of the operator $-\mathcal{L}$ where
\[
	\mathcal{L}\phi=\frac{1}{2}\bigl(\sigma_1^2(x)+\sigma_2^2(x)\bigr)\phi_{xx}+\sup_{\xi\in\mathbb{R}}\bigl\{l(\xi,x)\phi+h(\xi,x)\phi_x\bigr\}.
\]
We will review the motivation of these arguments and the derivation in Appendix \ref{app:conn_HJB}.

We make the following assumptions on the function $v$ and the structure of the optimal element $\hat{Y}\in\mathcal{Y}$ of the dual problem without going into further details. For sufficient conditions and a more detailed discussion we refer to \cite[p. 10--12]{knispel2012asymptotics}, \cite[Section 4]{hernandez2006robust} and \cite[Sections 3 and 5]{kaise2006structure}.

\begin{assume}
\label{assume:diff_v}
	The function $v(x,t)$ given by \eqref{eqn:v} is twice continuously differentiable in $x$ and once in $t$ and satisfies the PDE \eqref{eqn:HJB}.
\end{assume}

\begin{assume}
\label{assume:struc_opt}
	The optimizer $\hat{Y}$ of Eq.\eqref{eqn:lambda} is given by Eq.\eqref{eqn:opt}.
\end{assume}

\begin{assume}
\label{assume:long-term}
	There  exist a real number $\lambda$ and a continuously twice-differentiable positive function $\phi$ satisfying Eq.\eqref{eqn:EBE}	such that
\begin{equation}
\label{eqn:conv}
\frac{v(x,t)}{e^{-\lambda t}\phi(x)}\rightarrow C\;\textnormal{ as }\;\; t\rightarrow\infty 
\end{equation} 
	for a  positive  constant $C$ not depending on $x.$  
\end{assume}

We can represent the function $v$ in a simpler way. From  Eq.\eqref{eqn:v}, Eq.\eqref{eqn:opt} and A\ref{assume:struc_opt}, it follows that
\begin{align}
\label{eqn:v_expression}
	v(\chi,T):= \mathbb{E}^\mathbb{P}\bigl[\hat{Y}_T^q \bigr] 
	&=\mathbb{E}^\mathbb{P}\bigl[e^{-q\int_0^T\theta(X_s)\,dW_{1,s}-\frac{q}{2}\int_0^T\theta^2(X_s)\,ds-q\int_0^T\hat{\xi}(X_s,s;T)\,dW_{2,s}-\frac{q}{2}\int_0^T\hat{\xi}^2(X_s,s;T)\,ds}\bigr] \nonumber\\
	&=\mathbb{E}^\mathbb{\hat{P}}\bigl[e^{- \frac{q}{2}(1-q)\int_0^T(\theta^2(X_s)+\hat{\xi}^2(X_s,s;T))\,ds}\bigr] 
\end{align}
where $\mathbb{\hat{P}}$ is a measure on $\mathcal{F}_T$ defined as
\begin{equation}
\label{eqn:hat_P_from_P_RN}
\frac{d\mathbb{\hat{P}}}{d\mathbb{P}}=\mathcal{E}\Bigl(-q\int_0^{\cdot}\theta(X_s)\,dW_{1,s}-q\int_0^\cdot\hat{\xi}(X_s,s;T)\,dW_{2,s}\Bigr)_T
\end{equation} 
under A\ref{assume:P} stated below.
The $\mathbb{\hat{P}}$-dynamics of $X$ is 
\[
dX_t= (m(X_t)-q\theta(X_t)\sigma_1(X_t)-q\hat{\xi}(X_t,t;T)\sigma_2(X_t))\,dt+\sigma_1(X_1)\,d\hat{W}_{1,t}+\sigma_2(X_t)\,d\hat{W}_{2,t}
\]
for a $\mathbb{\hat{P}}$-Brownian motion $(\hat{W}_{1,t},\hat{W}_{2,t}).$
 
\begin{assume}\label{assume:P}
	For the function $\hat{\xi}$ given by Eq.\eqref{eqn:opt_xi}, the local martingale
\begin{equation}
	\biggl(\mathcal{E}\Bigl(-q\int_0^\cdot\theta(X_s)\,dW_{1,s}-q\int_0^\cdot\hat{\xi}(X_s,s;T)\,dW_{2,s}\bigr)_t	\biggr)_{0\leq t\leq T} 
\end{equation}  
is a true martingale under the measure  ${\mathbb{P}}$.
\end{assume}

The solution pair $(\lambda,\phi)$ describes the long-term behavior of $v(\chi,T)$ for initial factor $\chi$ and maturity $T$ as $T\to\infty.$ The long-term growth rate of the optimal expected utility is defined as
\begin{equation}
\label{eqn:long_term_rate}
	 \lim_{T\rightarrow\infty}\frac{1}{T}\ln\Bigl\vert \sup_{\Pi\in\mathcal{X}}\mathbb{E}^\mathbb{P}\bigl[U(\Pi_T)\bigr]\Bigr\vert
\end{equation} 
and can be described by the eigenvalue $\lambda$ since
\begin{equation} 
	 -\lambda=\lim_{T\rightarrow\infty}\frac{1}{T}\ln v(\chi,T)=\frac{1}{1-p}\lim_{T\rightarrow\infty}\frac{1}{T}\ln\Bigl\vert \sup_{\Pi\in\mathcal{X}}\mathbb{E}^\mathbb{P}\bigl[U(\Pi_T)\bigr]\Bigr\vert,
\end{equation}
which follows from Eq.\eqref{eqn:dual_form}. The optimal control of the ergodic HJB equation \eqref{eqn:EBE} is a function of $x,$ so we denote by $\xi^*(x).$ It is easy to check that $\xi^*$ is given by
\begin{equation}
\label{eqn:xi_star}	
	\xi^*(x)= -\frac{\sigma_2(x)\phi_x(x)}{(1-q)\phi(x)}
\end{equation}
and Eq.\eqref{eqn:EBE} becomes
\begin{equation}\label{eqn:EBEs}
-\lambda\phi(x)=\frac{1}{2}\bigl(\sigma_1^2(x)+\sigma_2^2(x)\bigr)\phi_{xx}+h(\xi^*(x),x)\phi_x+l(\xi^*(x),x)\phi.
\end{equation}  
The long-term growth rate  can be calculated as
\[
	-\lambda=\lim_{T\rightarrow\infty}\frac{1}{T}\ln\mathbb{E}^\mathbb{\hat{P}}\Bigl[e^{ \int_0^Tl(\xi^*(X_s),X_s)\,ds}\Bigr].
\]

\subsection{Hansen--Scheinkman decomposition}
\label{sec:HS}

This section is inspired by the Hansen--Scheinkman decomposition in \cite{hansen2009long} and is adapted to the current context. They study the decomposition of a multiplicative functional of a time-homogeneous Markov process into a product of an exponential of the eigenvalue, the eigenfunction and an error term. In this section, we adapt their method to a time-inhomogeneous Markov case. Recall from Eq.\eqref{eqn:v_expression} that 
\begin{equation}
\label{eqn:v_eps}
		v(\chi,T)=\mathbb{E}^\mathbb{\hat{P}}\Bigl[e^{-	\frac{q}{2}(1-q)\int_0^T \bigl(\theta^2(X_s)+\hat{\xi}^2(X_s,s;T)\bigr)\,ds} \, \Big\vert \, X_0 = \chi \Bigr]
\end{equation}
and the $\hat{\mathbb{P}}$-dynamics of $X$ is
\[
	dX_t= (m(X_t)-q\theta(X_t)\sigma_1(X_t)-q\hat{\xi}(X_t,t;T)\sigma_2(X_t))\,dt+\sigma_1(X_1)\,d\hat{W}_{1,t}+\sigma_2(X_t)\,d\hat{W}_{2,t}
\] 
for $0\leq t\leq T$ and a two-dimensional $\hat{\mathbb{P}}$-Brownian motion $(\hat{W}_{1,t},\hat{W}_{2,t}).$ 

We assume the following condition:
\begin{assume}
\label{assume:P_hat}
	For the functions $\hat{\xi}$ given by Eq.\eqref{eqn:opt_xi} and $\xi^*$ given by Eq.\eqref{eqn:xi_star}, the local martingale
	\[
			\biggl(\mathcal{E}\Bigl(q\int_0^\cdot\hat{\xi} (X_s,s;T)-\xi^*(X_s)\,d\hat{W}_{2,s}\Bigr)_t\biggr)_{0\leq t\leq T}
		\] 
	is a true martingale under the measure  $\hat{\mathbb{P}}$.
\end{assume} 
\noindent Define a new measure $\tilde{\mathbb{P}}_T$ on $\mathcal{F}_T$ by
\begin{equation}
\label{eqn:tilde_P}
	\frac{d\tilde{\mathbb{P}}_T}{d\hat{\mathbb{P}}}\Big\vert_{\mathcal{F}_T}  =\mathcal{E}\Bigl(q\int_0^\cdot\hat{\xi} (X_s,s;T)-\xi^*(X_s)\,d\hat{W}_{2,s}\Bigr)_T .
\end{equation} 
For simplicity we drop the subscript $T$ and write just $\tilde{\mathbb{P}}.$
The $\tilde{\mathbb{P}} $-dynamics of $X$ is
\begin{equation}
\begin{aligned}
dX_t 
&= \bigl(m (X_t )-q\theta (X_t )\sigma_{1}(X_t )-q\xi ^*(X_t )\sigma_{2}(X_t )\bigr)\,dt+\sigma_{1}(X_t )\,d\tilde{W}_{1,t}+\sigma_{2}(X_t )\,d\tilde{W}_{2,t} 
\end{aligned}
\end{equation} 
with two-dimensional $\hat{\mathbb{P}}$-Brownian motion 
\begin{equation}
	\begin{pmatrix}
		d\tilde{W}_{1,t}\\
		d\tilde{W}_{2,t}
	\end{pmatrix}
	=\begin{pmatrix}
		0\\
		q\xi^*(X_t )-q\hat{\xi} (X_t,t;T)
	\end{pmatrix}dt
		+\begin{pmatrix}
		d\hat{W}_{1,t}\\
		d\hat{W}_{2,t}
	\end{pmatrix}.
\end{equation}  
From Eq.\eqref{eqn:v_eps} it follows that
\begin{align*}
	v (\chi,T) &=\mathbb{E}^\mathbb{\hat{P} } \Bigl[e^{-\frac{q}{2}(1-q)\int_0^T	(\theta ^2(X_s)+\xi ^{*2}(X_s ))\,ds}\,e^{-\frac{q}{2}(1-q)\int_0^T	(\hat{\xi} ^2(X_s,s ;T)-\xi ^{*2}(X_s ))\,ds}\Bigr]\\
&=\mathbb{E}^\mathbb{\tilde{P} } \Bigl[e^{-	\frac{q}{2}(1-q)\int_0^T (\theta ^2(X_s)+\xi ^{*2}(X_s ))\,ds}\, e^{-\frac{q}{2}(1-q)\int_0^T (\hat{\xi} ^2(X_s,s ;T)-\xi ^{*2}(X_s ))\,ds}\,\frac{d\hat{\mathbb{P}} }{d\tilde{\mathbb{P}} }\Bigr].
\end{align*}

Define 
	\begin{align*}
M_t &:=\frac{\phi(X_t)}{\phi(\chi)}\,e^{\lambda t-	\frac{q}{2}(1-q)\int_0^t (\theta^2(X_s)+\xi^{*2}(X_s))\,ds}, \qquad ;0\le t\le T.
\end{align*}
Then applying the It\^{o} formula to $M$ and using the ergodic HJB equation \eqref{eqn:EBEs}, it can be checked that
\[
M_t =\mathcal{E}\Bigl(\int_0^\cdot\frac{\phi'(X_s)}{\phi(X_s)}\sigma_{1}(X_s)\,d\tilde{W}_{1,s}+\int_0^\cdot\frac{\phi'(X_s)}{\phi(X_s)}\sigma_{2}(X_s)\,d\tilde{W}_{2,s} \Bigr)_t, \qquad 0\le t\le T ,
\]
and thus a $\tilde{\mathbb{P}}$-local martingale.

\begin{assume}\label{assume:M}
	With the solution pair $(\lambda,\phi)$ of A\ref{assume:long-term}, the $\tilde{\mathbb{P}}$-local martingale $(M_t)_{0\leq t\leq T}$
	is a true martingale.
\end{assume} 


We use this random variable $M_T$ as a Radon--Nikod\'{y}m derivative to defined a new measure $\overline{\mathbb{P}},$ that is
\begin{equation}
\label{eqn:overline_P}
	\frac{d\overline{\mathbb{P}}}{d\tilde{\mathbb{P}}}\Big\vert_{\mathcal{F}_T}=M_T.
\end{equation} 
This measure $\overline{\mathbb{P}}$ depends on $T, $ but we suppress in the notation the dependence on $T$ as before. Then
\begin{align*}
	v(\chi,T) &=e^{-\lambda T}\phi(\chi)\mathbb{E}^\mathbb{\tilde{P}} \Bigl[\frac{M_T}{\phi(X_T)}\, e^{-\frac{q}{2}(1-q)\int_0^T (\hat{\xi}^2(X_s,s;T)-\xi^{*2}(X_s))\,ds}\,\frac{d\hat{\mathbb{P}}}{d\tilde{\mathbb{P}}}\Bigr]\\
	&=e^{-\lambda T}\phi(\chi)\mathbb{E}^\mathbb{\overline{P}} \Bigl[\frac{1}{\phi(X_T)}\, e^{-\frac{q}{2}(1-q)\int_0^T (\hat{\xi}^2(X_s,s;T)-\xi^{*2}(X_s))\,ds}\,\frac{d\hat{\mathbb{P}}}{d\tilde{\mathbb{P}}}\Bigr].
\end{align*}
The process 
\begin{equation}
	\begin{pmatrix}
		d\overline{W}_{1,t}\\
		d\overline{W}_{2,t}
	\end{pmatrix}
	=-\begin{pmatrix}
		\frac{\phi'(X_t)}{\phi(X_t)}\sigma_{1}(X_t)\\
		\frac{\phi'(X_t)}{\phi(X_t)}\sigma_{2}(X_t)
	\end{pmatrix}dt
	+\begin{pmatrix}
		d\tilde{W}_{1,t}\\
		d\tilde{W}_{2,t}
	\end{pmatrix}
\end{equation} 
is a Brownian motion under  $\overline{\mathbb{P}}$ and the $\overline{\mathbb{P}}$-dynamics of $X$ is
\begin{align}
\label{eqn:bar_P_dynamics_X}
	dX_t=&\,\Bigl(m(X_t)-q\theta(X_t)\sigma_{1}(X_t)-q\xi^*(X_t)\sigma_{2}(X_t)+\frac{\phi'(X_t)}{\phi(X_t)}\bigl(\sigma_{1}^2(X_t)+\sigma_{2}^2(X_t)\bigr)\Bigr)\,dt \nonumber\\
	&\,+\sigma_{1}(X_t)\,d\overline{W}_{1,t}+\sigma_{2}(X_t)\,d\overline{W}_{2,t}.  
\end{align}
 
 We now perform another change of measure to express the function $v(\chi,T)$ in a more manageable way. Before doing so, we express the
 Radon--Nikod\'{y}m derivative $\frac{d\hat{\mathbb{P}}}{d\tilde{\mathbb{P}}}$ in a different way to facilitate the calculation:
\begin{align*}
	\frac{d\hat{\mathbb{P}}}{d\tilde{\mathbb{P}}}
	&=e^{q\int_0^T\xi^*(X_s)-\hat{\xi}(X_s,s;T)\,d\hat{W}_{2,s}+\frac{q^2}{2}\int_0^T(\xi^*(X_s)-\hat{\xi}(X_s,s;T))^2\,ds}\\
	&=e^{q\int_0^T\xi^*(X_s)-\hat{\xi}(X_s,s;T)\,d\tilde{W}_{2,s}-\frac{q^2}{2}\int_0^T(\xi^*(X_s)-\hat{\xi}(X_s,s;T))^2\,ds} \\
	&=e^{q\int_0^T\xi^*(X_s)-\hat{\xi}(X_s,s;T)\,d\overline{W}_{2,s}-\frac{q^2}{2}\int_0^T(\xi^*(X_s)-\hat{\xi}(X_s,s;T))^2\,ds+q\int_0^T(\xi^*(X_s)-\hat{\xi}(X_s,s;T))\frac{\phi'(X_s)}{\phi(X_s)}\sigma_{2}(X_s)\,ds}.
\end{align*} 
We will need an additional assumption that the argument works:

\begin{assume}\label{assume:Q}
	The local martingale
	\[
			\left(\mathcal{E}\left(q\int_0^\cdot\xi^*(X_s)-\hat{\xi} (X_s,s;T)\,d\overline{W}_{2,s}\right)_t\right)_{0\leq t\leq T}
		\] 
	is a true martingale under the measure $\overline{\mathbb{P}}$.
\end{assume}
\noindent We now define a new measure $\mathbb{Q}$ by
\begin{equation}
\label{eqn:bar_P_Q}
	\frac{d\mathbb{Q}}{d\overline{\mathbb{P}}}=\mathcal{E}\left(q\int_0^\cdot\xi^*(X_s)-\hat{\xi} (X_s,s;T)\,d\overline{W}_{2,s}\right)_T.
\end{equation}
This measure $\mathbb{Q}$  depends on $T, $ but we suppress in the notation the dependence on $T$ as before.
Then 
\begin{align}
\label{eqn:final_g_eps}
	& \phantom{==} v(\chi,T) \nonumber \\
	& = e^{-\lambda T}\phi(\chi)\,\mathbb{E}^\mathbb{\overline{P}} \Bigl[\frac{1}{\phi(X_T)}\, e^{-\frac{q}{2}(1-q)\int_0^T (\hat{\xi}^2(X_s,s;T)-\xi^{*2}(X_s))\,ds}\, e^{q\int_0^T(\xi^*(X_s)-\hat{\xi}(X_s,s;T))\frac{\phi'(X_s)}{\phi (X_s)}\sigma_{2}(X_s)\,ds}\,\frac{d\mathbb{Q}}{d\overline{\mathbb{P}}}\Bigr] \nonumber\\
	& = e^{-\lambda T}\phi(\chi)\,\mathbb{E}^\mathbb{Q} \Bigl[\frac{1}{\phi(X_T)}\, e^{-\frac{q}{2}(1-q)\int_0^T(\hat{\xi}^2(X_s,s;T)-\xi^{*2}(X_s))\,ds}\, e^{q\int_0^T(\xi^*(X_s)-\hat{\xi}(X_s,s;T))\frac{\phi'(X_s)}{\phi (X_s)}\sigma_{2}(X_s)\,ds}\Bigr] \nonumber \\
	& = e^{-\lambda T}\phi(\chi)\,\mathbb{E}^\mathbb{Q} \Bigl[\frac{1}{\phi(X_T)}\, e^{-\frac{q}{2}(1-q)\int_0^T(\xi^*(X_s)-\hat{\xi}(X_s,s;T))^2\,ds}\Bigr].
\end{align}
For the last equality, we used Eq.\eqref{eqn:xi_star}. 
The $\mathbb{Q}$-dynamics of $X$ is
\begin{align}
\label{eqn:final_X}
	dX_t  & = \Bigl(m(X_t)-q\theta(X_t)\sigma_{1}(X_t)-q\hat{\xi}(X_t,t;T)\sigma_{2}(X_t)+\frac{\phi'(X_t)}{\phi(X_t)}\bigl(\sigma_{1}^2(X_t)+\sigma_{2}^2(X_t)\bigr)\Bigr)\,dt \nonumber \\
	&\phantom{==} +\sigma_{1}(X_t)\,dB_{1,t}+\sigma_{2}(X_t)\,dB_{2,t}
\end{align}
where
\begin{equation}
	\begin{pmatrix}
		dB_{1,t}\\
		dB_{2,t}
	\end{pmatrix}
	=\begin{pmatrix}
		0\\
		\hat{\xi}(X_t,t;T)-\xi^*(X_t)
	\end{pmatrix}dt+
	\begin{pmatrix}
		d\overline{W}_{1,t}\\
		d\overline{W}_{2,t}
	\end{pmatrix}
\end{equation} 
is a $\mathbb{Q}$-Brownian motion.

In conclusion, we can express the function $v(\chi,T)$  and the dynamics of $X$ in a simpler way. The following theorem follows from Eq.\eqref{eqn:final_g_eps} and Eq.\eqref{eqn:final_X}.
\begin{thm}
\label{thm:decompose}
	Assume  A\ref{assume:SDE_X} -- \ref{assume:Q}. Then  the function $v(\chi,T)$ can be decomposed as  
	\begin{equation}
		\label{eqn:v_decomposition}
			v(\chi,T)=e^{-\lambda T}\phi(\chi)\,\mathbb{E}^{\mathbb{Q}}\Bigl[\frac{1}{\phi (X_T)}\, e^{\int_0^Tf(X_s,s;T)\,ds} \Bigr]
		\end{equation}	 
	and the $\mathbb{Q}$-dynamics of $X$ is
	\begin{equation}
		\label{eqn:dX_kappa}
			dX_t = \kappa(X_t,t;T) \,dt+\sigma_{1}(X_t)\,dB_{1,t}+\sigma_{2}(X_t)\,dB_{2,t}, \qquad 0\leq t\leq T,
		\end{equation}
	where  
	\begin{align}
		\label{eqn:kappa}
			 f(x,t;T)&=-\frac{q}{2}(1-q)\bigl(\xi^*(x)-\hat{\xi}(x,t;T)\bigr)^2 \nonumber \\
			\kappa(x,t;T)&=m(x)-q\theta(x)\sigma_{1}(x)-q\hat{\xi}(x,t;T)\sigma_{2}(x)+\frac{\phi'(x)}{\phi(x)}\bigl(\sigma_{1}^2(x)+\sigma_{2}^2(x)\bigr) .
		\end{align} 
\end{thm}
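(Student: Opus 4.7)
The statement is already assembled in the preceding discussion, so the plan is simply to organize the chain of measure changes into a clean proof. I would start from the representation
\[
v(\chi,T)=\mathbb{E}^{\hat{\mathbb{P}}}\Bigl[e^{-\frac{q}{2}(1-q)\int_0^T(\theta^2(X_s)+\hat{\xi}^2(X_s,s;T))\,ds}\,\Big|\,X_0=\chi\Bigr]
\]
from Eq.\eqref{eqn:v_expression}, which is valid under A\ref{assume:diff_v}--A\ref{assume:P}. The proof is then organized as three successive changes of measure, with the Girsanov theorem invoked at each step and the martingale hypotheses A\ref{assume:P_hat}, A\ref{assume:M}, A\ref{assume:Q} used to upgrade local to true martingales.

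First I would apply Girsanov to pass from $\hat{\mathbb{P}}$ to $\tilde{\mathbb{P}}$ using the density in Eq.\eqref{eqn:tilde_P}. This rewrites $v(\chi,T)$ as a $\tilde{\mathbb{P}}$-expectation carrying an additional factor $d\hat{\mathbb{P}}/d\tilde{\mathbb{P}}$ and changes the drift of $X$ by replacing $\hat{\xi}$ with $\xi^*$. Second, I would verify via It\^o's formula that the process
\[
M_t=\frac{\phi(X_t)}{\phi(\chi)}\,e^{\lambda t-\frac{q}{2}(1-q)\int_0^t(\theta^2(X_s)+\xi^{*2}(X_s))\,ds}
\]
is a $\tilde{\mathbb{P}}$-local martingale with the claimed Dol\'eans exponential form; the key computation here is that the drift of $\ln M_t$ vanishes precisely because $(\lambda,\phi)$ satisfies the ergodic HJB equation \eqref{eqn:EBEs} at the optimizer $\xi^*$. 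Using A\ref{assume:M} to conclude $M$ is a true martingale, I define $\overline{\mathbb{P}}$ by Eq.\eqref{eqn:overline_P} and read off the $\overline{\mathbb{P}}$-dynamics \eqref{eqn:bar_P_dynamics_X} of $X$. Pulling $e^{-\lambda T}\phi(\chi)/\phi(X_T)$ outside, this expresses $v(\chi,T)$ as $e^{-\lambda T}\phi(\chi)$ times a $\overline{\mathbb{P}}$-expectation. Third, I would rewrite the residual factor $d\hat{\mathbb{P}}/d\tilde{\mathbb{P}}$ in terms of $d\overline{W}_{2,\cdot}$ (picking up the Girsanov drift correction $\tfrac{\phi'}{\phi}\sigma_2$ from the $\tilde{\mathbb{P}}\to\overline{\mathbb{P}}$ change) and absorb the stochastic-integral part into the density $d\mathbb{Q}/d\overline{\mathbb{P}}$ defined by Eq.\eqref{eqn:bar_P_Q}, invoking A\ref{assume:Q}.

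What remains is a purely algebraic consolidation, which I expect to be the main (if modest) obstacle to a clean presentation. After the third change of measure, $v(\chi,T)/(e^{-\lambda T}\phi(\chi))$ equals the $\mathbb{Q}$-expectation of $1/\phi(X_T)$ times the exponential of
\[
-\tfrac{q}{2}(1-q)\int_0^T(\hat{\xi}^2-\xi^{*2})\,ds\;+\;q\int_0^T(\xi^*-\hat{\xi})\tfrac{\phi'}{\phi}\sigma_2\,ds.
\]
Substituting the identity $\tfrac{\phi'(x)}{\phi(x)}\sigma_2(x)=-(1-q)\xi^*(x)$ from Eq.\eqref{eqn:xi_star} into the second term yields $-q(1-q)(\xi^{*2}-\xi^*\hat{\xi})$, and combining with the first term collapses the integrand to $-\tfrac{q}{2}(1-q)(\xi^*-\hat{\xi})^2$, which is precisely $f(x,t;T)$ from Eq.\eqref{eqn:kappa}. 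This yields Eq.\eqref{eqn:v_decomposition}. Finally, the $\mathbb{Q}$-dynamics \eqref{eqn:dX_kappa} is obtained by applying Girsanov once more to \eqref{eqn:bar_P_dynamics_X}: the shift in $d\overline{W}_{2,t}$ replaces $-q\xi^*(X_t)\sigma_2(X_t)$ with $-q\hat{\xi}(X_t,t;T)\sigma_2(X_t)$ in the drift, giving exactly the coefficient $\kappa(x,t;T)$ in Eq.\eqref{eqn:kappa}, which completes the proof.
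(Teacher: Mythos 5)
Your proposal is correct and follows essentially the same route as the paper, whose proof of Theorem \ref{thm:decompose} is precisely the chain of measure changes $\hat{\mathbb{P}}\to\tilde{\mathbb{P}}\to\overline{\mathbb{P}}\to\mathbb{Q}$ carried out in Section \ref{sec:HS}, with A\ref{assume:P_hat}, A\ref{assume:M}, A\ref{assume:Q} justifying each Girsanov step and the identity $\frac{\phi'}{\phi}\sigma_2=-(1-q)\xi^*$ collapsing the exponent to $f$. Your algebraic consolidation and the derivation of the $\mathbb{Q}$-drift $\kappa$ both check out.
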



\begin{remark}
	A way to understand the above theorem is to consider the  commutative diagram 

		\vspace{10pt}
		
		\begin{tikzcd}
		(\Omega,\mathcal{F}, \hat{\mathbb{P}}) \arrow[r, rightsquigarrow] \arrow[d,"\frac{d\tilde{\mathbb{P}}}{d\hat{\mathbb{P}}}"]
		& (\Omega,\mathcal{F}, \mathbb{Q}) 
		& \text{non-ergodic HJB (optimal control } \hat{\xi}(x,t;T) \text{ in drift of $X$)}\\
		(\Omega,\mathcal{F}, \tilde{\mathbb{P}}) \arrow[r,"M_T =\frac{d\overline{\mathbb{P}}}{d\tilde{\mathbb{P}}}"]
		& (\Omega,\mathcal{F}, \overline{\mathbb{P}}) \arrow[u,"\frac{d\mathbb{Q}}{d\overline{\mathbb{P}}}"] & \text{ergodic HJB (optimal control } \xi^*(x) \text{ in drift of $X$)}
		\end{tikzcd}

		\vspace{10pt}

\noindent  To be able to express the dual value function in terms of an ergodic HJB eigenpair, we have first to switch to a measure $\hat{\mathbb{P}}$ under which the drift of the underlying diffusion factor process $X$ is independent of time $t$ and the time horizon $T$. Under this measure the corresponding HJB equation is ergodic and we can rewrite the multiplicate functional in terms of the associated eigenpair in the sense of Hansen--Scheinkman (even though the functions in the multiplicative functional depend on the time horizon $T$). After that, we can switch back to the original, maturity-dependent drift process. This procedure can be performed as long as all measure changes are well defined, i.e., the corresponding Radon--Nikod\'{y}m derivatives are true martingales (see A\ref{assume:P_hat} and A\ref{assume:Q}), which means that the original optimal control $\hat{\xi}$  is not ``too far from ergodic" optimal control $\xi^*$.
 \end{remark}

The long-term asymptotic behavior of the function $v(\chi,T)$ is given by \[	v(\chi,T)\simeq e^{-\lambda T}\phi(\chi),\] thus in the decomposition in Eq.\eqref{eqn:v_decomposition}, the expectation\[	\mathbb{E}^{\mathbb{Q}}\Bigl[\frac{1}{\phi (X_T)}\, e^{\int_0^Tf(X_s,s;T)\,ds}\Bigr]\] can be understood as an error term. Our derivation of the long-term sensitivity relies mainly on estimations of this error term.

\section{Sensitivity analysis with respect to initial factor}
\label{sec:delta}

This section studies the sensitivity of the optimal expected utility with respect to the initial factor $\chi=X_0.$ Using the dual formulation of Eq.\eqref{eqn:dual_form}, the initial-value sensitivity in Eq.\eqref{eqn:delta} can be expressed as
\begin{equation} 
\label{eqn:delta_dual}
	\frac{\partial}{\partial \chi} \ln\Bigl\vert \sup_{\Pi\in\mathcal{X}}\mathbb{E}^\mathbb{P}\bigl[U(\Pi_T)\bigr]\Bigr\vert = (1-p)\frac{\partial}{\partial \chi} \ln  v(\chi,T)
\end{equation} 
and thus we are interested in the sensitivity $\frac{\partial }{\partial \chi}\ln v(\chi,T)$ for large time $T$. The sensitivity for large time $T$ is described in Theorem \ref{thm:delta}, which states that $\frac{\partial }{\partial \chi}\ln v(\chi,T)$ is asymptotically equal to $\frac{\phi'(\chi)}{\phi(\chi)}.$ The proof is following.

\begin{proof32}  
	The function $\phi$ is continuously differentiable by A\ref{assume:long-term}.
	From Eq.\eqref{eqn:v_decomposition}, applying the chain rule, we obtain the differentiability of $v(\chi,t)$ and   
	\[
			\frac{\frac{\partial}{\partial\chi}v(\chi,T)}{v(\chi,T)}=\frac{\phi'(\chi)}{\phi(\chi)}+
			\frac{\frac{\partial }{\partial \chi}\mathbb{E}^{\mathbb{Q}}\bigl[\frac{1}{\phi(X_T)}\, e^{\int_0^Tf(X_s,s;T)\,ds} \bigr]}{\mathbb{E}^{\mathbb{Q}}\bigl[\frac{1}{\phi(X_T)}\, e^{\int_0^Tf(X_s,s;T)\,ds} \bigr]}.
		\]
	The nominator of the second term goes to zero by assumption, and A\ref{assume:long-term} and Eq.\eqref{eqn:v_decomposition} give  the convergence to a positive constant of the denominator. This completes the proof.
\end{proof32}

In order to utilize Theorems \ref{thm:delta} and \ref{thm:decompose}, we have to provide sufficient conditions under which the  mapping $ \chi \mapsto \mathbb{E}^{\mathbb{Q}}\bigl[\frac{1}{\phi (X_T)}\, e^{\int_0^Tf(X_s,s;T)\,ds}\bigr]$ is continuously differentiable and its derivative converges to zero as $T\to\infty.$ To denote the dependence of the solution $X$ of the SDE \eqref{eqn:SDE_X} on the initial value $x$, we write $X^x.$ Assume that for almost all $\omega\in\Omega$ the map $x\mapsto X_t^x$ is continuously differentiable and the derivative process $(Y_t)_{0\le t\le T}:=(\frac{\partial X_t^x}{\partial x})_{0\le t\le T},$ which is called the first variation process, satisfies
\begin{equation}
\label{eqn:1st_var_proc}
	dY_t=\kappa_x(X_t,t;T)Y_t \,dt+\sigma_{1}'(X_t)Y_t\,dB_{1,t}+\sigma_{2}'(X_t)Y_t\,dB_{2,t},  \qquad Y_0=1.
\end{equation}  
This holds, as a particular case, if the derivative of $\kappa( \cdot ,t;T)$ is jointly continuous in $x$ and $t$ for fixed $T$ and $\sigma_1$ and $\sigma_2$ are continuously differentiable with bounded derivatives (for details, see  \cite[ Theorem V.39]{protter2005stochastic}).

\begin{prop}
\label{prop:delta}
	Additionally to A\ref{assume:SDE_X} -- \ref{assume:Q}, assume that  for almost all $\omega\in\Omega$ the map $x\mapsto X_t^x$ is continuously differentiable and the first variation process $(Y_t)_{0\le t\le T}$ satisfies Eq.\eqref{eqn:1st_var_proc} and $f$ is continuously differentiable. Suppose that  there exist an open neighborhood $I_\chi$ of $\chi$ and  positive constants $u,v,w$ with  $\frac{1}{u}+\frac{1}{v}+\frac{1}{w}=1$     satisfying the following conditions.
	\begin{itemize}
		\item[(i)]  As a function of two variables $(x,T),$ the expectation
		\[
					\Gamma_u(x,T)=\mathbb{E}^{\mathbb{Q}}\Bigl[\frac{1}{\phi^{u}({X}_T)}\, e^{u\int_0^Tf(X_s,s;T)\,ds} \,\Big\vert \, X_0=x\Bigr]
				\]
		is uniformly bounded on $I_\chi\times(0,\infty).$		
		\item[(ii)] As a function of two variables $(x,T),$ the expectation
		\[
					\mathbb{E}^{\mathbb{Q}} \biggl\vert \frac{\phi'(X_T)}{\phi(X_T)}\biggr\vert^{v}
				\]
		is uniformly  bounded on $I_\chi\times(0,\infty).$
		\item[(iii)]  As a function of two variables $(x,T),$ the expectation $\mathbb{E}^{\mathbb{Q}} \vert Y_{T;T}\vert^{w}$ is uniformly bounded on $I_\chi$ for each $T$ and  converges to zero as $T\to\infty$ for each $x\in I_\chi.$  
		\item[(iv)] The expectation
		\[
					\mathbb{E}^{\mathbb{Q}}\biggl[\biggl(\int_0^T \vert f_x(X_s,s;T)Y_{s;T} \vert \,ds\biggr)^{m}\biggr]
				\]
		is uniformly bounded on $I_\chi$ for each $T$ and  converges to zero as $T\to\infty$ for each $x\in I_\chi.$ Here, $m=\frac{u}{u-1},$ i.e.,  $\frac{1}{u}+\frac{1}{m}=1.$ 
	\end{itemize}	
	Then the map $x\mapsto\mathbb{E}^{\mathbb{Q}}\bigl[\frac{1}{\phi(X_T)}\, e^{\int_0^Tf(X_s,s;T)\,ds} \, \big\vert \, X_0=x\bigr]$ is continuously differentiable in $x$ on $I_\chi,$ and
	\[
			\frac{\partial }{\partial x}\mathbb{E}^{\mathbb{Q}}\Bigl[\frac{1}{\phi(X_T)}\, e^{\int_0^Tf(X_s,s;T)\,ds} \, \Big\vert \, X_0=x\Bigr]
		\]
	converges to zero as $T\to\infty.$
\end{prop}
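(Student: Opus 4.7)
The plan is to differentiate formally under the expectation using the first variation process, justify the interchange via a dominated-convergence argument anchored on the uniform bounds (i)--(iv), and then apply a triple H\"older inequality to obtain the vanishing of the derivative.

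\textbf{Step 1: Formal differentiation.} Denote $G(x,T)=\mathbb{E}^{\mathbb{Q}}\bigl[\frac{1}{\phi(X_T^x)}\,e^{\int_0^T f(X_s^x,s;T)\,ds}\bigr]$. Differentiating the integrand pathwise in $x$ via the chain rule and the first variation process gives
\[
\frac{\partial}{\partial x}\Bigl[\frac{1}{\phi(X_T^x)}\,e^{\int_0^T f(X_s^x,s;T)\,ds}\Bigr]
=e^{\int_0^T f(X_s^x,s;T)\,ds}\Bigl[-\frac{\phi'(X_T^x)}{\phi(X_T^x)^2}Y_T^x+\frac{1}{\phi(X_T^x)}\int_0^T f_x(X_s^x,s;T)\,Y_s^x\,ds\Bigr].
\]
The first variation process is continuous in $x$, and so is the pathwise derivative by assumption on $\phi$ and $f$.

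\textbf{Step 2: Interchange of derivative and expectation.} To pass the derivative inside $\mathbb{E}^{\mathbb{Q}}$, I would verify that the above pathwise derivative is dominated in $L^1(\mathbb{Q})$ uniformly in $x$ on $I_\chi$. Splitting the two summands and applying H\"older with exponents $u,v,w$ satisfying $\frac{1}{u}+\frac{1}{v}+\frac{1}{w}=1$ to the first summand yields
\[
\mathbb{E}^{\mathbb{Q}}\Bigl[\frac{e^{\int_0^T f\,ds}}{\phi(X_T^x)}\cdot\Bigl|\frac{\phi'(X_T^x)}{\phi(X_T^x)}\Bigr|\cdot|Y_T^x|\Bigr]\le \Gamma_u(x,T)^{1/u}\,\Bigl(\mathbb{E}^{\mathbb{Q}}\Bigl|\frac{\phi'(X_T^x)}{\phi(X_T^x)}\Bigr|^v\Bigr)^{1/v}\bigl(\mathbb{E}^{\mathbb{Q}}|Y_T^x|^w\bigr)^{1/w},
\]
and H\"older with exponents $u$ and $m=\tfrac{u}{u-1}$ to the second summand gives
\[
\mathbb{E}^{\mathbb{Q}}\Bigl[\frac{e^{\int_0^T f\,ds}}{\phi(X_T^x)}\int_0^T|f_x(X_s^x,s;T)Y_s^x|\,ds\Bigr]\le \Gamma_u(x,T)^{1/u}\,\Bigl(\mathbb{E}^{\mathbb{Q}}\Bigl[\Bigl(\int_0^T|f_xY_s^x|\,ds\Bigr)^m\Bigr]\Bigr)^{1/m}.
\]
By hypotheses (i)--(iv), each of these bounds is uniformly bounded in $(x,T)$ on $I_\chi\times(0,\infty)$, which provides an $x$-uniform $L^1(\mathbb{Q})$-dominant and legitimizes interchanging differentiation and expectation. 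Continuity of $x\mapsto\partial_x G(x,T)$ follows analogously by dominated convergence applied to the pathwise continuous derivative, together with the same bounds.

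\textbf{Step 3: Passage to the long-term limit.} Once the interchange is justified, the bounds of Step 2 apply again with the same exponents, but now we exploit the decay statements: by (iii), $\mathbb{E}^{\mathbb{Q}}|Y_T^x|^w\to 0$, and by (iv), $\mathbb{E}^{\mathbb{Q}}[(\int_0^T|f_xY_s^x|\,ds)^m]\to 0$, while $\Gamma_u(x,T)^{1/u}$ and $(\mathbb{E}^{\mathbb{Q}}|\phi'/\phi|^v)^{1/v}$ remain bounded by (i)--(ii). Hence both summands tend to zero, so $\partial_x G(x,T)\to 0$ as $T\to\infty$, as claimed.

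\textbf{Main obstacle.} The only delicate step is the rigorous justification of the interchange in Step 2: one needs the pathwise derivative to be dominated by an $L^1(\mathbb{Q})$-function uniformly in $x$ on a neighborhood of $\chi$, which is exactly what the joint boundedness on $I_\chi\times(0,\infty)$ in conditions (i)--(iv) delivers. Once this is in place, the H\"older-splitting and the decay conclusions are mechanical.
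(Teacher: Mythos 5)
Your proposal is correct and follows essentially the same route as the paper: pathwise differentiation via the first variation process, a triple H\"older estimate with exponents $u,v,w$ on the $\phi'/\phi^2$ term and a two-factor H\"older estimate with exponents $u,m$ on the $\int f_x Y$ term, using (i)--(iv) first for the uniform domination that justifies the interchange and then for the decay as $T\to\infty$. No substantive differences to report.
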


\begin{proof}
	First we observe that
	\[
			\frac{\partial }{\partial x}\mathbb{E}^{\mathbb{Q}}\Bigl(\frac{1}{\phi(X_T)}\, e^{\int_0^Tf(X_s,s;T)\,ds} \Bigr)  = \mathbb{E}^{\mathbb{Q}}\Bigl(\frac{1}{\phi(X_T)}\, e^{\int_0^Tf(X_s,s;T)\,ds}\int_0^Tf_x(X_s,s;T)Y_s\,ds -\frac{\phi'(X_T)}{\phi^2(X_T)}\, e^{\int_0^Tf(X_s,s;T)\,ds} Y_T\Bigr)
		\]
	holds and the derivative is a continuous function of $x.$ This equality can be obtained by 
	interchanging the derivative and the expectation, and this is justified  since
	\begin{align*} 
			&\,\mathbb{E}^{\mathbb{Q}}\biggl\vert \Bigl(\frac{1}{\phi(X_T)}\Bigr)'\, e^{\int_0^Tf(X_s,s;T)\,ds} Y_T+\frac{1}{\phi(X_T)}\, e^{\int_0^Tf(X_s,s;T)\,ds}\int_0^Tf_x(X_s,s;T)Y_s\,ds\biggr\vert \\
			\leq&\, \mathbb{E}^{\mathbb{Q}}\biggl\vert \frac{\phi'(X_t)}{\phi^2(X_t)}\, e^{\int_0^Tf(X_s,s;T)\,ds} Y_T\bigr\vert +\mathbb{E}^{\mathbb{Q}}\biggl\vert \frac{1}{\phi(X_T)}\, e^{\int_0^Tf(X_s,s;T)\,ds}\int_0^Tf_x(X_s,s;T)Y_s\,ds\biggr\vert \\
			\leq&\, \Gamma_u(x,T)^{\frac{1}{u}}\left(\mathbb{E}^{\mathbb{Q}}\biggl\vert \frac{\phi'(X_t)}{\phi(X_t)}\biggr\vert^{v}\right)^{\frac{1}{v}}(\mathbb{E}^{\mathbb{Q}} \vert Y_T \vert^{w})^{\frac{1}{w}}+\Gamma_u(x,T)^{\frac{1}{u}}\biggl(\mathbb{E}^{\mathbb{Q}}\Bigl(\int_0^T \vert f_x(X_s,s;T)Y_s \vert \,ds\Bigr)^{m}\biggr)^{\frac{1}{m}}  	 	
		\end{align*}
	is uniformly bounded on $I_\chi$ by (i)-(iv).
	Moreover, the same inequality gives that the derivative goes to zero as $T\to\infty.$ 
\end{proof}

\section{Sensitivity analysis with respect to drift and volatility}
\label{sec:drift_vol}

This section studies the sensitivities with respect to the drift and volatility perturbations. The arguments in this section is similar to \cite{park2015sensitivity}.

\subsection{Parameter perturbations}
\label{sec:para_perturb}

We provide a precise meaning of the perturbed  the drift and volatility functions.

\begin{Bassume}	\label{bassume:perturb}
	Let  $m_\epsilon,$ $\sigma_{1,\epsilon},$ $\sigma_{2,\epsilon},$ $b_\epsilon,$ $\varsigma_\epsilon$ be continuous functions in the  variables $(\epsilon,x)\in I\times\mathbb{R}$ for a neighborhood $I$ of $0$ such that they are continuously differentiable in $\epsilon$ on $I$ and $m_0=m,$ $\sigma_{1,0}=\sigma_1,$ $\sigma_{2,0}=\sigma_2,$ $b_0=b,$ $\varsigma_0=\varsigma.$  
\end{Bassume}

\begin{Bassume}\label{bassume:HS_eps}
	For each $\epsilon\in I,$ the functions $m_\epsilon,$ $\sigma_{1,\epsilon},$ $\sigma_{2,\epsilon},$ $b_\epsilon,$ $\varsigma_\epsilon$ satisfy A\ref{assume:SDE_X} -- \ref{assume:Q}. The domain $(\ell_\epsilon,r_e\samepage)$ in A\ref{assume:SDE_X} of the process  $X^\epsilon$ may depend on $\epsilon,$ and the constant $C$ in A\ref{assume:long-term} can also depend on $\epsilon.$ 
\end{Bassume}

From theses assumptions, we can construct the following objects. Let $X^\epsilon$ be the solution of the SDE
\[
	dX_t^\epsilon=m_\epsilon(X_t^\epsilon)\,dt+\sigma_{1,\epsilon}(X_t^\epsilon)\,dW_{1,t}+\sigma_{2,\epsilon}(X_t^\epsilon)\,dW_{2,t},\qquad X_0^\epsilon=\chi
\]
with perturbation parameter $\epsilon.$ The initial value $\chi$ is not perturbed. We denote by $\mathcal{X}^\epsilon$ the family of  wealth processes of admissible portfolios in the perturbed market model. Define
\[
	v_\epsilon(\chi,T) :=\mathbb{E}^\mathbb{P}\bigl[(\hat{Y}_T^{\epsilon})^q\bigr]=\mathbb{E}^\mathbb{P}\bigl[(\hat{Y}_T^{\epsilon})^q \, \big\vert \, X_0=\chi\bigr]
\]
where $\hat{Y}_T^{\epsilon}$ is the optimizer of the dual problem in the perturbed market. We are interested in the sensitivity
\begin{equation}
\label{eqn:sensitivity_lambda}
	\frac{\partial}{\partial \epsilon}\Big\vert_{\epsilon=0} \ln\Bigl\vert \sup_{\Pi\in\mathcal{X}^\epsilon}\mathbb{E}^\mathbb{P}\bigl[U(\Pi_T)\bigr]\Bigr\vert
\end{equation}
for large time $T.$ From the dual formulation in Eq.\eqref{eqn:dual_form}, we know that the long-term sensitivity can be obtained by evaluating
\begin{equation}
	\frac{\partial}{\partial \epsilon}\Big\vert_{\epsilon=0} \ln v_\epsilon(\chi,T).
\end{equation}   

We can transform this sensitivity   into a simpler form similar to Eq.\eqref{eqn:v_decomposition} by using an exponential change of measure. Then    
\begin{equation}\label{eqn:v_eps_decomposition}
	v_\epsilon(\chi,T)=e^{-\lambda_\epsilon T}\phi_\epsilon(\chi)\,\mathbb{E}^{\mathbb{Q}_\epsilon}\Bigl[\frac{1}{\phi_\epsilon(X_T^\epsilon)}\, e^{\int_0^Tf_\epsilon(X_s^\epsilon,s;T)\,ds} \Bigr]
\end{equation}
and the $\mathbb{Q}^\epsilon$-dynamics of $X$ is
\begin{equation}
\label{eqn:dX_eps}
	dX_t^\epsilon = \kappa_\epsilon(t,X_t^\epsilon;T) \,dt+\sigma_{1,\epsilon}(X_t^\epsilon)\,dB_{1,t}^\epsilon+\sigma_{2,\epsilon}(X_t^\epsilon)\,dB_{2,t}^\epsilon,  \qquad  0\leq t\leq T 
\end{equation}
for a two-dimensional $\mathbb{Q}^\epsilon$-Brownian motion $(B_{1,t}^\epsilon,B_{2,t}^\epsilon)_{t\ge0}.$ Here, the functions $f_\epsilon$ and $\kappa_\epsilon$ are defined as
\begin{align*}
	& f_\epsilon(x,t;T):=-\frac{q}{2}(1-q)\bigl(\xi_\epsilon^*(x)-\hat{\xi}_\epsilon(x,t;T)\bigr)^2\\ 
	&\kappa_\epsilon(x,t;T):=m_\epsilon(x)-q\theta_\epsilon(x)\sigma_{1,\epsilon}(x)-q\hat{\xi}_\epsilon(x,t;T)\sigma_{2,\epsilon}(x)+\frac{\phi_\epsilon'(x)}{\phi_\epsilon(x)}\bigl(\sigma_{1,\epsilon}^2(x)+\sigma_{2,\epsilon}^2(x)\bigr) 
\end{align*} 
where $\theta_\epsilon,$ $\hat{\xi}_\epsilon,$ $\xi_\epsilon^*,$ $\phi_{\epsilon},$ are functions defined as in Eq.\eqref{eqn:MPR}, Eq.\eqref{eqn:opt_xi}, Eq.\eqref{eqn:xi_star}, A\ref{assume:long-term}, respectively for the perturbed market. We use the prime notation to denote the derivative with respect to $x.$

For the sensitivity analysis, we assume the following regularity conditions. We want to separate the perturbation effects of the underlying diffusion process and the functionals applied to it. Therefore, we define
\[
	w_{\eta,\epsilon}(\chi,T):=\mathbb{E}^{\mathbb{Q}_\epsilon}\Bigl[\frac{1}{\phi_\eta(X_T^\epsilon)}\, e^{\int_0^Tf_\eta(X_s^\epsilon,s;T)\,ds} \Bigr]
\]
so that $v_\epsilon(\chi,T)=e^{-\lambda_\epsilon T}\phi_\epsilon(\chi)w_{\epsilon,\epsilon}(\chi,T).$ We call this function $w$ the error term.

\begin{thm}
\label{thm:total_chain}
	Additionally to B\ref{bassume:perturb} -- \ref{bassume:HS_eps}, we assume  the following conditions:
	\begin{enumerate}[(i)]
		\item 	The two functions $\epsilon\mapsto\lambda_\epsilon$ and $\epsilon\mapsto\phi_{\epsilon}(\chi)$ are continuously differentiable on $I.$
		\item 	The partial derivative $\frac{\partial }{\partial \eta}w_{\eta,\epsilon}(\chi,T)$ exists and is continuous on $I^2.$ Moreover,
		\[
				\lim_{T\rightarrow\infty}\frac{1}{T}\frac{\partial}{\partial\eta}\Big\vert_{\eta=0}w_{\eta,0}(\chi,T)=0.
				\]
		\item The partial derivative $\frac{\partial }{\partial \epsilon}w_{\eta,\epsilon}(\chi,T)$ exists and is continuous on $I^2.$  Moreover,  
		\[
				\lim_{T\rightarrow \infty}\frac{1}{T}\frac{\partial}{\partial\epsilon}\Big\vert_{\epsilon=0}w_{0,\epsilon}(\chi,T)=0.
				\]
	\end{enumerate}
	Then the perturbed function $\ln v_\epsilon(\chi,T)$ is differentiable at $\epsilon=0$ and
	\begin{align}
		\label{eqn:chain_rule}
			& \phantom{==}\frac{1}{T}\frac{\partial}{\partial\epsilon}\Big\vert_{\epsilon=0}\ln v_\epsilon(\chi,T)\\
			& =  -\frac{\partial \lambda_\epsilon}{\partial\epsilon}\Big\vert_{\epsilon=0} +\frac{\frac{\partial}{\partial\epsilon}\big\vert_{\epsilon=0}\phi_\epsilon(\chi)}{T\,\phi(\chi)} +\frac{\frac{\partial}{\partial\epsilon}\big\vert_{\epsilon=0}\mathbb{E}^{\mathbb{Q}} \bigl[\frac{1}{\phi_\epsilon(X_T)}\, e^{\int_0^Tf_\epsilon(X_s,s;T)\,ds}\bigr]}{T\,\mathbb{E}^{\mathbb{Q}} \bigl[\frac{1}{\phi(X_T)}\, e^{\int_0^Tf(X_s,s;T)\,ds}\bigr]} +\frac{\frac{\partial}{\partial\epsilon}\big\vert_{\epsilon=0}\mathbb{E}^{\mathbb{Q}_\epsilon} \bigl[\frac{1}{\phi(X_T^\epsilon)}\, e^{\int_0^Tf(X_s^\epsilon,s;T)\,ds} \bigr]}{T\,\mathbb{E}^{\mathbb{Q}}\bigl[\frac{1}{\phi(X_T)}\, e^{\int_0^Tf(X_s,s;T)\,ds}\bigr]}. \nonumber
		\end{align}
	Furthermore, 
	\begin{equation}
		\label{eqn:final_eqn}
			\lim_{T\rightarrow\infty}\frac{1}{T}\frac{\partial}{\partial\epsilon}\Big\vert_{\epsilon=0}\ln v_\epsilon(\chi,T) = - \frac{\partial\lambda_\epsilon }{\partial\epsilon}\Big\vert_{\epsilon=0}. 
		\end{equation}
\end{thm}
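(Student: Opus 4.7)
The plan is to start from the multiplicative representation provided by Theorem \ref{thm:decompose} applied in the perturbed setting, Eq.\eqref{eqn:v_eps_decomposition}, and take logarithms:
\[
\ln v_\epsilon(\chi,T) = -\lambda_\epsilon T + \ln\phi_\epsilon(\chi) + \ln w_{\epsilon,\epsilon}(\chi,T).
\]
All three summands are well defined because $\phi_\epsilon(\chi)>0$ by A\ref{assume:long-term} and $w_{\epsilon,\epsilon}(\chi,T)>0$ since the integrand defining it is strictly positive. The first two summands are $C^1$ at $\epsilon=0$ by hypothesis (i); the delicate piece is the third. The strategy is to differentiate that piece via a bivariate chain rule, identify the result with Eq.\eqref{eqn:chain_rule}, and then take $T\to\infty.$

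To differentiate $\epsilon\mapsto w_{\epsilon,\epsilon}(\chi,T)$ I would apply the classical chain rule to $W(\eta,\epsilon):=w_{\eta,\epsilon}(\chi,T).$ Conditions (ii) and (iii) assert that $W_\eta$ and $W_\epsilon$ exist and are continuous on $I^2,$ so $W$ is $C^1$ there and therefore totally differentiable. Evaluating along the diagonal gives
\[
\frac{d}{d\epsilon}\Big\vert_{\epsilon=0} W(\epsilon,\epsilon) = \frac{\partial}{\partial\eta}\Big\vert_{\eta=0} w_{\eta,0}(\chi,T) + \frac{\partial}{\partial\epsilon}\Big\vert_{\epsilon=0} w_{0,\epsilon}(\chi,T).
\]
Substituting this into the differentiated identity, dividing by $T,$ and dividing numerator and denominator of the two $w$-derivatives by the strictly positive quantity $w_{0,0}(\chi,T)$ produces the four-term formula Eq.\eqref{eqn:chain_rule}: the first term comes from $-\partial_\epsilon\lambda_\epsilon,$ the second from $\frac{1}{T}\partial_\epsilon\ln\phi_\epsilon(\chi),$ and the last two from the two pieces of $\frac{1}{T}\partial_\epsilon\ln w_{\epsilon,\epsilon}(\chi,T)$ above.

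To obtain Eq.\eqref{eqn:final_eqn} I would pass to the limit in $T$ term by term. The summand $\frac{\partial_\epsilon|_{\epsilon=0}\phi_\epsilon(\chi)}{T\,\phi(\chi)}$ is trivially $O(1/T)$ by (i). For the remaining two ratios, I would observe that the common denominator $T\,w_{0,0}(\chi,T)$ is asymptotically linear in $T,$ because Theorem \ref{thm:decompose} combined with A\ref{assume:long-term}, Eq.\eqref{eqn:conv}, yields the convergence $w_{0,0}(\chi,T)\to C>0.$ The numerators are $o(T)$ by the second assertions of (ii) and (iii). Hence both ratios vanish as $T\to\infty,$ and only the leading term $-\partial_\epsilon|_{\epsilon=0}\lambda_\epsilon$ survives, giving Eq.\eqref{eqn:final_eqn}.

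The main obstacle is really the chain-rule step: one must have joint continuity of $W_\eta$ and $W_\epsilon$ on $I^2$ together with the $o(T)$ decay of the two directional derivatives. The hypotheses (ii) and (iii) are exactly what is needed, and the substantive probabilistic work (smooth dependence of the perturbed factor diffusion $X^\epsilon$ on $\epsilon,$ interchange of derivative and expectation via dominated convergence, and uniform integrability under the $\epsilon$-dependent measure changes building $\mathbb{Q}_\epsilon$) is displaced into the verification of these conditions in concrete models, as is done for the Kim--Omberg and Heston examples in Section \ref{sec:examples}. Under the stated assumptions the proof reduces to the logarithmic chain-rule bookkeeping just outlined together with the one-line asymptotic $w_{0,0}(\chi,T)\to C>0.$
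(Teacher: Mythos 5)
Your proposal is correct and follows essentially the same route as the paper: the paper packages the argument as a chain rule applied to a four-variable function $V(\epsilon_1,\epsilon_2,\epsilon_3,\epsilon_4)=e^{-\lambda_{\epsilon_1}T}\phi_{\epsilon_2}(\chi)w_{\epsilon_3,\epsilon_4}(\chi,T)$ evaluated on the diagonal, whereas you take logarithms first and apply the two-variable chain rule only to $w_{\eta,\epsilon}$, which is the same computation. The limiting step also matches the paper's, resting on the convergence $w_{0,0}(\chi,T)\to C>0$ from A6 together with the $o(T)$ bounds in (ii) and (iii).
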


\begin{proof}
	Define a function $V$ on $I^4$ by
	\[
			V(\epsilon_1,\epsilon_2,\epsilon_3,\epsilon_4)
			:=  \,e^{-\lambda_{\epsilon_1} T}\phi_{\epsilon_2}(\chi)\,\mathbb{E}^{\mathbb{Q}_{\epsilon_4}}\Bigl[\frac{1}{\phi_{\epsilon_3}(X_T^{\epsilon_4})}\, e^{\int_0^Tf_{\epsilon_3}(X_s^{\epsilon_4},s;T)\,ds} \Bigr] 
			=  \,e^{-\lambda_{\epsilon_1} T}\phi_{\epsilon_2}(\chi)w_{\epsilon_3,\epsilon_4}(\chi,T) 
		\]
	then $v_\epsilon(\chi,T)=V(\epsilon,\epsilon,\epsilon,\epsilon).$ The chain rule gives the differentiability of $\ln v_\epsilon(\chi,T)$ at $\epsilon=0$ and allows us to write the derivative as in Eq.\eqref{eqn:chain_rule}. Because $\mathbb{E}^{\mathbb{Q}}(\frac{1}{\phi(X_T)}\, e^{\int_0^Tf(X_s,s;T)\,ds})$ converges to a  positive constant as $T\to\infty$ by A\ref{assume:long-term} and Eq.\eqref{eqn:v_eps_decomposition}, we obtain Eq.\eqref{eqn:final_eqn} from conditions (i) -- (iii) and Eq.\eqref{eqn:chain_rule}.
\end{proof}

Let us discuss conditions (i) -- (iii) in Theorem \ref{thm:total_chain} given above in more detail. Condition (i) is satisfied for many financially meaningful models. Condition (ii) is easy to check because the continuous differentiability of
\begin{equation}
\label{eqn:derivative}
	\mathbb{E}^{\mathbb{Q}} \Bigl[\frac{1}{\phi_\epsilon(X_T)}\, e^{\int_0^Tf_\epsilon(X_s,s;T)\,ds}\Bigr] 
\end{equation}
is a standard problem of differentiation and integration. An easier to check condition that is sufficient to imply condition (ii) and is used in the calculation of the examples of Section \ref{sec:examples} will be given in Appendix \ref{app:condi_2}. Conditions (i) and (ii) can be checked case-by-case, thus we do not go into further details of the first three terms of Eq.\eqref{eqn:chain_rule}. However, condition (iii) is involved as it concerns the perturbation in the underlying process $X^\epsilon$ and the measure $\mathbb{Q}^\epsilon,$ which are not trivial to analyze. We will provide  a sufficient condition such that condition (iii) holds true  in Theorems \ref{thm:rho} and \ref{thm:vega_vari}. 

For the analysis of these parameter sensitivities, the following expression for the $\mathbb{Q}^\epsilon$-dynamics of $X$ is useful. Let 
\[
	\sigma_\epsilon(\cdot):=\sqrt{\sigma_{1,\epsilon}^2(\cdot)+\sigma_{2,\epsilon}^2(\cdot)}, \qquad \sigma(\cdot):=\sigma_0(\cdot)
\]
and define a new process $B^\epsilon=(B_t^\epsilon)_{t\ge0}$ by
\[
	dB_t^\epsilon=\frac{\sigma_{1,\epsilon}(X_t^\epsilon)}{\sigma_\epsilon(X_t^\epsilon)}\,dB_{1,t}^\epsilon+\frac{\sigma_{2,\epsilon}(X_t^\epsilon)}{\sigma_\epsilon(X_t^\epsilon)}\,dB_{2,t}^\epsilon, \qquad B_0^\epsilon=0,
\]
then $B^\epsilon$ is a $\mathbb{Q}^\epsilon$-Brownian motion as can be seen by L\'{e}vy's characterization. The $\mathbb{Q}^\epsilon$-dynamics of $X$ can then be written as
\[
	dX_t^\epsilon =\kappa_\epsilon(X_t^\epsilon,t;T) \,dt+\sigma_\epsilon(X_t^\epsilon) \, dB_t^\epsilon.
\]

\begin{remark}
	If we consider the problem of the sensitivity of the expected utility stemming from optimizing the long term growth rate, i.e.,
	\[
	\frac{\partial}{\partial\epsilon}\Big\vert_{\epsilon=0}\inf_{\Pi\in\mathcal{X}^\epsilon}\lim_{T\rightarrow \infty}\frac{1}{T}\ln\Bigl\vert \mathbb{E}^\mathbb{P}\bigl[U(\Pi_T)\bigr]\Bigl\vert,
	\] 
	actually all the results  in Section \ref{sec:drift_vol} hold true, only with less assumptions. Following the discussion at the end of Section \ref{sec:dual}, in this case the optimal value can be expressed using the function $v$ in Eq.\eqref{eqn:v_eps} only with $\xi^*$ given in Eq.\eqref{eqn:xi_star} instead of $\hat{\xi}$. In this case we are already in an ergodic regime and no additional change of measure is needed. Thus it it is sufficient to require Assumptions A\ref{assume:SDE_X} -- A\ref{assume:P} as well as A\ref{assume:M} 	for each $\epsilon\in I$  where the two-dimensional Brownian motion $\tilde{W}$ is replaced by $\hat{W}$. Refer to \cite{fleming2002risk} for details.
\end{remark}

\subsection{Drift perturbation of the factor process}
\label{sec:rho}

In this section, we conduct a sensitivity analysis with respect to  the perturbations of $m_\epsilon,$  $b_\epsilon,$ $\varsigma_\epsilon,$ but assume that the volatility functions $\sigma_{1,\epsilon}=\sigma_{1},$ $\sigma_{2,\epsilon}=\sigma_2$ are not perturbed. Under the measure $\mathbb{Q}^\epsilon,$ the perturbed process $X^\epsilon$  has the form
\[
	dX_t^\epsilon =\kappa_\epsilon(X_t^\epsilon,t;T) \,dt+\sigma(X_t^\epsilon)\,dB_t^\epsilon 
\]
so that only the drift term is perturbed. Our goal is to analyze
\[
	\frac{\partial }{\partial \epsilon}w_{\eta,\epsilon}(\chi,T) =\frac{\partial}{\partial\epsilon}\mathbb{E}^{\mathbb{Q}_\epsilon}\Bigl[\frac{1}{\phi_\eta(X_T^\epsilon)}\, e^{\int_0^Tf_\eta(X_s^\epsilon,s;T)\,ds} \Bigr]
\]
under this drift perturbation.
 
Assuming that $\kappa_{\epsilon}$  is continuously differentiable in  $\epsilon$ on $I,$ define
\begin{align}
\label{eqn:hats}
	& \hat{\phi}(\cdot):=\inf_{\epsilon\in I} \phi_{\epsilon}(\cdot)\\
	& \hat{f}(\cdot,t;T):=\sup_{\epsilon\in I}f_\epsilon(\cdot,t;T)\\
	& \hat{g}(\cdot,t;T):=\sup_{\epsilon\in I}\Bigl\vert \frac{1}{\sigma(x)}\frac{\partial}{\partial\epsilon}\kappa_\epsilon(\cdot,t;T)\Bigr\vert.
\end{align}
We consider the following  boundedness assumptions;   $\hat{\phi}(\cdot)>0,$ $\hat{f}(\cdot,t;T)<\infty$ and $\hat{g}(\cdot,t;T)<\infty.$  
If the domain in $(\ell_\epsilon,r_e\samepage)$ in B\ref{bassume:HS_eps} does not depend on $\epsilon,$ then the three functions always satisfy these boundedness condition
by replacing the interval $I$ by a smaller interval if necessary.

\begin{thm}
\label{thm:rho}
	Additionally to B\ref{bassume:perturb} -- \ref{bassume:HS_eps},
	assume that  $\hat{\phi}(\cdot)>0,$ $\hat{f}(\cdot,t;T)<\infty,$ $\hat{g}(\cdot,t;T)<\infty$ and that  $\kappa_{\epsilon}$ is continuously differentiable and $f_\epsilon$ is continuous in $\epsilon$ on $I.$	Suppose the following conditions.
	\begin{itemize} 
		\item[(i)] For each $T\ge0,$ there exists a real number $\epsilon_0=\epsilon_0(T)>0$ such that
		\begin{equation}
				\label{eqn:rho_expo}
					\mathbb{E}^{\mathbb{Q}}\Bigl[e^{\epsilon_0\int_0^T \hat{g}^2(X_s,s;T)\,ds}\Bigr]
				\end{equation}
		is finite.
		\item[(ii)] There exist a real number $v\ge2$ and a  function $h$ with $\lim_{T\to\infty} h(T)=0$  such that for all $T>0$
	\begin{equation} 
\mathbb{E}^{\mathbb{Q}} \left[\Bigl(\int_0^T\hat{g}^{2}(X_s,s;T) \, ds  \Bigr)^{v/2}\right]\leq T^{v}h(T).
\end{equation}   
		\item[(iii)]For each $T\ge0,$ there is a real number $\epsilon_1>0$ such that
		\begin{equation}
				\label{eqn:rho_linear}
					\mathbb{E}^{\mathbb{Q}}\left[ \int_0^T\hat{g}^{v+\epsilon_1}(X_s,s;T) \, ds\right]  
				\end{equation}
		is finite.
		\item[(iv)] The function
		\begin{equation}
				\label{eqn:Gamma}
					\hat{\Gamma}_u(T):=\mathbb{E}^{\mathbb{Q}}\Bigl[\frac{1}{\hat{\phi}^{u}( {X}_T)}\, e^{u\int_0^T\hat{f}(X_s,s;T)\,ds}\Bigr]
				\end{equation}
		is uniformly bounded in $T\ge0$ where $u=\frac{v}{v-1},$ i.e., $\frac{1}{u}+\frac{1}{v}=1,$ for $v$ from (ii). 
	\end{itemize}
	Then, for given $(\chi,T),$ the partial derivative 
	\[
			\frac{\partial }{\partial \epsilon}w_{\eta,\epsilon}(\chi,T)=\frac{\partial }{\partial \epsilon} \mathbb{E}^{\mathbb{Q}_\epsilon}\Bigl[\frac{1}{\phi_\eta(X_T^\epsilon)}\, e^{\int_0^Tf_\eta(X_s^\epsilon,s;T)\,ds} \Bigr]
		\]
	exists and is continuous in $(\eta,\epsilon)$ on $I^2.$  Moreover, for given $\chi,$
	\[
			\frac{1}{T}\frac{\partial}{\partial\epsilon}\Big\vert_{\epsilon=0}w_{0,\epsilon}(\chi,T)=\frac{1}{T}\frac{\partial}{\partial\epsilon}\Big\vert_{\epsilon=0}\mathbb{E}^{\mathbb{Q}_\epsilon} \Bigl[\frac{1}{\phi(X_T^\epsilon)}\, e^{\int_0^Tf(X_s^\epsilon,s;T)\,ds}\Bigr]\rightarrow 0\quad\textnormal{ as }\; T\rightarrow\infty.
		\]
\end{thm}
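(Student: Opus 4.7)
The strategy is to pull the $\epsilon$-perturbation of the drift of $X^\epsilon$ back onto a common reference diffusion via Girsanov's theorem, so that only a Radon--Nikod\'ym density depends on $\epsilon$. Define
\begin{equation*}
\gamma_\epsilon(x,t;T) := \frac{\kappa_\epsilon(x,t;T) - \kappa_0(x,t;T)}{\sigma(x)},
\end{equation*}
noting $\gamma_0 \equiv 0$ and, by the mean value theorem applied in $\epsilon$, $|\gamma_\epsilon| \le |\epsilon|\,\hat g$ and $|\partial_\epsilon \gamma_\epsilon| \le \hat g$. Let $X$ denote the $\mathbb{Q}$-solution of Eq.\eqref{eqn:dX_kappa} driven by the $\mathbb{Q}$-Brownian motion $B$, and set $Z_T^\epsilon := \mathcal{E}\bigl(\int_0^\cdot \gamma_\epsilon(X_s,s;T)\,dB_s\bigr)_T$. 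Condition (i), together with Novikov's criterion, guarantees that (after shrinking $I$ if necessary) $Z^\epsilon$ is a true $\mathbb{Q}$-martingale, so Girsanov gives the identity
\begin{equation*}
w_{\eta,\epsilon}(\chi,T) = \mathbb{E}^{\mathbb{Q}}\bigl[F_\eta\, Z_T^\epsilon\bigr], \qquad F_\eta := \frac{1}{\phi_\eta(X_T)}\, e^{\int_0^T f_\eta(X_s,s;T)\,ds}.
\end{equation*}

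Formal differentiation of the stochastic exponential yields
\begin{equation*}
\partial_\epsilon Z_T^\epsilon = Z_T^\epsilon\,\biggl(\int_0^T \partial_\epsilon\gamma_\epsilon(X_s,s;T)\,dB_s - \int_0^T \gamma_\epsilon\,\partial_\epsilon\gamma_\epsilon(X_s,s;T)\,ds\biggr),
\end{equation*}
which, once the interchange of derivative and expectation is justified, gives $\partial_\epsilon w_{\eta,\epsilon}(\chi,T) = \mathbb{E}^{\mathbb{Q}}[F_\eta\, \partial_\epsilon Z_T^\epsilon]$. At $\epsilon=0$ the formula collapses nicely because $\gamma_0 \equiv 0$ forces $Z_T^0 = 1$ and the Lebesgue-integral correction to vanish, leaving
\begin{equation*}
\partial_\epsilon\bigl\vert_{\epsilon=0} w_{0,\epsilon}(\chi,T) = \mathbb{E}^{\mathbb{Q}}\biggl[F_0\cdot \int_0^T \partial_\epsilon\bigl\vert_{\epsilon=0}\gamma_\epsilon(X_s,s;T)\,dB_s\biggr].
\end{equation*}

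For the asymptotic vanishing, I apply H\"older with conjugate exponents $(u,v)$ coming from conditions (ii) and (iv). The first factor is bounded by $\hat\Gamma_u(T)^{1/u}$, which is uniformly bounded in $T$ by (iv). For the second factor the Burkholder--Davis--Gundy inequality combined with the pointwise bound $|\partial_\epsilon\gamma_\epsilon| \le \hat g$ and condition (ii) yields
\begin{equation*}
\mathbb{E}^{\mathbb{Q}}\biggl[\biggl\vert\int_0^T \partial_\epsilon\bigl\vert_{\epsilon=0}\gamma_\epsilon\,dB_s\biggr\vert^{v}\biggr] \le C_v\,\mathbb{E}^{\mathbb{Q}}\biggl[\Bigl(\int_0^T \hat g^2(X_s,s;T)\,ds\Bigr)^{v/2}\biggr] \le C_v\,T^{v}\,h(T).
\end{equation*}
Dividing through by $T$ gives $\frac{1}{T}\bigl\vert\partial_\epsilon\vert_{\epsilon=0} w_{0,\epsilon}(\chi,T)\bigr\vert \le C_v^{1/v}\,\hat\Gamma_u(T)^{1/u}\,h(T)^{1/v} \to 0$ as $T\to\infty$, which is the desired conclusion.

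The main obstacle is not the $\epsilon=0$ bound just sketched but the rigorous justification of differentiation under the expectation and joint continuity of $\partial_\epsilon w_{\eta,\epsilon}$ in $(\eta,\epsilon)$ throughout $I^2$. For general $\epsilon$ the derivative carries both the factor $Z_T^\epsilon$ (not identically $1$) and the Lebesgue correction $\int_0^T \gamma_\epsilon \partial_\epsilon\gamma_\epsilon\,ds$, so one has to dominate the difference quotient uniformly in $\epsilon$ over a neighborhood. This is precisely where all four conditions conspire: I would use a triple H\"older splitting in which $Z_T^\epsilon$ is placed in $L^{p_1}$ with uniform bound from the exponential-moment condition (i), $F_\eta$ is placed in $L^{p_2} \le \hat\Gamma_{p_2}(T)^{1/p_2}$ from (iv) (shrinking $u$ slightly), and the stochastic-integral term together with the Lebesgue correction are placed in $L^{p_3}$ via BDG upgraded from (ii) to the slightly higher $L^{v+\epsilon_1}$ bound provided by (iii), so that $1/p_1+1/p_2+1/p_3 = 1$ can be arranged while keeping each factor uniformly bounded in $\epsilon$ on a small neighborhood. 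Joint continuity then follows from continuity of $(\eta,\epsilon) \mapsto F_\eta\,\partial_\epsilon Z_T^\epsilon$ in probability (via B\ref{bassume:perturb}, continuity of $f_\epsilon$, and pathwise continuity of the stochastic integrals) together with the dominated convergence theorem.
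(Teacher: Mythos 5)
Your proposal follows essentially the same route as the paper: a Girsanov change of measure turning the drift perturbation into the density $Z_T^\epsilon=\mathcal{E}\bigl(\int_0^\cdot\gamma_\epsilon\,dB\bigr)_T$ (made a true martingale via Novikov and condition (i)), differentiation of that density (the paper works with the exact identity $(Z_T^\epsilon-1)/\epsilon=\int_0^T Z_s^\epsilon\ell_\epsilon\,dB_s$ where $\gamma_\epsilon=\epsilon\ell_\epsilon$, which is equivalent to your formula and yields the same limit at $\epsilon=0$), and the identical H\"older--BDG estimate combining (ii) and (iv) to get the bound $C_v^{1/v}\hat\Gamma_u(T)^{1/u}h(T)^{1/v}\to0$. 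The technical steps you defer --- dominating the difference quotient by a multi-exponent H\"older splitting using (i)--(iii), proving $Z_T^\epsilon\to1$ in $L^{mv}$, and establishing joint continuity --- are precisely what the paper's Proposition B.1 carries out in its Steps (I)--(III), so your plan is sound and matches the paper's argument.
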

\noindent The proof of the above theorem  is similar to the proof of Proposition A.1 in \cite{park2015sensitivity}, but for the sake of completeness we provide the proof in Appendix \ref{app:pf_rho}.  
 
\begin{remark}
	One can relax the assumption in the above theorem on the continuous differentiability of $\kappa_{\epsilon}$ by replacing it with local Lipschitz continuity and defining
	\[
			\hat{g}(\cdot,t;T):=\sup_{\epsilon\in I}\Bigl\vert \frac{\kappa_\epsilon(x,t;T)-\kappa(x,t;T)}{\epsilon\sigma(x)}\Bigr\vert.
		\]
	As this introduces cumbersome additional notations, we do not pursue this in the current paper.
\end{remark}

\subsection{Volatility perturbation of the factor process}
\label{sec:vega}

This section discusses the volatility perturbation of the factor process. Consider B\ref{bassume:perturb} -- \ref{bassume:HS_eps} and the perturbed process  
\[
	dX_t^\epsilon =\kappa_\epsilon(X_t^\epsilon,t;T) \,dt+\sigma_\epsilon(X_t^\epsilon)\,dB_t^\epsilon, \qquad X_0^\epsilon=\chi.
\]
Contrary to the previous section, we allow for an additional perturbation of the volatility of the factor process. As this is a mathematically harder problem, we will need stronger conditions.

The main tool of this section is the Lamperti transformation. We assume that $(\epsilon,x)\mapsto\sigma_\epsilon(x)$ is twice continuously differentiable. Fix any $c\in(r,\ell)$ and define
\[
	\ell_\epsilon(\cdot):=\int_{c}^{\cdot}\frac{1}{\sigma_\epsilon(z)}\,dz, \qquad \ell(\cdot):=\ell_0(\cdot).
\]
As $\sigma_\epsilon$ is positive, the function $\ell_\epsilon$ is invertible. Define   two functions $\Phi_\epsilon,$ $F_\epsilon$ and a process $\check{X}^\epsilon$ by 
\[
	\Phi_\epsilon(\cdot)=\phi_\epsilon\bigl(\ell_\epsilon^{-1}(\cdot)\bigr), \qquad F_\epsilon(\cdot,t;T)=f_\epsilon\bigl(\ell_\epsilon^{-1}(\cdot),t;T\bigr), \qquad \check{X}_t^\epsilon:=\ell_\epsilon(X_t^\epsilon),
\]
and let $\Phi:=\Phi_0,$ $F:=F_0$ and $\check{X}:=\check{X}^0.$ The integral begins with a fixed constant $c$ so that the initial value $\check{X}_0^\epsilon=\int_{c}^{\chi}\frac{1}{\sigma_\epsilon(u)}\,du$ is also perturbed if $\chi\neq c.$ The function $v_\epsilon(x,T)$ we want to analyze can be expressed as
\[
	v_\epsilon(\chi,T) =e^{-\lambda_\epsilon T}\phi_\epsilon(\chi) \, \mathbb{E}^{\mathbb{Q}_\epsilon}\Bigl[\frac{1}{\Phi_\epsilon(\check{X}_T^\epsilon)}\, e^{\int_0^TF_\epsilon(\check{X}_s^\epsilon,s;T)\,ds} \Bigr].
\]
Using the It\^{o} formula, it is easy to show that the $\mathbb{Q}^\epsilon$-dynamics of $\check{X}^\epsilon$ is 
\[
	d\check{X}_t^\epsilon =\gamma(\check{X}_t^\epsilon)\,dt + dB_t^\epsilon, \qquad \check{X}_0^\epsilon=\ell_\epsilon(\chi)
\]
where
\[
	\gamma(\cdot):=\frac{\kappa_\epsilon\bigl(\ell_\epsilon^{-1}(\cdot),t;T\bigr)}{\sigma_\epsilon\bigl(\ell_\epsilon^{-1}(\cdot)\bigr)}-\frac{1}{2}\sigma_\epsilon'\bigl(\ell_\epsilon^{-1}(\cdot)\bigr).
\]
Let $U$ be an open neighborhood of $\ell(\chi)$ and define
\[
	\tilde{w}_{\eta,\epsilon}(\check{x},T) :=\mathbb{E}^{\mathbb{Q}_\epsilon}\Bigl[\frac{1}{\Phi_\eta(\check{X}_T^\epsilon)}\, e^{\int_0^TF_\eta(\check{X}_s^\epsilon,s;T)\,ds} \, \Big\vert \, \check{X}_0^\epsilon=\check{x}\Bigr]
\]
for $(\eta,\epsilon,\check{x},T)\in I\times I\times U\times [0,\infty)$ so that  
\[
	v_\epsilon(\chi,T) =e^{-\lambda_\epsilon T}\phi_\epsilon(\chi)\, \tilde{w}_{\epsilon,\epsilon}\bigl(\ell_\epsilon(\chi),T\bigr).
\]

Under these circumstances, we obtain the following theorem. The proof is similar to that of Theorem \ref{eqn:chain_rule}. 
 
\begin{thm}\label{thm:vega_vari}
	Additionally to B\ref{bassume:perturb} -- \ref{bassume:HS_eps}, assume that  $(\epsilon,x)\mapsto\sigma_\epsilon(x)$ is twice continuously differentiable. Suppose condition (i) in Theorem \ref{thm:total_chain} and the following conditions.
	\begin{enumerate}[(i)]
		\item 	The partial derivative $\frac{\partial }{\partial \check{x}}\tilde{w}_{\eta,\epsilon}(\check{x},T)$ exists and is continuous in $(\eta,\epsilon,\check{x})$ on $I\times I\times U.$ Moreover,
		\[
					\lim_{T\to\infty}\frac{1}{T}\frac{\partial }{\partial \check{x}}\Big\vert_{\check{x}=\ell(\chi)}\tilde{w}_{0,0}(\check{x},T)=0.
				\] 
		\item 	The partial derivative $\frac{\partial }{\partial \eta}\tilde{w}_{\eta,\epsilon}(\check{x},T)$ exists and is continuous in $(\eta,\epsilon,\check{x})$ on $I\times I\times U.$ Moreover,
		\[
					\lim_{T\to\infty}\frac{1}{T}\frac{\partial}{\partial \eta}\Big\vert_{\eta=0}\tilde{w}_{\eta,0}\bigl(\ell(\chi),T\bigr)=0.
				\] 
		\item 	The partial derivative $\frac{\partial }{\partial \epsilon}\tilde{w}_{\eta,\epsilon}(\check{x},T)$  exists and is continuous in $(\eta,\epsilon,\check{x})$ on $I\times I\times U.$ Moreover,
		\[
					\lim_{T\to\infty}\frac{1}{T}\frac{\partial }{\partial \epsilon}\Big\vert_{\epsilon=0}\tilde{w}_{0,\epsilon}\bigl(\ell(\chi),T\bigr)=0.
				\]
	\end{enumerate}
	Then $\tilde{w}_{\eta,\epsilon}(x,T)$ (thus	$\ln v_\epsilon(x,T)$) is differentiable in $\epsilon$ on $I$ and  
	\begin{align}
		\label{eqn:decompo}
			&\,\frac{\partial}{\partial\epsilon}\Big\vert_{\epsilon=0}w_{\epsilon,\epsilon}(\chi,T)=\frac{\partial}{\partial\epsilon}\Big\vert_{\epsilon=0}\tilde{w}_{\epsilon,\epsilon}\bigl(\ell_\epsilon(\chi),T\bigr) \nonumber \\
			= &\,\frac{\partial}{\partial\epsilon}\Big\vert_{\epsilon=0}\ell_{\epsilon}(\chi)\cdot\frac{\partial }{\partial \check{x}}\Big\vert_{\check{x}=\ell(\chi)}\tilde{w}_{0,0}(\check{x},T)+\frac{\partial}{\partial \eta}\Big\vert_{\eta=0}\tilde{w}_{\eta,0}\bigl(\ell(\chi),T\bigr)+\frac{\partial }{\partial \epsilon}\Big\vert_{\epsilon=0}\tilde{w}_{0,\epsilon}\bigl(\ell(\chi),T\bigr).
		\end{align}
	Finally,
	\[
			\lim_{T\rightarrow\infty}\frac{1}{T}\frac{\partial}{\partial\epsilon}\Big\vert_{\epsilon=0}\ln v_\epsilon(\chi,T) =- \frac{\partial\lambda_\epsilon }{\partial\epsilon}\Big\vert_{\epsilon=0}. 
		\]
\end{thm}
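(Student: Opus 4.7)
The plan is to mimic the structure of the proof of Theorem \ref{thm:total_chain}, but now respecting the fact that after the Lamperti transformation $\check{X}^\epsilon=\ell_\epsilon(X^\epsilon)$ the perturbation parameter $\epsilon$ enters the representation
\[
v_\epsilon(\chi,T)=e^{-\lambda_\epsilon T}\phi_\epsilon(\chi)\,\tilde{w}_{\epsilon,\epsilon}\bigl(\ell_\epsilon(\chi),T\bigr)
\]
in \emph{five} different places: through the eigenvalue $\lambda_\epsilon$, through the eigenfunction $\phi_\epsilon(\chi)$, through the first subscript of $\tilde{w}$ (the functionals $\Phi_\eta,F_\eta$), through the second subscript (the law of the transformed diffusion $\check{X}^\epsilon$), and through the transformed initial condition $\ell_\epsilon(\chi)$. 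The idea is to introduce an auxiliary function on $I^5$ in which each of these dependencies is carried by its own dummy variable, differentiate it using the standard chain rule, and then specialize by setting all five variables equal to $\epsilon$.

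Concretely, define on a suitable neighborhood of the origin in $I^5$ the function
\[
V(\epsilon_1,\epsilon_2,\epsilon_3,\epsilon_4,\epsilon_5):=e^{-\lambda_{\epsilon_1}T}\,\phi_{\epsilon_2}(\chi)\,\tilde{w}_{\epsilon_3,\epsilon_4}\bigl(\ell_{\epsilon_5}(\chi),T\bigr),
\]
so that $v_\epsilon(\chi,T)=V(\epsilon,\epsilon,\epsilon,\epsilon,\epsilon)$. Condition (i) of Theorem \ref{thm:total_chain} gives differentiability in $\epsilon_1,\epsilon_2$; hypotheses (i), (ii), (iii) of the present theorem give continuous differentiability in $\epsilon_5,\epsilon_3,\epsilon_4$ respectively, together with continuity of the partial derivatives jointly in the relevant variables. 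Since $\epsilon\mapsto\ell_\epsilon(\chi)=\int_c^\chi\sigma_\epsilon^{-1}(z)\,dz$ is continuously differentiable on $I$ by B\ref{bassume:perturb} and the twice continuous differentiability assumption on $\sigma_\epsilon$, the composition is differentiable and the ordinary chain rule yields
\[
\frac{\partial}{\partial\epsilon}\Big\vert_{\epsilon=0}\tilde{w}_{\epsilon,\epsilon}\bigl(\ell_\epsilon(\chi),T\bigr)=\frac{\partial\ell_\epsilon(\chi)}{\partial\epsilon}\Big\vert_{\epsilon=0}\frac{\partial}{\partial\check{x}}\Big\vert_{\check{x}=\ell(\chi)}\tilde{w}_{0,0}(\check{x},T)+\frac{\partial}{\partial\eta}\Big\vert_{\eta=0}\tilde{w}_{\eta,0}\bigl(\ell(\chi),T\bigr)+\frac{\partial}{\partial\epsilon}\Big\vert_{\epsilon=0}\tilde{w}_{0,\epsilon}\bigl(\ell(\chi),T\bigr),
\]
which is exactly Eq.\eqref{eqn:decompo}, and consequently
\[
\frac{\partial}{\partial\epsilon}\Big\vert_{\epsilon=0}\ln v_\epsilon(\chi,T)=-T\frac{\partial\lambda_\epsilon}{\partial\epsilon}\Big\vert_{\epsilon=0}+\frac{\partial_\epsilon\phi_\epsilon(\chi)\vert_{\epsilon=0}}{\phi(\chi)}+\frac{\partial_\epsilon\bigl[\tilde{w}_{\epsilon,\epsilon}\bigl(\ell_\epsilon(\chi),T\bigr)\bigr]\vert_{\epsilon=0}}{\tilde{w}_{0,0}\bigl(\ell(\chi),T\bigr)}.
\]

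To conclude the final asymptotic statement I would divide this identity by $T$ and let $T\to\infty$. The denominator $\tilde{w}_{0,0}(\ell(\chi),T)$ converges to a strictly positive constant: indeed the Hansen--Scheinkman-type representation gives $v(\chi,T)=e^{-\lambda T}\phi(\chi)\tilde{w}_{0,0}(\ell(\chi),T)$, and A\ref{assume:long-term} then forces $\tilde{w}_{0,0}(\ell(\chi),T)\to C>0$. The second term is $O(1/T)$ by condition (i) of Theorem \ref{thm:total_chain}. Each of the three summands in the numerator of the third term, when divided by $T$, vanishes by hypotheses (i), (ii), (iii) respectively. Hence only $-\partial_\epsilon\lambda_\epsilon\vert_{\epsilon=0}$ survives, as claimed.

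The main obstacle, as I see it, is less in the chain-rule computation itself (which is bookkeeping) and more in verifying that the five-variable joint differentiability and the interchange of differentiation with expectation are legitimate. In particular, one needs the partial derivatives $\partial_{\epsilon_3}$, $\partial_{\epsilon_4}$ of $\tilde{w}_{\eta,\epsilon}(\check{x},T)$ to be jointly continuous on $I\times I\times U$, which under the volatility perturbation requires care: the measure $\mathbb{Q}^\epsilon$ and the law of $\check{X}^\epsilon$ both depend on $\epsilon$, and one must show that one may differentiate under the $\mathbb{Q}^\epsilon$-expectation. This is precisely why the theorem takes conditions (i)--(iii) as hypotheses rather than attempting to verify them in full generality here; the analogues of Theorem \ref{thm:rho} together with the appendix bound on Eq.\eqref{eqn:derivative} provide the templates for checking these in concrete models.
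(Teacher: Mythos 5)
Your proposal is correct and follows essentially the same route the paper intends: the paper omits the proof with the remark that it is ``similar to that of Theorem \ref{thm:total_chain},'' whose proof introduces an auxiliary function on $I^4$ and applies the chain rule, and your five-variable function $V(\epsilon_1,\dots,\epsilon_5)$ is precisely the natural extension of that device to account for the additional dependence through the Lamperti-transformed initial condition $\ell_\epsilon(\chi)$. The subsequent limit argument (positivity of the limit of $\tilde{w}_{0,0}(\ell(\chi),T)$ via A\ref{assume:long-term}, and the vanishing of each summand divided by $T$ via hypotheses (i)--(iii)) matches the paper's reasoning.
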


This theorem has an important implication, namely that the volatility sensitivity of the error term $w$ is a sum of the initial value sensitivity, the functional sensitivity and the drift sensitivity of the error term. Condition (ii) in the above theorem is about the sensitivity with respect to the functional perturbation, which is corresponding to condition (ii) in Theorem \ref{thm:total_chain}. Condition (iii) in the above theorem  is about the sensitivity with respect to the drift corresponding to condition (iii) in Theorem \ref{thm:total_chain}, which can be analyzed in the same way in Section \ref{sec:rho}. In the special case $c=\chi$ we can omit condition (i) in the above theorem since the initial value is not perturbed. Moreover, Eq.\eqref{eqn:decompo} can be written as
\[
	\frac{\partial}{\partial\epsilon}\Big\vert_{\epsilon=0}w_{\epsilon,\epsilon}(\chi,T)=\frac{\partial}{\partial\epsilon}\Big\vert_{\epsilon=0}\tilde{w}_{\epsilon,\epsilon}(T)=\frac{\partial}{\partial \eta}\Big\vert_{\eta=0}\tilde{w}_{\eta,0}(T)+\frac{\partial }{\partial \epsilon}\Big\vert_{\epsilon=0}\tilde{w}_{0,\epsilon}(T).
\]

\section{Conclusion}
\label{sec:conclusion}

In this paper, we conducted a sensitivity analysis of the long-term expected utility of optimal portfolios in an incomplete market given by a factor model. The main purpose was to find the long-term sensitivity, that is, the extent how much the optimal expected utility is affected in the long run for small changes of the underlying factor model. We calculated two kinds of sensitivities; The first is the initial factor sensitivity. For the initial value $\chi=X_0$ of the factor process, we study
the behavior of
\[
	\frac{\partial}{\partial\chi}\sup_{\Pi\in\mathcal{X}}\mathbb{E}^\mathbb{P}\bigl[U(\Pi_T)\bigr]
\]
for large $T.$ The second kind is the drift and volatility sensitivities. For a perturbation parameter $\epsilon,$ consider a perturbed asset price $S^\epsilon$ with $S=S^0$ and the family $\mathcal{X}^\epsilon$ of  wealth processes   of admissible portfolios with the perturbed asset model $S^\epsilon.$  For the long-term sensitivity, we are interested in the behavior of 
\[
	\frac{\partial}{\partial\epsilon}\Big\vert_{\epsilon=0}\sup_{\Pi\in\mathcal{X}^\epsilon}\mathbb{E}^\mathbb{P}\bigl[U(\Pi_T)\bigr]
\]
for large $T.$

To achieve this, we employed several techniques. The primal utility maximization problem was transformed into the dual problem. Then, we  approximated the solution of the dual problem by an HJB equation.  The long-term behavior of the  optimal expected utility can be characterized by a solution pair $(\lambda,\phi)$ of the corresponding ergodic HJB equation, and  we  demonstrated  that this solution pair  determines the long-term sensitivities. The solution $v$ of the dual problem can be decomposed as
\[
	v(\chi,T)=e^{-\lambda T}\phi(\chi)\,\mathbb{E}^{\mathbb{Q}}\Bigl[\frac{1}{\phi (X_T)}\, e^{\int_0^Tf(X_s,s;T)\,ds} \Bigr].
\]
We regarded the expectation in this expression as an error term and then found sufficient conditions under which this error term is negligible. We provided examples of explicit results for several market models such as the Kim--Omberg model for stochastic excess returns and the Heston stochastic volatility model.

\textbf{Acknowledgement.}\\ 
Hyungbin Park was supported by the National Research Foundation of Korea (NRF) grant funded by the Korea government (MSIT) (No. 2018R1C1B5085491 and No. 2017R1A5A1015626).

\bibliographystyle{plainnat}

\bibliography{utility_max}

\let\normalsize\scriptsize
\appendix
\scriptsize

\sectionfont{\scriptsize}
\section{Motivation for the ergodic HJB equation}
\label{app:conn_HJB}

In this section, we derive the ergodic HJB equation and provide the motivation of A\ref{assume:diff_v} -- \ref{assume:P}. These assumptions originate from  the dynamic programming principle. Let $\mathcal{M}$ be the set of all progressively measurable processes $\xi$ such that $\int_0^t\xi_s^2\,ds<\infty$ a.s. for each $t.$ Then 
\begin{align*}
	v(x,T) & =\sup_{Y\in\mathcal{Y}}\mathbb{E}^\mathbb{P}\bigl[Y_T^q\bigr] = \sup_{\xi\in\mathcal{M}}\mathbb{E}^\mathbb{P} \Bigl[e^{-q\int_0^T\theta(X_s)\,dW_{1,s}-\frac{q}{2} \int_0^T\theta^2(X_s)\,ds - q \int_0^T\xi_s\,dW_{2,s} - \frac{q}{2}\int_0^T\xi_s^2\,ds}\Bigr] = \sup_{\xi\in\mathcal{M}}\mathbb{E}^\mathbb{\hat{P}}\Bigl[e^{ \frac{q}{2}(q-1)\int_0^T(\theta^2(X_s)+\xi_s^2)\,ds}\Bigr]
\end{align*}
where 
\begin{equation}
\label{eqn:hat_P_from_P}
	\frac{d\mathbb{\hat{P}}}{d\mathbb{P}}\Big\vert_{\mathcal{F}_T}=\mathcal{E}\Bigl(-q\int_0^{\cdot}\theta(X_s)\,dW_{1,s}-q\int_0^\cdot\xi_s\,dW_{2,s}\Bigr)_T
\end{equation} 
defines a martingale due to A\ref{assume:P}. The $\mathbb{\hat{P}}$-dynamics of $X$ is 
\[
	dX_t=(m(X_t)-q\theta(X_t)\sigma_1(X_t)-q\xi_t\sigma_2(X_t))\,dt+\sigma_1(X_1)\,d\hat{W}_{1,t}+\sigma_2(X_t)\,d\hat{W}_{2,t}
\]
for a $\mathbb{\hat{P}}$-Brownian motion $(\hat{W}_{1,t},\hat{W}_{2,t}).$ We regard the process $X$ as a state variable and $\xi$ as a control variable. The standard argument of the dynamic programming principle says that the value function
\[
	u(x,t):=\sup_{\xi\in\mathcal{M}}\mathbb{E}_{X_t=x}^\mathbb{\hat{P}}\Bigl[e^{ \int_t^Tl(\xi_s,X_s)\,ds}\Bigr]
\]
satisfies
\begin{equation}
\label{eqn:HJB_pre}
	u_t+\frac{1}{2}(\sigma_1^2(x)+\sigma_2^2(x))u_{xx}+\sup_{\xi\in\mathbb{R}}\{h(\xi,x)u_x+l(\xi,x)u\}=0, \qquad u(x,T)=1. 
\end{equation}
The optimal control of Eq.\eqref{eqn:HJB_pre} is given by
\[
	\hat{\xi}(x,t;T)=-\frac{\sigma_2(x)u_x(x,t)}{(1-q)u(x,t)}.
\]
It is convenient to consider  an initial condition at time $0,$ 
\[
	v(x,t)=\sup_{\xi\in\mathcal{M}}\mathbb{E}_{X_0=x}^\mathbb{\hat{P}}\Bigl[e^{ \int_0^tl(\xi_s,X_s)\,ds}\Bigr].
\]
We know that from the Markov property
\[
	v(x,t)=\sup_{\xi\in\mathcal{M}}\mathbb{E}_{X_0=x}^\mathbb{\hat{P}}\Bigl[e^{ \int_0^tl(\xi_s,X_s)\,ds}\Bigr]
	=\sup_{\xi\in\mathcal{M}}\mathbb{E}_{X_{T-t}=x}^\mathbb{\hat{P}}\Bigl[e^{ \int_{T-t}^{T}l(\xi_s,X_s)\,ds}\Bigr]=u(x,T-t).
\]
The function $v(x,t)$ satisfies 
\begin{equation}
\label{eqn:app_v}
	v_t=\frac{1}{2}(\sigma_1^2(x)+\sigma_2^2(x))v_{xx} +\sup_{\zeta\in\mathbb{R}}\{l(\zeta,x)v+h(\zeta,x)v_x\}, \qquad v(0,x)=1.
\end{equation}
The optimal control of Eq.\eqref{eqn:app_v} is given by
\[
	\hat{\zeta}(x,t;T)=-\frac{\sigma_2(x)v_x(x,t)}{(1-q)v(x,t)}
\] 
and it is clear that
\[
	\hat{\xi}(x,t;T)=\hat{\zeta}(x,T-t;T)=-\frac{\sigma_2(x)v_x(x,T-t)}{(1-q)v(x,T-t)},
\] 
which motivates Assumption \ref{assume:struc_opt}  and Eq.\eqref{eqn:opt_xi}.

The \textit{ergodic  HJB equation} is useful to obtain the growth rate $-\lambda$ and to understand the behavior of the optimal function $\hat{\xi}.$ Heuristically, by taking $v(t,x)=e^{-\lambda t}\phi(x)$ in Eq.\eqref{eqn:HJB}, we have 
\begin{equation} 
	-\lambda\phi(x)=\frac{1}{2}(\sigma_1^2(x)+\sigma_2^2(x))\phi_{xx}+\sup_{\zeta\in\mathbb{R}}\{l(\zeta,x)\phi+h(\zeta,x)\phi_x\}.
\end{equation}  
This is a kind of an eigenvalue/eigenfunction problem. The unknown is a pair $(\lambda,\phi)$ and the solution pair is not unique in general.A\ref{assume:long-term} assumes that a specific solution pair $(\lambda,\phi)$ of this ergodic HJB equation approximates the function $v$ defined in Eq.\eqref{eqn:v}, which is also a solution of the original  HJB equation \eqref{eqn:HJB}. Many authors discuss sufficient conditions for this assumption. Refer to Assumption 4.1 in  \cite{knispel2012asymptotics} and Theorem 3.3 in \cite{fleming1995risk}.

\sectionfont{\scriptsize}
\section{Proof of Theorem \ref{thm:rho}}
\label{app:pf_rho}

Proof of Theorem \ref{thm:rho} relies on the following proposition, whose proof is rather long and tedious.
We recall the functions $\hat{\phi},$ $\hat{f}$ and $\hat{g}$ defined in Eq.\eqref{eqn:hats}.
The proof of this proposition is similar to the proof of Proposition A.1 in \cite{park2015sensitivity}, but for the sake of completeness we provide the proof here.  
	
\begin{prop}
\label{prop:rho}   
	Additionally to B\ref{bassume:perturb} -- \ref{bassume:HS_eps}, assume that $\hat{\phi}(\cdot)>0,$ $\hat{f}(\cdot,t;T)<\infty,$ $\hat{g}(\cdot,t;T)<\infty$ and that $\kappa_\epsilon$  is continuously differentiable and $f_\epsilon$ is continuous in $\epsilon$ on $I.$ Fix $T>0$ and suppose the following conditions.
	\begin{itemize}
		\item[(i)] There exists a real number $\epsilon_0>0$ such that
		\begin{equation}
						\mathbb{E}^{\mathbb{Q}}\Bigl[e^{\epsilon_0\int_0^T \hat{g}^2(X_s,s;T)\,ds}\Bigr]
				\end{equation} 
		is finite.	
		\item[(ii)] There exist real numbers $v\ge2$ and $\epsilon_1>0$ such that  	
		\begin{equation}
						\mathbb{E}^{\mathbb{Q}} \int_0^T\hat{g}^{v+\epsilon_1}(X_s,s;T) \, ds  
				\end{equation}
		is finite.
		\item[(iii)] The function 
		\begin{equation}
					\hat{\Gamma}_u(T):=\mathbb{E}^{\mathbb{Q}}\biggl[\frac{1}{\hat{\phi}^{u}( {X}_T)}\, e^{u\int_0^T\hat{f}(X_s,s;T)\,ds}\biggr]
				\end{equation}
		is finite where $u=\frac{v}{v-1},$ i.e., $\frac{1}{u}+\frac{1}{v}=1,$ for $v$ from (ii).
	\end{itemize}
	Then, for given $(\chi,T),$ the partial derivative $\frac{\partial }{\partial \epsilon}w_{\eta,\epsilon}(\chi,T)$ exists and 
	\begin{equation}
		\label{eqn:rho_deriva}
			\frac{\partial }{\partial \epsilon}w_{\eta,\epsilon}(\chi,T) 
			=\frac{\partial }{\partial \epsilon}\mathbb{E}^{\mathbb{Q}_\epsilon}\biggl[\frac{1}{\phi_\eta(X_T^\epsilon)}\, e^{\int_0^Tf_\eta(X_s^\epsilon,s;T)\,ds} \biggr] =\mathbb{E}^{\mathbb{Q}_\epsilon}\biggl[\frac{1}{\phi_\eta(X_T^\epsilon)}\, e^{\int_0^Tf_\eta( X_s^\epsilon,s;T)\,ds}\int_0^T\overline{\ell}_\epsilon(X_s^\epsilon,s;T) \,dB_s^\epsilon\biggr]
		\end{equation}
	where
	\[
			\overline{\ell}_{\epsilon}(x,t;T):=\frac{1}{\sigma(x)}\frac{\partial}{\partial\epsilon}\kappa_\epsilon(x,t;T).
		\]
	Moreover, the derivative is continuous in $(\eta,\epsilon)$ on $I^2$ for given $(\chi,T).$
\end{prop}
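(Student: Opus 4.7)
The strategy is to use a Girsanov transformation to rewrite $w_{\eta,\epsilon}$ as an expectation under the common reference measure $\mathbb{Q}=\mathbb{Q}^0$, and then differentiate under the expectation sign. Working on the canonical probability space carrying the $\mathbb{Q}$-Brownian motion $B$ and the unperturbed factor process $X$, I introduce the Girsanov density $D_\epsilon := \mathcal{E}\bigl(\int_0^\cdot K_\epsilon(X_s,s;T)\,dB_s\bigr)_T$ with $K_\epsilon := (\kappa_\epsilon-\kappa)/\sigma$. The fundamental theorem of calculus bounds $|K_\epsilon|$ by $|\epsilon|\hat g$, so Novikov's criterion combined with condition (i) guarantees $D_\epsilon$ is a true $\mathbb{Q}$-martingale for $\epsilon$ in a neighborhood of $0$; by B\ref{bassume:HS_eps} the identification $d\mathbb{Q}^\epsilon/d\mathbb{Q}=D_\epsilon$ realizes the perturbed dynamics with $\mathbb{Q}^\epsilon$-Brownian motion $B_t^\epsilon=B_t-\int_0^t K_\epsilon\,ds$. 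Writing $F_\eta:=\phi_\eta(X_T)^{-1}\exp\bigl(\int_0^T f_\eta(X_s,s;T)\,ds\bigr)$, I obtain $w_{\eta,\epsilon}(\chi,T)=\mathbb{E}^{\mathbb{Q}}[D_\epsilon F_\eta]$, reducing the claim to differentiating the $\mathbb{Q}$-integrand in $\epsilon$.

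Next I would compute the formal derivative of $D_\epsilon$: $\partial_\epsilon D_\epsilon = D_\epsilon\bigl(\int_0^T \overline{\ell}_\epsilon\,dB_s - \int_0^T K_\epsilon\,\overline{\ell}_\epsilon\,ds\bigr) = D_\epsilon \int_0^T \overline{\ell}_\epsilon(X_s,s;T)\,dB_s^\epsilon$, where the second equality uses $dB_s = dB_s^\epsilon + K_\epsilon\,ds$. If the interchange of $\partial_\epsilon$ and $\mathbb{E}^{\mathbb{Q}}$ is legitimate, this produces precisely \eqref{eqn:rho_deriva} upon reverting to $\mathbb{Q}^\epsilon$. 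The interchange will be justified by dominated convergence: the mean value theorem gives $(D_{\epsilon+h}-D_\epsilon)/h = \partial_\epsilon D_{\tilde\epsilon(h)}$ for some intermediate $\tilde\epsilon(h)$, so what is needed is an $L^1(\mathbb{Q})$-dominant for $D_{\tilde\epsilon} F_\eta \int_0^T \overline{\ell}_{\tilde\epsilon}\,dB_s^{\tilde\epsilon}$ uniform in $\tilde\epsilon$ near $\epsilon$. Switching back to $\mathbb{Q}^{\tilde\epsilon}$ and applying H\"older with the exponents $u,v$ from (iii), the bound splits into $(\mathbb{E}^{\mathbb{Q}^{\tilde\epsilon}}F_\eta^u)^{1/u}$, which is controlled by $\hat\Gamma_u(T)$ after an auxiliary H\"older split pairing small positive moments of $D_{\tilde\epsilon}$ (from the exponential integrability in (i)) with the universal upper bound $\hat\phi(X_T)^{-u}e^{u\int_0^T \hat f\,ds}$, and $(\mathbb{E}^{\mathbb{Q}^{\tilde\epsilon}}|\int_0^T \overline{\ell}_{\tilde\epsilon}\,dB^{\tilde\epsilon}|^v)^{1/v}$, which by Burkholder--Davis--Gundy reduces to $\mathbb{E}^{\mathbb{Q}^{\tilde\epsilon}}[(\int_0^T \hat g^2\,ds)^{v/2}]$.

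The hard part will be the tight interlocking of H\"older exponents: the pairing $1/u+1/v=1$ from (iii) leaves no room to accommodate the density $D_{\tilde\epsilon}$ directly. This is where the slack $\epsilon_1>0$ in condition (ii) becomes essential: when converting the BDG bound back from $\mathbb{Q}^{\tilde\epsilon}$ to $\mathbb{Q}$, an auxiliary H\"older inequality with exponents $p, p'$ chosen so that $p'\in(1,1+\epsilon_1/v)$ lets me absorb a small power of $D_{\tilde\epsilon}$ (whose moments are finite by rewriting the Doléans exponential as $\mathcal{E}(pK_\epsilon\cdot B)\cdot\exp(\tfrac{p^2-p}{2}\int K_\epsilon^2\,ds)$ and applying Cauchy--Schwarz together with (i)) against $(v+\epsilon_1)$-moments of $\hat g$ provided by (ii). Once the uniform dominant is in place, dominated convergence yields \eqref{eqn:rho_deriva}. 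Finally, joint continuity of $(\eta,\epsilon)\mapsto \partial_\epsilon w_{\eta,\epsilon}$ on $I^2$ follows by a Vitali-type argument: the integrand is $\mathbb{Q}$-a.s.\ jointly continuous in $(\eta,\epsilon)$ by the continuity assumptions on $\kappa_\epsilon$, $\overline{\ell}_\epsilon$, $\phi_\eta$ and $f_\eta$ in their parameters, and the same bound serves as the equi-integrable majorant needed to pass to the limit.
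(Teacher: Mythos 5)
Your overall architecture is the right one and matches the paper's: pass to the reference measure $\mathbb{Q}$ via the Girsanov density $D_\epsilon=\mathcal{E}\bigl(\int_0^\cdot K_\epsilon\,dB_s\bigr)_T$ (the paper's $Z_T^\epsilon$ with $K_\epsilon=\epsilon\ell_\epsilon$), split by H\"older with the conjugate pair $(u,v)$ from (iii), control the stochastic integral by Burkholder--Davis--Gundy, and use the slack $\epsilon_1$ in (ii) to absorb moments of the density, whose existence comes from the exponential integrability in (i). However, the central differentiation step as you describe it has a genuine gap. First, the mean value theorem step $(D_{\epsilon+h}-D_\epsilon)/h=\partial_\epsilon D_{\tilde\epsilon(h)}$ presupposes that $\epsilon\mapsto\int_0^T K_\epsilon(X_s,s;T)\,dB_s$ is pathwise continuously differentiable with derivative $\int_0^T\overline{\ell}_\epsilon\,dB_s$; a stochastic integral is not a pathwise function of the parameter, and this interchange of $\partial_\epsilon$ with the It\^o integral is exactly the kind of claim that needs proof, not assertion. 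Second, even granting it, dominated convergence requires a single integrable envelope, i.e.\ $\mathbb{E}^{\mathbb{Q}}\bigl[\sup_{\tilde\epsilon}\bigl\vert D_{\tilde\epsilon}F_\eta\int_0^T\overline{\ell}_{\tilde\epsilon}\,dB_s^{\tilde\epsilon}\bigr\vert\bigr]<\infty$, whereas your H\"older/BDG estimates (performed for each fixed $\tilde\epsilon$, after a measure change to $\mathbb{Q}^{\tilde\epsilon}$) only deliver $\sup_{\tilde\epsilon}\mathbb{E}^{\mathbb{Q}}[\,\cdot\,]<\infty$. The intermediate point $\tilde\epsilon(h)$ is moreover $\omega$-dependent, so "switching back to $\mathbb{Q}^{\tilde\epsilon}$" is not a legitimate change of measure for the difference quotient itself.

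The paper closes both holes by avoiding pathwise differentiation altogether: the It\^o formula gives the \emph{exact} identity $(Z_T^\epsilon-1)/\epsilon=\int_0^T Z_s^\epsilon\,\ell_\epsilon(X_s,s;T)\,dB_s$ for the difference quotient at $\epsilon=0$, and one then proves directly that $\int_0^T\bigl(Z_s^\epsilon\ell_\epsilon-\overline{\ell}\bigr)\,dB_s\to 0$ in $L^v(\mathbb{Q})$ (splitting into $\int(Z_s^\epsilon-1)\ell_\epsilon\,dB_s$ and $\int(\ell_\epsilon-\overline{\ell})\,dB_s$, using BDG, the bound $\vert\ell_\epsilon\vert\le\hat g$, condition (ii) with the $\epsilon_1$-slack against $L^{mv}$-moments of $Z^\epsilon-1$, and the moment bounds from (i)); pairing this $L^v$-convergence against the fixed $L^u$-bound $\hat\Gamma_u(T)$ on $F_\eta$ yields the limit. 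The general $\epsilon$ is then handled by re-centering the perturbation, and continuity by the analogous $L^v$-convergence of $H_T^\epsilon\to H_T$. If you replace your MVT/DCT step by this $L^v$-convergence argument for the exact difference quotient, your proof goes through; as written, it does not.
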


\begin{proof}
	As the proof of this proposition is rather intricate, we split up in several steps. We denote $\overline{\ell}(x,t;T):=\overline{\ell}_{0}(x,t;T).$
	\begin{itemize} 
		\item[(I)] We prove Eq.\eqref{eqn:rho_deriva} for $\epsilon=0,$ that is,
		\begin{equation}
				\label{eqn:zero_eps_deriva}
					\frac{\partial}{\partial\epsilon}\Big\vert_{\epsilon=0}
					\mathbb{E}^{\mathbb{Q}_\epsilon}\biggl[\frac{1}{\phi_\eta(X_T^\epsilon)}\, e^{\int_0^Tf_\eta(X_s^\epsilon,s;T)\,ds} \biggr]=\mathbb{E}^{\mathbb{Q}}\biggl[\frac{1}{\phi_\eta(X_T )}\, e^{\int_0^Tf_\eta( X_s,s;T)\,ds}\int_0^T\overline{\ell}(X_s,s;T) \,dB_s\biggr]
				\end{equation}
		This equality will be proven by the following 4 sub-steps.
		\begin{enumerate}[(a)]
			\item First, we show that
			\[
						\frac{\partial}{\partial\epsilon}\Big\vert_{\epsilon=0}\mathbb{E}^{\mathbb{Q}_\epsilon} \biggl[\frac{1}{\phi_\eta(X_T^\epsilon)}\, e^{\int_0^Tf_\eta(X_s^\epsilon,s;T)\,ds}\biggr] 
						=\lim_{\epsilon\rightarrow0}\mathbb{E}^{\mathbb{Q}}\biggl[\frac{1}{\phi_\eta( {X}_T )}\, e^{\int_0^Tf_\eta(X_s.s;T)\,ds}\int_0^TZ_s^\epsilon\,\ell_\epsilon(X_s,s;T)\, dB_s\biggr]
					\]
			for a function $\ell_\epsilon$ and a positive martingale $Z^\epsilon$ defined below.
			\item We prove that the integral $\int_0^T(\ell_\epsilon(X_s,s;T)-\overline{\ell}(X_s,s;T))\,dB_s$ goes to zero in $L^v$ as $\epsilon\to 0.$
			\item We prove that the integral $\int_0^T(Z_s^\epsilon-1)\ell_\epsilon(X_s,s;T)\,dB_s$ goes to zero in $L^v$ as $\epsilon\to 0.$	
			\item We show that steps (b) and (c) imply			
			\[
						\lim_{\epsilon\rightarrow0}\mathbb{E}^{\mathbb{Q}}\biggl[\frac{1}{\phi_\eta(X_T )}\, e^{\int_0^Tf_\eta(X_s,s;T)\,ds}\int_0^T(Z_s^\epsilon\,\ell_\epsilon(X_s,s;T)- \overline{\ell}(X_s,s;T))\,dB_s\biggr] =0,
					\]
			which gives Eq.\eqref{eqn:zero_eps_deriva}.
		\end{enumerate}
	\item[(II)] Using the result of step (I), we prove Eq.\eqref{eqn:rho_deriva} for arbitrary $\epsilon\in I.$
	\item[(III)]  We prove that the derivative is continuous on $I^2,$ which can be obtained  by showing 
	$H_T^\epsilon$ converges to $H_T$ in $L^v$ as $\epsilon\to 0$ 
	where $H_T^\epsilon$ and $H_T$ are defined in Eq.\eqref{eqn:H}.
	 We conduct the following   sub-steps. 
	\begin{enumerate}[(a)]
 	\item First, show that  
 	\[
		 	\epsilon\int_0^T(\overline{\ell}_\epsilon\ell_\epsilon)(X_s,s;T)\,ds\cdot Z_T^\epsilon\to 0
	 	\]
 	in $L^v$ as $\epsilon\to0.$
 	\item We prove that 
 	\[
		 	\int_0^T\overline{\ell}_\epsilon(X_s,s;T)\,dB_s\cdot Z_T^\epsilon\to\int_0^T\overline{\ell}(X_s,s;T)\,dB_s
	 	\]
 	in $L^v$ as $\epsilon\to 0.$ 
	\end{enumerate}
\end{itemize}

	\noindent \textbf{Step}  (I) -- (a).
	We first show Eq.\eqref{eqn:rho_deriva} at $\epsilon=0.$ Define a function $\ell_\epsilon(x,t;T)$ by
	\[
			\ell_\epsilon(x,t;T)=\left\{\begin{array}{ll}
			\frac{\kappa_\epsilon(x,t;T)-\kappa(x,t;T)}{\epsilon\sigma(x)}&\textnormal{ if } \epsilon\neq 0,\\
			\frac{1}{\sigma(x)}\frac{\partial}{\partial\epsilon}\Big\vert_{\epsilon=0}\kappa_\epsilon(x,t;T) &\textnormal{ if } \epsilon= 0,
			\end{array}\right.
		\]
	so that
	\[
			\kappa_\epsilon(x,t;T)=\kappa(x,t;T)+\epsilon\ell_\epsilon(x,t;T)\sigma(x).
		\]
	From the definition of $\overline{\ell}_\epsilon(x,t;T),$ it is clear that $\overline{\ell}(x,t;T)=\overline{\ell}_{0}(x,t;T)=\ell_{0}(x,t;T).$ By the mean-value theorem, we have that
	\[
			\vert \ell_\epsilon(x,t;T) \vert \leq \hat{g}(x,t;T).
		\]

	For $ \vert \epsilon \vert \leq \epsilon_0/2,$ define
	\[
			Z_T^\epsilon:=\frac{d\mathbb{Q}_\epsilon}{d\mathbb{Q}\,}=\mathcal{E}\biggl(\epsilon\int_0^\cdot\ell_{\epsilon}(X_t,t;T)\, dB_t\biggr)_T,
		\]
	then this local martingale process $(Z_t^\epsilon)_{0\leq t\leq T}$ is a martingale since the Novikov condition is satisfied by condition (i). We then have that
	\[
			\mathbb{E}^{\mathbb{Q}_\epsilon} \biggl[\frac{1}{\phi_\eta(X_T^\epsilon)}\, e^{\int_0^Tf_\eta(X_s^\epsilon,s;T)\,ds}\biggr]
			=\mathbb{E}^\mathbb{Q} \biggl[\frac{1}{\phi_\eta(X_T)}\, e^{\int_0^Tf_\eta(X_s,s;T)\,ds}Z_T^\epsilon\biggr].
		\]
	From the equality
	\[
			\frac{Z_T^\epsilon-1}{\epsilon}=\int_0^TZ_s^\epsilon\,\ell_\epsilon(X_s,s;T)\, dB_s
		\]
	derived by the It\^{o} formula, it follows  that
	\begin{align}
		\label{eqn:rho_app}
			\frac{\partial}{\partial\epsilon}\Big\vert_{\epsilon=0}\mathbb{E}^{\mathbb{Q}_\epsilon} \biggl[\frac{1}{\phi_\eta(X_T^\epsilon)}\, e^{\int_0^Tf_\eta(X_s^\epsilon,s;T)\,ds}\biggr]
			& =\frac{\partial}{\partial\epsilon}\Big\vert_{\epsilon=0}\mathbb{E}^{\mathbb{Q}} \biggl[\frac{1}{\phi_\eta(X_T)}\, e^{\int_0^Tf_\eta(X_s,s;T)\,ds}Z_T^\epsilon\biggr]  = \lim_{\epsilon\rightarrow0}\mathbb{E}^{\mathbb{Q}}\biggl[\frac{1}{\phi_\eta( {X}_T )}\, e^{\int_0^Tf_\eta({X}_s,s;T)\,ds}\,\frac{Z_T^\epsilon-1}{\epsilon}\biggr] \nonumber \\
			& =\lim_{\epsilon\rightarrow0}\mathbb{E}^{\mathbb{Q}}\biggl[\frac{1}{\phi_\eta( {X}_T )}\, e^{\int_0^Tf_\eta(X_s.s;T)\,ds}\int_0^TZ_s^\epsilon\,\ell_\epsilon(X_s,s;T)\, dB_s\biggr].
		\end{align}
	
	\noindent \textbf{Step} (I) -- (b).	We show that the integral $\int_0^T(\ell_\epsilon(X_s,s;T)-\overline{\ell}(X_s,s;T))\,dB_s$ goes to zero in $L^v$ as $\epsilon\to 0.$ By the Burkholder--Davis--Gundy inequality and the Jensen inequality,
	\[
			\mathbb{E}^{\mathbb{Q}}\,\biggl\vert\int_0^T(\ell_\epsilon(X_s,s;T)-\overline{\ell}(X_s,s;T))\,dB_s\biggr\vert^v 
			\leq c_v\,\mathbb{E}^{\mathbb{Q}}\biggl\vert\int_0^T(\ell_\epsilon(X_s,s;T)-\overline{\ell}(X_s,s;T))^2\,ds\biggr\vert^{v/2} \leq c_vT^{\frac{v}{2}-1}\,\mathbb{E}^{\mathbb{Q}}\int_0^T \vert \ell_\epsilon(X_s,s;T)-\overline{\ell}(X_s,s;T) \vert^v\,ds
		\]
	for some positive constant $c_v$ in the Burkholder--Davis--Gundy inequality. Because $\vert \ell_\epsilon-\overline{\ell} \vert^v\leq 2^v\bigl( \vert \ell_\epsilon \vert^v+ \vert \overline{\ell} \vert^v\bigr)\leq 2^{v+1}\hat{g}^v$ and  condition (ii) holds, we can apply the Lebesgue dominated convergence theorem, which implies that 
	\[
			\int_0^T\bigl(\ell_\epsilon(X_s,s;T)-\overline{\ell}(X_s,s;T)\bigr)\,dB_s
		\]
	converges to zero in $L^v$ as $\epsilon\rightarrow 0.$  \setlength{\parskip}{6pt}

	\noindent \textbf{Step} (I) -- (c).	We now show that
	\[
			\int_0^T(Z_s^\epsilon-1)\,\ell_\epsilon(X_s,s;T)\,dB_s
		\]
	converges to zero  in $L^v$ as $\epsilon\rightarrow0.$ Choose a sufficiently large positive number $m$ such that
	\[
			\frac{1}{m}+\frac{1}{1+\frac{\epsilon_1}{v}}<1
		\]
	and $mv$ is a positive  integer where $\epsilon_1$ is given by condition (ii). Remember that $v\geq 2.$ It follows again that
	\begin{align*}
			\mathbb{E}^{\mathbb{Q}}\biggl\vert \int_0^T(Z_s^\epsilon-1)\,\ell_\epsilon(X_s,s;T)\,dB_s\biggr\vert^v
			& \leq  \,c_v\,\mathbb{E}^{\mathbb{Q}}\biggl\vert \int_0^T \vert Z_s^\epsilon-1 \vert^2\,\vert\ell_\epsilon\vert^2(X_s,s;T)\,ds \biggr\vert^{v/2}
			\leq  \,c_vT^{\frac{v}{2}-1}\mathbb{E}^{\mathbb{Q}}\int_0^T \vert Z_s^\epsilon-1\vert^v\, \vert\ell_\epsilon\vert^v(X_s,s;T)\,ds \\
			& \leq  \,c_vT^{\frac{v}{2}-1}\biggl(\mathbb{E}^{\mathbb{Q}}\int_0^T \vert Z_s^\epsilon-1 \vert^{mv}\,ds\biggr)^{\frac{1}{m}}
			\biggl(\mathbb{E}^{\mathbb{Q}}\int_0^T \vert\ell_\epsilon \vert^{v+\epsilon_1}(X_s,s;T)\,ds\biggr)^{\frac{1}{1+\frac{\epsilon_1}{v}}}\\
			& \leq\,c_vT^{\frac{v}{2}-1}\biggl(\mathbb{E}^{\mathbb{Q}}\int_0^T \vert Z_s^\epsilon-1 \vert^{mv}\,ds\biggr)^{\frac{1}{m}}
			\biggl(\mathbb{E}^{\mathbb{Q}}\int_0^T\hat{g}^{v+\epsilon_1}(X_s,s;T)\,ds\biggr)^{\frac{1}{1+\frac{\epsilon_1}{v}}}.
		\end{align*}
	The second term is finite by condition (ii). 

	We now prove that the first expectation converges to zero as $\epsilon\to 0.$ Consider 
	\begin{equation}
		\label{eqn:c}
			(Z^\epsilon_t-1)^{mv}=\sum_{i=0}^{mv}\binom{mv}{i} (-1)^{mv-i}(Z_t^\epsilon)^i.
		\end{equation}
	It is enough to show that $\mathbb{E}^{\mathbb{Q}}\int_0^T (Z_t^{\epsilon})^i\,dt$ converges to $T$ as $\epsilon\rightarrow 0$ for $i=1,2,\cdots,mv,$ because 
	\[
			\mathbb{E}^{\mathbb{Q}}\int_0^T(Z_s^\epsilon-1)^{mv}\,ds=\sum_{i=0}^{mv} \binom{mv}{i} (-1)^{mv-i}\mathbb{E}^{\mathbb{Q}}\int_0^T (Z_s^{\epsilon})^i\,dt \quad \longrightarrow \quad T\sum_{i=0}^{mv}\binom{mv}{i} (-1)^{mv-i}=0.
		\]
	To show this, we apply the Lebesgue dominated convergence theorem to $\mathbb{E}^{\mathbb{Q}}\int_0^T (Z_t^{\epsilon})^i\,dt =\int_0^T \mathbb{E}^{\mathbb{Q}}\bigl[(Z_t^{\epsilon})^i\bigr]\,dt$: we prove that $\mathbb{E}^{ \mathbb{Q}}\bigl[(Z_t^{\epsilon})^i\bigr]$ is uniformly bounded for small $\epsilon$ and $0\leq t\leq T$ and that $\mathbb{E}^{\mathbb{Q}}\bigl[(Z_t^{\epsilon})^i\bigr]$ converges to $1$ as $\epsilon$ goes to zero for fixed $t.$ Observe that
	\begin{align}
		\label{eqn:Z}
			\mathbb{E}^{\mathbb{Q}}\bigl[(Z^{\epsilon}_t)^i\bigr] & =\mathbb{E}^{\mathbb{Q}}\exp\Bigl(i\epsilon\int_0^t\ell_\epsilon(X_s)\,dB_s - \frac{i\epsilon^2}{2}\int_0^t \vert\ell_\epsilon \vert^2(X_s)\,ds\Bigr) \nonumber \\
			& = \mathbb{E}^{\mathbb{Q}}\exp\Bigl(i\epsilon\int_0^t\ell_\epsilon(X_s)\,dB_s - i^2\epsilon^2\int_0^t \vert\ell_\epsilon \vert^2(X_s)\,ds\Bigr) \cdot\exp\Bigl(i(i-1/2)\epsilon^2\int_0^t \vert \ell_\epsilon \vert^2(X_s)\,ds\Bigr) \nonumber \\
			& \leq \left(\mathbb{E}^{\mathbb{Q}}\exp\Bigl(2i\epsilon\int_0^t\ell_\epsilon(X_s)\,dB_s - 2i^2\epsilon^2\int_0^t \vert \ell_\epsilon \vert^2(X_s)\,ds\Bigr)\right)^{\frac{1}{2}}  \cdot\biggl(\mathbb{E}^{\mathbb{Q}}\exp\Bigl(i(2i-1)\epsilon^2\int_0^t \vert \ell_\epsilon \vert^2(X_s)\,ds\Bigr)\biggr)^{\frac{1}{2}} \nonumber \\
			& \leq \biggl(\mathbb{E}^{\mathbb{Q}}\exp\Bigl(i(2i-1)\epsilon^2\int_0^t \vert \ell_\epsilon \vert^2(X_s)\,ds\Bigr)\biggr)^{\frac{1}{2}} \leq\biggl(\mathbb{E}^{\mathbb{Q}}\exp\Bigl(i(2i-1)\epsilon^2\int_0^t\hat{g}^2(X_s)\,ds\Bigr)\biggr)^{\frac{1}{2}} \nonumber \\
			&\leq \biggl(\mathbb{E}^{\mathbb{Q}}\exp\Bigl(\epsilon_0\int_0^T\hat{g}^2(X_s)\,ds\Bigr)\biggr)^{\frac{1}{2}},
		\end{align}
	which is finite  by assumption (i) for small $\epsilon.$ Here, for the second inequality, we used that the positive local martingale 
	\[
			\exp\biggl(2i\epsilon\int_0^t\ell_\epsilon(X_s)\,dB_s - 2i^2\epsilon^2\int_0^t \vert \ell_\epsilon \vert^2(X_s)\,ds\biggr)_{0\le t\le T}
		\]
	is a supermartingale. Thus, for small $\epsilon$ and $0\leq t\leq T,$ the term $\mathbb{E}^{\mathbb{Q}}\bigl[(Z^{\epsilon}_t)^i\bigr]$ is uniformly bounded by 
	$(\mathbb{E}^{\mathbb{Q}}\exp\bigl(\epsilon_0\int_0^T\hat{g}^2(X_s)\,ds)\bigr)^{\frac{1}{2}}.$

	Now we prove that $\mathbb{E}^{\mathbb{Q}}\bigl[(Z^{\epsilon}_t)^i\bigr]$ converges to $1$ as $\epsilon$ goes to zero for fixed $t.$ We will apply the Lebesgue dominated convergent theorem to
	\[
			\exp\Bigl(i(2i-1)\epsilon^2\int_0^t\hat{g}^2(X_s)\,ds\Bigr)
		\]
	as $\epsilon$ goes to zero. Using the last inequality in Eq.\eqref{eqn:Z}, this is dominated by
	\[
			\exp\Bigl(\epsilon_0\int_0^t\hat{g}^2(X_s)\,ds\Bigr),
		\]
	whose expectation is finite, thus we know that
	\[
			\mathbb{E}^{\mathbb{Q}}\exp\Bigl(i(2i-1)\epsilon^2\int_0^t\hat{g}^2(X_s)\,ds\Bigr)
		\]
	converges to $1$ as $\epsilon$ goes to zero. 
	\begin{equation}
		\label{eqn:d}
			1  =\mathbb{E}^{\mathbb{Q}}\Bigl[\liminf_{\epsilon\rightarrow 0}(Z^{\epsilon}_t)^i\Bigr]
			\leq \liminf_{\epsilon\rightarrow 0}\mathbb{E}^{\mathbb{Q}}\bigl[(Z^{\epsilon}_t)^i\bigr]
			\leq \limsup_{\epsilon\rightarrow0}\mathbb{E}^{\mathbb{Q}}\bigl[(Z^{\epsilon}_t)^i\bigr]  \leq\lim_{\epsilon\rightarrow0}\mathbb{E}^{\mathbb{Q}}\exp\Bigl(i(2i-1)\epsilon^2\int_0^t\hat{g}^2(X_s)\,ds\Bigr) =1.
		\end{equation}
	This gives the desired result. \setlength{\parskip}{6pt}	

	\noindent \textbf{Step} (I) -- (d). From Eq.\eqref{eqn:rho_app}, in order to show Eq.\eqref{eqn:zero_eps_deriva}, it suffices to prove that
	\[
			\lim_{\epsilon\rightarrow0}\mathbb{E}^{\mathbb{Q}}\biggl[\frac{1}{\phi_\eta(X_T )}\, e^{\int_0^Tf_\eta(X_s,s;T)\,ds}\int_0^T(Z_s^\epsilon\,\ell_\epsilon(X_s,s;T)- \overline{\ell}(X_s,s;T))\,dB_s\biggr] =0.
		\]
	From the condition (iii) that 
	\[
			\hat{\Gamma}_u(T)=\mathbb{E}^{\mathbb{Q}}\biggl[\frac{1}{\hat{\phi}^{u}( {X}_T )}\, e^{u\int_0^T\hat{f}(X_s,s;T)\,ds}\biggr]
		\]
	is finite for $u$ with   $1/u+1/v=1$, by the H\"older inequality, it is enough to show  
	\[
			\int_0^T(Z_s^\epsilon\,\ell_\epsilon(X_s,s;T)-\overline{\ell}(X_s,s;T))\,dB_s\rightarrow 0
		\]
	in $L^v$ as $\epsilon\rightarrow0.$	Observe that
	\begin{equation}
			\int_0^T(Z_s^\epsilon\,\ell_\epsilon(X_s,s;T)- \overline{\ell}(X_s,s;T))\,dB_s = \int_0^T(Z_s^\epsilon-1)\ell_\epsilon(X_s,s;T)\,dB_s+\int_0^T(\ell_\epsilon(X_s,s;T)-\overline{\ell}(X_s,s;T))\,dB_s. 
		\end{equation}
	Steps (b) and (c) above imply that the two terms on the right-hand side converge to zero as $\epsilon\to 0.$ \setlength{\parskip}{6pt}

	\noindent \textbf{Step} (II). We now prove Eq.\eqref{eqn:rho_deriva} for any $\epsilon\in I.$ Fix $\epsilon\in I$ and choose a small open interval $J$ so that $\epsilon+J\subseteq I.$ We introduce another variable $h$ to rewrite the derivative
	\[
			\frac{\partial}{\partial\epsilon}\mathbb{E}^{\mathbb{Q}_\epsilon} \biggl[\frac{1}{\phi_\eta(X_T^\epsilon)}\, e^{\int_0^Tf_\eta(X_s^\epsilon,s;T)\,ds}\biggr]
			=\frac{\partial}{\partial h}\Big\vert_{h=0} \mathbb{E}^{\mathbb{Q}_{\epsilon+h}} \biggl[\frac{1}{\phi_\eta(X_T^{\epsilon+h})}\, e^{\int_0^Tf_\eta(X_s^{\epsilon+h},s;T)\,ds}\biggr].
		\]
	We can regard $h$ as a perturbation parameter. It is easy to show that the perturbed functions $m_{\epsilon+h},$ $\sigma_{1,{\epsilon+h}},$ $\sigma_{2,{\epsilon+h}},$ $b_{\epsilon+h},$ $v_{\epsilon+h}$ with perturbation parameter  $h\in J$ satisfy the hypothesis of this proposition. For example,
	\[
			\sup_{h\in J}\Bigl\vert\frac{1}{\sigma(x)}\cdot\frac{\partial \kappa_{\epsilon+h}(x)}{\partial h}\Bigr\vert \leq \sup_{\epsilon \in I}\Bigl\vert\frac{1}{\sigma(x)}\cdot\frac{\partial \kappa_{\epsilon}(x)}{\partial \epsilon}\Bigr\vert \leq \hat{g}(x).
		\]
	Thus, by applying step (I) to the perturbation parameter $h$, we have
	\[
			\frac{\partial}{\partial h}\Big\vert_{h=0} \mathbb{E}^{\mathbb{Q}_{\epsilon+h}} \biggl[\frac{1}{\phi_\eta(X_T^{\epsilon+h})}\, e^{\int_0^Tf_\eta(X_s^{\epsilon+h},s;T)\,ds}\biggr]=\mathbb{E}^{\mathbb{Q}_\epsilon}\biggl[\frac{1}{\phi_\eta(X_T^\epsilon)}\, e^{\int_0^Tf_\eta( X_s^\epsilon,s;T)\,ds}\int_0^T\overline{\ell}_\epsilon(X_s^\epsilon,s;T) \,dB_s^\epsilon\biggr], 
		\]
	where
	\[
			\overline{\ell}_\epsilon(x,t;T)=\frac{1}{\sigma(x)}\frac{\partial}{\partial h}\Big\vert_{h=0}\kappa_{\epsilon+h}(x,t;T)=\frac{1}{\sigma(x)}\frac{\partial}{\partial\epsilon}\kappa_\epsilon(x,t;T).
		\]
	This gives  Eq.\eqref{eqn:rho_deriva} for any $\epsilon\in I.$ \setlength{\parskip}{6pt}

	\noindent \textbf{Step} (III). We show that the derivative
	\[
			\frac{\partial }{\partial \epsilon} \mathbb{E}^{\mathbb{Q}_\epsilon}\biggl[\frac{1}{\phi_\eta(X_T^\epsilon)}\, e^{\int_0^Tf_\eta(X_s^\epsilon,s;T)\,ds} \biggr]
		\]
	is jointly continuous in $(\eta,\epsilon)$  on $I^2.$ 	Using the same argument as in Step (II), it suffices 
	to show the continuity  at $(\eta,\epsilon)=(0,0).$ 
	We know that
	\begin{align*}
			\frac{\partial }{\partial \epsilon} \mathbb{E}^{\mathbb{Q}_\epsilon}\biggl[\frac{1}{\phi_\eta(X_T^\epsilon)}\, e^{\int_0^Tf_\eta(X_s^\epsilon,s;T)\,ds} \biggr]
			& = \mathbb{E}^{\mathbb{Q}_\epsilon}\biggl[\frac{1}{\phi_\eta (X_T^\epsilon)}\, e^{\int_0^Tf_\eta	(X_s^\epsilon,s;T)\,ds}\int_0^T\overline{\ell}_\epsilon(X_s^\epsilon,s;T) \,dB_s^\epsilon\biggr]\\
			& = \mathbb{E}^{\mathbb{Q}}\biggl[\frac{1}{\phi_\eta(X_T)}\, e^{\int_0^Tf_\eta( X_s,s;T)\,ds} \Bigl(\int_0^T\overline{\ell}_\epsilon(X_s,s;T)\,dB_s-\epsilon\int_0^T(\overline{\ell}_\epsilon\ell_\epsilon)(X_s,s;T)\,ds\Bigr) Z_T^\epsilon\biggr]
		\end{align*}


	For convenience, we define
	\begin{equation}
		\label{eqn:H}
		  H_T^{\epsilon}:=\Bigl(\int_0^T\overline{\ell}_\epsilon(X_s,s;T)\,dB_s-\epsilon\int_0^T(\overline{\ell}_\epsilon\ell_\epsilon)(X_s,s;T)\,ds\Bigr) Z_T^\epsilon; \qquad \qquad H_T:=H_T^0.			
		\end{equation}
	Thus we want to prove that  as $(\eta,\epsilon)\to(0,0),$
	\[
			\mathbb{E}^{\mathbb{Q}}\biggl[\frac{1}{\phi_\eta(X_T)}\, e^{\int_0^Tf_\eta( X_s,s;T)\,ds}H_T^\epsilon\biggr]\to\mathbb{E}^{\mathbb{Q}}\biggl[\frac{1}{\phi(X_T)}\, e^{\int_0^Tf( X_s,s;T)\,ds}H_T\biggr].
		\]
	Condition (iii) implies by the Lebesgue dominated convergence theorem  thanks to the uniform boundedness of $1/\phi_\eta$ and $f_\eta$ over $\eta\in I$ that 
	\[
			\frac{1}{\phi_\eta(X_T)}\, e^{\int_0^Tf_\eta( X_s,s;T)\,ds}\to \frac{1}{\phi(X_T)}\, e^{\int_0^Tf( X_s,s;T)\,ds}
		\]
	in $L^u$ as $\eta\to 0.$ It suffices to prove that $H_T^\epsilon$ converges to $H_T$ in $L^v$ as $\epsilon\to 0.$ This can be achieved by the following two steps.

	\noindent \textbf{Step} (III) -- (a). We show that  
	\[
			\epsilon\int_0^T(\overline{\ell}_\epsilon\ell_\epsilon)(X_s,s;T)\,ds\cdot Z_T^\epsilon\to 0
		\]
	in $L^v$ as $\epsilon\to0.$ This is obtained from 
	\[
			\mathbb{E}^\mathbb{Q}\biggl\vert \int_0^T(\overline{\ell}_\epsilon\ell_\epsilon)(X_s,s;T)\,ds\cdot Z_T^\epsilon\biggr\vert^v
			 \leq \mathbb{E}^\mathbb{Q}\biggl[\Bigl(\int_0^T\hat{g}^2(X_s,s;T)\,ds\Bigr)^v\cdot (Z_T^{\epsilon})^v\biggr] \leq \biggl(\mathbb{E}^\mathbb{Q}\Bigl\vert \int_0^T\hat{g}^2(X_s,s;T)\,ds\Bigr\vert^{2v}\biggr)^{1/2} \Bigl( \mathbb{E}^\mathbb{Q}\bigl[(Z_T^{\epsilon\,})^{2v}\bigr]\Bigr)^{1/2}.
		\]
	The expectation $\mathbb{E}^\mathbb{Q}\bigl\vert \int_0^T\hat{g}^2(X_s,s;T)\,ds \bigr\vert^{2v}$ on the right-hand side is finite from condition (ii) and the expectation $\mathbb{E}^\mathbb{Q}\bigl[(Z_T^{\epsilon})^2v\bigr]$ is uniformly bounded on $I$ by the constant $\bigl(\mathbb{E}^{\mathbb{Q}}\exp(\epsilon_0\int_0^Tg^2(X_s)\,ds)\bigr)^{\frac{1}{2}}$ using the same argument we used to derive Eq.\eqref{eqn:Z}. \setlength{\parskip}{6pt}

	\noindent \textbf{Step} (III) -- (b). We prove that 
	\[
			\int_0^T\overline{\ell}_\epsilon(X_s,s;T)\,dB_s\cdot Z_T^\epsilon\to\int_0^T\overline{\ell}(X_s,s;T)\,dB_s
		\]
	in $L^v$ as $\epsilon\to 0.$ Choose a sufficiently large positive number $m$ such that 
	\[
		\frac{1}{m}+\frac{1}{1+\frac{\epsilon_1}{v}}<1
		\]
	 and $mv$ is a positive  integer where $\epsilon_1$ is given by condition (ii). It is enough to show that as $\epsilon\to0$
	\begin{equation}
		\label{eqn:a}
			\int_0^T\overline{\ell}_\epsilon(X_s,s;T)\,dB_s\to\int_0^T\overline{\ell}(X_s,s;T)\,dB_s\;\textnormal{ in } L^{v+\epsilon_1} 
		\end{equation}
	and
	\begin{equation}
		\label{eqn:b}
			Z_T^\epsilon\to 1\;\textnormal{ in } L^{mv}.
		\end{equation}
	Eq.\eqref{eqn:a} is obtained from condition (ii). Eq.\eqref{eqn:b} is from Eq.\eqref{eqn:c} and the fact that $\lim_{\epsilon\rightarrow0}\mathbb{E}^{\mathbb{Q}}[(Z_t^{\epsilon})^i]=1$ for $0\leq i\leq mv$ shown in Eq.\eqref{eqn:d}.
\end{proof}

We now shift our attention to Theorem \ref{thm:rho}. The proof is as follows.
\begin{proof73} 
	By Proposition \ref{prop:rho}, it suffices to show that
	\[
			\lim_{T\rightarrow\infty}\frac{1}{T}\,\mathbb{E}^\mathbb{Q}\biggl[\frac{1}{\phi(X_T )}\, e^{\int_0^Tf( X_s,s;T)\,ds}\int_0^T \overline{\ell}(X_s,s;T)\, dB_s\biggr]=0.
		\]
	By the H\"older inequality, the  Burkholder-Davis-Gundy inequality and the Jensen inequality, we know that 
	\begin{align}
			\frac{1}{T} \mathbb{E}^\mathbb{Q}\,\Bigl\vert \frac{1}{\phi(X_T )}\, e^{\int_0^Tf( X_s,s;T)\,ds}\int_0^T \overline{\ell}(X_s,s;T)\, dB_s\Bigr\vert
			& \leq \frac{1}{T}\hat{\Gamma}_u(T)^{\frac{1}{u}} \Bigl(\mathbb{E}^\mathbb{Q}\Bigl\vert \int_0^T \overline{\ell}(X_s,s;T)\,dB_s\Bigr\vert^{v} \Bigr)^{\frac{1}{v}}\leq \frac{c'}{T}\,\hat{\Gamma}_u(T)^{\frac{1}{u}} \Bigl(\mathbb{E}^\mathbb{Q}\Bigl( \int_0^T \overline{\ell}^{2}(X_s,s;T)\, ds\Bigr)^{\frac{v}{2}} \Bigr)^{\frac{1}{v}} \\	
		 	& \leq \frac{c'}{T}\hat{\Gamma}_u(T)^{\frac{1}{u}} \Bigl(\mathbb{E}^\mathbb{Q}\Bigl( \int_0^T \hat{g}^{2}(X_s,s;T)\, ds\Bigr)^{\frac{v}{2}} \Bigr)^{\frac{1}{v}} \leq  c'\hat{\Gamma}_u(T)^{\frac{1}{u}}  h(T)^{\frac{1}{v}}  
		\end{align}
	for the positive constant $c'$ in the  Burkholder-Davis-Gundy inequality. For the last inequality, we used (ii) in Theorem \ref{thm:rho}.
	As $\lim_{T\rightarrow \infty}h(T)=0$ and $\hat{\Gamma}_u(T)$ is uniformly bounded in $T,$  
	we obtain the desired result.
\end{proof73}

\sectionfont{\scriptsize}
\section{A note on condition (ii) in Theorem \ref{thm:total_chain}}
\label{app:condi_2} 

This section discusses a method to analyze the derivative $\frac{\partial }{\partial \eta}w_{\eta,\epsilon}(x,T)$ which is useful to check condition (ii) in Theorem \ref{thm:total_chain}. Appendices \ref{app:KO} and \ref{app:Heston} that discuss specific examples will rely on the following proposition.

\begin{prop}\label{prop:condi_2}
	Assume that $\phi_\eta$ and $f_\eta$ are continuously differentiable in $\eta$ on $I.$ Fix $T>0$ and assume the following conditions;
	\begin{enumerate}[(i)]
		\item There exists a function $g(\,\cdot,\cdot\,;T)$ such that $\int_0^Tg(X_s,s;T)\,ds<\infty$ a.s.  and
		\[
					\Bigl\vert \frac{\partial}{\partial\eta}f_\eta(x,t;T)\Bigr\vert \leq g(x,t;T)
				\]
		for all $\eta\in I,$  $x\in (\ell,r)$ and $0\le t\le T.$  
		\item There exists a random variable $G_T$ such that $\mathbb{E}^\mathbb{Q}[G_T^u]<\infty$ for some $u>1$
		and such that
		\[
					 \Bigl\vert \frac{\partial \phi_\eta}{\partial \eta}\Bigr\vert \frac{1}{\phi_\eta^2(X_T)}\, e^{\int_0^Tf_\eta(X_s,s;T)\,ds}+\frac{1}{\phi_\eta(X_T)}\, e^{\int_0^Tf_\eta(X_s,s;T)\,ds}\int_0^T\Bigl\vert \frac{\partial}{\partial\eta}f_\eta(X_s,s;T)\Bigr\vert \,ds \leq G_T
				\]
		for all $\eta\in I.$  
	\end{enumerate}    
	Then 
	\[
			\frac{\partial }{\partial \eta}w_{\eta,\epsilon}(x,T) =\mathbb{E}^{\mathbb{Q}_\epsilon}\biggl[\frac{\partial }{\partial \eta}\Bigl(\frac{1}{\phi_\eta(X_T^\epsilon)}\, e^{\int_0^Tf_\eta(X_s^\epsilon,s;T)\,ds}\Bigr)\biggr]
		\]
	and $\frac{\partial }{\partial \eta}w_{\eta,\epsilon}(x,T) $ is continuous in $(\eta,\epsilon)$ on $I^2.$ 
\end{prop}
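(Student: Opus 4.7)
My plan is a double application of dominated convergence. First I would establish the differentiation formula at fixed $\epsilon \in I$, then bootstrap to the joint continuity on $I^2$.

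For the differentiation step, continuous differentiability of $\phi_\eta$ and $f_\eta$ in $\eta$ yields pointwise (in $\omega$) the derivative of the integrand via the chain and product rules:
\[
\frac{\partial}{\partial\eta}\Bigl(\frac{1}{\phi_\eta(X_T^\epsilon)}\,e^{\int_0^Tf_\eta(X_s^\epsilon,s;T)\,ds}\Bigr) = -\frac{\partial_\eta\phi_\eta}{\phi_\eta^2(X_T^\epsilon)}\,e^{\int_0^Tf_\eta(X_s^\epsilon,s;T)\,ds} + \frac{1}{\phi_\eta(X_T^\epsilon)}\,e^{\int_0^Tf_\eta(X_s^\epsilon,s;T)\,ds}\int_0^T\frac{\partial f_\eta}{\partial\eta}(X_s^\epsilon,s;T)\,ds.
\]
Condition (i) gives $\int_0^T|\partial_\eta f_\eta(X_s^\epsilon,s;T)|\,ds \le \int_0^T g(X_s^\epsilon,s;T)\,ds < \infty$ a.s., so the derivative exists pointwise, and condition (ii) supplies the uniform (in $\eta$) dominant $G_T$. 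To transfer the $\mathbb{Q}$-integrability of $G_T$ to $\mathbb{Q}_\epsilon$-integrability I invoke H\"{o}lder's inequality together with an $L^{u/(u-1)}(\mathbb{Q})$-bound on the density $Z_T^\epsilon = d\mathbb{Q}_\epsilon/d\mathbb{Q}$, available from the exponential integrability already exploited in Proposition \ref{prop:rho}. A standard DCT argument applied to the difference quotient then delivers the claimed differentiation formula.

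For joint continuity of $\partial_\eta w_{\eta,\epsilon}$ on $I^2$, I pass to the common reference measure $\mathbb{Q}$ and write
\[
\frac{\partial}{\partial\eta}w_{\eta,\epsilon}(x,T) = \mathbb{E}^{\mathbb{Q}}\Bigl[Z_T^\epsilon\cdot\partial_\eta\Bigl(\frac{1}{\phi_\eta(X_T^\epsilon)}\,e^{\int_0^Tf_\eta(X_s^\epsilon,s;T)\,ds}\Bigr)\Bigr].
\]
For any sequence $(\eta_n,\epsilon_n)\to(\eta_*,\epsilon_*)$, the integrand converges $\mathbb{Q}$-a.s.\ thanks to the continuity in $\eta$ of $\phi_\eta, f_\eta$ and their $\eta$-derivatives, the $\mathbb{Q}$-a.s.\ continuity of $X^\epsilon$ in $\epsilon$ from standard SDE stability under B\ref{bassume:perturb}--B\ref{bassume:HS_eps}, and the $L^p(\mathbb{Q})$-continuity of $Z_T^\epsilon$ in $\epsilon$ established in step (III) of the proof of Proposition \ref{prop:rho}. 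A product H\"{o}lder estimate combining (ii) with uniform $L^{u/(u-1)}(\mathbb{Q})$-bounds on $Z_T^\epsilon$ in a small neighborhood of $\epsilon_*$ furnishes a $\mathbb{Q}$-integrable dominant, after which a second DCT closes the argument.

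The main obstacle is the $\epsilon$-continuity: both the sample path $X^\epsilon$ and the measure $\mathbb{Q}_\epsilon$ vary with $\epsilon$, so the proof cannot be purely functional-analytic and must piggyback on the $L^p$-stability of the Radon--Nikodym density $Z_T^\epsilon$ developed for Proposition \ref{prop:rho}. Conditions (i) and (ii) encode only $\eta$-regularity, so the $\epsilon$-direction relies entirely on the exponential integrability and perturbation hypotheses imposed elsewhere in the framework.
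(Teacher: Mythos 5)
Your overall architecture is the paper's: differentiate under the integral sign using condition (i) for the inner time integral and the dominant $G_T$ from (ii) for the outer expectation, then obtain joint continuity by passing to the reference measure $\mathbb{Q}$, combining the H\"older estimate $\Vert A_T^\eta\Vert_{L^u(\mathbb{Q})}\le\Vert G_T\Vert_{L^u(\mathbb{Q})}$ with the $L^v$-convergence $Z_T^\epsilon\to Z_T$ already established in the proof of Proposition \ref{prop:rho}, and finishing with a dominated-convergence argument in $\eta$. Up to that point you and the paper coincide.

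There is, however, one concrete misstep in your continuity argument. You write $\frac{\partial}{\partial\eta}w_{\eta,\epsilon}(x,T)=\mathbb{E}^{\mathbb{Q}}\bigl[Z_T^\epsilon\cdot\partial_\eta\bigl(\frac{1}{\phi_\eta(X_T^\epsilon)}e^{\int_0^Tf_\eta(X_s^\epsilon,s;T)\,ds}\bigr)\bigr]$, keeping the perturbed path $X^\epsilon$ inside the $\mathbb{Q}$-expectation, and you then need ``$\mathbb{Q}$-a.s.\ continuity of $X^\epsilon$ in $\epsilon$ from standard SDE stability'' to get a.s.\ convergence of the integrand. This is not the correct Girsanov identity and the missing ingredient is not available: the point of introducing $Z_T^\epsilon=d\mathbb{Q}_\epsilon/d\mathbb{Q}$ is precisely that $\mathbb{E}^{\mathbb{Q}_\epsilon}[h(X^\epsilon_{\cdot\wedge T})]=\mathbb{E}^{\mathbb{Q}}[h(X_{\cdot\wedge T})\,Z_T^\epsilon]$ with the \emph{unperturbed} path $X$ inside, so that all $\epsilon$-dependence is absorbed into the scalar density $Z_T^\epsilon$ and no pathwise stability of $X^\epsilon$ in $\epsilon$ is ever needed. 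Moreover, such stability could not be extracted from B\ref{bassume:perturb}--\ref{bassume:HS_eps} anyway: the $\mathbb{Q}_\epsilon$-drift $\kappa_\epsilon$ involves $\hat{\xi}_\epsilon$ and $\phi_\epsilon'/\phi_\epsilon$, whose joint regularity in $(\epsilon,x)$ is not assumed, and mere continuity of coefficients does not give a.s.\ continuity of solutions in a parameter. Once you replace $X^\epsilon$ by $X$ under $\mathbb{Q}$, the a.s.\ convergence of $A_T^{\eta_n}$ follows directly from the assumed $\eta$-continuity of $\partial_\eta\phi_\eta$ and $\partial_\eta f_\eta$ evaluated along the fixed path, and the $\epsilon$-direction is handled entirely by the splitting $\vert\mathbb{E}^{\mathbb{Q}}[A_T^\eta(Z_T^\epsilon-Z_T)]\vert+\vert\mathbb{E}^{\mathbb{Q}}[(A_T^\eta-A_T)Z_T]\vert$ together with $Z_T^\epsilon\to Z_T$ in $L^v(\mathbb{Q})$ --- which is exactly how the paper closes the argument before converting back to $\mathbb{Q}_\epsilon$ by Girsanov in the final line.
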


\begin{proof} 
	By direct calculation, it follows that
	\begin{align*}
			\frac{\partial}{\partial\eta}\Bigl(\frac{1}{\phi_\eta(X_T)} e^{\int_0^Tf_\eta(X_s,s;T)\,ds}\Bigr)
			& = \frac{\partial \phi_\eta}{\partial \eta}\frac{1}{\phi_\eta^2(X_T)}\, e^{\int_0^Tf_\eta(X_s,s;T)\,ds}+\frac{1}{\phi_\eta(X_T)}\, e^{\int_0^Tf_\eta(X_s,s;T)\,ds}\frac{\partial}{\partial\eta}\int_0^Tf_\eta(X_s,s;T)\,ds  \\ 
			& =\frac{\partial \phi_\eta}{\partial \eta}\frac{1}{\phi_\eta^2(X_T)}\, e^{\int_0^Tf_\eta(X_s,s;T)\,ds}+\frac{1}{\phi_\eta(X_T)}\, e^{\int_0^Tf_\eta(X_s,s;T)\,ds}\int_0^T\frac{\partial}{\partial\eta}f_\eta(X_s,s;T)\,ds.
		\end{align*} 
	Condition (i) was used for the last equality in order to interchange the differentiation and integration using the Leibniz integral rule.
	Observe that
	\[
			w_{\eta,\epsilon}(x,T) =\mathbb{E}^{\mathbb{Q}_\epsilon}\biggl[\frac{1}{\phi_\eta(X_T^\epsilon)}\, e^{\int_0^Tf_\eta(X_s^\epsilon,s;T)\,ds} \biggr] =\mathbb{E}^{\mathbb{Q}}\biggl[\frac{1}{\phi_\eta(X_T)}\, e^{\int_0^Tf_\eta(X_s,s;T)\,ds} Z_T^\epsilon\biggr].
		\]
	From (ii), the Leibniz integral rule states that $\frac{\partial }{\partial \eta}w_{\eta,\epsilon}(x,T)$ exists and
	\[
			\frac{\partial }{\partial \eta}w_{\eta,\epsilon}(x,T) =\mathbb{E}^{\mathbb{Q}}\biggl[\frac{\partial }{\partial \eta}\Bigl(\frac{1}{\phi_\eta(X_T)}\, e^{\int_0^Tf_\eta(X_s,s;T)\,ds}\Bigr) Z_T^\epsilon\biggr].
		\]
	The continuity on $I^2$ can be proven as follows. Using the same argument as in Step (II) of the proof of Proposition \ref{prop:rho}, it suffices to show continuity at the origin $(\eta,\epsilon)=(0,0).$ Choose a sufficiently large even integer $v$ and a sufficiently small $u>1$ such that $1/u+1/v=1.$  Define
	\[
			A_T^\eta:=  \frac{\partial }{\partial \eta}\Bigl(\frac{1}{\phi_\eta(X_T)}\, e^{\int_0^Tf_\eta(X_s,s;T)\,ds}\Bigr); \qquad \qquad A_T:=A_T^0
		\]
	and we claim that $\mathbb{E}^\mathbb{Q}[A_T^\eta Z_T^\epsilon]\to\mathbb{E}^\mathbb{Q}[A_TZ_T]$ as $(\eta,\epsilon)\to(0,0).$ Using the inequalities 
	\begin{align*}
			\Bigl\vert \mathbb{E}^\mathbb{Q}[A_T^\eta Z_T^\epsilon]-\mathbb{E}^\mathbb{Q}[A_TZ_T]\Bigr\vert
			&\leq \Bigl\vert \mathbb{E}^\mathbb{Q}[A_T^\eta (Z_T^\epsilon-Z_T)]\Bigr\vert + \Bigl\vert\mathbb{E}^\mathbb{Q}[(A_T^\eta-A_T)Z_T]\Bigr\vert \\
			&\leq \Bigl\vert (\mathbb{E}^\mathbb{Q} \bigl\vert A_T^\eta\bigr\vert^u)^{1/u}(\mathbb{E}^\mathbb{Q}\vert Z_T^\epsilon-Z_T\vert^v)^{1/v}\Bigr\vert + \Bigl\vert (\mathbb{E}^\mathbb{Q} \bigl\vert A_T^\eta-A_T\bigr\vert^u)^{1/u}(\mathbb{E}^\mathbb{Q}\vert Z_T\vert^v)^{1/v}\Bigr\vert
		\end{align*}
	and since $\vert A_T^\eta \vert \leq G_T$ and $\mathbb{E}^\mathbb{Q}[G_T^u]<\infty$, it is enough to show that $Z_T^\epsilon\to Z_T$ in $L^v$ as $\epsilon\to0.$ This was proven using Eq.\eqref{eqn:c} and the fact that $\lim_{\epsilon\rightarrow0}\mathbb{E}^{\mathbb{Q}}[(Z_T^{\epsilon})^i]=1$ for $0\leq i\leq v$ which  was shown in Eq.\eqref{eqn:d}. Finally, Girsanov's theorem gives that
	\[
			\frac{\partial }{\partial \eta}w_{\eta,\epsilon}(x,T)  =\mathbb{E}^{\mathbb{Q}}\biggl[\frac{\partial }{\partial \eta}\Bigl(\frac{1}{\phi_\eta(X_T)}\, e^{\int_0^Tf_\eta(X_s,s;T)\,ds}\Bigr) Z_T^\epsilon\biggr]=\mathbb{E}^{\mathbb{Q}_\epsilon}\biggl[\frac{\partial }{\partial \eta}\Bigl(\frac{1}{\phi_\eta(X_T^\epsilon)}\, e^{\int_0^Tf_\eta(X_s^\epsilon,s;T)\,ds}\Bigr)\biggr].
		\]
\end{proof}

\sectionfont{\scriptsize}
\section{The Kim--Omberg model}
\label{app:KO}

This appendix discusses the details of the Kim--Omberg model presented at Section \ref{sec:KO_model}, and shows the assumptions made in the main part of the paper are satisfied in this model. Assumptions A\ref{assume:SDE_X} -- \ref{eqn:NFLVR} are well-known to be satisfied for the Kim--Omberg model. We recall the model in Eq.\eqref{eqn:KO_model} and investigate the corresponding objects 
\[
	v(t,T), \; \hat{\xi}, \; (\lambda,\phi), \; \xi^*, \; f, \;\kappa, \;  \mathbb{Q}.
\]
The function $l(\xi,x)$ and $h(\xi,x)$ in Eq.\eqref{eqn:l_h} are
\[
	l(\xi,x)=-\frac{q}{2}(1-q)\Bigl(\frac{\mu^2x^2}{\varsigma^2}+\xi^2\Bigr), \qquad h(\xi,x)=k\overline{m} - \Bigl(k+\frac{q\mu\sigma_1}{\varsigma}\Bigr)x-q\sigma_2\xi.
\]
The HJB equation \eqref{eqn:HJB}  reads in this case
\[
	v_t=\frac{1}{2}\sigma^2v_{xx}+\sup_{\xi\in\mathbb{R}}\{l(\xi,x)v+h(\xi,x)v_x\} =\frac{1}{2}\sigma^2v_{xx} + \biggl(k\overline{m} - \Bigl(k+\frac{q\mu\sigma_1}{\varsigma}\Bigr)x\biggl)v_x-\frac{q}{2}(1-q)\frac{\mu^2x^2}{\varsigma^2}v+    \frac{q\sigma_2^2}{2(1-q)}\frac{v_x^2}{v}  
\]
with $v(x,0)=1.$ Here, we used that the  supremum  of the above HJB equation  is achieved at
\begin{equation}
\label{eqn:KO_zeta}
	\xi=-\frac{\sigma_2}{1-q}\frac{v_x}{v}.
\end{equation}
The solution to this HJB equation corresponds to the function $v$ in Eq.\eqref{eqn:v} (c.f. \cite[Lemma 3]{battauz2015kim}) and can be expressed as
\[
	v(x,t)=e^{\Lambda(t)-\frac{1}{2}\beta(t)x^2-\gamma(t)x},
\]
where the coefficients solve the following system of differential equations:
\begin{align}
\label{eqn:three}
	&\beta'(t)=-\alpha_2\beta^2(t)-2\alpha_1\beta(t) +q(1-q)\frac{\mu^2}{\varsigma^2}, \qquad \beta(0)=0, \nonumber \\
	&\gamma'(t)=-(\alpha_1+\alpha_2\beta(t))\gamma(t)+\alpha_3\beta(t), \qquad \qquad \, \gamma(0)=0, \nonumber \\
	&\Lambda'(t)=\frac{1}{2}\alpha_2\gamma^2(t)-\alpha_3\gamma(t)-\frac{1}{2}\sigma^2\beta(t,) \qquad \quad \,\,\,\,\Lambda(0)=0.
\end{align}
with
\begin{equation}
\label{eqn:alpha}
	\alpha_1=k+\frac{q\mu\sigma_1}{\varsigma}, \qquad \alpha_2=\sigma_1^2+\frac{\sigma_2^2}{1-q}, \qquad \alpha_3=k\overline{m}, \qquad \alpha_4=\sqrt{\alpha_1^2+q(1-q)\alpha_2\mu^2/\varsigma^2}. 
\end{equation} 
Thus assumption A\ref{assume:diff_v} holds. The first equation is the standard Riccati equation with solution
\begin{equation}
\label{eqn:KO_beta}
	\beta(t) =\frac{q(1-q)\frac{\mu^2}{\varsigma^2}(1-e^{-2\alpha_4t})}{\alpha_4+\alpha_1 +(\alpha_4-\alpha_1)e^{-2\alpha_4t}}.
\end{equation}
Given $\beta,$ the second equation of Eq.\eqref{eqn:three} is a first-order ODE which can be easily solved. The solution is
\begin{equation}
\label{eqn:KO_gamma}
	\gamma(t)=\frac{\alpha_3}{\mu(t)}\int_0^t \beta(s)\mu(s)\,ds
\end{equation} 
where
\[
	\mu(t)=e^{\int_0^t(\alpha_1+\alpha_2\beta(s))\,ds}.
\]
The optimal control in Eq.\eqref{eqn:opt_xi}
\begin{equation}
\label{eqn:xi_OU}
	\hat{\xi}(x,t;T)=-\frac{\sigma_2v_x(T-t,x)}{(1-q)v(T-t,x)} = \frac{\sigma_2 }{1-q}\Bigl(\beta(T-t)x+\gamma(T-t)\Bigr)
\end{equation}
is obtained. With this optimizer, assumption A\ref{assume:struc_opt} is satisfied by \cite[Eq.(26)]{battauz2015kim}.

Now we shift our attention to the ergodic  HJB equation \eqref{eqn:EBE}. Direct calculation shows that 
\[
	\phi(x)=e^{-\frac{1}{2}Bx^2-Cx}
\]
with the coefficients
\begin{equation} 
	B=\frac{\alpha_4-\alpha_1}{\alpha_2}, \qquad \qquad C=\frac{\alpha_3(\alpha_4-\alpha_1)}{\alpha_2\alpha_4},
\end{equation} 
is a solution to the ergodic HJB equation \eqref{eqn:EBE}. It is easy to show that $\beta(t)\to B,$ $\gamma(t)\to C$ and  $\frac{\Lambda(t)}{t}\to -\lambda$ as $t\to\infty,$ thus assumption A\ref{assume:long-term}  holds. The optimal control  $\xi^*$ is given by
\begin{equation}
\label{eqn:xi_star_OU}
	\xi^*(x) =- \frac{\sigma_2 \phi_x(x)}{(1-q)\phi(x)}=\frac{\sigma_2  }{1-q}\bigl(Bx+C\bigr).
\end{equation}

For the rest of this section, we  show that assumptions A\ref{assume:P} -- A\ref{assume:Q} are satisfied.

\begin{prop}  \label{prop:KO_A7}
	For the Kim--Omberg model  A\ref{assume:P} holds, that is,  the local martingale
	\begin{equation}
			\biggl(	\mathcal{E}\Bigl(-\frac{q\mu}{\varsigma}\int_0^\cdot X_s\,dW_{1,s}-q\int_0^\cdot\hat{\xi}(X_s,s;T)\,dW_{2,s}\Bigr)_t\biggr)_{ 0\leq t\leq T } 
		\end{equation}  
	is a true martingale under the measure ${\mathbb{P}}$.
\end{prop}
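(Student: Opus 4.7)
The plan is to verify the true martingale property via a localized (stepwise) Novikov criterion on the compact interval $[0,T]$. First I would observe from Eq.\eqref{eqn:xi_OU} that the integrand
\[
	\hat{\xi}(X_s,s;T)=\frac{\sigma_2}{1-q}\bigl(\beta(T-s)X_s+\gamma(T-s)\bigr)
\]
is affine in $X_s$, where $\beta$ and $\gamma$ are the continuous solutions of the ODEs in Eq.\eqref{eqn:three} displayed in Eq.\eqref{eqn:KO_beta} and Eq.\eqref{eqn:KO_gamma}. Being continuous on the compact interval $[0,T]$, both are bounded there, so there exists a constant $K>0$ with
\[
	\frac{q^2\mu^2}{\varsigma^2}X_s^2+q^2\hat{\xi}^2(X_s,s;T)\leq K(X_s^2+1), \qquad 0\le s\le T,
\]
which reduces the martingale problem to a bound on the exponential moment of $\int_0^T X_s^2\,ds$.

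Next I would invoke the standard stepwise version of Novikov's criterion: it suffices to exhibit a partition $0=t_0<t_1<\cdots<t_n=T$ such that
\[
	\mathbb{E}^\mathbb{P}\Bigl[\exp\Bigl(\tfrac{K}{2}\int_{t_i}^{t_{i+1}}(X_s^2+1)\,ds\Bigr)\,\Big\vert\,\mathcal{F}_{t_i}\Bigr]<\infty \quad \mathbb{P}\textnormal{-a.s.}, \quad i=0,\dots,n-1.
\]
Under $\mathbb{P}$, the process $X$ is an Ornstein--Uhlenbeck process, hence conditional on $\mathcal{F}_{t_i}$ the path $(X_s)_{t_i\le s\le t_{i+1}}$ is Gaussian with mean and covariance of explicit form. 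Combining the classical Cameron--Martin identity $\mathbb{E}[\exp(\tfrac{a^2}{2}\int_0^h B_s^2\,ds)]=1/\sqrt{\cos(ah)}$ (valid for $ah<\pi/2$) with a time change and a translation absorbing the mean-reverting drift and the starting point $X_{t_i}$, one obtains finiteness of the conditional exponential moment as soon as the interval length $t_{i+1}-t_i$ is sufficiently small relative to $K$. Since the OU parameters are constant in time, the critical interval length does not depend on $i$, and a uniform partition suffices.

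The main technical obstacle is making the Gaussian bound uniform in the conditioning value $X_{t_i}(\omega)$, that is, showing that the conditional Laplace transform of $\int_{t_i}^{t_{i+1}}X_s^2\,ds$ is $\mathbb{P}$-a.s.\ finite rather than merely finite in expectation. This is ensured by the linear-Gaussian structure, which gives an explicit closed-form expression for the conditional Laplace transform that is finite for any realized value of $X_{t_i}$ once the interval length has been chosen small enough depending on $K$. Once this pointwise-in-$\omega$ estimate is established, the tower property concatenates the $n$ conditional Novikov bounds, yielding $\mathbb{E}^\mathbb{P}[Z_T]=1$ and establishing the true martingale property claimed in Proposition \ref{prop:KO_A7}.
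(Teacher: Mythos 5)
Your argument is correct, but it takes a genuinely different route from the paper. The paper disposes of A5 (and of the analogous martingale claims for $\tilde{\mathbb{P}}$, $\overline{\mathbb{P}}$ and $\mathbb{Q}$, as well as for the Heston model) by a single appeal to Theorem 8.1 in \cite{klebaner2014stochastic}: one writes down the drift $a_t(x)=k(\overline{m}-x)$, the diffusion coefficient $b_t(x)=(\sigma_1,\sigma_2)$ and the Girsanov kernel $\sigma_t(x)=(-\tfrac{q\mu x}{\varsigma},-q\hat{\xi}(x,t;T))$, and checks the linear-growth/Lyapunov inequality $\Vert\sigma_t(x)\Vert^2+L_t(x)+\mathfrak{L}_t(x)\le r(1+x^2)$, which holds precisely because $\hat{\xi}$ is affine in $x$ with coefficients $\beta(T-t),\gamma(T-t)$ bounded on $[0,T]$ --- the same observation you start from. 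Your stepwise Novikov argument is more elementary and self-contained (it needs only Novikov's criterion, the tower property, and exponential quadratic moments of a Gaussian process on short intervals), but it leans on the explicit Ornstein--Uhlenbeck/Gaussian structure of $X$ under $\mathbb{P}$ and would not transfer as cleanly to the CIR factor in Appendix E, where the paper simply reuses the same citation. Two small remarks on your write-up: the uniform-in-$X_{t_i}$ finiteness you flag as the main obstacle is indeed automatic from the closed form of the conditional Laplace transform (finite for \emph{every} starting value once the mesh is below the critical threshold, which depends only on $K$ and the OU parameters, hence not on $i$); and the detour through the Cameron--Martin identity plus time change can be shortcut by Jensen's inequality, $\mathbb{E}\exp(\lambda\int_{t_i}^{t_{i+1}}X_s^2\,ds)\le\frac{1}{h}\int_{t_i}^{t_{i+1}}\mathbb{E}\exp(\lambda h X_s^2)\,ds$, which is finite as soon as $2\lambda h\sup_s\mathrm{Var}(X_s)<1$.
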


\begin{proof}
	In order to show this is a true martingale, we use Theorem 8.1 in \cite{klebaner2014stochastic}. Recall that
	\begin{align}  
			dX_t = k(\overline{m}-X_t)\,dt+\sigma_1 \,dW_{1,t}+\sigma_2 \,dW_{2,t}, \qquad X_0=\chi. 
		\end{align}
	Using the notions in \cite{klebaner2014stochastic}, we have
	\begin{align*}
			a_t(x) & = k(\overline{m}-x)\\
			b_t(x) & = (\sigma_1,\sigma_2),\\
			\sigma_t(x) & = \Bigl(-\frac{q\mu x}{\varsigma},-q\hat{\xi}(x,t;T)\Bigr),
		\end{align*}
	so that
	\begin{align*}
			\Vert\sigma_t(x)\Vert^2 & = q^2\Bigl(\frac{\mu^2x^2}{\varsigma^2}+\hat{\xi}^2(x,t;T)\Bigr),\\
			L_t(x) & = 2k(\overline{m}-x)x+\sigma^2,\\
			\mathfrak{L}_t(x) & = -2x\Bigl(-k\overline{m}+\bigl(k+\frac{q\mu\sigma_1}{\varsigma}\bigr)x+q\sigma_2\hat{\xi}(x,t;T)\Bigr)+\sigma^2.
		\end{align*}
	Using Eq.\eqref{eqn:xi_OU} and the fact that $\beta(T-t)$ and $\gamma(T-t)$ are bounded functions in $t$ on $[0,T],$ one can find a positive $r>\chi=X_0$ such that 
	\[
		\Vert\sigma_t(x)\Vert^2+L_t(x)+\mathfrak{L}_t(x)\leq r(1+x^2).
		\] 
	This implies that the assumptions of Theorem 8.1 in \cite{klebaner2014stochastic} are met, and thus we obtain the desired result.
\end{proof}

Now the measure $\hat{\mathbb{P}}$ is well-defined by Eq.\eqref{eqn:hat_P_from_P_RN} and the $\hat{\mathbb{P}}$-dynamics of $X$ is
\begin{equation}
\label{eqn:hat_X_KO}
	dX_t=(k\overline{m}-(k+\frac{q\mu\sigma_1}{\varsigma})X_t -q\sigma_2\hat{\xi}(X_t,t;T) )\,dt+\sigma_1 \,d\hat{W}_{1,t}+\sigma_2 \,d\hat{W}_{2,t}.
\end{equation} 

\begin{prop}\label{prop:mart_condi_KO}
	For the Kim--Omberg model  A\ref{assume:P_hat} holds, that is, the local martingale
	\[
		\biggl(\mathcal{E}\Bigl(q\int_0^\cdot \hat{\xi} (X_s,s;T)-\xi^*(X_s)\,d\hat{W}_{2,s}\Bigr)_t\biggr)_{0\leq t\leq T}
		\]
	is a true martingale under the measure $\hat{\mathbb{P}}$.
\end{prop}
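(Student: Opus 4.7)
The plan is to invoke Theorem 8.1 in \cite{klebaner2014stochastic}, following the same strategy as in the proof of Proposition \ref{prop:KO_A7}. Under $\hat{\mathbb{P}}$ the factor process $X$ satisfies Eq.\eqref{eqn:hat_X_KO}, whose drift is, by Eq.\eqref{eqn:xi_OU}, affine in $x$ with coefficients depending on $\beta(T-t)$ and $\gamma(T-t)$, while the diffusion is constant. Using Eqs.\eqref{eqn:xi_OU} and \eqref{eqn:xi_star_OU} the Girsanov kernel takes the explicit form
\[
	q\bigl(\hat{\xi}(x,t;T) - \xi^*(x)\bigr) = \frac{q\sigma_2}{1-q}\bigl((\beta(T-t)-B)x + (\gamma(T-t)-C)\bigr),
\]
so it is also linear in $x$ with coefficients that are continuous in $t$ and hence bounded on $[0,T]$, thanks to the closed forms in Eqs.\eqref{eqn:KO_beta}--\eqref{eqn:KO_gamma}.

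In the notation of \cite{klebaner2014stochastic}, I would set the drift $a_t(x) = k\overline{m} - \alpha_1 x - q\sigma_2 \hat{\xi}(x,t;T)$, the diffusion $b_t(x) = (\sigma_1,\sigma_2)$ and the Girsanov kernel $\sigma_t(x) = (0, q(\hat{\xi}(x,t;T) - \xi^*(x)))$. All three quantities are at most linear in $x$ with coefficients uniformly bounded on $[0,T]$. Consequently both $L_t(x)$ and $\mathfrak{L}_t(x)$, computed exactly as in the proof of Proposition \ref{prop:KO_A7}, together with $\Vert\sigma_t(x)\Vert^2$, are at most quadratic in $x$, and one can choose $r>0$ such that
\[
	\Vert\sigma_t(x)\Vert^2 + L_t(x) + \mathfrak{L}_t(x) \leq r(1+x^2), \qquad (t,x) \in [0,T] \times \mathbb{R}.
\]
Theorem 8.1 in \cite{klebaner2014stochastic} then gives that the stochastic exponential is a true martingale under $\hat{\mathbb{P}}$.

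The main technical point reduces to the boundedness of $\beta(T-\cdot)$ and $\gamma(T-\cdot)$ on $[0,T]$, which is immediate from their closed forms; no appeal to the parameter condition \eqref{eqn:KO_para_condi} is needed here, as that condition is used only for the long-horizon analysis under the subsequent measure changes to $\tilde{\mathbb{P}}$ and $\overline{\mathbb{P}}$. An alternative route would be Novikov's condition, namely establishing finiteness of $\mathbb{E}^{\hat{\mathbb{P}}}[\exp(\tfrac{q^2}{2}\int_0^T(\hat{\xi}(X_s,s;T)-\xi^*(X_s))^2\,ds)]$; since $X$ is Gaussian under $\hat{\mathbb{P}}$ (affine drift, constant diffusion) this would amount to checking that the quadratic coefficient in $s$ stays below the Gaussian integrability threshold, but this introduces unnecessary smallness restrictions, which is why the Klebaner criterion is the more natural tool.
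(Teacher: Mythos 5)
Your proposal is correct and follows essentially the same route as the paper: both invoke Theorem 8.1 of \cite{klebaner2014stochastic} with the identical choices of $a_t$, $b_t$ and $\sigma_t$, and verify the quadratic growth bound $\Vert\sigma_t(x)\Vert^2+L_t(x)+\mathfrak{L}_t(x)\leq r(1+x^2)$ using the boundedness of $\beta(T-\cdot)$ and $\gamma(T-\cdot)$ on $[0,T]$. The paper merely states that the verification is straightforward, so your explicit justification (and the correct observation that condition \eqref{eqn:KO_para_condi} is not needed here) is a faithful filling-in of the same argument.
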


\begin{proof} 
	In order to show this is a true martingale, we use Theorem 8.1 in \cite{klebaner2014stochastic}. The proof is similar to the proof of Proposition \ref{prop:KO_A7}, thus we only state the corresponding functions,
	\begin{align*}
			a_t(x) & = k\overline{m}-(k+\frac{q\mu\sigma_1}{\varsigma})x -q\sigma_2\hat{\xi}(x,t;T), \\
			b_t(x) & = (\sigma_1,\sigma_2),\\
			\sigma_t(x) & = \bigl(0,q(\hat{\xi}(x,t;T)-\xi^*(x)\bigr),
		\end{align*}
	and it is straightforward to verify that the assumptions of Theorem 8.1 in \cite{klebaner2014stochastic} are met.
\end{proof}

Now the measure $\tilde{\mathbb{P}}$ is well-defined by Eq.\eqref{eqn:tilde_P} and the $\tilde{\mathbb{P}}$-dynamics of $X$ is 
\begin{align} 
	dX_t & = \Bigl(k\overline{m}-(k+\frac{q\mu\sigma_1}{\varsigma})X_t -q\sigma_2\xi^*(X_t) \Bigr)\,dt+\sigma_1 \,d\tilde{W}_{1,t}+\sigma_2 \,d\tilde{W}_{2,t}\\
	&=\biggl(k\overline{m}-\frac{q\sigma_2^2C}{1-q}-\Bigl(k+\frac{q\mu\sigma_1}{\varsigma}+\frac{q\sigma_2^2B}{1-q}\Bigr)X_t \biggr)\,dt+\sigma_1 \,d\tilde{W}_{1,t}+\sigma_2 \,d\tilde{W}_{2,t},  \qquad X_0=\chi. 
\end{align}

\begin{prop}
	For the Kim--Omberg model,  A\ref{assume:M}  holds, that is, the process
	\[
	M =\biggl(\mathcal{E}\Bigl(-\int_0^\cdot(BX_s+C)\sigma_{1}\,d\tilde{W}_{1,s}-\int_0^\cdot(BX_s+C)\sigma_{2}\,d\tilde{W}_{2,s} \Bigr)_t\biggr)_{0\le t\le T}
	\]
	is a martingale under the measure  $\tilde{\mathbb{P}}$.	 
\end{prop}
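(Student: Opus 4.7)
The plan is to repeat the template established in Propositions \ref{prop:KO_A7} and \ref{prop:mart_condi_KO}, namely to verify the hypotheses of Theorem 8.1 in \cite{klebaner2014stochastic}. Under $\tilde{\mathbb{P}}$ the factor $X$ satisfies the linear SDE displayed just before the proposition statement, and the Girsanov kernel driving $M$ is the two-dimensional process $\bigl(-(BX_t+C)\sigma_1,\,-(BX_t+C)\sigma_2\bigr)$. In Klebaner's notation this gives
\begin{align*}
	a_t(x) & = k\overline{m}-\frac{q\sigma_2^2C}{1-q}-\Bigl(k+\frac{q\mu\sigma_1}{\varsigma}+\frac{q\sigma_2^2B}{1-q}\Bigr)x,\\
	b_t(x) & = (\sigma_1,\sigma_2),\\
	\sigma_t(x) & = \bigl(-(Bx+C)\sigma_1,\,-(Bx+C)\sigma_2\bigr).
\end{align*}

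The next step is to compute the three quantities appearing in Klebaner's condition. Since $B$, $C$, $\sigma_1$, $\sigma_2$, $\sigma$ are constants, $\Vert\sigma_t(x)\Vert^2=(Bx+C)^2\sigma^2$ is a fixed quadratic polynomial in $x$. Likewise $L_t(x)=2x\,a_t(x)+\sigma^2$ and $\mathfrak{L}_t(x)=2x\bigl(a_t(x)-(Bx+C)\sigma^2\bigr)+\sigma^2$ are both quadratic polynomials in $x$ whose coefficients do not depend on $t$ or $T$. Consequently one can pick a constant $r>0$ (depending only on the model parameters) such that
\[
	\Vert\sigma_t(x)\Vert^2+L_t(x)+\mathfrak{L}_t(x)\leq r(1+x^2) \qquad \text{for all } (x,t)\in\mathbb{R}\times[0,T],
\]
which is exactly the linear growth hypothesis of Theorem 8.1 in \cite{klebaner2014stochastic}, and the conclusion then delivers the true martingale property of $M$ under $\tilde{\mathbb{P}}$.

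I do not anticipate a real obstacle here, and in fact this verification is simpler than the two preceding propositions in the appendix: the time-dependent optimal control $\hat{\xi}(x,t;T)$ is no longer present (it has been replaced by the ergodic optimal control $\xi^{*}(x)$ which is globally affine in $x$), so the boundedness-in-$t$ argument used via $\beta(T-t)$ and $\gamma(T-t)$ in Proposition \ref{prop:mart_condi_KO} is unnecessary. The parameter condition \eqref{eqn:KO_para_condi} is \emph{not} required at this stage; it is only needed later to guarantee mean-reversion of $X$ under $\overline{\mathbb{P}}$ (cf.\ Remark \ref{remark:para_condition}), which is a stronger property than the quadratic-growth bound that drives the present proof.
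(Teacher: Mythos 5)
Your proof is correct and follows exactly the same route as the paper: the paper's proof likewise invokes Theorem 8.1 of Klebaner with the identical functions $a_t(x)$, $b_t(x)=(\sigma_1,\sigma_2)$ and $\sigma_t(x)=\bigl(-\sigma_1(Bx+C),-\sigma_2(Bx+C)\bigr)$, and leaves the quadratic-growth verification as "straightforward." Your explicit check of the bound $\Vert\sigma_t(x)\Vert^2+L_t(x)+\mathfrak{L}_t(x)\leq r(1+x^2)$ and the observation that no time-dependence remains are both accurate.
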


\begin{proof}
	In order to show this is a true martingale, we use Theorem 8.1 in \cite{klebaner2014stochastic}. The proof is similar to the proof of Proposition \ref{prop:KO_A7}, thus we only state the corresponding functions,
	\begin{align*}
			a_t(x) & = k\overline{m}-\frac{q\sigma_2^2C}{1-q}-\Bigl(k+\frac{q\mu\sigma_1}{\varsigma} +\frac{q\sigma_2^2B}{1-q}\Bigr)x, \\
			b_t(x) & = (\sigma_1,\sigma_2),\\
			\sigma_t(x) & = \bigl(-\sigma_1(Bx+C),-\sigma_2(Bx+C)\bigr),
	\end{align*}
	and it is straightforward to verify that  the assumptions of Theorem 8.1 in \cite{klebaner2014stochastic} are met.
\end{proof}

Now the measure $\overline{\mathbb{P}}$ is well-defined by Eq.\eqref{eqn:overline_P} and the $\overline{\mathbb{P}}$-dynamics of $X$ is
\begin{equation}
\label{eqn:OU_bar_P_dynamics}
	dX_t = \Biggl( k\overline{m}-\Bigl( \sigma_1^2+\frac{\sigma_2^2 }{1-q}\Bigr)C -\biggl( k+\frac{q\mu\sigma_1}{\varsigma}+\Bigl(\sigma_1^2+\frac{\sigma_2^2 }{1-q}\Bigr)B\biggr)X_t\Biggr)\,dt + \sigma_{1}\,d\overline{W}_{1,t} + \sigma_{2}\,d\overline{W}_{2,t} ,
\end{equation} 
which is again the OU process with re-parametrization.

\begin{prop}\label{prop:KO_mart_condi}
	For the Kim--Omberg model A\ref{assume:Q} holds, that is, the local martingale
	\[
			\biggl(\mathcal{E}\Bigl(q\int_0^\cdot\xi^*(X_s)-\hat{\xi} (X_s,s;T)\,d\overline{W}_{2,s}\Bigr)_t\biggr)_{0\leq t\leq T}
		\]
	is a true martingale under the measure $\overline{\mathbb{P}}$.
\end{prop}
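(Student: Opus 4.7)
The plan is to follow the same template used in Propositions \ref{prop:KO_A7}--\ref{prop:KO_mart_condi} (the preceding ones), namely to invoke Theorem 8.1 in \cite{klebaner2014stochastic}. Starting from the $\overline{\mathbb{P}}$-dynamics of $X$ in Eq.\eqref{eqn:OU_bar_P_dynamics}, which is linear in $x$, I would identify the coefficients needed to apply that theorem as
\[
a_t(x)=k\overline{m}-\Bigl(\sigma_1^2+\tfrac{\sigma_2^2}{1-q}\Bigr)C-\Bigl(k+\tfrac{q\mu\sigma_1}{\varsigma}+\bigl(\sigma_1^2+\tfrac{\sigma_2^2}{1-q}\bigr)B\Bigr)x,\quad b_t(x)=(\sigma_1,\sigma_2),\quad \sigma_t(x)=\bigl(0,\,q(\xi^*(x)-\hat{\xi}(x,t;T))\bigr).
\]
The verification then reduces to showing the growth bound $\Vert\sigma_t(x)\Vert^2+L_t(x)+\mathfrak{L}_t(x)\le r(1+x^2)$ uniformly in $t\in[0,T]$.

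The main ingredient is to exploit the affine structure in $x$ of the integrand defining the Dol\'{e}ans--Dade exponential. Combining Eq.\eqref{eqn:xi_OU} and Eq.\eqref{eqn:xi_star_OU},
\[
\xi^*(x)-\hat{\xi}(x,t;T)=\frac{\sigma_2}{1-q}\Bigl(\bigl(B-\beta(T-t)\bigr)x+\bigl(C-\gamma(T-t)\bigr)\Bigr),
\]
with $\beta$ and $\gamma$ given by Eq.\eqref{eqn:KO_beta} and Eq.\eqref{eqn:KO_gamma}. Since $\beta$ and $\gamma$ are continuous on the compact interval $[0,T]$, both $B-\beta(T-t)$ and $C-\gamma(T-t)$ are bounded in $t$, so $\Vert\sigma_t(x)\Vert^2$ is at most quadratic in $x$ with a bound depending only on $T$. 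The drift $a_t(x)$ is affine in $x$, hence both $L_t(x)$ and $\mathfrak{L}_t(x)$ (in the notation of \cite{klebaner2014stochastic}) grow at most quadratically, and the bound $\Vert\sigma_t(x)\Vert^2+L_t(x)+\mathfrak{L}_t(x)\le r(1+x^2)$ for a suitable $r=r(T)>\chi$ follows.

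With this bound, the assumptions of Theorem 8.1 in \cite{klebaner2014stochastic} are met, yielding the claimed true-martingale property under $\overline{\mathbb{P}}$.

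The main obstacle I anticipate is purely bookkeeping: the formula for $\xi^*-\hat{\xi}$ involves both the stationary coefficients $(B,C)$ and their time-dependent analogues $(\beta,\gamma)$, so one has to be careful that the parameter condition \eqref{eqn:KO_para_condi} is not actually needed here --- indeed the coefficient of $x$ in the $\overline{\mathbb{P}}$-drift is $-(\alpha_1+\alpha_2 B)=-\alpha_4<0$ automatically by the definition of $B$, which ensures the linear growth machinery applies without any further assumption. Beyond this observation, the verification is routine and parallels verbatim the calculations in the previous propositions, which is why (following the paper's convention) a fully detailed write-up is unnecessary.
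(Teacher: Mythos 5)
Your proposal is correct and follows essentially the same route as the paper: the paper also invokes Theorem 8.1 of \cite{klebaner2014stochastic} with exactly the coefficient identification $a_t(x)$, $b_t(x)=(\sigma_1,\sigma_2)$, $\sigma_t(x)=\bigl(0,q(\xi^*(x)-\hat{\xi}(x,t;T))\bigr)$, and leaves the quadratic-growth verification (which you carry out via the boundedness of $\beta(T-\cdot)$ and $\gamma(T-\cdot)$ on $[0,T]$ and the affine drift) as a routine check. Your added remark that the drift coefficient of $x$ equals $-(\alpha_1+\alpha_2 B)=-\alpha_4$ and that condition \eqref{eqn:KO_para_condi} is not needed here is accurate and consistent with the paper.
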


\begin{proof}
	In order to show this is a true martingale, we use Theorem 8.1 in \cite{klebaner2014stochastic}. The proof is similar to the proof of Proposition \ref{prop:KO_A7}, thus we only state the corresponding functions,
	\begin{align*}
			a_t(x) & = k\overline{m}-\Bigl(\sigma_1^2+\frac{\sigma_2^2 }{1-q}\Bigr)C -\biggl(k+\frac{q\mu\sigma_1}{\varsigma}+\Bigl(\sigma_1^2+\frac{\sigma_2^2 }{1-q}\Bigr)B\biggr)x,\\
			b_t(x) & = (\sigma_1,\sigma_2),\\
			\sigma_t(x) & = \Bigl(0,q\bigl(\xi^*(x)-\hat{\xi} (x,t;T)\bigr)\Bigr),
		\end{align*}
	and it is straightforward to verify that the assumptions of Theorem 8.1 in \cite{klebaner2014stochastic} are met.	 
\end{proof}

Now the measure $\mathbb{Q}$  is well-defined by Eq.\eqref{eqn:bar_P_Q} and the $\mathbb{Q}$-dynamics of $X$ is 
\begin{equation}
\label{eqn:X_under_Q}
	dX_t =\biggl(k\overline{m}-C\sigma^2-\frac{q\sigma_2^2 }{1-q}  \gamma(T-t) -\Bigl(k+\frac{q\mu\sigma_1}{\varsigma}+B\sigma^2+\frac{q\sigma_2^2 }{1-q}\beta(T-t)\Bigr)X_t \biggr)\,dt+\sigma\,dB_{t}
\end{equation}
for $0\leq t\leq T$. The functions $f$ and $\kappa$ in Eq.\eqref{eqn:kappa} are
\begin{equation}
\label{eqn:f_OU}
	f(x,t;T) =-\frac{q\sigma_2^2}{2(1-q)}\Bigl(\bigl(B-\beta(T-t)\bigr)x+\bigl(C-\gamma(T-t)\bigr)\Bigr)^2
\end{equation}
and  
\begin{equation}
\label{eqn:OU_kappa}
	\kappa(x,t;T) = k\overline{m}-C\sigma^2-\frac{q\sigma_2^2 }{1-q}  \gamma(T-t) -\Bigl(k+\frac{q\mu\sigma_1}{\varsigma}+B\sigma^2+\frac{q\sigma_2^2 }{1-q}\beta(T-t)\Bigr)x.
\end{equation}

\subsectionfont{\scriptsize}
\subsection{Integrability condition}

In the following we prove integrability conditions, which will be needed in the analysis in the next sections.

\begin{lemma} \label{lem:our_lemma}
	Let $\theta,$ $\sigma$ be two positive constants and  let  $W$ be a Brownian motion. Define $Z_t=\sigma e^{-\theta t}\int_0^te^{\theta s}\,dW_s$ for $t\ge0,$ which is the solution of the SDE
	\[
			dZ_t=-\theta  Z_t \,dt+\sigma  dW_t, \qquad Z_0=0.
		\] 
	For any $\alpha>0$ and  $\delta<\frac{\alpha\theta}{\sigma^2},$ the expectation $\mathbb{E}[e^{\delta e^{-\alpha T}\int_0^T e^{\alpha s}Z_s^2\,ds}]$ is uniformly bounded for $T\ge0.$ 
\end{lemma}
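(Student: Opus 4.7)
The idea is to reduce the problem to a pointwise Gaussian computation by means of Jensen's inequality applied with respect to a normalized exponential weight. The case $\delta\le 0$ is trivial since then $\exp\bigl(\delta e^{-\alpha T}\int_0^T e^{\alpha s}Z_s^2\,ds\bigr)\le 1$, so I would focus on $0<\delta<\alpha\theta/\sigma^2$.

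First I would introduce, for each $T>0$, the probability measure
\[
\mu_T(ds) := \frac{\alpha}{1-e^{-\alpha T}}\,e^{-\alpha(T-s)}\,ds
\]
on $[0,T]$, chosen so that $e^{-\alpha T}\int_0^T e^{\alpha s}Z_s^2\,ds = \frac{1-e^{-\alpha T}}{\alpha}\int_0^T Z_s^2\,\mu_T(ds)$. Setting $c_T := \delta\,(1-e^{-\alpha T})/\alpha$, the functional of interest becomes $c_T\int_0^T Z_s^2\,\mu_T(ds)$.

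Next I would apply Jensen's inequality to the convex function $x\mapsto e^{c_T x}$ against the probability measure $\mu_T$, obtaining
\[
\exp\!\Bigl(\delta e^{-\alpha T}\!\int_0^T\! e^{\alpha s}Z_s^2\,ds\Bigr) = \exp\!\Bigl(\int_0^T c_T Z_s^2\,\mu_T(ds)\Bigr)\le \int_0^T e^{c_T Z_s^2}\,\mu_T(ds).
\]
Taking expectations and using Fubini (the integrand is nonnegative) gives
\[
\mathbb{E}\Bigl[e^{\delta e^{-\alpha T}\int_0^T e^{\alpha s}Z_s^2\,ds}\Bigr]\le \int_0^T \mathbb{E}\bigl[e^{c_T Z_s^2}\bigr]\,\mu_T(ds).
\]

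The key step, and the only place the precise constant $\alpha\theta/\sigma^2$ enters, is the Gaussian moment bound. Since $Z_s\sim \mathcal{N}(0,v_s)$ with $v_s=\frac{\sigma^2}{2\theta}(1-e^{-2\theta s})\le \frac{\sigma^2}{2\theta}$, the standard formula $\mathbb{E}[e^{cY^2}]=(1-2cv)^{-1/2}$ for $Y\sim\mathcal{N}(0,v)$ and $2cv<1$ applies, provided
\[
2c_T v_s \le \frac{\delta\,(1-e^{-\alpha T})}{\alpha}\cdot\frac{\sigma^2}{\theta}\le \frac{\delta\sigma^2}{\alpha\theta}<1,
\]
which is exactly the hypothesis $\delta<\alpha\theta/\sigma^2$. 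This yields the uniform pointwise bound $\mathbb{E}[e^{c_T Z_s^2}]\le (1-\delta\sigma^2/(\alpha\theta))^{-1/2}$, and integrating against the probability measure $\mu_T$ gives
\[
\mathbb{E}\Bigl[e^{\delta e^{-\alpha T}\int_0^T e^{\alpha s}Z_s^2\,ds}\Bigr]\le \Bigl(1-\tfrac{\delta\sigma^2}{\alpha\theta}\Bigr)^{-1/2},
\]
independent of $T$. The only delicate point is the uniform-in-$T$ matching of the constants $c_T$ and $v_s$ in the Gaussian moment bound; once the Jensen trick is set up, everything else is a routine check that the threshold $\delta<\alpha\theta/\sigma^2$ is exactly what makes the pointwise moment finite uniformly.
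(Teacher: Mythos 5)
Your proposal is correct and follows essentially the same route as the paper: the paper's change of variable $u=e^{\alpha s}$ produces exactly your probability measure $\mu_T$, after which both arguments apply Jensen's inequality to the normalized exponential weight and finish with the Gaussian moment formula for $Z_s\sim\mathcal{N}\bigl(0,\tfrac{\sigma^2}{2\theta}(1-e^{-2\theta s})\bigr)$ under the same threshold $\delta<\alpha\theta/\sigma^2$. Your version additionally makes the uniform constant $\bigl(1-\delta\sigma^2/(\alpha\theta)\bigr)^{-1/2}$ explicit, which the paper leaves implicit.
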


\begin{proof}
	If $\delta\leq 0,$ then the boundedness is trivial since the exponent is negative. Assume that $0<\delta<\frac{\alpha\theta}{\sigma^2}.$ Using the change of variable $u=e^{\alpha s},$ we get
	\begin{equation}
			\delta e^{-\alpha T}\int_0^T e^{\alpha s}Z_s^2\,ds = \frac{1}{e^{\alpha T}-1}\int_1^{e^{\alpha T}}\frac{\delta}{\alpha} (1-e^{-\alpha T})Z_{(\ln u)/\alpha}^2\,du.
		\end{equation}
	From  Jensen's inequality it follows that
	\begin{equation}
			e^{\delta e^{-\alpha T}\int_0^T e^{\alpha s}Z_s^2\,ds} 	\leq \frac{1}{e^{\alpha T}-1}\int_1^{e^{\alpha T}}e^{\frac{\delta}{\alpha} (1-e^{-\alpha T})Z_{(\ln u)/\alpha}^2}\,du \leq \frac{1}{e^{\alpha T}-1}\int_0^{T}\alpha e^{\alpha s} e^{\frac{\delta}{\alpha}  Z_{s}^2}\,ds.	
		\end{equation} 
	The random variable $Z_s$ is  normally distributed with mean $0$ and variance $\frac{\sigma^2}{2\theta}(1-e^{-2\theta t}).$ Thus, for $0<\delta<\frac{\alpha\theta}{\sigma^2},$ the expectation  
	$\mathbb{E}[e^{\frac{\delta}{\alpha}  Z_{s}^2}]$ is bounded on $0\leq s<\infty.$ Let $C$ be a positive number such that $\mathbb{E}[e^{\frac{\delta}{\alpha}  Z_{s}^2}]\leq C$ for all $0\leq s<\infty.$ 
	It follows that
	\[
			\mathbb{E}[e^{ \delta e^{-\alpha T}\int_0^T e^{\alpha s}Z_s^2\,ds}]	\leq  \frac{1}{e^{\alpha T}-1}\int_0^{T}\alpha e^{\alpha s} \mathbb{E}[e^{\frac{\delta}{\alpha}  Z_{s}^2}]\,ds	\leq  \frac{C\alpha  }{e^{\alpha T}-1}\int_0^{T} e^{\alpha s} \,ds =C,
	\]
	which gives the desired result.
\end{proof}

We introduce the shorthand 
\[
\zeta_t=\zeta(X_t,t;T) =\xi^*(X_t)-\hat{\xi}(X_t,t;T)
\]
to avoid a notationally heavy expression. From \eqref{eqn:OU_bar_P_dynamics}, the $\overline{\mathbb{P}}$-dynamics of $X$ satisfies
\begin{equation} 
	dX_t = \Bigl(\frac{\alpha_1\alpha_3}{\alpha_4}-\alpha_4\,X_t\Bigr)\,dt + \sigma_{1}\,d\overline{W}_{1,t}+\sigma_{2}\,d\overline{W}_{2,t} 
\end{equation}  
which is a re-parametrized OU process.

\begin{lemma}
\label{lem:OU_bdd}
	For any  
	\[
			\delta<\frac{(1-q)^2\alpha_2^2}{\sigma^2\sigma_2^2}\frac{(\alpha_4+\alpha_1)^2}{ (\alpha_4-\alpha_1)^2},
		\]
	the expectation
	\[
			\mathbb{E}^\mathbb{\overline{P}}[e^{\delta\int_0^T\zeta^2(X_s,s;T)\,ds}]
		\]
	is uniformly bounded in $T\ge0.$
\end{lemma}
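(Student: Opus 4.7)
The plan is to reduce the exponential moment to one covered by Lemma~\ref{lem:our_lemma}, exploiting the explicit exponential decay of $B-\beta(r)$ and the fact that under $\overline{\mathbb{P}}$ the factor $X$ is a standard OU process with mean-reversion rate exactly $\alpha_4$. First I would write
\[
	\zeta(X_s,s;T)=\frac{\sigma_2}{1-q}\bigl[(B-\beta(T-s))X_s + (C-\gamma(T-s))\bigr]
\]
and extract the sharp bound $|B-\beta(r)|\le A_B\,e^{-2\alpha_4 r}$ with $A_B=\frac{2\alpha_4(\alpha_4-\alpha_1)}{\alpha_2(\alpha_4+\alpha_1)}$: this follows directly from the closed form Eq.\eqref{eqn:KO_beta}, upon rewriting
\[
	B-\beta(r)=\frac{\alpha_4-\alpha_1}{\alpha_2}\cdot\frac{2\alpha_4\,e^{-2\alpha_4 r}}{\alpha_4+\alpha_1+(\alpha_4-\alpha_1)e^{-2\alpha_4 r}}
\]
and using that the denominator is bounded below by $\alpha_4+\alpha_1$. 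An analogous analysis of the linear ODE in Eq.\eqref{eqn:three} for $\gamma$ gives $|C-\gamma(r)|\le K e^{-\alpha_4 r}$ for some constant $K$, and in particular $\int_0^\infty(C-\gamma(r))^2\,dr<\infty$.

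Next I would apply the Young-type inequality $(a+b)^2\le(1+\epsilon)a^2+(1+\epsilon^{-1})b^2$ twice. First, with parameter $\epsilon_2$ on $\zeta^2$, separating the quadratic contribution (the one involving $X_s^2$) from the constant contribution; the latter integrates to a deterministic bounded quantity thanks to the square-integrability of $C-\gamma$. Second, with parameter $\epsilon_1$ on $X_s^2$ after the decomposition $X_s=\mu_s+\tilde Z_s$ with $\mu_s=\bar x+(\chi-\bar x)e^{-\alpha_4 s}$ deterministic and bounded ($\bar x=\alpha_1\alpha_3/\alpha_4^2$) and $\tilde Z_s=\sigma\,e^{-\alpha_4 s}\int_0^s e^{\alpha_4 u}\,dB_u$ precisely the OU process from Lemma~\ref{lem:our_lemma} (with $\theta=\alpha_4$, $\sigma^2=\sigma_1^2+\sigma_2^2$). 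This yields the pointwise bound
\[
	\delta\int_0^T\zeta^2(X_s,s;T)\,ds \le c\,e^{-4\alpha_4 T}\int_0^T e^{4\alpha_4 s}\tilde Z_s^2\,ds + C_T,
\]
with $c=\delta(1+\epsilon_1)(1+\epsilon_2)\frac{\sigma_2^2 A_B^2}{(1-q)^2}$ and $C_T$ deterministic and uniformly bounded in $T\ge 0$.

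Lemma~\ref{lem:our_lemma}, applied with $\alpha=4\alpha_4$, $\theta=\alpha_4$ and $\sigma^2=\sigma_1^2+\sigma_2^2$, then yields the uniform exponential bound provided $c<4\alpha_4^2/\sigma^2$. Substituting $A_B^2=\frac{4\alpha_4^2(\alpha_4-\alpha_1)^2}{\alpha_2^2(\alpha_4+\alpha_1)^2}$ and letting $\epsilon_1,\epsilon_2\downarrow 0$, this reduces exactly to the hypothesis $\delta<\frac{(1-q)^2\alpha_2^2(\alpha_4+\alpha_1)^2}{\sigma^2\sigma_2^2(\alpha_4-\alpha_1)^2}$. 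The main obstacle is the sharpness of the threshold: the two successive Young inequalities leave a multiplicative slack $(1+\epsilon_1)(1+\epsilon_2)$ in the critical coefficient, and the non-stationary initial condition $\chi$ together with the non-zero stationary mean $\bar x$ must be separated from the centred OU component without deteriorating the admissible range of $\delta$; both issues are resolved by the strict inequality in the hypothesis, which leaves the margin needed to absorb these costs as $\epsilon_1,\epsilon_2\downarrow 0$.
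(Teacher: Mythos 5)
Your proof is correct and follows essentially the same route as the paper's: the same representation $\zeta=\frac{\sigma_2}{1-q}\bigl[(B-\beta(T-s))X_s+(C-\gamma(T-s))\bigr]$, the same exponential bound $\vert B-\beta(r)\vert\le \frac{2\alpha_4(\alpha_4-\alpha_1)}{\alpha_2(\alpha_4+\alpha_1)}e^{-2\alpha_4 r}$, the same OU decomposition of $X$ under $\overline{\mathbb{P}}$ (mean-reversion rate $\alpha_4$, long-run mean $\alpha_1\alpha_3/\alpha_4^2$), and the same reduction to Lemma \ref{lem:our_lemma} with $\alpha=4\alpha_4$, yielding the identical threshold for $\delta$. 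If anything, your explicit use of the two Young inequalities to absorb the cross terms and the non-centred part of $X$ into the strict inequality on $\delta$ makes rigorous a step the paper only sketches (``the large-time behavior depends only on the highest-order term''), so no gap remains.
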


\begin{proof} Define  $a:=\alpha_1\alpha_3/\alpha_4^2,$  and a process $\overline{W}:=\frac{\sigma_{1}}{\sigma}\overline{W}_{1}+\frac{\sigma_{2}}{\sigma}\overline{W}_{2}$ so that the process $X$ satisfies
	\[
			dX_t=\alpha_4(a-X_t)\,dt+\sigma\,d\overline{W}_t, \qquad X_0=\chi.
		\]
	The solution of this SDE is
	\begin{equation}
			X_t =\chi e^{-\alpha_4 t}+a(1-e^{-\alpha_4 t})+  Z_t
		\end{equation} 
	where $Z_t=\sigma e^{-\alpha_4 t}\int_0^te^{\alpha_4 s}\,d\overline{W}_s.$	 
	From Eq.\eqref{eqn:xi_OU} and \eqref{eqn:xi_star_OU}, it can be shown that 
	\begin{equation}
		\zeta(x,t;T) =\xi^*(x)-\hat{\xi}(x,t;T) =\frac{\sigma_2  }{1-q}\Bigl((B-\beta(T-t))x+C-\gamma(T-t)\Bigr)  
		\end{equation}
	and it is easy to show that
	\begin{equation}
		\label{eqn:conv_rate_beta_gamma}
			\vert B-\beta(t) \vert \leq \frac{2\alpha_4(\alpha_4-\alpha_1)}{\alpha_2(\alpha_4+\alpha_1)}e^{-2\alpha_4 t}, \qquad \vert C-\gamma(t) \vert \leq c_0e^{-2\alpha_4t}
		\end{equation}
	for some positive constant $c_0$. For the second inequality, we observe that
	\[
			\lim_{t\to\infty}\frac{\gamma(t)-C}{e^{-2\alpha_4 t}} =	\lim_{t\to\infty}\frac{{\alpha_3}\int_0^t \beta(s)\mu(s)\,ds-C\mu(t)}{\mu(t)e^{-2\alpha_4 t}} = \lim_{t\to\infty}\frac{(\alpha_3-C\alpha_2)(\beta(t)-B)}{(\alpha_1+\alpha_2 \beta(t)-2\alpha_4)e^{-2\alpha_4 t}} 
		\]
	and the limit converges to a nonzero constant. Here, we used $\alpha_3B-C(\alpha_1+\alpha_2B)=0,$ L'H\^{o}pital's rule and Eq.\eqref{eqn:KO_beta}. Then
	\begin{equation}
			\zeta^2(x,t;T) \leq  c_1^2e^{-4\alpha_4 (T-t)}x^2+\textnormal{(const)}\,e^{-4\alpha_4 (T-t)}x+\textnormal{(const)}\,e^{-4\alpha_4 (T-t)}  
		\end{equation}
	where
	\begin{equation}
		\label{eqn:KO_c_1}
			c_1:=\frac{2\sigma_2\alpha_4(\alpha_4-\alpha_1)}{(1-q)\alpha_2(\alpha_4+\alpha_1)}.
		\end{equation}  	
	The large-time behavior  of the expectation $\mathbb{E}^\mathbb{\overline{P}}[e^{\delta\int_0^T\zeta^2(X_s,s;T)\,ds}]$ depends only on the highest-order term $c_1^2e^{-4\alpha_4(T-t)}X_t^2.$ Using that   $X_t\leq  Z_t + x+a$, it suffices to 	prove that for such a $\delta$ the expectation
	\[
			\mathbb{E}^{\overline{\mathbb{P}}} e^{\delta c_1^2 e^{-4\alpha_4 T}\int_0^T e^{4\alpha_4s}Z_s^2\,ds}
		\]
	is uniformly bounded in $T\geq0.$ Lemma \ref{lem:our_lemma}  gives that this expectation is uniformly bounded in $T\ge0$ if	$\delta c_1^2<\frac{4\alpha_4^2}{\sigma^2},$ which gives the desired result. 
\end{proof}


\begin{lemma}\label{lem:measure_change_KO}
	There are  positive numbers $c$ and $r>1$ such that for any $T\geq0$ and any nonnegative path functiona $h$  
	\[
			\mathbb{E}^\mathbb{Q}[h(X_{\mathbf{\cdot}\wedge  T})]\leq c \bigl(\mathbb{E}^{\overline{\mathbb{P}}}[h^r(X_{\cdot\wedge  T})]\bigr)^{1/r}.
		\]
\end{lemma}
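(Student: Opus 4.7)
By definition of $\mathbb{Q}$ in Eq.\eqref{eqn:bar_P_Q}, setting
\[
\mathcal{E}_T := \frac{d\mathbb{Q}}{d\overline{\mathbb{P}}} = \mathcal{E}\Bigl(q\int_0^\cdot \zeta(X_s,s;T)\, d\overline{W}_{2,s}\Bigr)_T, \qquad \zeta(X_s,s;T) := \xi^*(X_s)-\hat{\xi}(X_s,s;T),
\]
H\"older's inequality with conjugate exponents $r,r'>1$, $1/r+1/r'=1$, gives
\[
\mathbb{E}^{\mathbb{Q}}[h(X_{\cdot\wedge T})] = \mathbb{E}^{\overline{\mathbb{P}}}\bigl[h(X_{\cdot\wedge T})\, \mathcal{E}_T\bigr] \leq \bigl(\mathbb{E}^{\overline{\mathbb{P}}}[h^r(X_{\cdot\wedge T})]\bigr)^{1/r} \bigl(\mathbb{E}^{\overline{\mathbb{P}}}[\mathcal{E}_T^{r'}]\bigr)^{1/r'}.
\]
Thus the lemma reduces to exhibiting an $r'>1$ such that $\sup_{T\geq 0}\mathbb{E}^{\overline{\mathbb{P}}}[\mathcal{E}_T^{r'}] < \infty$; the constant $c$ is then $(\sup_T \mathbb{E}^{\overline{\mathbb{P}}}[\mathcal{E}_T^{r'}])^{1/r'}$.

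To bound $\mathbb{E}^{\overline{\mathbb{P}}}[\mathcal{E}_T^{r'}]$ I plan to use the Dol\'eans-Dade power identity
\[
\mathcal{E}_T^{r'} = \mathcal{E}\Bigl(r'q\int_0^\cdot \zeta\, d\overline{W}_{2,s}\Bigr)_T \cdot \exp\Bigl(\tfrac{r'(r'-1)q^2}{2}\int_0^T \zeta^2(X_s,s;T)\, ds\Bigr),
\]
together with Cauchy--Schwarz. Squaring the stochastic exponential and invoking the same identity once more converts $\mathcal{E}(r'q\cdot)_T^2$ into $\mathcal{E}(2r'q\cdot)_T \exp((r'q)^2 \int_0^T\zeta^2\,ds)$, so after iterated Cauchy--Schwarz the entire problem reduces to bounding exponential moments of $\int_0^T \zeta^2(X_s,s;T)\,ds$ under $\overline{\mathbb{P}}$: the supermartingale factor $\mathcal{E}(2r'q\cdot)_T$ is either absorbed by a Girsanov change of measure (once the relevant Novikov condition is verified from the very bound we are trying to prove) or handled by one more Cauchy--Schwarz combined with the universal bound $\mathbb{E}[\mathcal{E}(\cdot)]\leq 1$ for nonnegative local martingales.

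The decisive input is Lemma \ref{lem:OU_bdd}: the decay estimates in Eq.\eqref{eqn:conv_rate_beta_gamma} imply $\zeta^2(X_s,s;T) \leq C\, e^{-4\alpha_4(T-s)}(X_s^2+1)$, and Lemma \ref{lem:OU_bdd} then furnishes a uniform-in-$T$ exponential-moment bound for $\int_0^T \zeta^2(X_s,s;T)\,ds$ under $\overline{\mathbb{P}}$, valid whenever the coefficient in the exponential lies strictly below the explicit threshold $\delta^\ast := (1-q)^2\alpha_2^2(\alpha_4+\alpha_1)^2 / (\sigma^2\sigma_2^2(\alpha_4-\alpha_1)^2)$. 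Choosing $r'>1$ sufficiently close to $1$ makes all coefficients $r'(r'-1)q^2$ and $2(r'q)^2$ that appear in the Cauchy--Schwarz chain fall below $\delta^\ast$, and assembling the estimates yields the required uniform bound.

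The main obstacle is the Girsanov step used to transfer the surviving supermartingale factor $\mathcal{E}(2r'q\cdot)_T$ into a change of measure: under the new measure $\tilde{\mathbb{Q}}$, the process $X$ acquires an extra drift of the form $2r'q\,\zeta(X_t,t;T)\sigma_2$. Because $\zeta$ decays exponentially in $T-t$ by Eq.\eqref{eqn:conv_rate_beta_gamma}, this perturbation is time-dependent but asymptotically negligible, and the OU mean-reversion underpinning Lemma \ref{lem:OU_bdd} persists; nevertheless, making this rigorous amounts to re-deriving an analogue of Lemma \ref{lem:OU_bdd} for the perturbed drift, which is precisely where the parameter condition Eq.\eqref{eqn:KO_para_condi} (echoed in Remark \ref{remark:para_condition}) is consumed: it ensures that the effective mean-reversion rate remains strictly positive under every measure entering the argument. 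Once the adapted exponential-moment estimate is in hand, the remainder is routine bookkeeping.
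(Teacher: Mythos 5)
Your opening move --- two-factor H\"older reducing the claim to a uniform-in-$T$ bound on $\mathbb{E}^{\overline{\mathbb{P}}}\bigl[(d\mathbb{Q}/d\overline{\mathbb{P}})^{r'}\bigr]$ for some $r'>1$ --- is sound and is morally the paper's starting point. The gap is in how you propose to obtain that moment bound. The iterated Cauchy--Schwarz scheme does not close: applying the power identity to $\mathcal{E}(a\cdot)_T^2$ produces $\mathcal{E}(2a\cdot)_T\exp(a^2\int_0^T\zeta^2\,ds)$, and any further H\"older step again leaves a stochastic exponential raised to a power strictly greater than one, so you never reach a factor to which the universal bound $\mathbb{E}[\mathcal{E}(\cdot)_T]\le 1$ applies with exponent exactly $1$. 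The Girsanov alternative you sketch is worse: it requires re-deriving the exponential-moment estimate of Lemma \ref{lem:OU_bdd} for $X$ with a perturbed, $T$-dependent drift, and --- as you concede yourself --- the Novikov condition for that change of measure would be verified using the very bound under construction.

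The missing idea is a single three-exponent H\"older applied directly to $h\cdot\frac{d\mathbb{Q}}{d\overline{\mathbb{P}}}$, which is what the paper does. Writing $\zeta_s=\xi^*(X_s)-\hat\xi(X_s,s;T)$ and choosing $r,r_1,r_2>1$ with $\tfrac1r+\tfrac1{r_1}+\tfrac1{r_2}=1$ and $\delta=r_1(r_2-1)$ admissible for Lemma \ref{lem:OU_bdd}, one splits
\[
e^{q\int_0^T\zeta_s\,d\overline{W}_{2,s}-\frac{q^2}{2}\int_0^T\zeta_s^2\,ds}
= e^{\frac{(r_2-1)q^2}{2}\int_0^T\zeta_s^2\,ds}\cdot e^{q\int_0^T\zeta_s\,d\overline{W}_{2,s}-\frac{r_2q^2}{2}\int_0^T\zeta_s^2\,ds},
\]
so that the third H\"older factor, raised to the power $r_2$, is \emph{exactly} the stochastic exponential $\mathcal{E}\bigl(r_2q\int_0^\cdot\zeta_s\,d\overline{W}_{2,s}\bigr)_T$ --- a nonnegative local martingale, hence of expectation at most $1$ --- while the second factor raised to $r_1$ is $e^{\frac{1}{2}r_1(r_2-1)q^2\int_0^T\zeta_s^2\,ds}$, uniformly bounded in $T$ by Lemma \ref{lem:OU_bdd}. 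No Girsanov step and no iteration are needed; you are one algebraic observation away from the correct argument, but as written the proof does not go through.
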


\noindent We emphasize that the positive constants $c$ and $r$ do not depend on the time $T\ge0$ and the  nonnegative functional $h.$
\begin{proof}
	One can first find a positive $\delta$ such that
	\[
			\mathbb{E}^{\overline{\mathbb{P}}}e^{\frac{1}{2}\delta q^2\int_0^T\zeta_s^2\,ds}
		\]
	is uniformly bounded in $T\ge0$ by using Lemma \ref{lem:OU_bdd}. Choose $r_1>1$ and $r_2>1$ so that  $\delta=r_1(r_2-1),$ and define $r>1$ by $\frac{1}{r}+\frac{1}{r_1}+\frac{1}{r_2}=1.$ Then  
	\begin{align*} 
			\mathbb{E}^{\mathbb{Q}} [h(X_{\cdot\wedge  T})] & =\mathbb{E}^{\overline{\mathbb{P}}}\Bigl[h(X_{\cdot\wedge  T})e^{q\int_0^T\zeta_s\,d\overline{W}_{2,s}-\frac{q^2}{2}\int_0^T\zeta_s^2\,ds} \Bigr]  \\
			& \leq \bigl(\mathbb{E}^{\overline{\mathbb{P}}}[h^{r}(X_{\cdot\wedge  T})]\bigr)^{\frac{1}{r}} \biggl(\mathbb{E}^{\overline{\mathbb{P}}}\Bigl[e^{\frac{1}{2}r_1(r_2-1)q^2\int_0^T\zeta_s^2\,ds}\Bigr]\biggr)^{\frac{1}{r_1}} \biggl(\mathbb{E}^{\overline{\mathbb{P}}}\Bigl[e^{r_2q\int_0^T\zeta_s\,d\overline{W}_{2,s}-\frac{1}{2}r_2^2q^2\int_0^T\zeta_s^2\,ds}\Bigr]\biggr)^{\frac{1}{r_2}}.
		\end{align*}
	The last term is a positive local martingale so that the expectation is less than or equal to $1.$ It follows that
	\begin{equation}
			\mathbb{E}^{\mathbb{Q}} [h(X_{\cdot\wedge  T})] \leq \bigl(\mathbb{E}^{\overline{\mathbb{P}}}[h^{r}(X_{\cdot\wedge  T})]\bigr)^{\frac{1}{r}} \bigl(\mathbb{E}^{\overline{\mathbb{P}}}[e^{\frac{1}{2}\delta q^2\int_0^T\zeta_s^2\,ds}]\bigr)^{\frac{1}{r_1}}
		\end{equation}
	The second term $\mathbb{E}^{\overline{\mathbb{P}}}[e^{\frac{1}{2}\delta q^2\int_0^T\zeta_s^2\,ds}]$ is uniformly bounded in $T\ge0$ by the choice of $\delta.$ This gives the desired result.
\end{proof}

\begin{lemma}\label{lem:KO_finite_power}
	For any $\delta>0,$ the expectation 
	\[
			\mathbb{E}^\mathbb{Q}\bigl[\vert X_T\vert^\delta\bigr]
		\]
	is uniformly bounded in $(x,T)$ on $(\chi-1,\chi+1)\times[0,\infty).$ 
\end{lemma}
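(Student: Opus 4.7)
The plan is to reduce the statement to a moment estimate under the measure $\overline{\mathbb{P}}$ via the change-of-measure bound in Lemma \ref{lem:measure_change_KO}, and then exploit the fact that under $\overline{\mathbb{P}}$ the process $X$ is simply a re-parametrized Ornstein--Uhlenbeck process, so $X_T$ is Gaussian with uniformly bounded mean and variance.

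Concretely, I would first apply Lemma \ref{lem:measure_change_KO} to the path functional $h(X_{\cdot \wedge T}) := |X_T|^\delta$, obtaining a bound of the form
\[
\mathbb{E}^\mathbb{Q}[|X_T|^\delta] \leq c \bigl(\mathbb{E}^{\overline{\mathbb{P}}}[|X_T|^{r\delta}]\bigr)^{1/r},
\]
with constants $c$ and $r>1$ that are independent of $T$. A check of the proofs of Lemmas \ref{lem:OU_bdd} and \ref{lem:measure_change_KO} shows that these constants also do not depend on the initial value provided it ranges in a bounded set: the key quantity $c_1$ in \eqref{eqn:KO_c_1} and the threshold for $\delta$ in Lemma \ref{lem:OU_bdd} involve only model parameters, while the estimate $X_t \le Z_t + x + a$ used there contributes only lower-order terms that remain bounded for $x \in (\chi-1,\chi+1)$.

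Next, by \eqref{eqn:OU_bar_P_dynamics}, under $\overline{\mathbb{P}}$ the process $X$ with $X_0 = x$ satisfies
\[
dX_t = \Bigl(\tfrac{\alpha_1\alpha_3}{\alpha_4} - \alpha_4 X_t\Bigr)\,dt + \sigma_1\,d\overline{W}_{1,t} + \sigma_2\,d\overline{W}_{2,t},
\]
so $X_T$ is Gaussian with mean $x e^{-\alpha_4 T} + (\alpha_1\alpha_3/\alpha_4^2)(1-e^{-\alpha_4 T})$ and variance $(\sigma^2/(2\alpha_4))(1-e^{-2\alpha_4 T})$. For $(x,T) \in (\chi-1,\chi+1)\times[0,\infty)$ both the mean and the variance are uniformly bounded, so every moment of $|X_T|$ under $\overline{\mathbb{P}}$ is uniformly bounded in $(x,T)$. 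Combining this with the inequality above yields the claim.

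There is no genuine obstacle here: once the reduction to $\overline{\mathbb{P}}$ is performed, the problem is a standard Gaussian moment estimate for an OU process with bounded drift target and nondegenerate (but bounded) asymptotic variance. The only point requiring mild care is tracking the dependence on the initial value $x$ through Lemmas \ref{lem:OU_bdd} and \ref{lem:measure_change_KO}, but since $x$ varies over the bounded interval $(\chi-1,\chi+1)$ this affects the bounds only through terms that remain uniformly controlled.
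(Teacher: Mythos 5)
Your proposal is correct and follows essentially the same route as the paper: reduce to $\overline{\mathbb{P}}$ via Lemma \ref{lem:measure_change_KO} applied to $h(X_{\cdot\wedge T})=\vert X_T\vert^\delta$, then use that $X$ is an OU process under $\overline{\mathbb{P}}$ so that $X_T$ has uniformly bounded Gaussian moments. Your additional remarks on the uniformity of the constants in $x\in(\chi-1,\chi+1)$ make explicit a point the paper's proof leaves implicit, but the argument is the same.
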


\begin{proof} 
	From Lemma \ref{lem:measure_change_KO}, there  are  positive numbers $c$ and $r>1$ such that for any $T\geq0$ and  
	\begin{equation}
		\label{eqn:KO_eq}
			\mathbb{E}^\mathbb{Q}\bigl[\vert X_T\vert^\delta\bigr] \leq c \Bigl(\mathbb{E}^{\overline{\mathbb{P}}}\bigl[\vert X_T\vert^{r\delta} \bigr]\Bigr)^{ {1}/{r}}.
		\end{equation}
	The right-hand side is uniformly bounded in $(x,T)$ on $(\chi-1,\chi+1)\times[0,\infty)$  since $X$ is an OU process under the measure $\overline{\mathbb{P}}.$	
\end{proof}

\begin{lemma}\label{eqn:KO_exponent_u}
	There are a number $u>1$ and an open neighborhood $I_\chi$ of $\chi$ such that 
	\[
			\Gamma_u(x,T):=\mathbb{E}^{\mathbb{Q}}\Bigl[\frac{1}{\phi^{u}({X}_T)}\, e^{u\int_0^Tf(X_s,s;T)\,ds}\Bigr]
		\]
	is uniformly bounded on $I_\chi\times [0,\infty).$
\end{lemma}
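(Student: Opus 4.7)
The plan is to separate the two factors in $\Gamma_u(x,T)$, namely the reciprocal eigenfunction $1/\phi^u(X_T) = e^{\frac{u}{2}BX_T^2 + uC X_T}$ (which grows like a Gaussian in $X_T$) and the exponential functional $e^{u\int_0^T f(X_s,s;T)\,ds}$. For the latter, I first observe that because $0<q<1$ (since $p<0$), the coefficient $-\tfrac{q}{2}(1-q)$ appearing in the definition of $f$ in Eq.\eqref{eqn:f_OU} is strictly negative, so $f(x,t;T)\le 0$ and consequently $e^{u\int_0^T f(X_s,s;T)\,ds}\le 1$ for every $u>0$. Thus it suffices to establish that
\[
\mathbb{E}^{\mathbb{Q}}\bigl[e^{\frac{u}{2}BX_T^2+uCX_T}\bigr]
\]
is uniformly bounded for $(x,T)$ in $I_\chi\times[0,\infty)$ for some $u>1$.

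Next, I would invoke Lemma \ref{lem:measure_change_KO} with $h=e^{\frac{u}{2}BX_T^2+uCX_T}$ to pass from $\mathbb{Q}$ to $\overline{\mathbb{P}}$, which yields
\[
\Gamma_u(x,T)\le c\Bigl(\mathbb{E}^{\overline{\mathbb{P}}}\bigl[e^{\frac{ur}{2}BX_T^2+urCX_T}\bigr]\Bigr)^{1/r}
\]
for the constants $c,r>1$ produced by that lemma. Under $\overline{\mathbb{P}}$, Eq.\eqref{eqn:OU_bar_P_dynamics} shows that $X$ is a re-parametrised Ornstein--Uhlenbeck process with mean reversion speed $\alpha_4>0$; hence $X_T$ is Gaussian with mean $m_T(x)=x\,e^{-\alpha_4 T}+a(1-e^{-\alpha_4 T})$ (where $a=\alpha_1\alpha_3/\alpha_4^2$) and variance $v_T^2=\tfrac{\sigma^2}{2\alpha_4}(1-e^{-2\alpha_4 T})\le\tfrac{\sigma^2}{2\alpha_4}$, uniformly in $T\ge0$.

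The third step is then the standard Gaussian moment generating function computation: for $Y\sim N(m,v^2)$ and $\lambda<\tfrac{1}{2v^2}$, the expectation $\mathbb{E}[e^{\lambda Y^2+\mu Y}]$ equals $(1-2\lambda v^2)^{-1/2}\exp\!\bigl(\tfrac{(m+\mu v^2)^2\cdot\ldots}{\ldots}\bigr)$, which is continuous in $(\lambda,\mu,m,v^2)$. With $\lambda=urB/2$, the finiteness condition reads $urB\sigma^2/(2\alpha_4)<1$, equivalently
\[
ur<\frac{2\alpha_4}{B\sigma^2}=\frac{2\alpha_4\alpha_2}{(\alpha_4-\alpha_1)\sigma^2}.
\]
Since $\alpha_4>|\alpha_1|$ from the definition of $\alpha_4$ in Eq.\eqref{eqn:alpha}, and since $\alpha_2=\sigma_1^2+\sigma_2^2/(1-q)\ge\sigma^2$ (because $0<q<1$), the right-hand side strictly exceeds~$1$. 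I would therefore pick $u>1$ sufficiently close to $1$ so that this condition holds, after which the Gaussian formula combined with the uniform bounds on $m_T(x)$ and $v_T^2$ (valid for $x$ in any bounded neighbourhood $I_\chi$ of $\chi$) gives the desired uniform bound.

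The main obstacle is to ensure the constants are compatible, i.e., that the product $ur$ can simultaneously be strictly greater than $1$ (for the conclusion to be meaningful) and strictly less than $2\alpha_4\alpha_2/((\alpha_4-\alpha_1)\sigma^2)$. This may require revisiting the parameters $(\delta,r_1,r_2)$ in the proof of Lemma \ref{lem:measure_change_KO}: by choosing $\delta$ large (but still below the threshold of Lemma \ref{lem:OU_bdd}) and $r_2=\sqrt{\delta}$, $r_1=\delta/(\sqrt{\delta}-1)$, one obtains $r=\delta/(\sqrt{\delta}-1)^2$, which can be made close to $1$ provided the admissible range of $\delta$ is wide enough. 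Verifying that $\delta$ can be taken large enough so that the resulting $r$ lies in the required range is the only delicate numerical check; once done, the rest of the argument is a routine Gaussian computation.
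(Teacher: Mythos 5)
Your plan departs from the paper's proof at the measure-change step. The paper does not pass to $\overline{\mathbb{P}}$ at all: it observes that the $\mathbb{Q}$-dynamics \eqref{eqn:X_under_Q} is a linear SDE with deterministic time-dependent coefficients, so $X_T$ is still exactly Gaussian under $\mathbb{Q}$, writes down its mean $m_T$ and variance $v_T^2$, bounds $v_T^2\le \sigma^2/\bigl(2(k+\tfrac{q\mu\sigma_1}{\varsigma}+B\sigma^2)\bigr)$ using $\beta^{\mathbb{Q}}(t)\ge k+\tfrac{q\mu\sigma_1}{\varsigma}+B\sigma^2$, and then uses condition \eqref{eqn:KO_para_condi} to pick $u>1$ close to $1$ so that the Gaussian integral converges uniformly. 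Your first step ($f\le 0$, reduce to $\mathbb{E}^{\mathbb{Q}}[e^{\frac{u}{2}BX_T^2+uCX_T}]$) and your final Gaussian computation agree with the paper; the detour through Lemma \ref{lem:measure_change_KO} is what introduces the problem.

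The gap is precisely the ``delicate numerical check'' you defer, and it is not merely delicate --- it need not be satisfiable. The exponent $r>1$ delivered by Lemma \ref{lem:measure_change_KO} is not a free parameter: in its proof $\frac1r=1-\frac{1}{r_1}-\frac{1}{r_2}$ with $\delta=r_1(r_2-1)$ subject to $\tfrac12\delta q^2$ lying below the integrability threshold of Lemma \ref{lem:OU_bdd}, which is a fixed function of the model parameters. Driving $r$ toward $1$ forces $\delta\to\infty$, so when the admissible range of $\delta$ is narrow (e.g.\ $(\alpha_4-\alpha_1)/(\alpha_4+\alpha_1)$ large, or $q$ close to $1$), the smallest attainable $r$ can exceed your threshold $2\alpha_4/(B\sigma^2)$; in that case $\mathbb{E}^{\overline{\mathbb{P}}}\bigl[e^{\frac{ur}{2}BX_T^2+urCX_T}\bigr]=+\infty$ and the H\"older bound is vacuous. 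Nothing in A\ref{assume:SDE_X} -- \ref{assume:Q} or in \eqref{eqn:KO_para_condi} excludes this regime, so the argument does not close as written. The repair is to drop the measure change and compute the law of $X_T$ directly under $\mathbb{Q}$ as the paper does; there the only requirement for some $u>1$ is $k+\frac{q\mu\rho\sigma}{\varsigma}+\frac{B\sigma^2}{2}>0$, which is exactly the standing hypothesis \eqref{eqn:KO_para_condi}. (Lemma \ref{lem:measure_change_KO} is still needed elsewhere --- e.g.\ for the moment bounds in Lemma \ref{lem:KO_finite_power} --- but there only some $r>1$ is required, not a quantitatively small one.)
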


\begin{proof}
	Since the function  $f$ is nonpositive  as one can see in Eq.\eqref{eqn:f_OU}, it suffices to show that there is a number $u>1$ such that
	\[
			\mathbb{E}^{\mathbb{Q}}\Bigl[\frac{1}{\phi^{u}({X}_T)}\Bigr]=\mathbb{E}^{\mathbb{Q}}\Bigl[\frac{1}{\phi^{u}({X}_T)} \, \Big\vert \,  X_0=x\Bigr]
		\]
	is uniformly bounded in $(x,T)$ on $(\chi-1,\chi+1)\times [0,\infty).$ Define
	\[
			\beta^\mathbb{Q}(t):=k+\frac{q\mu\sigma_1}{\varsigma}+B\sigma^2+\frac{q\sigma_2^2 }{1-q}\beta(t), \qquad \gamma^\mathbb{Q}(t):=k\overline{m}-C\sigma^2-\frac{q\sigma_2^2 }{1-q}  \gamma(t),
		\]
	then the $\mathbb{Q}$-dynamics of $X$ is 
	\begin{equation}  
			dX_t = \bigl( \gamma^\mathbb{Q}(T-t) -	\beta^\mathbb{Q}(T-t)X_t \bigr)\,dt+\sigma \,dB_{t}, \qquad X_0=x
		\end{equation}
	for $0\leq t\leq T$. Solving this SDE, it follows that
	\[
			X_T=xe^{-\int_0^T\beta^\mathbb{Q}(T-s)\,ds}+e^{-\int_0^T\beta^\mathbb{Q}(T-u)\,du}\int_0^T \gamma^\mathbb{Q}(T-s)e^{\int_{0}^s\beta^\mathbb{Q}(T-u)\,du }\,ds+\sigma e^{-\int_0^T\beta^\mathbb{Q}(T-u)\,du} \int_0^Te^{\int_{0}^s\beta^\mathbb{Q}(T-u)\,du }\,dB_s.
		\]
	The random variable $X_T$ is normally distributed with mean
	\[
			m_T = xe^{-\int_0^T\beta^\mathbb{Q}(T-s)\,ds}+e^{-\int_0^T\beta^\mathbb{Q}(T-s)\,ds}\int_0^T \gamma^\mathbb{Q}(T-s)e^{\int_{0}^s\beta^\mathbb{Q}(T-u)\,du }\,ds = xe^{-\int_0^T\beta^\mathbb{Q}(s)\,ds}+\int_0^T \gamma^\mathbb{Q}(s)e^{-\int_{0}^s\beta^\mathbb{Q}(u)\,du }\,ds 
		\]
	and variance
	\begin{align*}
			v_T^2 =\sigma^2 e^{-2\int_0^T\beta^\mathbb{Q}(T-u)\,du} \int_0^Te^{2\int_{0}^s\beta^\mathbb{Q}(T-u)\,du }\,ds = \sigma^2 \int_0^Te^{-2\int_{0}^{s}\beta^\mathbb{Q}(u)\,du }\,ds. 
		\end{align*}  
	In addition, it is easy to check the limits exist, i.e.,
	\[
			m_\infty:=\lim_{T\rightarrow \infty}m_T=\int_0^\infty \gamma^\mathbb{Q}(s)e^{-\int_{0}^s\beta^\mathbb{Q}(u)\,du }\,ds, \qquad v_\infty^2:=\lim_{T\rightarrow \infty}v_T^2=\sigma^2 \int_0^\infty e^{-2\int_{0}^{s}\beta^\mathbb{Q}(u)\,du }\,ds.
		\]
	The $\mathbb{Q}$-density function of $X_T$ is
	\[
			\frac{1}{(2\pi v_T^2)^{1/2}}e^{-\frac{1}{2}\frac{(x-m_T)^2}{v_T^2}},
		\]
	thus
	\begin{equation}
		\label{eqn:KO_Q_density}
			\mathbb{E}^{\mathbb{Q}}\Bigl[\frac{1}{\phi^{u}({X}_T)}\Bigr] = \mathbb{E}^{\mathbb{Q}}\Bigl[e^{\frac{1}{2}uBX_T^2+uCX_T}\Bigr] = \frac{1}{(2\pi v_T^2)^{1/2}}\int_{-\infty}^\infty e^{\frac{1}{2}uBz^2+uCz-\frac{1}{2}\frac{(z-m_T)^2}{v_T^2}}\,dz.
		\end{equation}	
	Observe that $\beta^\mathbb{Q}(t)\geq k+\frac{q\mu\sigma_1}{\varsigma}+B\sigma^2.$ We have
	\[
			v_T^2\leq v_\infty^2=\sigma^2 \int_0^\infty e^{-2\int_{0}^{s}\beta^\mathbb{Q}(u)\,du }\,ds \leq \sigma^2 \int_0^\infty e^{-2(k+\frac{q\mu\sigma_1}{\varsigma}+B\sigma^2)s }\,ds=\frac{\sigma^2}{2(k+\frac{q\mu\sigma_1}{\varsigma}+B\sigma^2)} .
		\]
	The integral in Eq.\eqref{eqn:KO_Q_density} satisfies
	\begin{equation}
		\label{eqn:density_KO_estimate}
			\int_{-\infty}^\infty e^{\frac{1}{2}uBz^2+uCz-\frac{1}{2}\frac{(z-m_T)^2}{v_T^2}}\,dz \leq  \int_{-\infty}^\infty e^{\frac{1}{2}uBz^2+uCz-\frac{1}{\sigma^2}(k+\frac{q\mu\sigma_1}{\varsigma}+B\sigma^2)(z-m_T)^2}\,dz.
		\end{equation}
	Using the condition $k+\frac{q\mu\sigma_1}{\varsigma}+\frac{B\sigma^2}{2}>0,$ one can choose a small $u>1$ such that the right-hand side is uniformly bounded in $(x,T)$ on $(\chi-1,\chi+1)\times [0,\infty).$
\end{proof}

\subsectionfont{\scriptsize}
\subsection{Sensitivity with respect to the initial volatility}
\label{app:KO_initial}

The purpose of this section is to prove the following proposition, which yields the first statement of Theorem \ref{thm:KO_asymp}.

\begin{prop}\label{prop:delta_KO}
	For the  Kim--Omberg model presented in Eq.\eqref{eqn:KO_model}, the long-term sensitivity with respect to the initial value of the volatility is
	\[
			\lim_{T\rightarrow\infty}\frac{\partial }{\partial \chi}\ln v(\chi,T)=-B\chi-C.
		\]
\end{prop}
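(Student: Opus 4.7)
The plan is to apply Theorem \ref{thm:delta} and reduce the verification of its hypotheses to checking the four integrability conditions of Proposition \ref{prop:delta}. Assumptions A\ref{assume:SDE_X}--A\ref{assume:Q} have already been established for the Kim--Omberg model in Appendix \ref{app:KO}, and since $\phi(x)=e^{-\frac{1}{2}Bx^2-Cx}$ gives $\phi'(\chi)/\phi(\chi) = -B\chi - C$, the claimed limit is exactly the conclusion of Theorem \ref{thm:delta}. It therefore suffices to verify conditions (i)--(iv) of Proposition \ref{prop:delta} on some open neighborhood $I_\chi$ of $\chi$.

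Condition (i) is exactly Lemma \ref{eqn:KO_exponent_u}, and condition (ii) reduces, via the linearity of $\phi'/\phi$ in $x$, to a uniform bound on $\mathbb{E}^{\mathbb{Q}}|BX_T+C|^v$, which follows from Lemma \ref{lem:KO_finite_power} by choosing any even integer $v$. The structural simplification making conditions (iii) and (iv) tractable is that $\sigma_1$ and $\sigma_2$ are \emph{constants} in the Kim--Omberg model, so the first variation equation Eq.\eqref{eqn:1st_var_proc} collapses to the deterministic ODE $dY_t = \kappa_x(X_t,t;T) Y_t\,dt$. From Eq.\eqref{eqn:OU_kappa} one reads off $\kappa_x(x,t;T) = -\beta^{\mathbb{Q}}(T-t)$ with $\beta^{\mathbb{Q}}(t) := \alpha_1 + B\sigma^2 + \frac{q\sigma_2^2}{1-q}\beta(t)$, and therefore
\[
Y_t \;=\; \exp\Bigl(-\int_0^t \beta^{\mathbb{Q}}(T-s)\,ds\Bigr).
\]
Since $\beta^{\mathbb{Q}}(u)\to\alpha_4>0$ as $u\to\infty$, one can find $T_0$ and $c>0$ so that $\beta^{\mathbb{Q}}(u)\ge c$ for $u\ge T_0$; combined with the boundedness of $\beta^{\mathbb{Q}}$ on $[0,T_0]$ this yields $Y_T \le C_0 e^{-cT}$ for all $T$ sufficiently large, immediately giving condition (iii).

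For condition (iv), the explicit expression Eq.\eqref{eqn:f_OU} for $f$ together with the convergence rates Eq.\eqref{eqn:conv_rate_beta_gamma} produce the bound $|f_x(X_s,s;T)| \le C_1 e^{-2\alpha_4(T-s)}(|X_s|+1)$ for some constant $C_1$. Combining this with the deterministic bound on $Y_s$ and applying Minkowski's inequality together with Lemma \ref{lem:KO_finite_power} gives
\[
\Bigl(\mathbb{E}^{\mathbb{Q}}\Bigl(\int_0^T |f_x(X_s,s;T)Y_s|\,ds\Bigr)^{m}\Bigr)^{1/m} \;\le\; C_2 \int_0^T e^{-2\alpha_4(T-s)} Y_s \,ds,
\]
and splitting the last integral over $[0,T/2]$ (where the factor $e^{-2\alpha_4(T-s)}\le e^{-\alpha_4 T}$ is small) and $[T/2,T]$ (where $Y_s$ is exponentially small) shows that it is $O(e^{-c'T})$ for some $c'>0$. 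This establishes (iv), and Theorem \ref{thm:delta} delivers the claimed limit. The main obstacle is condition (iv), where one must simultaneously control the two exponential decays, one in $T-s$ coming from $f_x$ and one in $s$ coming from $Y_s$; but under the mean-reversion condition Eq.\eqref{eqn:KO_para_condi} (which also underpins Lemma \ref{eqn:KO_exponent_u}) both rates are strictly positive, so no delicate cancellation is needed and the argument closes.
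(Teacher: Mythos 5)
Your proof is correct and follows essentially the same route as the paper: reduce to Theorem \ref{thm:delta} via Proposition \ref{prop:delta}, invoke Lemmas \ref{eqn:KO_exponent_u} and \ref{lem:KO_finite_power} for (i)--(ii), exploit the determinism of the first variation process for (iii), and combine the two exponential decays for (iv). The only (harmless) deviations are technical: the paper bounds the decay rate of $Y$ directly by $\nu=k+\tfrac{q\mu\sigma_1}{\varsigma}+B\sigma^2>0$ rather than via the limit $\beta^{\mathbb{Q}}\to\alpha_4$, and in (iv) it uses Jensen's inequality with an explicit computation where you use Minkowski's integral inequality and a split of the time integral at $T/2$.
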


\begin{proof} 
	By Theorem \ref{thm:delta}, it suffices to prove that the expectation $\mathbb{E}^{\mathbb{Q}}\bigl[\frac{1}{\phi(X_T)}\, e^{\int_0^Tf(X_s,s;T)\,ds} \, \big\vert \, X_0=x\bigr]$ is continuously differentiable in $x,$ and 
	\[
			\frac{\partial }{\partial x}\mathbb{E}^{\mathbb{Q}}\Bigl[\frac{1}{\phi(X_T)}\, e^{\int_0^Tf(X_s,s;T)\,ds} \, \Big\vert \, X_0=x\Bigl]
		\]
	converges to zero as $T\to\infty.$ To prove this, we apply  Proposition  \ref{prop:delta}. Condition (i) of this proposition was proved in Lemma \ref{eqn:KO_exponent_u}. For (ii), we fix any $v>1.$ By Lemma \ref{lem:KO_finite_power}, it follows that 
	\[
				\mathbb{E}^{\mathbb{Q}} \biggl\vert\frac{\phi'(X_T)}{\phi(X_T)}\biggr\vert^{v}= \mathbb{E}^{\mathbb{Q}} \vert BX_T+C \vert^{v} 
		\]
	is uniformly bounded in $(x,T)$ on $(\chi-1,\chi+1)\times[0,\infty).$ 
	To show (iii), we calculate the first variation process $Y$ of $X$ given  Eq.\eqref{eqn:X_under_Q}. Then   $Y_t=Y_{t;T}$ satisfies  
	\[
			dY_t=-\Bigl(k+\frac{q\mu\sigma_1}{\varsigma}+B\sigma^2+\frac{q\sigma_2^2}{1-q}\beta(T-t)\Bigr)Y_t\,dt, \qquad Y_0=1, \qquad 0\leq t\leq T,
		\]
	which is a deterministic process. It follows that
	\[
			Y_{t;T}=e^{-(k+\frac{q\mu\sigma_1}{\varsigma}+B\sigma^2)t-\frac{q\sigma_2^2}{1-q}\int_0^t\beta(T-s)\,ds}.
		\]
	By direct calculation, for any fixed $w>1,$  it is clear that 
	\[
			\lim_{T\to\infty}\mathbb{E}^\mathbb{Q} \vert Y_{T;T} \vert^w = \lim_{T\to\infty}e^{-w(k+\frac{q\mu\sigma_1}{\mu}+B\sigma^2)T-w\frac{q\sigma_2^2}{1-q}\int_0^T\beta(T-s)\,ds}=0
		\]
	since $k+\frac{q\mu\sigma_1}{\mu}+B\sigma^2>0$ and $\beta(\cdot)>0.$  
	
	We now consider (iv). By using Eq.\eqref{eqn:conv_rate_beta_gamma}, it can be easily shown that there are positive constants $c_1$ and $c_2$ such that  
	\[
			\bigl \vert f_x(x,t;T) \bigr\vert =\frac{q\sigma_2^2}{1-q}\biggl\vert \Bigl(\bigl(B-\beta(T-t)\bigr)x+\bigl(C-\gamma(T-t)\bigr)\Bigr)\Bigl(B-\beta(T-t)\Bigr)\biggr\vert \leq c_1e^{-c_2(T-t)}(\vert x \vert+1).
		\]
	By using $Y_{t;T}\leq e^{-\nu t}$ where $\nu:=k+\frac{q\mu\sigma_1}{\varsigma}+B\sigma^2,$ we obtain that for any $m>1$
	\begin{equation}
			\mathbb{E}^{\mathbb{Q}}\biggl[\biggl(\int_0^T \vert f_x(X_s,s;T)Y_{s;T} \vert \,ds\biggr)^{m}  \biggr] \leq c_1^m T^{m-1}e^{-c_2mT}\int_0^T  e^{(c_2-\nu) ms}\mathbb{E}^{\mathbb{Q}}\bigl[(\vert X_s\vert +1)^m\bigr]\,ds 
		\end{equation} 
	by Jensen's inequality. Using Lemma \ref{lem:KO_finite_power}, we observe that for each $m>1,$ the expectation $\mathbb{E}^{\mathbb{Q}}[( \vert X_s \vert +1)^m]$ is uniformly  bounded in  $s\geq0$ by a positive constant $C_m.$ Thus,
	\begin{equation}
		\label{KO_delta_condi_4}
			\mathbb{E}^{\mathbb{Q}}\biggl[\biggl(\int_0^T \vert f_x(X_s,s;T)Y_{s;T} \vert \,ds\biggr)^{m} \biggr] \leq  \frac{c_1^m C_m}{(c_2-\nu)m}   T^{m-1}  \Bigl(e^{-\nu mT}-e^{-c_2mT}\Bigr)\to 0
		\end{equation}
	as $T\to\infty.$ Finally, conditions (ii), (iii), (iv) in Proposition \ref{prop:delta} hold true for arbitrary $v,w,m>1,$ and (i) holds for some $u>1,$ so we obtain the desired result.	
\end{proof}

\subsectionfont{\scriptsize}
\subsection{Sensitivities with respect to \texorpdfstring{$k,$ $\overline{m},$ $\mu,$ $\varsigma$}{kmms} and \texorpdfstring{$\rho$}{r}}
\label{sec:sen_KO_k_b}

We compute the long-term sensitivity with respect to the perturbation of $k.$ Those  with respect to the parameters $\overline{m},$ $\mu,$ $\varsigma$ and $\rho$ can be calculated in a similar way because all these parameters affect the functionals $\phi,$ $f$ and the drift of $X$ but not the volatility of $X$ as seen in the $\mathbb{Q}$-dynamics  of $X$
\begin{equation} 
	dX_t = \biggl(k\overline{m}-C\sigma^2-\frac{q(1-\rho^2)\sigma^2 }{1-q}  \gamma(T-t) -\Bigl(k+\frac{q\mu\rho\sigma}{\varsigma}+B\sigma^2+\frac{q(1-\rho^2)\sigma^2}{1-q}\beta(T-t)\Bigr)X_t \biggr)\,dt+\sigma\,dB_{t}
\end{equation}
for $0\leq t\leq T$. The five functions in B\ref {bassume:perturb} and B\ref{bassume:HS_eps} are 
\begin{align*}
	m_\epsilon(x)=(k+\epsilon) (\overline{m} - x), \qquad  \sigma_{1,\epsilon}(x)=\sigma_1, \qquad \sigma_{2,\epsilon}(x)=\sigma_2, \qquad b_\epsilon(x)=\mu x, \qquad \varsigma_\epsilon(x)=\varsigma
\end{align*}  
and it is easy to check that they satisfy assumptions B\ref {bassume:perturb} and B\ref{bassume:HS_eps}. Observe that 
\[
	\frac{\partial }{\partial \epsilon}\Big\vert_{\epsilon=0}\ln v_\epsilon(\chi,T)=\frac{\partial }{\partial k}\ln v_0(\chi,T)=\frac{\partial }{\partial k}\ln v(\chi,T),
	\]
thus for the rest of this section we use $\frac{\partial }{\partial k}$ instead of $\frac{\partial }{\partial \epsilon}\vert_{\epsilon=0}.$

\begin{lemma}\label{lem:KO_finite_inte_power}
	Let $\alpha>0$ and $\ell>0.$ The expectation 
	\[
			\mathbb{E}^{\mathbb{Q}}\biggl[\biggl(	\int_0^Te^{-\alpha(T-s)}X_s^2\,ds\biggr)^\ell\biggr]
		\]
	is uniformly bounded in $T$ on $[0,\infty).$
\end{lemma}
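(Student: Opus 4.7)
The plan is to reduce the problem to the simpler Ornstein--Uhlenbeck setting under $\overline{\mathbb{P}}$ by invoking Lemma \ref{lem:measure_change_KO}. Applied to the nonnegative path functional $h(X_{\cdot\wedge T}) = \bigl(\int_0^T e^{-\alpha(T-s)} X_s^2 \, ds\bigr)^\ell$, the lemma yields constants $c>0$ and $r>1$ independent of $T$ such that
\[
\mathbb{E}^{\mathbb{Q}}\biggl[\biggl(\int_0^T e^{-\alpha(T-s)} X_s^2 \, ds\biggr)^{\ell}\biggr] \leq c \biggl(\mathbb{E}^{\overline{\mathbb{P}}}\biggl[\biggl(\int_0^T e^{-\alpha(T-s)} X_s^2 \, ds\biggr)^{r\ell}\biggr]\biggr)^{1/r}.
\]
Under $\overline{\mathbb{P}}$, the process $X$ is the stable Ornstein--Uhlenbeck process displayed just before Lemma \ref{lem:OU_bdd}, with mean-reversion rate $\alpha_4>0$. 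Its marginal distributions are Gaussian with mean and variance uniformly bounded in $s\ge 0$, so for every $p\ge 1$ there exists a constant $K_p$ (depending on $\chi$ and the model parameters only) with $\mathbb{E}^{\overline{\mathbb{P}}}[X_s^{2p}]\le K_p$ for all $s\ge 0$.

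For $r\ell\ge 1$, I would apply Jensen's inequality with respect to the probability measure $M_T^{-1}\,e^{-\alpha(T-s)}\,ds$ on $[0,T]$, where $M_T := \int_0^T e^{-\alpha(T-s)}\,ds = \alpha^{-1}(1-e^{-\alpha T})\le\alpha^{-1}$. This moves the power inside the integral,
\[
\biggl(\int_0^T e^{-\alpha(T-s)} X_s^2 \, ds\biggr)^{r\ell} \le M_T^{r\ell -1}\int_0^T e^{-\alpha(T-s)}\,X_s^{2r\ell}\,ds,
\]
and Fubini combined with the moment bound then gives $\mathbb{E}^{\overline{\mathbb{P}}}[\cdot]\le M_T^{r\ell}K_{r\ell}\le \alpha^{-r\ell}K_{r\ell}$, uniformly in $T$. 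For $0<r\ell<1$ the concavity of $x\mapsto x^{r\ell}$ and Jensen's inequality reduce the estimate to the case of first moments, giving
\[
\mathbb{E}^{\overline{\mathbb{P}}}\biggl[\biggl(\int_0^T e^{-\alpha(T-s)} X_s^2 \, ds\biggr)^{r\ell}\biggr]\le \biggl(\int_0^T e^{-\alpha(T-s)}\mathbb{E}^{\overline{\mathbb{P}}}[X_s^2]\,ds\biggr)^{r\ell}\le (\alpha^{-1}K_1)^{r\ell}.
\]
Combining the two cases and taking the $(1/r)$-th power yields a $T$-independent bound on the original $\mathbb{Q}$-expectation.

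No genuine obstacle arises here; the proof is essentially a bookkeeping exercise that chains together the $\mathbb{Q}\rightsquigarrow\overline{\mathbb{P}}$ measure change of Lemma \ref{lem:measure_change_KO}, the uniform Gaussian moment bounds available for the stationary OU process under $\overline{\mathbb{P}}$, and the integrability $\int_0^T e^{-\alpha(T-s)}\,ds\le \alpha^{-1}$. The only place one has to be slightly careful is in handling the case $r\ell<1$ separately via concavity, since the weighted Jensen step used in the $r\ell\ge 1$ case goes the wrong way otherwise.
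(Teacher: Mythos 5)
Your proof is correct. The opening move is the same as the paper's: both reduce to a $\overline{\mathbb{P}}$-moment via the H\"older/measure-change estimate of Lemma \ref{lem:measure_change_KO}. Where you diverge is in how the resulting expectation $\mathbb{E}^{\overline{\mathbb{P}}}\bigl[\bigl(\int_0^T e^{-\alpha(T-s)}X_s^2\,ds\bigr)^{r\ell}\bigr]$ is controlled. The paper takes an integer $n\geq r\ell$ and deduces the bound from the \emph{exponential} moment estimate of Lemma \ref{lem:our_lemma} via the inequality $\frac{x^n}{n!}\leq e^x$; you instead apply Jensen with respect to the normalized weight $M_T^{-1}e^{-\alpha(T-s)}\,ds$ (splitting the cases $r\ell\geq 1$ and $r\ell<1$) and then use only the uniform boundedness of the Gaussian marginal moments of the OU process under $\overline{\mathbb{P}}$. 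Your route is more elementary: it never needs the exponential-moment machinery or the smallness restriction on $\delta$, it yields explicit constants, and it incidentally avoids the step $y^{r\ell}\leq y^n$ in the paper's argument, which as written requires $y\geq 1$ and so needs a small additional remark there. The paper's route, on the other hand, recycles Lemma \ref{lem:our_lemma}, which is needed elsewhere in the appendix anyway, so nothing extra has to be established. Both arguments are valid; the only cosmetic point in yours is that for non-integer $p$ you should write $\mathbb{E}^{\overline{\mathbb{P}}}\bigl[|X_s|^{2p}\bigr]\leq K_p$ (or equivalently $(X_s^2)^p$), which is what your Jensen step actually produces.
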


\begin{proof} 
	By Lemma \ref{lem:measure_change_KO}, there are  positive numbers $c$ and $r,$ independent of $T,$ such that   
	\[
	\mathbb{E}^{\mathbb{Q}}\biggl[\biggr(\int_0^Te^{-\alpha(T-s)}X_s^2\,ds\biggr)^\ell\biggr] \leq c\,\Biggl( \mathbb{E}^{\overline{\mathbb{P}}}\biggl[\biggr(	\int_0^Te^{-\alpha(T-s)}X_s^2\,ds\biggr)^{r\ell}\biggr]\Biggr)^{1/r}.
	\]
	From Lemma \ref{lem:our_lemma}, we know that
	\[
			\mathbb{E}^{\overline{\mathbb{P}}}\Bigl[ e^{\delta\int_0^Te^{-\alpha(T-s)}X_s^2\,ds}\Bigr]
		\]
	is uniformly bounded in $T$ for sufficiently small $\delta>0.$ Choose $n\in\mathbb{N}$ such that $r\ell\leq n.$	Using the inequality $\frac{x^n}{n!} \leq e^x  $ for $x>0,$
	we have
	\[
			 \frac{\delta^n}{n!}\mathbb{E}^{\overline{\mathbb{P}}}\biggl[\biggl( \int_0^Te^{-\alpha(T-s)}X_s^2\,ds\biggr)^{r\ell}\biggr]\leq \frac{\delta^n}{n!}\mathbb{E}^{\overline{\mathbb{P}}}\biggl[\biggl( 	\int_0^Te^{-\alpha(T-s)}X_s^2\,ds\biggr)^n\biggr]\leq\mathbb{E}^{\overline{\mathbb{P}}}\Bigl[e^{\delta\int_0^Te^{-\alpha(T-s)}X_s^2\,ds}\Bigr].
		\]
	Thus,
	\[
			\mathbb{E}^{\overline{\mathbb{P}}}\biggl[\biggl(\int_0^Te^{-\alpha(T-s)}X_s^2\,ds\biggr)^{r\ell }\biggr]
		\]
	is also uniformly bounded in $T$ on $[0,\infty),$ which gives the desired result.	
\end{proof}
 
\begin{prop}\label{prop:KO_k}
	For the  Kim--Omberg model presented in Eq.\eqref{eqn:KO_model}, the long-term sensitivity with respect to the parameter $k$ is
	\[
			\lim_{T\rightarrow\infty}\frac{1}{T}\frac{\partial }{\partial k}\ln v(\chi,T)=-\frac{\partial\lambda }{\partial k}.
	\]
\end{prop}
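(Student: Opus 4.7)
The plan is to apply Theorem \ref{thm:total_chain}; since the $k$-perturbation affects only the drift function $m_\epsilon$ of $X$ and leaves $\sigma_1,\sigma_2$ untouched, condition (iii) of that theorem will in turn be established via the drift-perturbation result Theorem \ref{thm:rho} from Section \ref{sec:rho}. Condition (i) of Theorem \ref{thm:total_chain} is immediate: both $\lambda_\epsilon=-\tfrac12\alpha_2 C_\epsilon^2+\alpha_3 C_\epsilon+\tfrac12\sigma^2 B_\epsilon$ and $\phi_\epsilon(\chi)=\exp(-B_\epsilon\chi^2/2-C_\epsilon\chi)$ depend smoothly on $k$, because $k$ enters the explicit formulas only through $\alpha_{1,\epsilon}=k+\epsilon+q\mu\sigma_1/\varsigma$ and hence through the smooth functions $\alpha_{4,\epsilon},B_\epsilon,C_\epsilon$ defined in Eq.\eqref{eqn:alpha}.

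Condition (ii) will be handled via Proposition \ref{prop:condi_2}. Since $\partial_\eta\phi_\eta(x)$ and $\partial_\eta f_\eta(x,t;T)$ are polynomials in $x$ of degree at most two whose coefficients are smooth in $\eta$ and bounded in $(t,T)$, the required $L^u(\mathbb{Q})$-dominating random variable $G_T$ is produced by combining Lemmas \ref{lem:KO_finite_power} and \ref{eqn:KO_exponent_u} (choosing the neighborhood $I$ small enough that $\hat\phi^{-u}$ still satisfies a Gaussian integrability bound uniform in $T$). The decay $\tfrac1T\partial_\eta\vert_{\eta=0}w_{\eta,0}(\chi,T)\to 0$ then follows from the exponential estimates of Eq.\eqref{eqn:conv_rate_beta_gamma}: since $\partial_\eta f_\eta(x,s;T)$ inherits the factor $(B-\beta(T-s))$ or $(C-\gamma(T-s))$ (or their $\eta$-derivatives, which decay at the same exponential rate), $\mathbb{E}^\mathbb{Q}\bigl[\int_0^T|\partial_\eta f_\eta(X_s,s;T)|\,ds\bigr]$ is bounded uniformly in $T$, and the $\partial_\eta\phi_\eta$ boundary term is handled by the same H\"older argument divided by $T$.

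For condition (iii), we verify the four hypotheses of Theorem \ref{thm:rho}. A direct computation based on Eq.\eqref{eqn:OU_kappa} shows that $\kappa_\epsilon(x,t;T)$ is affine in $x$ with coefficients smooth in $\epsilon$ and uniformly bounded in $(t,T)$, whence $\hat g(x,t;T)\leq a_1|x|+a_2$ for constants $a_1,a_2$ independent of $t,T,\epsilon\in I$. Condition (iv) of Theorem \ref{thm:rho} is a minor extension of Lemma \ref{eqn:KO_exponent_u}: since the strict inequality $k+q\mu\sigma_1/\varsigma+B\sigma^2/2>0$ of Eq.\eqref{eqn:KO_para_condi} persists for $\epsilon$ in a small neighborhood of $0$, replacing $\phi$ by $\hat\phi=\inf_\epsilon\phi_\epsilon$ leaves the Gaussian integral estimated in Eq.\eqref{eqn:density_KO_estimate} convergent uniformly in $T$.

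The main technical obstacle is the moment condition (ii) of Theorem \ref{thm:rho}: $\mathbb{E}^\mathbb{Q}\bigl[\bigl(\int_0^T\hat g^2(X_s,s;T)\,ds\bigr)^{v/2}\bigr]\leq T^v h(T)$ with $h(T)\to 0$. Using $\hat g^2\leq 2a_1^2 X_s^2+2a_2^2$ and Jensen's inequality,
\[
	\mathbb{E}^{\mathbb{Q}}\Bigl[\Bigl(\int_0^T\hat g^2(X_s,s;T)\,ds\Bigr)^{v/2}\Bigr]\leq T^{v/2-1}\int_0^T\mathbb{E}^{\mathbb{Q}}\bigl[(2a_1^2 X_s^2+2a_2^2)^{v/2}\bigr]\,ds\leq C T^{v/2},
\]
where the last inequality uses the uniform polynomial-moment bound of Lemma \ref{lem:KO_finite_power}. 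This yields the admissible choice $h(T)=CT^{-v/2}\to 0$. Condition (iii) of Theorem \ref{thm:rho} is the analogous estimate without the outer $v/2$-power and is immediate, while condition (i) reduces, for each fixed $T$, via the change of measure of Lemma \ref{lem:measure_change_KO} and the OU exponential-moment bounds underlying Lemma \ref{lem:OU_bdd}, to a finite exponential moment of $\int_0^T X_s^2\,ds$ under $\overline{\mathbb{P}}$ for sufficiently small $\epsilon_0=\epsilon_0(T)$. With all hypotheses of Theorems \ref{thm:total_chain} and \ref{thm:rho} verified, Eq.\eqref{eqn:final_eqn} delivers $\lim_{T\to\infty}T^{-1}\partial_k\ln v(\chi,T)=-\partial\lambda/\partial k$.
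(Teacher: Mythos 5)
Your argument is correct and coincides essentially step for step with the paper's own proof: Theorem \ref{thm:total_chain} combined with Theorem \ref{thm:rho} for condition (iii) (using the linear growth bound on $\hat g$, Jensen's inequality together with Lemma \ref{lem:KO_finite_power} to obtain $h(T)=CT^{-v/2}$, Lemma \ref{lem:measure_change_KO} for the exponential moment in condition (i), and the persistence of Eq.\eqref{eqn:KO_para_condi} under small perturbations for the uniform bound on $\hat\Gamma_u$), and Proposition \ref{prop:condi_2} with the exponential decay of Eq.\eqref{eqn:conv_rate_beta_gamma} for condition (ii). The only cosmetic imprecision is that the integral $\int_0^T\vert\partial_\eta f_\eta(X_s,s;T)\vert\,ds$ must be controlled in $L^{v_0}(\mathbb{Q})$ for some $v_0>1$, not merely in $L^1$, in order to survive the H\"older pairing with $\phi_\eta^{-1}e^{\int_0^T f_\eta\,ds}$; your change-of-variable estimate delivers exactly that without modification, as in the paper.
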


\begin{proof}
	To prove this equality, we use Theorem \ref{thm:total_chain}. Condition (i) in Theorem \ref{thm:total_chain} is satisfied trivially. We prove (iii) in Theorem \ref{thm:total_chain} first because some techniques used for (iii) are also used in the proof of (ii). For condition (iii) in Theorem \ref{thm:total_chain}, we apply Theorem \ref{thm:rho}. It can be easily checked that
	\[
			\Bigl\vert \frac{\partial}{\partial k}\kappa (x,t;T)\Bigr\vert \leq  c(\vert x \vert+1)
	\]
	for a positive constant $c$ independent of $t,$ $T$ and $x.$ By choosing sufficiently large $c,$ we can achieve that $\hat{g}(x,t;T) \leq c( \vert x \vert +1)$ holds true for $\hat{g}$ defined in Eq.\eqref{eqn:hats}.

	Then, (i) in Theorem \ref{thm:rho} can be proven as follows. Since $X$ is an OU process under the measure $\overline{\mathbb{P}},$ for each $T>0$ one can choose a positive $\delta=\delta(T)$ such that
	\[
			\mathbb{E}^{\overline{\mathbb{P}}}\Bigl[e^{\delta \int_0^T X_s^2\,ds}\Bigr]
		\]
	is finite. For the positive constant $r$ in Lemma \ref{lem:measure_change_KO}, we define $\epsilon_0=\frac{\delta}{2c^2r},$ then
	\begin{align*} 
			\mathbb{E}^{\mathbb{Q}}\Bigl[e^{\epsilon_0\int_0^T \hat{g}^2(X_s,s;T)\,ds}\Bigr] &\leq \mathbb{E}^{\mathbb{Q}}\Bigl[e^{\epsilon_0c^2\int_0^T ( \vert X_s \vert +1)^2\,ds}\Bigr]  \leq c'\biggl(\mathbb{E}^{\overline{\mathbb{P}}}\Bigl[e^{\epsilon_0c^2r\int_0^T ( \vert X_s \vert +1)^2\,ds}\Bigr] \biggr)^{1/r} \leq c'\biggl(\mathbb{E}^{\overline{\mathbb{P}}}\Bigl[e^{2\epsilon_0c^2r\int_0^T (X_s^2+1)\,ds}\Bigr] \biggr)^{1/r}\\
			&=c'e^{2\epsilon_0c^2T}\biggr(\mathbb{E}^{\overline{\mathbb{P}}}\Bigl[e^{2\epsilon_0c^2r\int_0^T X_s^2\,ds}\Bigr] \biggr)^{1/r} =c'e^{2\epsilon_0c^2T}\biggl(\mathbb{E}^{\overline{\mathbb{P}}}\Bigl[e^{\delta\int_0^T X_s^2\,ds}\Bigr] \biggr)^{1/r} 
		\end{align*}
	where $c'$ is the positive  constant in Lemma \ref{lem:measure_change_KO}. This gives (i) in Theorem \ref{thm:rho}.

	For (ii) in Theorem \ref{thm:rho}, we observe that for any $v\geq 2$
	\begin{align*}
			\mathbb{E}^\mathbb{Q}\biggl[\Bigl(\int_0^T \hat{g}^2(X_s,s;T)\,ds\Bigr)^{v/2}\biggr] & \leq c^v\,\mathbb{E}^\mathbb{Q}\biggl[\Bigl(\int_0^T  ( \vert X_s \vert +1)^2\,ds\Bigr)^{v/2}\biggr] \leq c^vT^{v/2}\,\Biggl(\mathbb{E}^\mathbb{Q}\biggl[\Bigl(\frac{1}{T}\int_0^T  ( \vert X_s \vert+1)^2\,ds\Bigr)^{v/2}\biggr]\Biggr)\\
			&\leq c^vT^{v/2}\,\biggl(\mathbb{E}^\mathbb{Q}\Bigl[ \frac{1}{T}\int_0^T  ( \vert X_s \vert +1)^{v}\,ds\Bigr]\biggr) = c^v T^{{v}/{2}-1}\,\biggl(  \int_0^T  \mathbb{E}^\mathbb{Q}[( \vert X_s \vert +1)^{v}]\,ds\biggr).
		\end{align*}
	By Lemma \ref{lem:KO_finite_power}, the expectation $\mathbb{E}^\mathbb{Q}[( \vert X_s \vert +1)^{v}]$ is uniformly bounded in $s$ by a positive constant, say $C.$ Then 
	\begin{equation}
		\label{eqn:KO_Q_estimate}
			\mathbb{E}^\mathbb{Q}\biggl[\Bigl(\int_0^T \hat{g}^2(X_s,s;T)\,ds\Bigr)^{v/2}\biggr] \leq   c^v T^{{v}/{2}-1}\,\biggl(  \int_0^T  \mathbb{E}^\mathbb{Q}[( \vert X_s \vert +1)^{v}]\,ds\biggr) \leq  c^vCT^{v/2}.
		\end{equation}
	Since the constants $c$ and $C$ do not depend on $T,$ we obtain the desired result. For (iii) in Theorem \ref{thm:rho}, we observe that for $\epsilon_1=1$ 
	\begin{equation}
		\label{eqn:KO_drfit_sen}
			\mathbb{E}^{\mathbb{Q}}\biggl[ \int_0^T\hat{g}^{v+\epsilon_1}(X_s,s;T) \, ds\biggr] \leq   c^{v+1} \int_0^T\mathbb{E}^{\mathbb{Q}}\Bigl[( \vert X_s \vert +1)^{v+1}\Bigr] \, ds,
		\end{equation} 
	and the right-hand side is  finite for each $T\ge0$ because the expectation $\mathbb{E}^{\mathbb{Q}}\bigl[( \vert X_s \vert +1)^{v+1}\bigr]$ is uniformly bounded in $s$ by Lemma \ref{lem:KO_finite_power}.

	For (iv) Theorem \ref{thm:rho}, we want to show that for $u$ with $1/u+1/v=1$ the expectation
	\[
			\mathbb{E}^{\mathbb{Q}}\Bigl[\frac{1}{\hat{\phi}^{u}( {X}_T)}\, e^{u\int_0^T\hat{f}(X_s,s;T)\,ds}\Bigr]
		\]
	is uniformly bounded in $T$ on $[0,\infty).$ However, observe that we proved that (ii) and (iii) in Theorem \ref{thm:rho} hold  true for arbitrary $v\ge2.$ Thus, it is enough to show that such $u>1$ exists. We use the notations $B(k)$ and $C(k)$ to emphasize the dependence of $k$ on the constants $B$ and $C,$ respectively. From Eq.\eqref{eqn:KO_Q_density} and Eq.\eqref{eqn:density_KO_estimate}, we know for a small $u_0>1$ the expectation
	\begin{equation}
		\label{eqn:KO_u_0}
			\mathbb{E}^{\mathbb{Q}}\Bigl[e^{\frac{1}{2}u_0B(k)X_T^2+u_0C(k)X_T}\Bigr]
		\end{equation} 
	is uniformly bounded in $T$ on $[0,\infty).$ Since the two maps $k\mapsto B(k)$ and $k\mapsto C(k)$ are continuous and $\frac{u_0+1}{2}>1,$ by choosing a smaller interval $I$ if necessary, it follows that 
	\[
			\sup_{\epsilon\in I}B(k+\epsilon) \leq \frac{u_0+1}{2}B(k), \qquad \sup_{\epsilon\in I}C(k+\epsilon) \leq \frac{u_0+1}{2}C(k).
		\]
	Then 
	\begin{equation}
		\label{eqn:KO_hat_phi}
			\hat{\phi}(x)=\inf_{\epsilon\in I}e^{-\frac{1}{2}B(k+\epsilon)x^2-C(k+\epsilon)x} \geq e^{-\frac{1}{2}\frac{u_0+1}{2}B(k)x^2-\frac{u_0+1}{2}C(k)x}.
		\end{equation}
	Define 
	\begin{equation}
		\label{eqn:KO_u_hat}
			\hat{u}:=\frac{2u_0}{u_0+1}>1,
		\end{equation}
	then we have
	\begin{equation}
		\label{eqn:hat_phi_hat_u}
			\mathbb{E}^{\mathbb{Q}}\Bigl[\frac{1}{\hat{\phi}^{\hat{u}}( {X}_T)}\, e^{\hat{u}\int_0^T\hat{f}(X_s,s;T)\,ds}\Bigr]\leq \mathbb{E}^{\mathbb{Q}}\Bigl[\frac{1}{\hat{\phi}^{\hat{u}}( {X}_T)}\Bigr]\leq    \mathbb{E}^{\mathbb{Q}}\Bigl[e^{\frac{1}{2}u_0B(k)X_T^2+u_0C(k)X_T}\Bigr]
		\end{equation}
	where for the first inequality we used $\hat{f}\le 0.$ Since the right-hand side is  uniformly bounded in $T$ on $[0,\infty),$ we obtain the desired result. We have now shown all conditions in  Theorem \ref{thm:rho} and thus condition (iii) in Theorem \ref{thm:total_chain} holds true.

	For condition (ii) in Theorem \ref{thm:total_chain}, we first calculate the   partial derivative with respect to the variable  $k$ in $\phi$ and $f$ but not in $X=(X_t)_{t\ge0}.$ To be precise, we  use notation $\phi(x;k)$ and $f(x,t;T;k)$ to emphasize the dependence of $k.$ We want to analyze
	\[
			w_{\eta,\epsilon}(\chi,T)=\mathbb{E}^{\mathbb{Q}_\epsilon}\Bigl[\frac{1}{\phi(X_T^\epsilon;k+\eta)}\, e^{\int_0^Tf(X_s^\epsilon,s;T;k+\eta)\,ds} \Bigr]
		\]
	where the $\mathbb{Q}^\epsilon$-dynamics of $X_t^\epsilon$ satisfies Eq.\eqref{eqn:X_under_Q} with $k$ replaced by $k+\epsilon.$ The equality
	\[
			 \frac{\partial }{\partial \eta} \mathbb{E}^\mathbb{Q}\Bigl[\frac{1}{\phi(X_T^\epsilon;k+\eta)}e^{\int_0^Tf(X_s^\epsilon,s;T;k+\eta)\,ds}\Bigr]= \mathbb{E}^\mathbb{Q}\biggl[\frac{\partial }{\partial \eta}\Bigl(\frac{1}{\phi(X_T^\epsilon;k+\eta)}e^{\int_0^Tf(X_s^\epsilon,s;T;k+\eta)\,ds}\Bigr)\biggr]
		 \]
	and the continuity of this partial derivative in $(\eta,\epsilon)$ on $I^2$ are obtained from Proposition \ref{prop:condi_2} with $g(x,t;T)$ and $G_T$ given below. Observe that
	\[
			\frac{\partial f}{\partial k}(x,t;T;k) = -\frac{q\sigma_2^2}{1-q}\Bigl(\bigl(B-\beta(T-t)\bigr)x+\bigl(C-\gamma(T-t)\bigr)\Bigr)\biggl(\Bigl(\frac{\partial B}{\partial k}-\frac{\partial\beta }{\partial k}(T-t)\Bigr)x+\Bigl(\frac{\partial C}{\partial k}-\frac{\partial\gamma}{\partial k}(T-t)\Bigr)\biggr).
		\]
	We use the notations $\beta(T-t;k),$ $\gamma(T-t;k)$ to emphasize the dependence of $k.$ For a given small open interval $I,$ since $B(k+\eta),C(k+\eta),\frac{\partial B}{\partial k}(k+\eta),\frac{\partial C}{\partial k}(k+\eta)$ are continuous in $\eta$  on $\overline{I}$ and $\beta(T-t;k+\eta),\gamma(T-t;k+\eta),\frac{\partial\beta}{\partial k}(T-t;k+\eta),\frac{\partial\gamma}{\partial k}(T-t;k+\eta)$ are continuous in $(\eta,t)$ on $\overline{I}\times[0,T],$ one can find a positive constant $b_1$  such that for all $(\eta,t)\in \overline{I}\times [0,T]$
	\[
			\biggl\vert\frac{\partial f}{\partial \eta}(x,t;T;k+\eta)\biggr\vert \leq  b_1(x^2+1) =:g(t,x;T).
		\]
	With this function $g,$ condition (i) in Proposition \ref{prop:condi_2} is trivially satisfied. For condition (ii) in Proposition \ref{prop:condi_2}, choose two positive constants $b_2$ and $c_2$ such that for all  $\eta\in \overline{I}$
	\[
			\frac{1}{2}\biggl\vert \frac{\partial B}{\partial  \eta}(k+\eta)\biggr\vert \leq b_2, \qquad \biggl\vert \frac{\partial C}{\partial \eta}(k+\eta)\biggr\vert \leq c_2.
		\]
	Using the function $\hat{\phi}$ in Eq.\eqref{eqn:KO_hat_phi}, we define 
	\[
			G_T:=\frac{1}{\hat{\phi}(X_T)}\Bigl(b_2X_T^2+c_2 \vert X_T \vert \Bigr) + \frac{1}{\hat{\phi}(X_T)}\int_0^Tb_1(X_s^2+1)\,ds.
		\]	
	Then for all  $(\eta,t)\in \overline{I}\times [0,T]$ it follows that
	\begin{equation}
			\frac{1}{\phi^2(X_T;k+\eta)}\biggl\vert \frac{\partial \phi }{\partial \eta}(X_T;k+\eta)\biggr\vert  +\frac{1}{\phi(X_T;k+\eta)}\, \int_0^T\biggl\vert \frac{\partial f}{\partial \eta}(X_s,s;T;k+\eta)\biggr\vert \,ds\leq G_T 
		\end{equation}
	by using that $\hat{\phi}(x)=\inf_{\eta\in I}\phi(x;k+\eta)$ and 
	\[
			\Bigl\vert \frac{\partial \phi }{\partial \eta}(x;k+\eta)\Bigr\vert = \Bigl\vert \frac{1}{2}x^2\frac{\partial B}{\partial  \eta}(k+\eta)+x\frac{\partial C}{\partial \eta}(k+\eta)\Bigr\vert \phi(x;k+\eta)\leq  \Bigl(b_2x^2+c_2 \vert x \vert \Bigr)\phi(x;k+\eta).
		\]
	Recall $u_0>1$ from Eq.\eqref{eqn:KO_u_0} and $\hat{\phi}(x)\ge e^{-\frac{1}{2}\frac{u_0+1}{2}B(k)x^2-\frac{u_0+1}{2}C(k)x}$ and $\hat{u}=\frac{2u_0}{u_0+1}>1$ in Eq.\eqref{eqn:KO_u_hat}. We claim 
	$\mathbb{E}^\mathbb{Q}[G_T^{u_1}]<\infty$ for $u_1=\frac{3u_0+1}{2(u_0+1)}>1,$ which implies condition (ii) in Proposition \ref{prop:condi_2}. Let $\hat{v}$ be such that $\frac{1}{\hat{u}/u_1}+\frac{1}{\hat{v}}=1$, then
	\begin{equation}
			\mathbb{E}^\mathbb{Q}\Bigl[\frac{1}{\hat{\phi}^{u_1}(X_T)}\Bigl(b_2X_T^2+c_2 \vert X_T \vert \Bigr)^{u_1}\Bigr]\leq\biggl(  \mathbb{E}^\mathbb{Q}\Bigl[\frac{1}{\hat{\phi}^{\hat{u}}(X_T)}\Bigr] \biggr)^{{u_1}/{\hat{u}}}\biggl(\mathbb{E}^\mathbb{Q}\Bigl[\Bigl(b_2X_T^2+c_2 \vert X_T \vert \Bigr)^{u_1\hat{v}}\Bigr]\biggr)^{1/\hat{v}}.
		\end{equation}
	The two expectations on the right-hand side are finite by  Eq.\eqref{eqn:hat_phi_hat_u} and Lemma \ref{lem:KO_finite_power}. In a similar way, we have
	\begin{align*}
			\mathbb{E}^\mathbb{Q}\biggl[\frac{1}{\hat{\phi}^{u_1}(X_T)}\Bigl(b_1\int_0^T(X_s^2+1)\,ds\Bigr)^{u_1}\biggr]
			&\leq\biggl(  \mathbb{E}^\mathbb{Q}\Bigl[\frac{1}{\hat{\phi}^{\hat{u}}(X_T)}\Bigr] \biggr)^{{u_1}/{\hat{u}}}\biggl(\mathbb{E}^\mathbb{Q}\Bigl[\Bigl(b_1\int_0^T(X_s^2+1)\,ds\Bigr)^{u_1\hat{v}}\Bigr]\biggr)^{1/\hat{v}}\\
			&\leq T^{u_1-\frac{1}{\hat{v}}}\biggl(  \mathbb{E}^\mathbb{Q}\Bigl[\frac{1}{\hat{\phi}^{\hat{u}}(X_T)}\Bigr] \biggr)^{{u_1}/{\hat{u}}} \biggl(\mathbb{E}^\mathbb{Q}\Bigl[b_1^{u_1\hat{v}}\int_0^T(X_s^2+1)^{u_1\hat{v}}\,ds\Bigr]\biggr)^{1/\hat{v}}\\
			&\leq T^{u_1-\frac{1}{\hat{v}}}\biggl(  \mathbb{E}^\mathbb{Q}\Bigl[\frac{1}{\hat{\phi}^{\hat{u}}(X_T)}\Bigr] \biggr)^{{u_1}/{\hat{u}}}\biggl(b_1^{u_1\hat{v}}\int_0^T\mathbb{E}^\mathbb{Q}\bigl[(X_s^2+1)^{u_1\hat{v}}\bigr]\,ds\biggr)^{1/\hat{v}}.
		\end{align*}
	Since $\mathbb{E}^\mathbb{Q}[(X_s^2+1)^{u_1\hat{v}}]$ is uniformly bounded in $s$ on $[0,\infty)$ by\ref{lem:KO_finite_power}, the right-hand side is finite. Hence, $\mathbb{E}^\mathbb{Q}[G_T^{u_1}]<\infty.$

	The convergence 
	\[
			\lim_{T\rightarrow \infty}\frac{1}{T}\frac{\partial }{\partial \eta}\Big\vert_{\eta=0}\mathbb{E}^\mathbb{Q}  \Bigl[\frac{1}{\phi(X_T;k+\eta)}e^{\int_0^Tf(X_s,s;T;k+\eta)\,ds}\Bigr] = 0
		\]
	can be shown as follows. The partial derivative with respect to $\eta$ satisfies
	\begin{align*}
			& \phantom{==} \biggl\vert \frac{\partial }{\partial \eta}\Big\vert_{\eta=0}\Bigl(\frac{1}{\phi(X_T;k+\eta)}e^{\int_0^Tf(X_s,s;T;k+\eta)\,ds}\Bigr)\biggr\vert \\	
			& \leq e^{\frac{1}{2}BX_T^2+CX_T+\int_0^Tf(X_s,s;T;k)\,ds} \Bigl\vert \frac{1}{2}X_T^2\frac{\partial B}{\partial k}+X_T\frac{\partial C}{\partial k} \Bigr\vert + e^{\frac{1}{2}BX_T^2+CX_T+\int_0^Tf(X_s,s;T;k)\,ds}\biggl\vert \int_0^T\frac{\partial f}{\partial \eta}\Big\vert_{\eta=0}(X_s,s;T;k+\eta)\,ds\biggr\vert\\
			& \leq e^{\frac{1}{2}BX_T^2+CX_T} \Bigl\vert \frac{1}{2}X_T^2\frac{\partial B}{\partial k}+X_T\frac{\partial C}{\partial k}\Bigr\vert +e^{\frac{1}{2}BX_T^2+CX_T}\biggl\vert \int_0^T\frac{\partial f}{\partial \eta}\Big\vert_{\eta=0}(X_s,s;T;k+\eta)\,ds\biggr\vert.
		\end{align*}
	By the triangle inequality and the H\"older inequality, for $u_0$ in Eq.\eqref{eqn:KO_u_0} and $v_0$ satisfying $1/u_0+1/v_0=1$ it follows that 
	\begin{align*}
			\mathbb{E}^\mathbb{Q}\biggl\vert \frac{\partial }{\partial \eta}\Big\vert_{\eta=0}\Bigl(\frac{1}{\phi(X_T;k+\eta)}e^{\int_0^Tf(X_s,s;T;k+\eta)\,ds}\Bigr)\biggr\vert 
			& \leq \Bigl(\mathbb{E}^\mathbb{Q}e^{\frac{1}{2}u_0BX_T^2+u_0CX_T}\Bigr)^{1/u_0}\Bigl(\mathbb{E}^\mathbb{Q}\Bigl\vert\frac{1}{2}X_T^2\frac{\partial B}{\partial k}+X_T\frac{\partial C}{\partial k}\Bigr\vert^{v_0}\Bigr)^{1/v_0}\\
			& \phantom{=:} +\Bigl(\mathbb{E}^\mathbb{Q}e^{\frac{1}{2}u_0BX_T^2+u_0CX_T}\Bigr)^{1/u_0}\biggl(\mathbb{E}^\mathbb{Q}\Bigl\vert\int_0^T\frac{\partial f}{\partial \eta}\Big\vert_{\eta=0}(X_s,s;T;k+\eta)\,ds\Bigr\vert^{v_0}\biggr)^{1/v_0}.
	\end{align*}
	By the choice of $u_0,$ the expectation  $\mathbb{E}^\mathbb{Q}e^{\frac{1}{2}u_0BX_T^2+u_0CX_T}$ is  uniformly bounded in $T.$ The expectation $\mathbb{E}^\mathbb{Q} \vert \frac{1}{2}X_T^2\frac{\partial B}{\partial k}+X_T\frac{\partial C}{\partial k} \vert^{v_0}$ is also uniformly bounded in $T$ by  Lemma \ref{lem:KO_finite_power}. Now, we show that the expectation $\mathbb{E}^\mathbb{Q} \vert \int_0^T\frac{\partial f}{\partial \eta}\vert_{\eta=0}(X_s,s;T;k+\eta)\,ds \vert^{v_0}$ is uniformly bounded in $T$. By direct calculation, one can choose positive constants $c$ and $d,$ which are independent of $s$ and $T$  but are dependent of $k,$ such that
	\[
			\biggl\vert \frac{\partial f }{\partial \eta}\Big\vert_{\eta=0}(x,s;T;k+\eta)\biggr\vert \leq de^{-c(T-s)}(x^2+1).
		\] 
	Using the change of variable $u=e^{c s},$ observe that
	\begin{align}
		\label{eqn:KO_estimate_change_variable}
			\mathbb{E}^\mathbb{Q}\Bigl[\Bigl(\int_0^Te^{cs}(X_s^2+1)\,ds\Bigr)^{v_0}\Bigr] & =  \mathbb{E}^\mathbb{Q}\Bigl[\Bigl(\int_1^{e^{cT}}\frac{1}{c}(X_{(\ln u)/c}^2+1)\,du\Bigr)^{v_0}\Bigr] = \frac{(e^{cT}-1)^{v_0}}{c^{v_0}} \mathbb{E}^\mathbb{Q}\Bigl[\Bigl(\frac{1}{e^{cT}-1}\int_1^{e^{cT}} (X_{(\ln u)/c}^2+1)\,du\Bigr)^{v_0}\Bigr] \nonumber\\
			& \leq \frac{(e^{cT}-1)^{v_0}}{c^{v_0}} \mathbb{E}^\mathbb{Q}\Bigl[\frac{1}{e^{cT}-1}\int_1^{e^{cT}} (X_{(\ln u)/c}^2+1)^{v_0}\,du\Bigr] \nonumber\\
			& = \frac{(e^{cT}-1)^{v_0-1}}{c^{v_0}}\int_1^{e^{cT}} \mathbb{E}^\mathbb{Q}\Bigl[\bigl(X_{(\ln u)/c}^2+1\bigr)^{v_0}\Bigr]\,du. 
		\end{align}
	By Lemma \ref{lem:KO_finite_power}, there is a positive constant $C$ such that $\mathbb{E}^\mathbb{Q}[(X_{(\ln u)/c}^2+1)^{v_0}]\leq C$ for all $u\geq1.$ Thus,
	\begin{align}
		\label{eqn:KO_estimate_change_variable_2}
			\mathbb{E}^\mathbb{Q}\biggl\vert \int_0^T\frac{\partial f}{\partial \eta}\Big\vert_{\eta=0}(X_s,s;T;k+\eta)\,ds\biggr\vert^{v_0} &\leq d^{v_0}e^{-c v_0 T}\mathbb{E}^\mathbb{Q}\Bigl[\Bigl(\int_0^Te^{c s} (X_s^2+1) \,ds \Bigr)^{v_0} \Bigr] \nonumber\\
			& \leq d^{v_0}e^{-c v_0 T}\frac{(e^{cT}-1)^{v_0-1}}{c^{v_0}} \int_1^{e^{cT}} \mathbb{E}^\mathbb{Q}\Bigl[\bigl(X_{(\ln u)/c}^2+1\bigr)^{v_0}\Bigr]\,du \nonumber \\
			&\leq d^{v_0}e^{-c v_0 T}\frac{(e^{cT}-1)^{v_0-1}}{c^{v_0}} \int_1^{e^{cT}} C\,du \leq \frac{C d^{v_0}}{c^{v_0}},
		\end{align}
	which gives the desired result.
\end{proof}

\subsectionfont{\scriptsize}
\subsection{Sensitivity with respect to \texorpdfstring{$\sigma$}{s}}

We evaluate the long-term sensitivity with respect to the perturbations of $\sigma.$ 

\begin{prop}\label{prop:KO_sigma_1}
	Under the  Kim--Omberg model in Eq.\eqref{eqn:KO_model},  the long-term sensitivity with respect to the parameter $\sigma$ is
	\[
			\lim_{T\rightarrow\infty}\frac{1}{T}\frac{\partial }{\partial \sigma}\ln v(\chi,T)=-\frac{\partial \lambda}{\partial \sigma}.
		\]
\end{prop}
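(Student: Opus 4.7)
My plan is to apply Theorem \ref{thm:vega_vari}, since the parameter $\sigma$ perturbs the diffusion coefficient $\sigma_{1,\epsilon}=\rho(\sigma+\epsilon),$ $\sigma_{2,\epsilon}=\sqrt{1-\rho^2}(\sigma+\epsilon)$ of the factor process, while leaving the stock coefficients $b$ and $\varsigma$ untouched. Condition (i) of Theorem \ref{thm:total_chain} is immediate from the closed-form expressions for $\lambda_\epsilon$ and $\phi_\epsilon$ in Section \ref{sec:KO_model}: the coefficients $\alpha_{1,\epsilon},\alpha_{2,\epsilon},\alpha_{4,\epsilon},B_\epsilon,C_\epsilon$ are smooth rational/radical expressions in $\sigma+\epsilon$, and the condition \eqref{eqn:KO_para_condi} gives $\alpha_{4,\epsilon}>0$ on a small neighborhood $I$ of $0,$ so $\epsilon\mapsto\lambda_\epsilon$ and $\epsilon\mapsto\phi_\epsilon(\chi)$ are continuously differentiable on $I$. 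Since $\sigma_\epsilon(x)=\sigma+\epsilon$ is a positive constant, the Lamperti transform is particularly simple: choosing $c=\chi$ one has $\ell_\epsilon(x)=(x-\chi)/(\sigma+\epsilon)$, $\check{X}^\epsilon_0=0$ does not depend on $\epsilon,$ and the transformed process satisfies $d\check{X}^\epsilon_t=\gamma_\epsilon(\check{X}^\epsilon_t,t;T)\,dt+dB^\epsilon_t$ with an explicit affine drift $\gamma_\epsilon$.

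Because $c=\chi$ we can drop condition (i) of Theorem \ref{thm:vega_vari} as noted after that theorem; thus only conditions (ii) and (iii) remain. Condition (ii) concerns the sensitivity of the functionals $\Phi_\eta,F_\eta$ with the underlying diffusion frozen at $\epsilon=0$, and can be handled by exactly the argument used in the proof of Proposition \ref{prop:KO_k} for condition (ii) of Theorem \ref{thm:total_chain}: invoke Proposition \ref{prop:condi_2} with a dominating pair $(g,G_T)$ obtained from the smooth dependence of $B_\eta,C_\eta,\beta_\eta,\gamma_\eta$ on $\eta$ and from the uniform bound \eqref{eqn:hat_phi_hat_u} for the negative moment of $\hat{\phi}$. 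The uniform $L^p$-bounds on polynomial functions of $X_T$ and on $\int_0^T X_s^{2r}\,ds$ provided by Lemmas \ref{lem:KO_finite_power} and \ref{lem:KO_finite_inte_power} then allow us to conclude by the same H\"older split, and the rate of convergence $\beta(T-s)\to B,$ $\gamma(T-s)\to C$ from \eqref{eqn:conv_rate_beta_gamma} delivers the $1/T$ factor exactly as in \eqref{eqn:KO_estimate_change_variable_2}.

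Condition (iii) is the substantive one: the perturbation of $\epsilon$ now shows up in the drift $\gamma_\epsilon(\check{x},t;T)$ of the Lamperti-transformed process. I would apply Theorem \ref{thm:rho} in the Lamperti coordinates. The derivative $\partial_\epsilon\gamma_\epsilon(\check{x},t;T)$ is affine in $\check{x}$ with coefficients that stay bounded uniformly in $t\in[0,T]$ and $\epsilon\in I$ thanks to the bounds on $\beta(T-t)-B$ and $\gamma(T-t)-C,$ so the dominating function $\hat{g}$ of Eq.\eqref{eqn:hats} can be taken of the form $\hat{g}(\check{x},t;T)\le c(|\check{x}|+1)$. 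From here the four hypotheses of Theorem \ref{thm:rho} reduce to exponential and polynomial integrability of $\check{X}$ under $\mathbb{Q},$ which follow from Lemmas \ref{lem:OU_bdd}--\ref{lem:KO_finite_inte_power} (these carry over verbatim since $\check{X}^\epsilon$ under $\overline{\mathbb{P}}$ is, up to an affine change of variable, again an Ornstein--Uhlenbeck process with mean-reverting drift by Eq.\eqref{eqn:OU_bar_P_dynamics} and the standing condition \eqref{eqn:KO_para_condi}).

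The main obstacle I anticipate is the verification of (ii) in Theorem \ref{thm:rho}, namely the sharp growth estimate $\mathbb{E}^{\mathbb{Q}}(\int_0^T\hat{g}^2(X_s,s;T)\,ds)^{v/2}\le T^v h(T)$ with $h(T)\to 0;$ in the drift-only case of Proposition \ref{prop:KO_k} one obtains $\mathcal{O}(T^{v/2})$ which is absorbed by the $1/T$ normalization and by $\hat{\Gamma}_u(T)$ being bounded. Here, because $\sigma_\epsilon(x)$ appears in the denominator of $\ell_\epsilon$ and the Lamperti change of variable introduces an additional $\partial_\epsilon$-dependence through both the drift $\kappa_\epsilon/\sigma_\epsilon$ and the $-\tfrac{1}{2}\sigma_\epsilon'\circ\ell_\epsilon^{-1}$ correction, the corresponding $\hat{g}$ in Lamperti coordinates has the same affine growth in $\check{x}$ but picks up an $\mathcal{O}(1)$ constant from the derivative of $\sigma_\epsilon,$ and one has to check that the moment estimate \eqref{eqn:KO_Q_estimate} with $h(T)=\mathcal{O}(1)$ still suffices once combined with the uniform bound \eqref{eqn:hat_phi_hat_u} on $\hat{\Gamma}_{\hat u}(T)$. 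Once these ingredients are in place, Theorem \ref{thm:vega_vari} yields $\lim_{T\to\infty}\tfrac{1}{T}\partial_\epsilon|_{\epsilon=0}\ln v_\epsilon(\chi,T)=-\partial_\sigma\lambda,$ and multiplying by $1-p$ via Eq.\eqref{eqn:delta_dual} gives the sensitivity of the optimal expected utility as stated in Theorem \ref{thm:KO_asymp}.
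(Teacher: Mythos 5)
Your proposal is correct and follows essentially the same route as the paper: a Lamperti transform with base point $c=\chi$ (so the initial value of $\check{X}$ is not perturbed and condition (i) of Theorem \ref{thm:vega_vari} can be dropped), reducing the $\sigma$-perturbation to a functional-plus-drift perturbation of a unit-diffusion process, which is then handled exactly as in Proposition \ref{prop:KO_k}. The paper's proof is in fact terser than yours, simply invoking the method of Section \ref{sec:vega} and Proposition \ref{prop:KO_k}; your anticipated difficulty with condition (ii) of Theorem \ref{thm:rho} does not materialize since the estimate \eqref{eqn:KO_Q_estimate} already yields $h(T)=\mathcal{O}(T^{-v/2})\to 0$, and the spatial derivative $\sigma_\epsilon'$ vanishes because $\sigma_\epsilon$ is constant in $x$.
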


\begin{proof}
	In the decomposition
	\[
			v(\chi,T)=e^{-\lambda T}\phi(\chi)\mathbb{E}^{\mathbb{Q}}\Bigl[\frac{1}{\phi({X}_T)}\, e^{\int_0^Tf(X_s,s;T)\,ds}\Bigr],
		\]
	we analyze the expectation term $\mathbb{E}^{\mathbb{Q}}[\frac{1}{\phi({X}_T)}\, e^{\int_0^Tf(X_s,s;T)\,ds}]$ by using the method in Section \ref{sec:vega}. Consider the Lamperti transformation
	\[
			\ell(x)=\int_\chi^x\frac{1}{\sigma}\,du=\frac{x-\chi}{\sigma}.
		\]
		and define
	\[
			\check{X}_t:=\ell(X_t)=\frac{X_t-\chi}{\sigma}
		\]
	as well as
	\begin{align*}
			F(\check{x}) &= -\frac{q\sigma_2^2}{2(1-q)}\Bigl(\bigl(B-\beta(T-t)\bigr)(\sigma \check{x}+\chi)+\bigl(C-\gamma(T-t)\bigr)\Bigr)^2,\\
			\Phi(\check{x}) &= e^{-\frac{1}{2}B\sigma^2\check{x}^2-(B\chi+C)\sigma \check{x}-\frac{1}{2}B\chi^2-C\chi}.
		\end{align*}
	Then $\check{X}$ satisfies the SDE
	\[
			d\check{X}_t=\biggl(\frac{1}{\sigma} \bigl( k\overline{m}-C\sigma^2-\frac{q\sigma_2^2 }{1-q}  \gamma(T-t)\bigr) -\Bigl(k+\frac{q\mu\sigma_1}{\varsigma}+B\sigma^2+\frac{q\sigma_2^2 }{1-q}\beta(T-t)\Bigr)\bigl(\check{X}_t+\frac{\chi}{\sigma}\bigr) \biggr)\,dt+dB_{t}.
		\]
 
	We want to analyze
	\[
			\frac{\partial}{\partial\sigma}\mathbb{E}^{\mathbb{Q}}\Bigl[\frac{1}{\Phi(\check{X}_T)}\, e^{\int_0^TF(\check{X}_s,s;T)\,ds}\Bigr].
		\]
	The perturbation parameter $\sigma$ is only involved with   the functional and the drift term of $\check{X}$, but not with the volatility term of $\check{X}.$ Thus, we can apply the same method used in Proposition \ref{prop:KO_k} to show 
	\[
			\lim_{T\rightarrow \infty}\frac{1}{T} \frac{\partial}{\partial\sigma}\mathbb{E}^{\mathbb{Q}}\Bigl[\frac{1}{\Phi(\check{X}_T)}\, e^{\int_0^TF(\check{X}_s,s;T)\,ds}\Bigr]=0.
		\]
	This gives the desired result.
\end{proof}

\sectionfont{\scriptsize}
\section{The Heston model}
\label{app:Heston}
  
This appendix investigates the Heston model presented in Section \ref{sec:Heston_model} and shows the assumptions made in the main part of the paper are satisfied in this model. Assumptions A\ref{assume:SDE_X} -- \ref{eqn:NFLVR} are well-known to be satisfied for the Heston model.

We first find the HJB equation and the ergodic  HJB equation. The functions $l$ and $h$ in Eq.\eqref{eqn:l_h} are
\begin{align*}
	l(\xi,x) & := -\frac{q}{2}(1-q) \Bigl(\frac{\mu^2x}{\varsigma^2}+\xi^2 \Bigr)\\ 
	h(\xi,x) & := k\overline{m} - \Bigl(k+\frac{q\mu \sigma_1}{\varsigma}\Bigr) x -q\xi\sigma_2\sqrt{x}.
\end{align*} 
The corresponding HJB equation \eqref{eqn:HJB} is
\begin{align*}
	v_t&=\frac{1}{2}\sigma^2xv_{xx} +\sup_{\xi\in\mathbb{R}}\left\{-\frac{q}{2}(1-q) \Bigl(\frac{\mu^2x}{\varsigma^2}+\xi^2 \Bigr)v + \biggl(k\overline{m}-\Bigr(k+\frac{q\mu \sigma_1}{\varsigma}\Bigr)x - q \xi \sigma_2 \sqrt{x}\biggr) v_x\right\}\\
	& =\frac{1}{2}\sigma^2xv_{xx} -\frac{q}{2}(1-q)\frac{\mu^2 }{\varsigma^2}xv + \biggl(k\overline{m}-\Bigl(k+\frac{q\mu \sigma_1}{\varsigma}\Bigr) x \biggr) v_x+\frac{q\sigma_2^2 x}{2(1-q)}\frac{v_x^2}{v} 
\end{align*}
with $v(x,0)=1$. Here, we used that the  supremum  of the above HJB equation is achieved at
\[
	\xi=-\frac{\sigma_2\sqrt{x}}{(1-q)} \frac{v_x(x,t)}{v(x,t)}.
\]
The solution to the HJB equation is $v(x,t)= e^{-\gamma(t)-\beta(t)x}$ with
\begin{align}
\label{eqn:alpha_beta_Hes}
\beta(t) & = q(1-q)\frac{\mu^2}{\varsigma^2}\frac{\sinh(\beta_2 t/2)}{\beta_2\cosh(\beta_2 t/2)+ \beta_1 \sinh(\beta_2 t/2)}, \nonumber\\
\gamma(t) & = k\overline{m}\int_0^tB(s)  \,ds,
\end{align}
where
\[
	\beta_1:=k+\frac{q\mu \sigma_1}{\varsigma}, \qquad  \beta_2:=\sqrt{\beta_1^2+ \frac{q ((1-q)\sigma_1^2+  \sigma_2^2)\mu^2}{\varsigma^2} }.
\]
Thus assumption A\ref{assume:diff_v} holds. The optimal control $\hat{\xi}$ is 
\begin{equation}
\label{eqn:Heston_xi_hat}
	\hat{\xi}(x,t;T)  =\frac{\sigma_2}{1-q} \beta(T-t)\sqrt{x}
\end{equation} 
With this optimizer, assumption A\ref{assume:struc_opt} is satisfied.

Now we shift our attention to the ergodic  HJB equation \eqref{eqn:EBE}. By direct calculation, we can see that the solution to the ergodic HJB equation 
\[
	- \lambda \phi=\frac{1}{2}\sigma^2x\phi_{xx} -\frac{q}{2}(1-q)\mu^2 x\phi + \bigl(k\overline{m}-(q\mu \sigma_1+k) x \bigr)\phi_x+\frac{q\sigma_2^2 x}{2(1-q)}\frac{\phi_x^2}{\phi}
\]
is given by $\phi(x)=e^{-Bx}$ with
\[
	B=\frac{\beta_2-\beta_1 }{\sigma_1^2+\frac{ \sigma_2^2   }{  1-q }}.
\]
It is easy to show that $\beta(t)\to B$ and  $\frac{\gamma(t)}{t}\to-\lambda$ as $t\to\infty,$ thus assumption A\ref{assume:long-term} holds. The ergodic optimal control  $\xi^*$ is
\[
	\xi^*(x)= \frac{\sigma_2}{1-q}B\sqrt{x}.
\]
 
For the rest of this section, we show that  assumptions A\ref{assume:P} -- A\ref{assume:Q} are satisfied.

\begin{prop}  \label{prop:marting_Hes}
	For the Heston model  A\ref{assume:P} holds, that is, the local martingale
	\begin{equation}
			\biggl(	\mathcal{E}\Bigl(-\frac{q\mu}{\varsigma}\int_0^\cdot \sqrt{X_s}\,dW_{1,s}-q\int_0^\cdot\hat{\xi}(X_s,s;T)\,dW_{2,s}\Bigr)_t\biggr)_{0\leq t\leq T} 
		\end{equation}  
	is a true martingale under the measure ${\mathbb{P}}$.
\end{prop}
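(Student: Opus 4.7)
The plan is to mimic exactly the argument used in the Kim--Omberg case (Proposition \ref{prop:KO_A7}) by verifying the hypotheses of \cite[Theorem 8.1]{klebaner2014stochastic}, which provides a quadratic-growth criterion on the coefficients guaranteeing that an exponential local martingale of Girsanov type is a true martingale. This is the natural tool because the Heston factor process shares with the Ornstein--Uhlenbeck-type factor process of Kim--Omberg the crucial property that both the drift of $X$ and the norm squared of the Girsanov kernel grow at most linearly in the state variable.

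First, I would identify the relevant coefficients in the Klebaner framework: the drift $a_t(x)=k(\overline{m}-x)$ and diffusion row $b_t(x)=(\sigma_1\sqrt{x},\sigma_2\sqrt{x})$ of $X$, and the Girsanov kernel
\[
\sigma_t(x)=\Bigl(-\tfrac{q\mu}{\varsigma}\sqrt{x},\,-q\hat{\xi}(x,t;T)\Bigr)=\Bigl(-\tfrac{q\mu}{\varsigma}\sqrt{x},\,-\tfrac{q\sigma_2}{1-q}\beta(T-t)\sqrt{x}\Bigr),
\]
using the explicit form of $\hat{\xi}$ from Eq.\eqref{eqn:Heston_xi_hat}. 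The key structural observation is that every component of $\sigma_t(x)$ is of the form (bounded-in-$t$ coefficient) $\times\sqrt{x}$, because $\beta(\cdot)$ is continuous on the compact interval $[0,T]$ and hence bounded there (see Eq.\eqref{eqn:alpha_beta_Hes}).

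Next, I would compute the two quantities $L_t(x)$ and $\mathfrak{L}_t(x)$ appearing in \cite[Theorem 8.1]{klebaner2014stochastic}. A direct calculation shows
\[
\|\sigma_t(x)\|^2\le C_1\,x,\qquad L_t(x)=2k(\overline{m}-x)x+\sigma^2 x\le C_2(1+x^2),
\]
and an analogous bound $\mathfrak{L}_t(x)\le C_3(1+x^2)$ holds, where the linear-in-$x$ cross term coming from $b_t\sigma_t^\top$ stays linear precisely because $\sqrt{x}\cdot\sqrt{x}=x$. Here the constants $C_1,C_2,C_3$ depend on $T$ through the supremum of $\beta$ on $[0,T]$ but not on $x$ or $t\in[0,T]$. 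Combining these estimates yields the required growth condition
\[
\|\sigma_t(x)\|^2+L_t(x)+\mathfrak{L}_t(x)\le r(1+x^2)
\]
for a suitable $r>0,$ and the hypotheses of \cite[Theorem 8.1]{klebaner2014stochastic} are satisfied, giving the true-martingale property.

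I do not anticipate genuine obstacles: the only point requiring a bit of care is to ensure that $\beta(T-t)$ is handled as a bounded deterministic function on $[0,T]$ rather than as a source of growth, but this follows immediately from its explicit hyperbolic representation. The fact that the CIR factor stays strictly positive (guaranteed by the Feller condition $2k\overline{m}>\sigma^2$) plays no role in this verification, since the criterion is purely coefficient-based; it will instead be used in the later assumptions A\ref{assume:P_hat}, A\ref{assume:M}, A\ref{assume:Q}, which can be treated analogously by the same Klebaner argument.
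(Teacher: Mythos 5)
Your proposal is correct and follows essentially the same route as the paper: both verify the quadratic-growth hypothesis $\Vert\sigma_t(x)\Vert^2+L_t(x)+\mathfrak{L}_t(x)\leq r(1+x^2)$ of \cite[Theorem 8.1]{klebaner2014stochastic} with the same identification of $a_t$, $b_t$ and the Girsanov kernel, using the explicit form $\hat{\xi}(x,t;T)=\frac{\sigma_2}{1-q}\beta(T-t)\sqrt{x}$ and the boundedness of $\beta(\cdot)$ on $[0,T]$. The only difference is that you spell out the intermediate estimates slightly more explicitly than the paper does, which is harmless.
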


\begin{proof}
	In order to show this is a true martingale, we use Theorem 8.1 in \cite{klebaner2014stochastic}. Recall that
	\[
			dX_t=k(\overline{m}-X_t)\,dt+\sigma_1 \sqrt{X_t} \,dW_{1,t}+\sigma_2 \sqrt{X_t} \,dW_{2,t}, \qquad X_0=\chi.
		\]
	Using the notions in \cite{klebaner2014stochastic}, we have
	\begin{align*}
			a_t(x) & = k(\overline{m}-x),\\
			b_t(x) & = \bigl(\sigma_1\sqrt{x},\sigma_2\sqrt{x}\bigr),\\
			\sigma_t(x) & = \Bigl(-\frac{q\mu}{\varsigma}\sqrt{x},-q\hat{\xi}(x,t;T)\Bigr),
		\end{align*}
	so that
	\begin{align*}
			\Vert \sigma_t(x) \Vert^2 & = q^2\Bigl(\frac{\mu^2x}{\varsigma^2}+\hat{\xi}^2(x,t;T)\Bigr),\\
			L_t(x) & = 2k(\overline{m}-x)x+\sigma^2x,\\
			\mathfrak{L}_t(x) & = 2x\biggl(k\overline{m}-\Bigl(k+\frac{q\mu\sigma_1}{\varsigma}\Bigr){x}-q\sigma_2\hat{\xi}(x,t;T)\sqrt{x}\biggr)+\sigma^2x.
	\end{align*}
	Using the expression of $\hat{\xi}$ in Eq.\eqref{eqn:Heston_xi_hat}, one can find a positive $r>\chi=X_0$ such that 
	\[
			\Vert \sigma_t(x) \Vert^2 + L_t(x)+\mathfrak{L}_t(x)\leq r\bigl(1+x^2\bigr).
		\] 
	This implies that the assumptions of Theorem 8.1 in \cite{klebaner2014stochastic} are met, and thus we obtain the desired result.
\end{proof}

Now the measure $\hat{\mathbb{P}}$ is well-defined by Eq.\eqref{eqn:hat_P_from_P_RN} and the $\hat{\mathbb{P}}$-dynamics of $X$ is
\begin{equation}
\label{eqn:hat_X_Hes}
	dX_t = \biggl(k\overline{m} - \Bigl(k+\frac{q\mu \sigma_1}{\varsigma}\Bigr) X_t-q \sigma_2\hat{\xi}(X_t,t;T)\sqrt{X_t}\biggr)\,dt +\sigma_1 \sqrt{X_t} \,d\hat{W}_{1,t}+\sigma_2 \sqrt{X_t} \,d\hat{W}_{2,t} , \qquad X_0=\chi.
\end{equation}

\begin{prop}
\label{prop:mart_condi_Hes}
	For the Heston model  A\ref{assume:P_hat} holds, that is, the local martingale
	\[
			\biggl(\mathcal{E}\Bigl(q\int_0^\cdot \hat{\xi} (X_s,s;T)-\xi^*(X_s)\,d\hat{W}_{2,s}\Bigr)_t\biggr)_{0\leq t\leq T}
		\]
	is a true martingale under the measure $\hat{\mathbb{P}}$.
\end{prop}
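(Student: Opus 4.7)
The plan is to mirror the strategy used in Proposition \ref{prop:marting_Hes} and invoke Theorem 8.1 of \cite{klebaner2014stochastic}, which gives a sufficient linear-growth style condition for a Girsanov exponential on a bounded interval $[0,T]$ to be a true martingale. I would identify the quadruple $(a_t(x),b_t(x),\sigma_t(x))$ needed to apply that theorem directly from the $\hat{\mathbb{P}}$-dynamics \eqref{eqn:hat_X_Hes} of $X$ and the integrand of the stochastic exponential. Explicitly,
\[
  a_t(x)=k\overline{m}-\Bigl(k+\tfrac{q\mu\sigma_1}{\varsigma}\Bigr)x-q\sigma_2\hat{\xi}(x,t;T)\sqrt{x}, \qquad b_t(x)=(\sigma_1\sqrt{x},\sigma_2\sqrt{x}),
\]
\[
  \sigma_t(x)=\bigl(0,\,q(\hat{\xi}(x,t;T)-\xi^*(x))\bigr).
\]

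Next I would check the growth bound. Using the closed-form expressions $\hat{\xi}(x,t;T)=\frac{\sigma_2}{1-q}\beta(T-t)\sqrt{x}$ and $\xi^*(x)=\frac{\sigma_2}{1-q}B\sqrt{x}$, the difference $\hat{\xi}(x,t;T)-\xi^*(x)$ is a continuous-in-$t$ multiple of $\sqrt{x}$, hence $\|\sigma_t(x)\|^2$ is bounded by a constant times $x$. Likewise the nonlinear piece $\hat{\xi}(x,t;T)\sqrt{x}$ in $a_t(x)$ is linear in $x$, so both Klebaner's $L_t(x)=2a_t(x)x+\|b_t(x)\|^2$ and its measure-changed analogue $\mathfrak{L}_t(x)=2(a_t(x)+b_t(x)\sigma_t(x)^{\!\top})x+\|b_t(x)\|^2$ are controlled by a quadratic $r(1+x^2)$ for some $r>0$ that can be chosen uniformly in $t\in[0,T]$ because $\beta(\cdot)$ is continuous and therefore bounded on the compact interval $[0,T]$.

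Thus the hypothesis $\|\sigma_t(x)\|^2+L_t(x)+\mathfrak{L}_t(x)\le r(1+x^2)$ of Theorem 8.1 in \cite{klebaner2014stochastic} is satisfied, yielding the asserted martingale property under $\hat{\mathbb{P}}$. The verification is essentially mechanical; the only mildly delicate point is to confirm that the uniform-in-$t$ constant $r$ can be chosen independently of $x$ — which is why it is important that $\beta(T-t)$ is a bounded function on $[0,T]$ rather than merely locally bounded — but this causes no difficulty since $\beta$ is continuous and $[0,T]$ is compact. Consequently the proof will be short, closely paralleling that of Proposition \ref{prop:marting_Hes}, and I anticipate no genuine obstacle beyond bookkeeping.
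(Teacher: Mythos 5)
Your proposal is correct and follows exactly the paper's own argument: the proof of Proposition \ref{prop:mart_condi_Hes} likewise invokes Theorem 8.1 of \cite{klebaner2014stochastic} with the same triple $a_t(x)=k\overline{m}-(k+\frac{q\mu\sigma_1}{\varsigma})x-q\sigma_2\hat{\xi}(x,t;T)\sqrt{x}$, $b_t(x)=(\sigma_1\sqrt{x},\sigma_2\sqrt{x})$, $\sigma_t(x)=(0,q(\hat{\xi}(x,t;T)-\xi^*(x)))$, and the growth bound $r(1+x^2)$ verified via the explicit $\sqrt{x}$-form of $\hat{\xi}$ and $\xi^*$ and the boundedness of $\beta(T-\cdot)$ on $[0,T]$. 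No gap; your write-up actually spells out the bookkeeping the paper leaves to the reader.
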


\begin{proof} 
	In order to show this is a true martingale, we use Theorem 8.1 in \cite{klebaner2014stochastic}. The proof is similar to the proof of Proposition \ref{prop:marting_Hes}, thus we only state the corresponding functions,
	\begin{align*}
			a_t(x) & = k\overline{m}-\Bigl(k+\frac{q\mu \sigma_1}{\varsigma}\Bigr) x - q\sigma_2\hat{\xi}(x,t;T)\sqrt{x},\\
			b_t(x) & = \bigl(\sigma_1\sqrt{x},\sigma_2\sqrt{x}\bigr),\\
			\sigma_t(x) &= \Bigl(0,q\bigl(\hat{\xi}(x,t;T)-\xi^*(x)\bigr)\Bigr),
		\end{align*}
	and it is straightforward to verify that the assumptions of Theorem 8.1 in \cite{klebaner2014stochastic} are met.
\end{proof}

Now the measure $\tilde{\mathbb{P}}$ is well-defined by Eq.\eqref{eqn:tilde_P}  and the $\tilde{\mathbb{P}}$-dynamics of $X$ is 
\[
	dX_t = \biggl(k\overline{m} - \Bigl(k+\frac{q\mu \sigma_1}{\varsigma}\Bigr) X_t-q\sigma_2{\xi}^*(X_t)\sqrt{X_t}\biggr)\,dt + \sigma_1 \sqrt{X_t} \,d\tilde{W}_{1,t}+\sigma_2 \sqrt{X_t} \,d\tilde{W}_{2,t} , \qquad X_0=\chi.
\]

\begin{prop}
	For the Heston model,  A\ref{assume:M}  holds, that is, the process
	\[
			M =\biggl(\mathcal{E}\Bigl(-\int_0^\cdot\sigma_{1}B\sqrt{X_s}\,d\tilde{W}_{1,s}-\int_0^\cdot\sigma_{2}B\sqrt{X_s}\,d\tilde{W}_{2,s} \Bigr)_t\biggr)_{0\le t\le T } 
	\]
	is a martingale under the measure $\tilde{\mathbb{P}}$.	 
\end{prop}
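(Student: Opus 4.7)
The plan is to apply Theorem 8.1 of \cite{klebaner2014stochastic} in exactly the same style as Propositions \ref{prop:marting_Hes} and \ref{prop:mart_condi_Hes}. First I would write the $\tilde{\mathbb{P}}$-dynamics of $X$ in closed form: substituting $\xi^*(x) = \frac{\sigma_2 B}{1-q}\sqrt{x}$ into the drift gives
\[
dX_t = \Bigl(k\overline{m} - \bigl(k + \tfrac{q\mu\sigma_1}{\varsigma} + \tfrac{q\sigma_2^2 B}{1-q}\bigr) X_t\Bigr)\,dt + \sigma_1 \sqrt{X_t}\, d\tilde{W}_{1,t} + \sigma_2 \sqrt{X_t}\, d\tilde{W}_{2,t},
\]
which is still of CIR type. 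The parameter assumption $k + q\mu\rho\sigma/\varsigma > 0$ together with the Feller condition $2k\overline{m} > \sigma^2$ keeps the process strictly positive and non-explosive.

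Next I would identify the functions required for Theorem 8.1:
\begin{align*}
a_t(x) &= k\overline{m} - \bigl(k + \tfrac{q\mu\sigma_1}{\varsigma} + \tfrac{q\sigma_2^2 B}{1-q}\bigr)x,\\
b_t(x) &= \bigl(\sigma_1\sqrt{x},\,\sigma_2\sqrt{x}\bigr),\\
\sigma_t(x) &= \bigl(-\sigma_1 B\sqrt{x},\,-\sigma_2 B\sqrt{x}\bigr),
\end{align*}
and compute $\Vert\sigma_t(x)\Vert^2 = \sigma^2 B^2 x$, $L_t(x) = 2x a_t(x) + \sigma^2 x$, and the quantity $\mathfrak{L}_t(x)$, whose highest-order coefficient equals $-2(k + q\mu\sigma_1/\varsigma + q\sigma_2^2 B/(1-q) + B\sigma^2)$, a constant independent of $t$. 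Each of these is a polynomial in $x$ of degree at most two with $t$-independent coefficients.

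The main task is then to verify the linear--quadratic growth bound $\Vert\sigma_t(x)\Vert^2 + L_t(x) + \mathfrak{L}_t(x) \leq r(1 + x^2)$ for some positive constant $r$ and all $x > 0$. Since all three expressions are polynomials of degree at most two with coefficients that do not depend on $t$, such an $r$ can be chosen trivially. Hence the assumptions of Theorem 8.1 of \cite{klebaner2014stochastic} are met and $M$ is a true $\tilde{\mathbb{P}}$-martingale. The only delicate point is to confirm that the drift of $X$ under $\tilde{\mathbb{P}}$ is well-defined and that $X$ does not explode, both of which follow from the standing Feller and parameter conditions in Theorem \ref{thm:Heston}; otherwise the proof is a direct parallel of the earlier martingale verifications in this appendix.
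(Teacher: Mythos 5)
Your proof is correct and follows exactly the same route as the paper: both invoke Theorem 8.1 of Klebaner with the identical triple $a_t(x)$, $b_t(x)=(\sigma_1\sqrt{x},\sigma_2\sqrt{x})$, $\sigma_t(x)=(-\sigma_1B\sqrt{x},-\sigma_2B\sqrt{x})$ and verify the linear--quadratic growth bound for the resulting degree-two polynomials. You in fact spell out the computation of $\Vert\sigma_t(x)\Vert^2$, $L_t(x)$ and $\mathfrak{L}_t(x)$ slightly more explicitly than the paper does.
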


\begin{proof}
	In order to show this is a true martingale, we use Theorem 8.1 in \cite{klebaner2014stochastic}.The proof is similar to the proof of Proposition \ref{prop:marting_Hes}, thus we only state the corresponding functions, 
	\begin{align*}
			a_t(x) &= k\overline{m}-\Bigl(k+\frac{q\mu \sigma_1}{\varsigma}\Bigr) x-q\sigma_2{\xi}^*(x)\sqrt{x}, \\
			b_t(x) & = \bigl(\sigma_1\sqrt{x},\sigma_2\sqrt{x}\bigr)\\
			\sigma_t(x) & = \bigl(-\sigma_1B\sqrt{x},-\sigma_2B\sqrt{x}\bigr),
	\end{align*}
	and it is straightforward to verify that  the assumptions of Theorem 8.1 in \cite{klebaner2014stochastic} are met.
\end{proof}

Now the measure $\overline{\mathbb{P}}$ is well-defined by Eq.\eqref{eqn:overline_P} and the $\overline{\mathbb{P}}$-dynamics of $X$ is 
\begin{align*}
	dX_t & = \biggl(k\overline{m} - \Bigl(k+\frac{q\mu \sigma_1}{\varsigma}+\sigma^2B\Bigr) X_t-q\sigma_2{\xi}^*(X_t)\sqrt{X_t}\biggr)\,dt+\sigma_1 \sqrt{X_t} \,d\overline{W}_{1,t}+\sigma_2 \sqrt{X_t} \,d\overline{W}_{2,t}  \\
	& = \Biggl(k\overline{m}-\biggl(k+\frac{q\mu \sigma_1}{\varsigma}+\Bigl(\sigma_1^2+\frac{\sigma_2^2}{1-q}\Bigr)B\biggr)X_t \Biggr)\,dt+\sigma_1 \sqrt{X_t} \,d\overline{W}_{1,t}+\sigma_2 \sqrt{X_t} \,d\overline{W}_{2,t}  
\end{align*}
which is again the CIR process with re-parametrization. 

\begin{prop}\label{prop:Hes_mart_condi}
	For the Heston model A\ref{assume:Q} holds, that is, the local martingale
	\[
			\biggl(\mathcal{E}\Bigl(q\int_0^\cdot\xi^*(X_s)-\hat{\xi} (X_s,s;T)\,d\overline{W}_{2,s}\Bigr)_t\biggr)_{0\leq t\leq T}
		\]
	is a true martingale under the measure $\overline{\mathbb{P}}$.
\end{prop}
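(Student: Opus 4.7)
The plan is to mimic the preceding propositions of this appendix (and their counterparts in Appendix~\ref{app:KO}): I will verify the hypotheses of Theorem~8.1 in \cite{klebaner2014stochastic}, which reduces the true-martingale property of the exponential local martingale to a linear-growth estimate on the coefficients of the state process $X$ and on the Girsanov kernel.

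Under $\overline{\mathbb{P}}$ the process $X$ follows the (re-parametrized) CIR dynamics displayed immediately before the statement, so in Klebaner's framework I would take
\[
a_t(x) = k\overline{m} - \Bigl(k + \tfrac{q\mu\sigma_1}{\varsigma} + \sigma^2 B\Bigr)x, \qquad b_t(x) = \bigl(\sigma_1\sqrt{x},\,\sigma_2\sqrt{x}\bigr),
\]
with the Girsanov kernel $\sigma_t(x) = \bigl(0,\,q(\xi^*(x) - \hat{\xi}(x,t;T))\bigr)$. Combining the closed forms of $\xi^*$ and $\hat{\xi}$ (from Eq.~\eqref{eqn:xi_star} and Eq.~\eqref{eqn:Heston_xi_hat}) yields
\[
\xi^*(x) - \hat{\xi}(x,t;T) \;=\; \frac{\sigma_2}{1-q}\bigl(B-\beta(T-t)\bigr)\sqrt{x},
\]
so that $\Vert\sigma_t(x)\Vert^2$ is affine in $x$ with a prefactor bounded uniformly in $t\in[0,T]$, since $\beta(\cdot)$ given by Eq.~\eqref{eqn:alpha_beta_Hes} is continuous (in particular bounded on the compact $[0,T]$).

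The remaining ingredients $L_t(x)$ and $\mathfrak{L}_t(x)$ in Theorem~8.1 of \cite{klebaner2014stochastic} are likewise at most affine in $x$: $\Vert b_t(x)\Vert^2 = \sigma^2 x$, the drift $a_t(x)$ is affine, and the cross-term $\sigma_t(x)\cdot b_t(x)^{\top}$ is linear in $x$ with a coefficient bounded in $t\in[0,T]$. Consequently there exists a positive constant $r$, depending on $T$ and the model parameters, such that
\[
\Vert\sigma_t(x)\Vert^2 + L_t(x) + \mathfrak{L}_t(x) \;\le\; r\,(1+x^2)
\]
for all $t\in[0,T]$ and $x>0$, which is precisely the hypothesis needed.

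I do not anticipate a serious obstacle here: in contrast with Proposition~\ref{prop:mart_condi_Hes}, the Girsanov kernel $q(\xi^*-\hat{\xi})$ now vanishes of order $\sqrt{x}$ rather than $x$, so the growth control is obtained almost for free. The only point requiring attention is the uniform boundedness of $\beta(T-\cdot)$ on $[0,T]$, which however is an immediate consequence of the explicit formula in Eq.~\eqref{eqn:alpha_beta_Hes} together with $0 \le \beta(t) \le B$. With this verified, Theorem~8.1 of \cite{klebaner2014stochastic} delivers the desired true-martingale property.
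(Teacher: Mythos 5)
Your proof is correct and follows essentially the same route as the paper: both invoke Theorem 8.1 of \cite{klebaner2014stochastic} and check the linear-growth condition on the coefficients of $X$ under $\overline{\mathbb{P}}$ together with the Girsanov kernel $q\bigl(\xi^*(x)-\hat{\xi}(x,t;T)\bigr)=\frac{q\sigma_2}{1-q}\bigl(B-\beta(T-t)\bigr)\sqrt{x}$, whose square is affine in $x$ with a prefactor bounded on $[0,T]$. One minor slip: the $\overline{\mathbb{P}}$-drift coefficient of $-X_t$ is $k+\frac{q\mu\sigma_1}{\varsigma}+\bigl(\sigma_1^2+\frac{\sigma_2^2}{1-q}\bigr)B$ rather than $k+\frac{q\mu\sigma_1}{\varsigma}+\sigma^2 B$ (you dropped the absorbed $-q\sigma_2\xi^*(X_t)\sqrt{X_t}$ contribution), but since both expressions are affine in $x$ this does not affect the verification.
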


\begin{proof}
	In order to show this is a true martingale, we use Theorem 8.1 in \cite{klebaner2014stochastic}. The proof is similar to the proof of Proposition \ref{prop:marting_Hes}, thus we only state the corresponding functions,
	\begin{align*}
			a_t(x) & = k\overline{m}-\biggl(k+\frac{q\mu \sigma_1}{\varsigma}+\Bigl(\sigma_1^2+\frac{\sigma_2^2}{1-q}\Bigr)B\biggr)x, \\
			b_t(x) & = \bigl(\sigma_1\sqrt{x},\sigma_2\sqrt{x}\bigr),\\
			\sigma_t(x) & = \Bigl(0,q\bigl(\xi^*(x)-\hat{\xi} (x,t;T)\bigr)\Bigr),
	\end{align*}
	and it is straightforward to verify that  the assumptions of Theorem 8.1 in \cite{klebaner2014stochastic} are met.	 
\end{proof}

Now the measure $\mathbb{Q}$ is well-defined by Eq.\eqref{eqn:bar_P_Q}. The functions $f$ and $\kappa$ in Eq.\eqref{eqn:kappa} are
\begin{equation}
\label{eqn:f_Heston}
	f(x,t;T) = -\frac{q\sigma_2^2x}{2(1-q)}\bigl(B-\beta(T-t)\bigr)^2
\end{equation}
and 
\[
	\kappa(x,t;T) = k\overline{m} - \Bigl(k+\frac{q\mu \sigma_1}{\varsigma}+\sigma^2B\Bigr)x - q\sigma_2 \hat{\xi}(x,t;T) \sqrt{x} = k\overline{m}-\Bigl(k+\frac{q\mu \sigma_1}{\varsigma} + \sigma^2B + \frac{q\sigma_2^2}{1-q} \beta(T-t)\Bigr)x.
\]
Finally, the $\mathbb{Q}$-dynamics of $X$ is
\begin{equation}
\label{eqn:Q_dynamics_Heston}
	dX_t = \kappa(X_t,t;T)\,dt+\sigma_{1}\sqrt{X_t}\,dB_{1,t}+\sigma_{2}\sqrt{X_t}\,dB_{2,t}, \qquad 0\leq t\leq T.
\end{equation}

\subsectionfont{\scriptsize}
\subsection{Integrability condition}

In the following we prove integrability conditions, which will be needed in the analysis in the next sections.

\begin{lemma}
\label{lem:U_Hes}
	Under the measure $\mathbb{Q},$ consider two processes $U$ and $L$ defined as the solutions of SDEs
	\begin{align*} 
		\label{eqn:U_hes}
			dU_t & = \bigl(k\overline{m}- \upsilon_U U_t\bigr)\,dt + \sigma \sqrt{U_t}\,dB_{t}, \qquad U_0=x,\\
			dL_t & = \bigl(k\overline{m}- \upsilon_L L_t\bigr)\,dt+\sigma \sqrt{L_t}\,dB_{t}, \qquad L_0=x,
		\end{align*}
	where $\upsilon_U:=k+\frac{q\mu\sigma_1}{\varsigma} +\sigma^2B$ and $\upsilon_L:=k+\frac{q\mu\sigma_1}{\varsigma} +(\sigma_1^2+\frac{\sigma_2^2}{1-q}) B.$ Then
	\[
			\mathbb{Q}\bigl[L_t\leq X_t\leq U_t \textnormal{ for all } 0\leq t\leq T\bigr]=1.
		\]
\end{lemma}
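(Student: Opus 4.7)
The plan is to reduce the bivariate $\mathbb{Q}$-dynamics of $X$ given in Eq.\eqref{eqn:Q_dynamics_Heston} to a one-dimensional SDE and then apply the classical comparison theorem for one-dimensional stochastic differential equations. First, I would set $B_t := \frac{\sigma_1}{\sigma} B_{1,t} + \frac{\sigma_2}{\sigma} B_{2,t}$, which is a $\mathbb{Q}$-Brownian motion by L\'evy's characterization and is exactly the Brownian motion driving both $U$ and $L$ in the lemma's statement. Under this rewriting,
\[
dX_t = \bigl(k\overline{m} - v(t) X_t\bigr)\,dt + \sigma\sqrt{X_t}\,dB_t, \qquad X_0 = \chi,
\]
with $v(t) := k + \frac{q\mu\sigma_1}{\varsigma} + \sigma^2 B + \frac{q\sigma_2^2}{1-q}\beta(T-t)$, where $\beta$ is given in Eq.\eqref{eqn:alpha_beta_Hes}.

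Next I would establish the uniform bound $\upsilon_U \leq v(t) \leq \upsilon_L$ for all $t\in[0,T]$. From the explicit formula for $\beta$, the parameter condition $\beta_1>0$ provided by Eq.\eqref{eqn:Hes_para_condi}, and the elementary inequality $\sinh(s)\leq\cosh(s)$ for $s\geq 0$, one verifies that $0\leq\beta(s)\leq B$ for all $s\geq 0$. Using $\sigma^2=\sigma_1^2+\sigma_2^2$ and the algebraic identity $\sigma^2+\tfrac{q\sigma_2^2}{1-q}=\sigma_1^2+\tfrac{\sigma_2^2}{1-q}$, the extreme values $\beta(T-t)=0$ and $\beta(T-t)=B$ return $v(t)=\upsilon_U$ and $v(t)=\upsilon_L$ respectively, so $v(t)$ interpolates between the two. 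Because the Feller condition $2k\overline{m}>\sigma^2$ ensures that $X_t,U_t,L_t>0$ almost surely throughout $[0,T]$, the bounds on $v(t)$ translate into the pointwise drift comparison
\[
k\overline{m} - \upsilon_L X_t \;\leq\; k\overline{m} - v(t) X_t \;\leq\; k\overline{m} - \upsilon_U X_t.
\]

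Finally, I would invoke the Yamada--Watanabe comparison theorem for one-dimensional SDEs (see e.g.\ \cite[Ch.\,IX, Thm.\,3.7]{revuz1999continuous} or the time-inhomogeneous version in Ikeda--Watanabe). The three SDEs share the initial condition $\chi$, the common driving Brownian motion $B$, and the same diffusion coefficient $x\mapsto\sigma\sqrt{x}$, whose $1/2$-H\"older modulus $\rho(u)=\sigma\sqrt{u}$ satisfies $\int_{0+}\rho^{-2}(u)\,du=\infty$; their drifts are ordered as shown above. Applying the theorem twice --- once to compare $X$ with $U$ and once to compare $L$ with $X$ --- yields the pathwise inequalities $L_t\leq X_t\leq U_t$ for all $t\in[0,T]$, simultaneously almost surely by continuity of sample paths.

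The main technical obstacle is ensuring that a comparison theorem formulated with \emph{time-dependent} drift on one side and \emph{time-homogeneous} drifts on the other, together with the non-Lipschitz diffusion coefficient $\sigma\sqrt{x}$, is genuinely available. This is standard for CIR-type processes and is precisely the setting in which the Yamada--Watanabe machinery is designed to work, so beyond citing the appropriate version no additional estimate is required.
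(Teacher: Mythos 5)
Your proposal is correct and follows essentially the same route as the paper: reduce the $\mathbb{Q}$-dynamics of $X$ to a one-dimensional SDE driven by the combined Brownian motion $B$, sandwich the time-dependent mean-reversion rate between $\upsilon_U$ and $\upsilon_L$ using $0\leq\beta(\cdot)\leq B$, and invoke the classical comparison theorem for one-dimensional SDEs with H\"older-$1/2$ diffusion coefficient (the paper cites Proposition 5.2.18 of Karatzas--Shreve where you cite the Yamada--Watanabe version, but these are the same result). Your explicit verification of $0\leq\beta\leq B$ via $\sinh\leq\cosh$ and the identity $\sigma^2+\tfrac{q\sigma_2^2}{1-q}=\sigma_1^2+\tfrac{\sigma_2^2}{1-q}$ is a correct fleshing-out of a step the paper leaves implicit.
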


\begin{proof}
	Under the measure $\mathbb{Q},$ the process $X$ satisfies
	\begin{equation} 
			dX_t = \biggl(k\overline{m}-\Bigl(\upsilon_U+\frac{q\sigma_2^2}{1-q} \beta(T-t)\Bigr)X_t\biggr)\,dt+\sigma\sqrt{X_t}\,dB_{t}, \qquad \leq t\leq T.
		\end{equation} 
	Using $0<\beta(\cdot)<B$, we have
	\[
			k\overline{m} -\upsilon_Lx\leq k\overline{m}-\Bigl(\upsilon_U+\frac{q\sigma_2^2}{1-q} \beta(T-t) \Bigr)x \leq k\overline{m}- \upsilon_Ux.
		\]
	Proposition 5.2.18 in \cite{karatzas1998brownian} gives
	\[
			\mathbb{Q}\bigl[L_t\leq X_t\leq U_t \textnormal{ for all } 0\leq t\leq T\bigr]=1.
		\]
\end{proof}

\begin{lemma}
\label{lem:large_mean_rev_rate_Heston}
	There are a number $u>1$ and an open neighborhood $I_\chi$ of $\chi$ such that 
	\[
		\Gamma_u(x,T):=\mathbb{E}^{\mathbb{Q}}\Bigl[\frac{1}{\phi^{u}({X}_T)}\, e^{u\int_0^Tf(X_s,s;T)\,ds} \, \Big\vert \,  X_0=x\Bigr]
	\]
	is uniformly bounded on $I_\chi\times [0,\infty).$
\end{lemma}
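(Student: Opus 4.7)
The plan is to bound $\Gamma_u(x,T)$ above by the moment generating function of a comparison CIR process and then exploit its explicit form. First, since $\phi(x) = e^{-Bx}$ we have $\phi^{-u}(x) = e^{uBx}$, and $B>0$ because $\beta_2 > \beta_1$ and $(1-q)/(1-q\rho^2) > 0$ under $q = -p/(1-p)\in(0,1)$. Moreover, $X$ is nonnegative, $1-q > 0$, and $(B-\beta(T-\cdot))^2\ge 0$, so $f$ in Eq.\eqref{eqn:f_Heston} is nonpositive, giving
\begin{equation}
\Gamma_u(x,T)\;\le\;\mathbb{E}^{\mathbb{Q}}\bigl[e^{uBX_T}\mid X_0=x\bigr].
\end{equation}

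Second, by Lemma \ref{lem:U_Hes} we have $X_T\le U_T$ $\mathbb{Q}$-a.s., where $U$ is a CIR process with mean-reversion rate $\upsilon_U = \beta_1+\sigma^2 B$, long-run level $k\overline m/\upsilon_U$, and diffusion coefficient $\sigma\sqrt{U_t}$. Since $uB>0$, monotonicity of the exponential gives $e^{uBX_T}\le e^{uBU_T}$, and the standard CIR moment generating function yields, for any $c>0$ satisfying the admissibility constraint stated below,
\begin{equation}
\mathbb{E}^{\mathbb{Q}}\bigl[e^{cU_T}\mid U_0=x\bigr] = \Bigl(\frac{h_T}{h_T-c}\Bigr)^{\!2k\overline m/\sigma^2}\,\exp\!\Bigl(\frac{c\,e^{-\upsilon_U T}h_T\,x}{h_T-c}\Bigr),\qquad h_T:=\frac{2\upsilon_U}{\sigma^2(1-e^{-\upsilon_U T})}.
\end{equation}
The map $T\mapsto h_T$ is strictly decreasing from $+\infty$ (as $T\to 0^+$) to the limit $2\upsilon_U/\sigma^2$, so the admissibility condition $c<h_T$ for all $T>0$ reduces to $c < 2\upsilon_U/\sigma^2 = 2\beta_1/\sigma^2 + 2B$.

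Third, we choose $u$. Setting $c=uB$, the constraint rearranges to $(u-2)B< 2\beta_1/\sigma^2$. Under assumption \eqref{eqn:Hes_para_condi} we have $\beta_1>0$, so any $u\in(1,2]$ (and in fact slightly larger) works. For such a $u$ the first factor $(h_T/(h_T-uB))^{2k\overline m/\sigma^2}$ is continuous on $[0,\infty)$, tends to $1$ as $T\to 0^+$, and converges to $(2\upsilon_U/\sigma^2)/(2\upsilon_U/\sigma^2 - uB)<\infty$ as $T\to\infty$, hence is uniformly bounded in $T$. The exponential factor is controlled by noting that
\begin{equation}
\frac{e^{-\upsilon_U T}h_T}{h_T-uB} \;=\; \frac{e^{-\upsilon_U T}}{1-uB/h_T}
\end{equation}
is continuous in $T$, equals $1$ as $T\to 0^+$, and decays to $0$ as $T\to\infty$; so it is uniformly bounded by some constant $M>0$, and $\exp(uBxM)$ is bounded for $x$ in any bounded neighborhood $I_\chi$ of $\chi$. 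Combining these observations gives the uniform bound on $\Gamma_u$ over $I_\chi\times[0,\infty)$.

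The proof is essentially a verification once the comparison with the CIR process $U$ is identified; the single genuine input is the parameter condition \eqref{eqn:Hes_para_condi}, which guarantees precisely that the admissibility range of the CIR MGF contains $uB$ for some $u>1$. The only ``hard part'' is bookkeeping: one must check both monotone factors of the MGF simultaneously across the whole interval $T\in[0,\infty)$, rather than only asymptotically as $T\to\infty$, but both bounds above are manifestly uniform.
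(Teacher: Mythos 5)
Your proposal is correct and follows essentially the same route as the paper's proof: drop the nonpositive functional $f$, compare $X$ with the upper CIR process $U$ from Lemma \ref{lem:U_Hes}, and invoke the explicit CIR moment generating function, with the admissibility condition $uB<2\upsilon_U/\sigma^2$ matching the paper's range $1<u<2+\tfrac{2}{\sigma^2B}\bigl(k+\tfrac{q\mu\sigma_1}{\varsigma}\bigr)$ exactly. Your explicit check that both factors of the MGF are uniformly controlled over all of $[0,\infty)$, not just asymptotically, is slightly more detailed than the paper's but adds nothing structurally new.
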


\begin{proof}
	Since the function  $f$  is nonpositive  as one can see in Eq.\eqref{eqn:f_Heston}, it suffices to show that there is a number $u>1$  such that
	\[
			\mathbb{E}^{\mathbb{Q}}\Bigl[\frac{1}{\phi^{u}({X}_T)}\Bigr]=\mathbb{E}^{\mathbb{Q}}\Bigl[\frac{1}{\phi^{u}({X}_T)} \, \Big\vert \, X_0=x\Bigr]
		\]
	is uniformly bounded in $(x,T)$ on $(\frac{\chi}{2},\frac{3\chi}{2})\times [0,\infty).$ Recall that the process $U$ in Lemma \ref{lem:U_Hes} satisfies 
	\[
			\mathbb{Q}\bigl[X_t\leq U_t \textnormal{ for all } 0\leq t\leq T\bigr]=1.
		\] 
	Then for $u>1$
	\[
			\mathbb{E}^{\mathbb{Q}}\Bigl[\frac{1}{\phi^{u}({X}_T)}\Bigr]=\mathbb{E}^{\mathbb{Q}}[e^{uBX_T} \, \vert \, X_0=x]\leq \mathbb{E}^{\mathbb{Q}}[e^{uBU_T} \, \vert \, U_0=x].
		\]
	Since $U$ is the a CIR process, it is known that the moment generating function is
	\[
			\mathbb{E}^{\mathbb{Q}}[e^{uBU_T} , \vert \, U_0=x]=\Bigl(\frac{h_T}{h_T-uB}\Bigr)^{\frac{2k\overline{m}}{\sigma^2}}\exp\Bigl(\frac{uBe^{-\upsilon_U T}h_Tx}{h_T-uB}\Bigr) 
		\]
	where  
	\[
			h_T=\frac{2\upsilon_U}{\sigma^2(1-e^{-\upsilon_U T})}.
		\]
	Using $k+\frac{q\mu\sigma_1}{\varsigma} +\sigma^2B=\upsilon_U,$ observe that$2B+\frac{2k}{\sigma^2}+\frac{2q\mu\sigma_1}{\sigma^2\varsigma}=\frac{2\upsilon_U}{\sigma^2}<h_T.$ From this explicit expression, it is easy to check that for 
	\begin{equation}
		\label{eqn:Hes_u_bound}
			1<u< 2+\frac{2}{\sigma^2B}\Bigl(k+\frac{q\mu\sigma_1}{\varsigma }\Bigr),
		\end{equation}
	the expectation $\mathbb{E}^{\mathbb{Q}}[e^{uBU_T} \, \vert \, U_0=x]$ is uniformly bounded in $(x,T)$ on $\bigl(\frac{\chi}{2},\frac{3\chi}{2}\bigr)\times [0,\infty).$ This completes the proof.
\end{proof}

\subsectionfont{\scriptsize}
\subsection{Sensitivity with respect to the initial volatility}

\begin{prop}
	Under the  Heston model, the long-term sensitivity with respect to the initial value of the volatility is
	\[
			\lim_{T\rightarrow\infty}\frac{\partial }{\partial \chi}\ln v(\chi,T)=\frac{\phi'(\chi)}{\phi(\chi)}=-B.
		\]
\end{prop}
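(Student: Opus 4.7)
To prove this proposition I would follow the template of Proposition \ref{prop:delta_KO} and apply Theorem \ref{thm:delta} by verifying the four conditions of Proposition \ref{prop:delta}. The Heston case turns out to be notationally simpler in one respect but technically more delicate in another: since $\phi(x)=e^{-Bx}$, the ratio $\phi'(x)/\phi(x)\equiv -B$ is constant, which trivializes condition (ii) of Proposition \ref{prop:delta} for any choice of $v>1$. On the other hand, the $\mathbb{Q}$-dynamics of $X$ given by Eq.\eqref{eqn:Q_dynamics_Heston} is of CIR type, so the volatility coefficients $\sigma_i(x)=\sigma_i\sqrt{x}$ have derivatives $\sigma_i'(x)=\sigma_i/(2\sqrt{x})$ that blow up at the origin, and the first variation process $(Y_{t;T})_{0\le t\le T}$ is therefore a genuine stochastic process, rather than the deterministic path seen in the Kim--Omberg analysis.

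Condition (i) of Proposition \ref{prop:delta} has already been established by Lemma \ref{lem:large_mean_rev_rate_Heston}, which produces $u>1$ and a neighbourhood $I_\chi$ of $\chi$ such that $\Gamma_u(x,T)$ is uniformly bounded on $I_\chi\times[0,\infty)$. Condition (ii) is immediate as noted. The heart of the argument is therefore the verification of conditions (iii) and (iv), and this is where the Feller condition $2k\overline{m}>\sigma^2$ together with the parameter condition \eqref{eqn:Hes_para_condi} will enter in an essential way.

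For condition (iii), the first variation process derived from Eq.\eqref{eqn:1st_var_proc} and Eq.\eqref{eqn:Q_dynamics_Heston} admits the explicit representation
\[
Y_{t;T}=\exp\biggl(-\int_0^t\nu(s;T)\,ds-\frac{\sigma^2}{8}\int_0^t\frac{ds}{X_s}+\int_0^t\frac{\sigma_1}{2\sqrt{X_s}}\,dB_{1,s}+\int_0^t\frac{\sigma_2}{2\sqrt{X_s}}\,dB_{2,s}\biggr),
\]
with $\nu(t;T):=k+q\mu\sigma_1/\varsigma+\sigma^2 B+\frac{q\sigma_2^2}{1-q}\beta(T-t)\geq \upsilon_U>0$ under assumption \eqref{eqn:Hes_para_condi}. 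For $w>1$, I would rewrite $Y_{T;T}^w$ as the product of a Dol\'{e}ans--Dade exponential and the deterministic factor $\exp(-w\int_0^T\nu(s;T)\,ds)$, which is dominated by $e^{-w\upsilon_U T}$. A change of measure argument analogous to Lemma \ref{lem:measure_change_KO}, combined with the moment generating function of the CIR process (used already in the proof of Lemma \ref{lem:large_mean_rev_rate_Heston}) and the integrability of $\int_0^\cdot 1/X_s\,ds$ under the Feller condition, then bounds the stochastic exponential uniformly in $T$, so that $\mathbb{E}^{\mathbb{Q}}|Y_{T;T}|^w\to 0$.

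Condition (iv) is comparatively straightforward: from Eq.\eqref{eqn:f_Heston} we have $f_x(x,t;T)=-\frac{q\sigma_2^2}{2(1-q)}(B-\beta(T-t))^2$, independent of $x$, and $|B-\beta(T-t)|$ decays exponentially at rate $\beta_2$ directly from the closed-form expression \eqref{eqn:alpha_beta_Hes}. Combined with the $L^m$-bound on $Y_{s;T}$ inherited from the first variation analysis in step (iii) and a change-of-variable argument in the $s$-integral analogous to Eq.\eqref{eqn:KO_estimate_change_variable}--\eqref{eqn:KO_estimate_change_variable_2}, this gives condition (iv) and completes the proof via Theorem \ref{thm:delta}. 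The main obstacle I anticipate is the rigorous control of the stochastic exponential in $Y_{t;T}^w$: unlike in the Kim--Omberg case the square-root diffusion makes $Y$ non-deterministic, so one must exploit both the Feller condition (for the $\int 1/X_s\,ds$ term) and the mean-reversion lower bound $\nu(t;T)\ge\upsilon_U$ to dominate the quadratic variation and confirm the Novikov-type hypothesis needed to change measure.
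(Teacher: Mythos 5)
Your overall template is the right one --- conditions (i) and (ii) of Proposition \ref{prop:delta} are handled exactly as in the paper (Lemma \ref{lem:large_mean_rev_rate_Heston}, and the constancy of $\phi'/\phi$), and your reading of condition (iv) is also correct. The genuine gap is in your treatment of condition (iii). You write $Y_{T;T}^w$ as a Dol\'eans--Dade exponential times $e^{-w\int_0^T\nu(s;T)\,ds}$ and claim that the stochastic exponential can be ``bounded uniformly in $T$'' using exponential moments of $\int_0^T X_s^{-1}\,ds$. This step does not close as stated: completing the square in the exponent leaves a factor $e^{c(w)\int_0^T X_s^{-1}\,ds}$ with $c(w)=\tfrac{(w^2-w)\sigma^2}{8}>0$ (and H\"older inflates this constant further), and since $\int_0^T X_s^{-1}\,ds$ grows linearly in $T$ almost surely by ergodicity, its positive exponential moments cannot be uniformly bounded --- the available estimate (Proposition D.2 of the cited sensitivity paper, invoked later for the $k$-sensitivity) only gives a bound of the form $ce^{aT}$, and only for exponents below an explicit threshold tied to the Feller condition. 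You would then need $e^{-w\upsilon_U T}$ to beat $e^{aT}$ after H\"older, a quantitative race you have not run and which need not be won for admissible $w$.

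The paper sidesteps this entirely with a cancellation trick: applying It\^o's formula to the product $X_t^{-1/2}Y_t$ makes the stochastic integrals in $Y$ and in $X^{-1/2}$ cancel exactly, leaving
\[
X_t^{-1/2}Y_t=x^{-1/2}\exp\Bigl(-\tfrac12\bigl(k+\tfrac{q\mu\sigma_1}{\varsigma}+\sigma^2B\bigr)t+\int_0^t\Bigl(\bigl(-\tfrac12 k\overline{m}+\tfrac18\sigma^2\bigr)\tfrac{1}{X_s}-\tfrac{q\sigma_2^2}{2(1-q)}\beta(T-s)\Bigr)ds\Bigr),
\]
whose integrand is \emph{pathwise} nonpositive because $2k\overline{m}>\sigma^2$ and $\beta\ge0$. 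This yields the almost-sure bound $Y_{t;T}\le\sqrt{X_t/x}\,e^{-\frac12(k+q\mu\sigma_1/\varsigma+\sigma^2B)t}$, after which only polynomial moments of $X_T$ are needed, and those follow from the comparison $X_t\le U_t$ with the CIR process of Lemma \ref{lem:U_Hes}. If you replace your measure-change argument for $Y$ with this pathwise domination, the rest of your outline (including the exponential decay of $f_x$ for condition (iv)) goes through.
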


\begin{proof}
	By Theorem \ref{thm:delta}, it suffices to prove that the expectation $\mathbb{E}^{\mathbb{Q}}\bigl[\frac{1}{\phi(X_T)}\, e^{\int_0^Tf(X_s,s;T)\,ds} \, \big\vert \, X_0=x\bigr]$ is continuously differentiable in $x,$ and 
	\begin{equation}
		\label{eqn:Hes_delta_go_zero}
			\frac{\partial }{\partial x}\mathbb{E}^{\mathbb{Q}}\Bigl[\frac{1}{\phi(X_T)}\, e^{\int_0^Tf(X_s,s;T)\,ds} \, \Big\vert \, X_0=x\Bigl]
		\end{equation} 
	converges to zero as $T\to\infty.$ To prove this, we apply  Proposition  \ref{prop:delta}. Condition (i) of this proposition  was proved in Lemma \ref{lem:large_mean_rev_rate_Heston}. For (ii), observe that  $\frac{\phi'(X_T)}{\phi(X_T)}=-B$ is a constant, thus this condition holds trivially for any $v>1.$ 

	We now prove that (iii) holds: for any $w>1$ the expectation $\mathbb{E}^\mathbb{Q}\bigl[\vert Y_{T;T}\vert^w \, \vert \, X_0=x\bigr]$ is uniformly bounded in $x$ on  $\bigl(\frac{\chi}{2},\frac{3\chi}{2}\bigr)$ and converges to zero as $T\to\infty.$ From Eq.\eqref{eqn:Q_dynamics_Heston}, the process $Y_t=Y_{t;T}$ satisfies
	\[
			dY_t=-\Bigl(k+\frac{q\mu\sigma_1}{\varsigma}+\sigma^2B+\frac{q\sigma_2^2}{1-q}\beta(T-t)\Bigr)Y_t \,dt+ \frac{\sigma}{2\sqrt{X_t}}Y_t\,dB_{t}, \qquad 0\leq t\leq T.
		\]
	By the It\^{o} formula, we get 
	\[
			X_t^{-\frac{1}{2}}Y_t=x^{-\frac{1}{2}}e^{ -\frac{1}{2}(k+\frac{q\mu\sigma_1}{\varsigma}+\sigma^2B )t+\int_0^t ((-\frac{1}{2}k\overline{m}+\frac{1}{8}\sigma^2)\frac{1}{X_t}- \frac{1}{2} \frac{q\sigma_2^2}{1-q}\beta(T-s))\,ds}.
		\]
	Since $2k\overline{m}\geq \sigma^2$ and $\beta(\cdot)\ge0,$ the integrand in the exponent on the right-hand side is negative. It follows that
	\[
			X_t^{-\frac{1}{2}}Y_t\leq x^{-\frac{1}{2}}e^{-\frac{1}{2}(k+\frac{q\mu\sigma_1}{\varsigma}+\sigma^2B )t}.
		\]
	Then for any $w>1,$ 
	\begin{equation}
		\label{eqn:Heston_first_vari}
			\mathbb{E}^\mathbb{Q} \vert Y_t \vert^w=\mathbb{E}^\mathbb{Q} \Bigl \vert X_t^{\frac{1}{2}}X_t^{-\frac{1}{2}}Y_t \Bigr\vert^w\leq x^{-\frac{1}{2}w}e^{-\frac{1}{2}w(k+\frac{q\mu\sigma_1}{\varsigma}+\sigma^2B )t}\,\mathbb{E}^\mathbb{Q}\Bigl[X_t^{\frac{1}{2}w}\Bigr].
		\end{equation} 
	To obtain (iii), we consider the expectation $\mathbb{E}^\mathbb{Q}\bigl[X_T^{\frac{1}{2}w}\bigr]=\mathbb{E}^\mathbb{Q}\bigl[X_T^{\frac{1}{2}w} \, \big\vert \, X_0=x\bigr].$ Recall that the process $U$ in Lemma \ref{lem:U_Hes} satisfies $\mathbb{Q}\bigl[X_t\leq U_t \textnormal{ for all } 0\leq t\leq T\bigr]=1.$ Thus, 
	\begin{equation}
		\label{eqn:Heston_comparison}
			\mathbb{E}^\mathbb{Q}\Bigl[X_T^{\frac{1}{2}w} \, \Big\vert \,  X_0=x\Bigr] \leq \mathbb{E}^\mathbb{Q}\Bigl[U_T^{\frac{1}{2}w} \, \Big\vert \, X_0=x\Bigr].
			\end{equation} 
	On the other hand, since $U$ is a CIR process, for any $w>1,$ the expectation on the right hand side is uniformly bounded in $(x,T)$ on $\bigl(\frac{\chi}{2},\frac{3\chi}{2}\bigr) \times [0,\infty).$ Eq.\eqref{eqn:Heston_first_vari} implies that  the expectation $\mathbb{E}^\mathbb{Q}\vert Y_{T}\vert^w$ is uniformly bounded in $x$ on $\bigl(\frac{\chi}{2},\frac{3\chi}{2}\bigr)$ and converges to zero as $T\to\infty.$

	We now show that (iv) holds for any $m>1.$ It is easy to show that  there is a positive constant $c$ such that
	\[
			\vert f_x(x,t;T) \vert =\frac{q\sigma_2^2}{2(1-q)} \bigl\vert B-\beta(T-t) \bigr\vert^2\leq c e^{-\beta_2(T-t)}.
		\]
	For convenience, we define $\delta:=\frac{1}{2}(k+\frac{q\mu\sigma_1}{\varsigma}+\sigma^2B).$ By Eq.\eqref{eqn:Heston_first_vari} and Eq.\eqref{eqn:Heston_comparison}, it follows that
	\[
			\mathbb{E}^\mathbb{Q}\vert Y_{t;T}\vert^m \leq b_mx^{-\frac{1}{2}m}e^{-\delta mt}
		\]
	for a positive constant $b_m$ which dominates $	\mathbb{E}^\mathbb{Q}\bigl[X_t^{\frac{1}{2}m} \, \big\vert \, X_0=x\bigr]$ on $\bigl(\frac{\chi}{2},\frac{3\chi}{2}\bigr)\times[0,\infty).$ By the Jensen inequality, we have
	\[
			\mathbb{E}^{\mathbb{Q}}\biggl[\Bigl(\int_0^T \bigl\vert f_x(X_s,s;T)Y_{s;T} \bigr\vert \,ds\Bigr)^{m} \biggr] \leq \frac{c^mb_mx^{-\frac{1}{2}m}}{\beta_2-\delta} T^{m-1} \bigl(e^{-\delta mT}-e^{-\beta_2 mT}\bigr). 
		\]
	The right-hand side is uniformly bounded in $x$ on $(\frac{\chi}{2},\frac{3\chi}{2})$ for each $T\ge0$ and converges to zero as $T\to\infty$ for each $x\in \bigl(\frac{\chi}{2},\frac{3\chi}{2}\bigr).$ This proves (iv). Finally, conditions (ii), (iii), (iv) in Proposition \ref{prop:delta} hold true for arbitrary $v,w,m>1,$ and (i) holds for some $u>1,$ so we obtain the desired result.
\end{proof}

\subsectionfont{\scriptsize}
\subsection{Sensitivities with respect to \texorpdfstring{$k,$ $\overline{m},$ $\mu,$ $\varsigma$}{kmms} and \texorpdfstring{$\rho$}{r}}
\label{sec:sen_Hes_k_b}
 
We calculate the sensitivity with respect to the parameter $k.$ Those with respect to the parameters  $\overline{m},$ $\mu,$ $\varsigma$ and $\rho$ can be calculated in a similar way. The five functions in B\ref {bassume:perturb} and B\ref{bassume:HS_eps} are 
\[
	m_\epsilon(x)=(k+\epsilon) (\overline{m} - x), \qquad \sigma_{1,\epsilon}(x)=\sigma_1\sqrt{x}, \qquad \sigma_{2,\epsilon}(x)=\sigma_2\sqrt{x}, \qquad b_\epsilon(x)=\mu x, \qquad \varsigma_\epsilon(x)=\varsigma\sqrt{x}
\]
and it is easy to check that they satisfy assumptions B\ref {bassume:perturb} and B\ref{bassume:HS_eps}. Observe that 
\[
	\frac{\partial }{\partial \epsilon}\Big\vert_{\epsilon=0}\ln v_\epsilon(\chi,T)=\frac{\partial }{\partial k}\ln v_0(\chi,T)=\frac{\partial }{\partial k}\ln v(\chi,T),
\]
thus for the rest of this section we use $\frac{\partial }{\partial k}$ instead of $\frac{\partial }{\partial \epsilon}\vert_{\epsilon=0}.$

\begin{prop} \label{prop:Hes_k}
	Under the Heston model, the long-term sensitivity with respect to the parameter $k$ is
	\[
			\lim_{T\rightarrow\infty}\frac{1}{T}\frac{\partial }{\partial k}\ln v(\chi,T)=-\frac{\partial\lambda}{\partial k}.
		\]
\end{prop}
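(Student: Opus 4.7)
The plan is to apply Theorem~\ref{thm:total_chain} by verifying its three conditions, mimicking the strategy used in the proof of Proposition~\ref{prop:KO_k} but adapting to the square-root diffusion. Condition~(i) is immediate from the explicit formulas $\lambda = k\overline{m}B$ and $\phi(x) = e^{-Bx}$: under the parameter condition \eqref{eqn:Hes_para_condi}, $\beta_1$ and $\beta_2$, and hence $B$, are smooth functions of $k$, so $\epsilon\mapsto \lambda_\epsilon$ and $\epsilon\mapsto\phi_\epsilon(\chi)$ are continuously differentiable on a sufficiently small interval $I$.

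For condition~(iii), I would invoke Theorem~\ref{thm:rho}. A direct computation shows that $\frac{\partial}{\partial\epsilon}\kappa_\epsilon(x,t;T)\big|_{\epsilon=0}$ is linear in $x$ with coefficients bounded uniformly in $t$ and $T$ (the time-dependent piece involves $\partial_k\beta(T-t)$, which is bounded since $\beta(\cdot)$ is). Because $\sigma(x)=\sigma\sqrt{x}$, the function $\overline{\ell}(x,t;T) = \kappa_\epsilon'(x,t;T)/(\sigma\sqrt{x})$ is of the form $c_1(t;T)/\sqrt{x} + c_2(t;T)\sqrt{x}$ with bounded coefficients, so one may take $\hat g(x,t;T) \leq C(1/\sqrt{x}+\sqrt{x})$ and hence $\hat g^2 \leq C'(1/x+x)$. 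The comparison in Lemma~\ref{lem:U_Hes} sandwiches $X$ under $\mathbb{Q}$ between two CIR processes $L$ and $U$ whose drift rates $\upsilon_L, \upsilon_U$ are strictly positive. Combined with the Feller condition $2k\overline{m}>\sigma^2$, this yields: finite exponential moments of $\int_0^T(X_s+1/X_s)\,ds$ (via the moment generating function of the CIR process and a Lemma~\ref{lem:CIR_expo_growth}-type argument applied to $L$) for condition~(i) of Theorem~\ref{thm:rho}; the bound $\mathbb{E}^\mathbb{Q}[(\int_0^T\hat g^2\,ds)^{v/2}]\leq CT^{v/2}$ for any $v\geq 2$ (giving $h(T)=CT^{-v/2}\to 0$) for condition~(ii); finiteness of $\mathbb{E}^\mathbb{Q}[\int_0^T\hat g^{v+\epsilon_1}\,ds]$ for each $T$ for condition~(iii); and uniform boundedness of $\hat{\Gamma}_u(T)$ by the argument used in Lemma~\ref{lem:large_mean_rev_rate_Heston} applied to the supremum $\hat\phi$ over a small interval $I$ (choosing $u>1$ satisfying the analogue of \eqref{eqn:Hes_u_bound}) for condition~(iv).

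For condition~(ii) of Theorem~\ref{thm:total_chain}, I would apply Proposition~\ref{prop:condi_2}. The derivative $\partial_\eta\phi_\eta(x)$ is $-x(\partial_k B)e^{-B_\eta x}$ and $\partial_\eta f_\eta$ is bounded in absolute value by a constant multiple of $x\,e^{-\beta_2(T-t)}$ thanks to the exponential decay of $B-\beta(T-t)$ analogous to \eqref{eqn:conv_rate_beta_gamma}. Choosing $g(x,t;T)=c\, x\, e^{-\beta_2(T-t)}$ handles hypothesis~(i) of Proposition~\ref{prop:condi_2}, while the dominating random variable $G_T$ can be built from $e^{-\hat{u}B(k)X_T/2}$ times polynomial factors in $X_T$ and $\int_0^T X_s\,ds$, whose $L^{u_1}$ integrability with $u_1>1$ follows again from CIR moment estimates under $\mathbb{Q}$ using Lemma~\ref{lem:U_Hes}. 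The decay $\tfrac{1}{T}\partial_\eta|_{\eta=0}\mathbb{E}^\mathbb{Q}[\cdots]\to 0$ then follows by an H\"older estimate together with the change-of-variable computation of the form \eqref{eqn:KO_estimate_change_variable}--\eqref{eqn:KO_estimate_change_variable_2}, exploiting the exponential factor $e^{-\beta_2(T-t)}$ inside the time integral.

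The main obstacle will be the verification of the exponential-moment condition~(i) of Theorem~\ref{thm:rho}, because $\hat{g}^2$ involves $1/X_s$ in addition to $X_s$. Unlike the Kim--Omberg case, where the OU structure gave Gaussian tails, here one needs exponential integrability of $\int_0^T 1/X_s\,ds$ under $\mathbb{Q}$. This requires the Feller condition and a careful lower-comparison with the CIR process $L$ of Lemma~\ref{lem:U_Hes}, together with a reciprocal-moment estimate in the spirit of Lemma~\ref{lem:CIR_expo_growth}; the constraint \eqref{eqn:Hes_para_condi} is exactly what guarantees that $L$ remains a mean-reverting CIR process satisfying the Feller condition under the measure $\mathbb{Q}$, so that these estimates transfer.
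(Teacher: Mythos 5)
Your proposal is correct and follows essentially the same route as the paper: Theorem \ref{thm:total_chain} combined with Theorem \ref{thm:rho} for the drift perturbation, Proposition \ref{prop:condi_2} for the functional perturbation, the comparison with the CIR processes $U$ and $L$ of Lemma \ref{lem:U_Hes}, and the exponential/reciprocal moment estimates for CIR processes under the Feller condition. The only caveat is your claim that $\mathbb{E}^{\mathbb{Q}}\bigl[\bigl(\int_0^T\hat g^2\,ds\bigr)^{v/2}\bigr]\leq CT^{v/2}$ holds for \emph{any} $v\geq 2$: since $\hat g^2$ contains a $1/X_s$ term, this requires negative moments of the CIR process of order $v/2$, which fail once $v/2$ exceeds the threshold set by $2k\overline{m}/\sigma^2$; the paper therefore fixes $v=2$ (which suffices, as only one admissible $v$ is needed), and you should do the same.
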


\begin{proof}
	To prove this equality, we use Theorem \ref{thm:total_chain}. Condition (i) in Theorem \ref{thm:total_chain} is satisfied trivially. We prove (iii) in Theorem \ref{thm:total_chain} first because some techniques used for (iii) are also used in the proof of (ii). For condition (iii) in Theorem \ref{thm:total_chain}, we apply Theorem \ref{thm:rho}. It can be easily checked that 
	\[
			\Big\vert \frac{1}{\sigma\sqrt{x}}\frac{\partial}{\partial k}\kappa (x,t;T)\Big\vert \leq  c\Bigl(\sqrt{x}+\frac{1}{\sqrt{x}}\Bigr), \qquad x>0
		\]
	for a positive constant $c$ independent of $t,$ $T$ and $x>0.$ By choosing sufficiently large $c,$ we can achieve that $\hat{g}^2(x,t;T) \leq c(x+1/x)$ holds true for $\hat{g}$ defined in Eq.\eqref{eqn:hats}.

	Then, (i) in Theorem \ref{thm:rho} can be proven as follows. Recall the processes $U$ and $L$ from Lemma \ref{lem:U_Hes}. Since $\mathbb{Q}[L_t\leq X_t\leq U_t \textnormal{ for all } 0\leq t\le T]=1,$ we have  
	\[
			\mathbb{E}^{\mathbb{Q}}\bigl[e^{\epsilon_0\int_0^T \hat{g}^2(X_s,s;T)\,ds}\bigr] \leq \mathbb{E}^{\mathbb{Q}}\bigl[e^{\epsilon_0c\int_0^T (X_s+1/X_s)\,ds}\bigr] \leq \mathbb{E}^{\mathbb{Q}}\bigl[e^{\epsilon_0c \int_0^T (U_s+1/L_s)\,ds} \bigr] \leq \bigl(\mathbb{E}^{\mathbb{Q}}\bigl[e^{2\epsilon_0c\int_0^T U_s\,ds}\bigr]\bigr)^{\frac{1}{2}} \bigl(\mathbb{E}^{\mathbb{Q}} \bigl[e^{2\epsilon_0c \int_0^T 1/L_s\,ds}\bigr]\bigr)^{\frac{1}{2}}.
		\]
	Since $U$ is a CIR process, for given $T\ge0$ one can find $\epsilon_0>0$ such that $\mathbb{E}^{\mathbb{Q}}\bigl[e^{2\epsilon_0c\int_0^T U_s\,ds}\bigr]$  is finite. In addition, since $L$ is also a CIR process satisfying the Feller condition, applying  Proposition D.2 in  \cite{park2015sensitivity}, one can find $\epsilon_0$ such that $\mathbb{E}^{\mathbb{Q}}\bigl[e^{2\epsilon_0c\int_0^T 1/L_s\,ds}\bigr]$ is finite. This gives (i) in Theorem \ref{thm:rho}.
 
	 Now we prove (ii) in Theorem \ref{thm:rho} with $v=2.$ It suffices to show that there is a positive constant $c_0$ such that for all $T\ge0$
	\[
			\mathbb{E}^\mathbb{Q}\biggl[\int_0^T \Bigl(X_s+\frac{1}{X_s}\Bigr)\,ds\biggr]\leq c_0 T.
		\]
	Using the processes $U$ and $L$ in Lemma \ref{lem:U_Hes}, observe that
	\[
			\mathbb{E}^\mathbb{Q}\biggl[\int_0^T \Bigl(X_s+\frac{1}{X_s}\Bigr)\,ds\biggr] \leq  \mathbb{E}^\mathbb{Q}\biggl[\int_0^T \Bigl(U_s+\frac{1}{L_s}\Bigr)\,ds\biggr] = \int_0^T \mathbb{E}^\mathbb{Q}\bigl[U_s\bigr]+\mathbb{E}^\mathbb{Q}\bigl[L_s^{-1}\bigr]\,ds.
		\]
	Since $U$ and $L$ are  CIR processes satisfying the Feller condition, there is a positive constant $c_0$ such that for all $s\ge0$ 
	\[
			\mathbb{E}^\mathbb{Q}\bigl[U_s\bigr]+\mathbb{E}^\mathbb{Q}\bigl[L_s^{-1}\bigr]\leq c_0.
		\]
	This gives the desired result.

	For (iii) in Theorem \ref{thm:rho}, we observe that for $v=2$ and $\epsilon_1=1$ 
	\[
			\mathbb{E}^{\mathbb{Q}}\biggl[ \int_0^T\hat{g}^{v+\epsilon_1}(X_s,s;T) \, ds\biggr] \leq  c^{3/2} \int_0^T\mathbb{E}^{\mathbb{Q}}\biggl[\Bigl(X_s+\frac{1}{X_s}\Bigr)^{3/2}\biggr] \, ds\leq c' \int_0^T\mathbb{E}^{\mathbb{Q}}\bigl[U_s^{3/2}\bigr]+\mathbb{E}^{\mathbb{Q}}\bigl[L_s^{-3/2}\bigr] \, ds.
		\]
	Since $U$ is a CIR process, it is well known that $\mathbb{E}^{\mathbb{Q}}\bigl[U_s^{3/2}\bigr]$ is uniformly bounded in $s$ on $[0,\infty).$ In addition, for a CIR process $L$ satisfying the Feller condition, we have
	\[
			\sup_{0\leq s\leq T}\mathbb{E}^{\mathbb{Q}} \bigl[L_s^{-3/2}\bigr]<\infty
		\]
	by Eq.(3.1) in \cite{dereich2011euler}. This gives (iii) in Theorem \ref{thm:rho}.

	For (iv) in Theorem \ref{thm:rho}, we want to show that for $u=2$ the expectation
	\[
			\mathbb{E}^{\mathbb{Q}}\Bigl[\frac{1}{\hat{\phi}^{u}( {X}_T)}\, e^{u\int_0^T\hat{f}(X_s,s;T)\,ds}\Bigr]
		\]
	is uniformly bounded in $T$ on $[0,\infty).$ We use notation $B(k)$ to emphasize the dependence of $k$ on the constant $B.$ From Eq.\eqref{eqn:Hes_u_bound}, we know that for $u_0$ with $2<u_0<2+\frac{2}{\sigma^2B}(k+\frac{q\mu\sigma_1}{\varsigma})$ the expectation $\mathbb{E}^{\mathbb{Q}}[e^{u_0B(k)X_T}]$ is uniformly bounded in $T$ on $[0,\infty).$ Since the maps $k\mapsto B(k)$ is continuous and $\frac{u_0}{2}>1,$ by choosing a smaller interval $I$ if necessary, it follows that 
	\[
			\sup_{\epsilon\in I}B(k+\epsilon) \leq  \frac{u_0}{2} B(k).
		\]
	Then
	\begin{equation}
		\label{eqn:Hes_hat_phi}
			\hat{\phi}(x)=\inf_{\epsilon\in I}e^{-B(k+\epsilon)x} \geq e^{-\frac{u_0}{2}B(k)x}.
		\end{equation}
	Thus
	\begin{equation}
		\label{eqn:Hes_hat_phi_bound}
			\mathbb{E}^{\mathbb{Q}}\Bigl[\frac{1}{\hat{\phi}^{2}( {X}_T)}\, e^{2\int_0^T\hat{f}(X_s,s;T)\,ds}\Bigr]\leq \mathbb{E}^{\mathbb{Q}}\Bigl[\frac{1}{\hat{\phi}^{2}( {X}_T)}\Bigr] \leq   \mathbb{E}^{\mathbb{Q}}\bigl[e^{u_0B(k)X_T}\bigr],
		\end{equation}
	where for the first inequality we used $\hat{f}\le 0.$ Since the right-hand side is  uniformly bounded in $T$ on $[0,\infty),$ we obtain the desired result. We have now shown all conditions in  Theorem \ref{thm:rho} and thus condition (iii) in Theorem \ref{thm:total_chain} holds true. For condition (ii) in Theorem \ref{thm:total_chain}, we first calculate the partial derivative with respect to the variable  $k$ in $\phi$ and $f$ but not in $X=(X_t)_{t\ge0}.$ To be precise, we use the notation $\phi(x;k)$ and $f(x,t;T;k)$ to emphasize the dependence of $k.$ We want to analyze 
	\[
			w_{\eta,\epsilon}(\chi,T)=\mathbb{E}^{\mathbb{Q}_\epsilon}\Bigl[\frac{1}{\phi(X_T^\epsilon;k+\eta)}\, e^{\int_0^Tf(X_s^\epsilon,s;T;k+\eta)\,ds} \Bigr]
		\]
	where the $\mathbb{Q}^\epsilon$-dynamics of $X_t^\epsilon$ satisfies Eq.\eqref{eqn:Q_dynamics_Heston} with $k$ replaced by $k+\epsilon.$ The equality
	\[
			\frac{\partial }{\partial \eta} \mathbb{E}^\mathbb{Q}\Bigl[\frac{1}{\phi(X_T^\epsilon;k+\eta)}e^{\int_0^Tf(X_s^\epsilon,s;T;k+\eta)\,ds}\Bigr]= \mathbb{E}^\mathbb{Q}\biggl[\frac{\partial }{\partial \eta}\Bigl(l\frac{1}{\phi(X_T^\epsilon;k+\eta)}e^{\int_0^Tf(X_s^\epsilon,s;T;k+\eta)\,ds}\Bigr)\biggr]
		\]
	and the continuity of this partial derivative in $(\eta,\epsilon)$ on $I^2$ are obtained from Proposition \ref{prop:condi_2} with $g(x,t;T)$ and $G_T$  given below. Observe that
	\begin{equation}  
			\frac{\partial f}{\partial k}(x,t;T;k)= -\frac{q\sigma_2^2}{1-q}(B-\beta(T-t))\Bigl(\frac{\partial B}{\partial k}-\frac{\partial\beta}{\partial k}(T-t)\Bigr)x.
		\end{equation}
	We use the notation $\beta(T-t;k)$ to emphasize the dependence of $k.$ For a given small open interval $I,$ since $B(k+\eta),\frac{\partial B}{\partial k}(k+\eta)$ are continuous in $\eta$  on $\overline{I}$ and $\beta(T-t;k+\eta),\frac{\partial\beta}{\partial k}(T-t;k+\eta)$ are continuous in $(\eta,t)$ on $\overline{I}\times[0,T],$ one can find a positive constant $b_1$ such that
	for all $(\eta,t)\in \overline{I}\times [0,T]$
	\[
			\biggl\vert \frac{\partial f}{\partial \eta}(x,t;T;k+\eta)\biggr\vert \leq  b_1x =:g(t,x;T)\quad \textnormal{for}\enspace x>0.
		\]
	With this function $g,$ condition (i) in Proposition \ref{prop:condi_2} is trivially satisfied.

	For condition (ii) in Proposition \ref{prop:condi_2}, choose a positive constant $b_2$ such that for all $\eta\in \overline{I}$
	\[
			\Bigl\vert \frac{\partial B}{\partial  \eta}(k+\eta)\Bigr\vert \leq b_2.
		\]
	Using the function $\hat{\phi}$ in Eq.\eqref{eqn:Hes_hat_phi}, we define 
	\[
			G_T:=\frac{b_2X_T}{\hat{\phi}(X_T)} +\frac{1}{\hat{\phi}(X_T)}\int_0^Tb_1X_s\,ds.
		\]
	Then for all $(\eta,t)\in \overline{I}\times [0,T]$ it follows that
	\begin{equation}
			\frac{1}{\phi^2(X_T;k+\eta)}\biggl\vert\frac{\partial \phi }{\partial \eta}(X_T;k+\eta)\biggr\vert  +\frac{1}{\phi(X_T;k+\eta)}\, \int_0^T\biggl\vert \frac{\partial f}{\partial \eta}(X_s,s;T;k+\eta)\biggr\vert \,ds\leq G_T
		\end{equation}
	by using that $\hat{\phi}(x)=\inf_{\eta\in I}\phi(x;k+\eta)$ and
	\[
		\Bigl\vert \frac{\partial \phi }{\partial \eta}(x;k+\eta) \Bigr\vert = \Bigl\vert \frac{\partial B}{\partial  \eta}(k+\eta)\Bigr\vert x\phi(x;k+\eta)\leq  b_2x\phi(x;k+\eta)
		\]
	for $x>0.$ We claim $\mathbb{E}^\mathbb{Q}\bigl[G_T^{3/2}\bigr]<\infty,$ which implies condition (ii) in Proposition \ref{prop:condi_2}. Using that $\frac{1}{4/3}+\frac{1}{4}=1$, it follows that
	\[
			\mathbb{E}^\mathbb{Q}\Bigl[\frac{1}{\hat{\phi}^{3/2}(X_T)}\bigl(b_2X_T\bigr)^{3/2}\Bigr] \leq \biggl( \mathbb{E}^\mathbb{Q}\Bigl[\frac{1}{\hat{\phi}^{2}(X_T)}\Bigr] \biggr)^{3/4} \biggl(\mathbb{E}^\mathbb{Q}\Bigl[\bigl(b_2X_T\bigr)^{6}\Bigr]\biggr)^{1/4}.
		\]
	The first expectation on the right-hand side is finite by  Eq.\eqref{eqn:Hes_hat_phi_bound}. For the second  expectation observe that $\mathbb{E}^\mathbb{Q}[X_T^{6}]\leq \mathbb{E}^\mathbb{Q}[U_T^{6}]$ for the process $U$ in Lemma \ref{lem:U_Hes}. Since $U$ is a CIR process, the expectation $\mathbb{E}^\mathbb{Q}[U_T^{6}]$ is finite. In a similar way, we have
	\begin{align*}
			\mathbb{E}^\mathbb{Q}\biggl[\frac{1}{\hat{\phi}^{3/2}(X_T)}\Bigl(b_1\int_0^TX_s\,ds\Bigr)^{3/2}\biggr] & \leq \biggl( \mathbb{E}^\mathbb{Q}\Bigl[\frac{1}{\hat{\phi}^{2}(X_T)}\Bigr] \biggr)^{3/4} \biggl(\mathbb{E}^\mathbb{Q}\biggl[\Bigl(b_1\int_0^TX_s\,ds\Bigr)^{6}\biggr]\biggr)^{1/4} \\
			& \leq b_1^{3/2}T^{3/2} \biggl( \mathbb{E}^\mathbb{Q}\Bigl[\frac{1}{\hat{\phi}^{2}(X_T)}\Bigr] \biggr)^{3/4}\biggl(\mathbb{E}^\mathbb{Q}\biggl[\Bigl(\frac{1}{T}\int_0^TX_s\,ds\Bigr)^6\biggr]\biggr)^{1/4}  \\
			& \leq  b_1^{3/2}T^{5/4} \biggl( \mathbb{E}^\mathbb{Q}\Bigl[\frac{1}{\hat{\phi}^{2}(X_T)}\Bigr] \biggr)^{3/4}\biggl(\mathbb{E}^\mathbb{Q}\Bigl[\int_0^TX_s^6\,ds\Bigr]\biggr)^{1/4} \\
			& \leq b_1^{3/2}T^{5/4} \biggl( \mathbb{E}^\mathbb{Q}\Bigl[\frac{1}{\hat{\phi}^{2}(X_T)}\Bigr] \biggr)^{3/4}\biggl(\int_0^T\mathbb{E}^\mathbb{Q}[X_s^6]\,ds\biggr)^{1/4}.
		\end{align*}
	Since $\mathbb{E}^\mathbb{Q}[X_s^{6}]\leq \mathbb{E}^\mathbb{Q}[U_s^{6}]$ and the expectation $\mathbb{E}^\mathbb{Q}[U_s^{6}]$ is uniformly bounded in $s$ on $[0,T],$ the right-hand side is  finite. Hence $\mathbb{E}^\mathbb{Q}[G_T^{3/2}]<\infty.$  The convergence 
	\[
			\lim_{T\rightarrow \infty}\frac{1}{T}\frac{\partial }{\partial \eta}\Big\vert_{\eta=0}\mathbb{E}^\mathbb{Q}  \Bigl[\frac{1}{\phi(X_T;k+\eta)}e^{\int_0^Tf(X_s,s;T;k+\eta)\,ds}\Bigr] = 0
		\]
	can be shown as follows. Using $f\leq0,$ the partial derivative with respect to $\eta$ satisfies
	\begin{align*}
			&\phantom{==} \biggl\vert \frac{\partial }{\partial \eta}\Big\vert_{\eta=0}\Bigl(\frac{1}{\phi(X_T;k+\eta)}e^{\int_0^Tf(X_s,s;T;k+\eta)\,ds}\Bigr)\biggr\vert \\
			& \leq e^{BX_T+\int_0^Tf(X_s,s;T;k)\,ds}\Bigl\vert X_T\frac{\partial B}{\partial k}\Bigr\vert + e^{BX_T+\int_0^Tf(X_s,s;T;k)\,ds}\biggl\vert \int_0^T\frac{\partial f}{\partial \eta}\Big\vert_{\eta=0}(X_s,s;T;k+\eta)\,ds\biggr\vert \\
			& \leq e^{BX_T}\Bigl\vert X_T\frac{\partial B}{\partial k}\Bigr\vert + e^{BX_T}\biggl\vert \int_0^T\frac{\partial f}{\partial \eta}\Big\vert_{\eta=0}(X_s,s;T;k+\eta)\,ds\biggr\vert.
		\end{align*}
	By the triangle inequality and the Cauchy--Schwarz inequality, it follows that 
	\begin{align*}
			& \phantom{=:} \mathbb{E}^\mathbb{Q}\biggl\vert \frac{\partial }{\partial \eta}\Big\vert_{\eta=0}\Bigl(\frac{1}{\phi(X_T;k+\eta)}e^{\int_0^Tf(X_s,s;T;k+\eta)\,ds}\Bigr)\biggr\vert\\
			& \leq \Bigl(\mathbb{E}^\mathbb{Q}e^{ 2BX_T}\Bigr)^{1/2}\Bigl(\mathbb{E}^\mathbb{Q}\Bigl\vert X_T\frac{\partial B}{\partial k}\Bigr\vert^{2}\Bigr)^{1/2} + \Bigl(\mathbb{E}^\mathbb{Q}e^{2BX_T}\Bigr)^{1/2}\biggl(\mathbb{E}^\mathbb{Q}\biggl\vert \int_0^T\frac{\partial f}{\partial \eta}\Big\vert_{\eta=0}(X_s,s;T;k+\eta)\,ds\biggr\vert^{2}\biggr)^{1/2}.
		\end{align*}
	By Eq.\eqref{eqn:Hes_hat_phi_bound}, the expectation $\mathbb{E}^\mathbb{Q}e^{ 2BX_T}$ is  uniformly bounded in $T$ on $[0,\infty).$ It is easy to show that $\mathbb{E}^\mathbb{Q} \vert X_T\frac{\partial B}{\partial k} \vert^{2}$ is also uniformly bounded in $T$ on $[0,\infty).$ Now, we show that the expectation $\mathbb{E}^\mathbb{Q}\bigl\vert \int_0^T\frac{\partial f}{\partial \eta}\vert_{\eta=0}(X_s,s;T;k+\eta)\,ds\bigr\vert^{2}$ is uniformly bounded in $T$. By direct calculation, one can choose positive constants $c$ and $d,$ which are independent of $s$ and $T$  but are dependent of $k,$ such that
	\[
			\biggl\vert \frac{\partial f }{\partial \eta}\Big\vert_{\eta=0}(x,s;T;k+\eta)\biggr\vert \leq de^{-c(T-s)}x.
		\]
	By the same change of variable $u=e^{c s}$ as in Eq.\eqref{eqn:KO_estimate_change_variable} and Eq.\eqref{eqn:KO_estimate_change_variable_2}, we have that
	\[
			\mathbb{E}^\mathbb{Q}\biggl\vert \int_0^Tde^{-c(T-s)}X_s\,ds\biggr\vert^{2}
		\]
	is uniformly bounded in $T$ on $[0,\infty).$ This gives the desired result. 	
\end{proof}

\subsectionfont{\scriptsize}
\subsection{Sensitivity with respect to \texorpdfstring{$\sigma$}{s}}
 
In this section we calculate the long-term sensitivity with respect to $\sigma.$

\begin{prop} 
	Under the Heston model, the long-term sensitivity with respect to the parameter $\sigma$ is
	\[
			\lim_{T\rightarrow\infty}\frac{1}{T}\frac{\partial }{\partial \sigma}\ln v(\chi,T)=-\frac{\partial\lambda}{\partial \sigma}.
		\]
\end{prop}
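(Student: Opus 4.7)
The plan is to mimic the structure of Proposition \ref{prop:KO_sigma_1} for the Kim--Omberg case: apply the Lamperti transformation so that the perturbation parameter $\sigma$ disappears from the diffusion coefficient of the state process, and then invoke Theorem \ref{thm:vega_vari} together with the drift--sensitivity machinery of Theorem \ref{thm:rho} already deployed in the proof of Proposition \ref{prop:Hes_k}. Concretely, since in the Heston case $\sigma_{1,\epsilon}(x)=\rho(\sigma+\epsilon)\sqrt{x}$ and $\sigma_{2,\epsilon}(x)=\sqrt{1-\rho^2}(\sigma+\epsilon)\sqrt{x}$, one has $\sigma_\epsilon(x)=(\sigma+\epsilon)\sqrt{x}$. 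Choosing $c=\chi$ in the Lamperti transformation gives
\[
\ell_\epsilon(x)=\int_\chi^x\frac{du}{(\sigma+\epsilon)\sqrt{u}}=\frac{2}{\sigma+\epsilon}\bigl(\sqrt{x}-\sqrt{\chi}\bigr),\qquad \check{X}_t^\epsilon=\ell_\epsilon(X_t^\epsilon),
\]
so that $\check{X}_0^\epsilon=0$ is independent of $\epsilon$, and hence the initial--value sensitivity term in Eq.\eqref{eqn:decompo} is absent. An application of the It\^o formula yields unit diffusion $d\check{X}_t^\epsilon=\gamma_\epsilon(\check{X}_t^\epsilon,t;T)\,dt+dB_t^\epsilon$ with a smooth drift $\gamma_\epsilon$; the Feller condition guarantees that the state $\check{X}^\epsilon$ stays away from its lower boundary $-2\sqrt{\chi}/(\sigma+\epsilon)$ almost surely.

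With this reduction, the perturbation is now only in the drift of the underlying factor and in the functionals $\Phi_\epsilon$ and $F_\epsilon$, so the problem is analogous to that treated in Proposition \ref{prop:Hes_k}. I would check the three conditions of Theorem \ref{thm:vega_vari} as follows. Condition (ii), concerning the functional sensitivity $\frac{\partial}{\partial\eta}\tilde w_{\eta,0}$, is verified by mimicking the verification of condition (ii) in Theorem \ref{thm:total_chain} done in the proof of Proposition \ref{prop:Hes_k}: write $\Phi_\eta(\check x)=e^{-B(\sigma+\eta)\bigl(\frac{\sigma+\eta}{2}\check x+\sqrt{\chi}\bigr)^2}$ and $F_\eta$ explicitly in terms of $\check x$, use Proposition \ref{prop:condi_2} with a dominating envelope $G_T$ controlled by the moment bounds $\mathbb{E}^\mathbb{Q}[U_s^n]<\infty$ and $\mathbb{E}^\mathbb{Q}[L_s^{-n}]<\infty$ from the CIR comparison Lemma \ref{lem:U_Hes}, and conclude that the resulting expectation remains bounded while being divided by $T$. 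Condition (iii), concerning the drift sensitivity $\frac{\partial}{\partial\epsilon}\tilde w_{0,\epsilon}$, reduces to verifying the four hypotheses of Theorem \ref{thm:rho}: here $\hat g(\check x,t;T)$ is a smooth function of $\check x$ growing at most linearly in $\sqrt{X_t}=\frac{\sigma}{2}\check X_t+\sqrt{\chi}$ and $1/\sqrt{X_t}$, hence controlled by $c(\sqrt{X_t}+1/\sqrt{X_t})$, and the exponential and polynomial integrability of these quantities under $\mathbb{Q}$ has already been established (via Lemma \ref{lem:U_Hes}, Proposition D.2 of \cite{park2015sensitivity} and the moment bounds for CIR processes) exactly as in the proof of Proposition \ref{prop:Hes_k}. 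The uniform bound on $\hat\Gamma_u(T)$ follows as in Lemma \ref{lem:large_mean_rev_rate_Heston} and Eq.\eqref{eqn:Hes_hat_phi_bound}, after choosing the interval $I$ small enough that $B(\sigma+\epsilon)$ stays in a range admitting a valid $u>1$ in the CIR MGF.

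The main obstacle will be condition (iii) of Theorem \ref{thm:vega_vari}, specifically the uniform control of $\hat g$ in the Lamperti coordinate, because differentiating $\kappa_\epsilon$ in $\epsilon$ introduces a $\partial_\epsilon B(\sigma+\epsilon)$ factor multiplied by $X_t^\epsilon$, and the composition with $\ell_\epsilon^{-1}$ mixes powers of $\sqrt{X_t}$ with powers of $1/\sqrt{X_t}$ coming from $\sigma_\epsilon'(\ell_\epsilon^{-1}(\cdot))/2$ in $\gamma_\epsilon$. These $1/\sqrt{X_t}$ singularities must be absorbed by the Feller condition $2k\overline{m}>\sigma^2$, using the uniform bound $\sup_{0\leq s\leq T}\mathbb{E}^\mathbb{Q}[L_s^{-n}]<\infty$ and the exponential bound $\mathbb{E}^\mathbb{Q}[e^{\epsilon_0\int_0^T L_s^{-1}\,ds}]<\infty$ from Proposition D.2 of \cite{park2015sensitivity}. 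By shrinking $I$ further if needed so that $2k\overline{m}>(\sigma+\epsilon)^2$ uniformly on $\overline{I}$, these estimates remain valid, and then the same contraction--by--$1/T$ argument used to derive Eq.\eqref{eqn:KO_drfit_sen}--\eqref{KO_delta_condi_4} in the $k$--case yields the required vanishing. Combining the three pieces via Eq.\eqref{eqn:decompo} and Theorem \ref{thm:vega_vari} produces the claimed identity $\lim_{T\to\infty}\frac{1}{T}\frac{\partial}{\partial\sigma}\ln v(\chi,T)=-\frac{\partial\lambda}{\partial\sigma}$.
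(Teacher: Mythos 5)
Your overall strategy is the paper's: Lamperti-transform away the $\sigma$-dependence of the diffusion coefficient and then reduce to the functional- and drift-sensitivity machinery already deployed for the $k$-sensitivity in Proposition \ref{prop:Hes_k}. The one place you deviate --- taking the base point $c=\chi$ so that $\check X_0^\epsilon=0$ and the initial-value term in Eq.\eqref{eqn:decompo} drops out --- is where the argument breaks. With $c=\chi$ you have $\ell_\epsilon^{-1}(\check x)=\bigl(\tfrac{(\sigma+\epsilon)\check x}{2}+\sqrt{\chi}\bigr)^2$, and differentiating the Lamperti drift
\[
\gamma_\epsilon(\check x,t;T)=\frac{k\overline{m}}{(\sigma+\epsilon)\sqrt{x_\epsilon}}-\frac{A_\epsilon(t;T)\sqrt{x_\epsilon}}{\sigma+\epsilon}-\frac{\sigma+\epsilon}{4\sqrt{x_\epsilon}},\qquad x_\epsilon:=\ell_\epsilon^{-1}(\check x),
\]
in $\epsilon$ at fixed $\check x$ produces (via $\partial_\epsilon\sqrt{x_\epsilon}=\tfrac{\check x}{2}=\tfrac{\sqrt{x_\epsilon}-\sqrt{\chi}}{\sigma+\epsilon}$) terms proportional to $\sqrt{\chi}/x_\epsilon$, e.g. $\partial_\epsilon\bigl[k\overline{m}/((\sigma+\epsilon)\sqrt{x_\epsilon})\bigr]=-\tfrac{k\overline{m}}{(\sigma+\epsilon)^2}\bigl(\tfrac{2}{\sqrt{x_\epsilon}}-\tfrac{\sqrt{\chi}}{x_\epsilon}\bigr)$. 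So $\hat g$ behaves like $1/x$ (not $1/\sqrt{x}$) near the origin, and $\hat g^2$ like $1/x^2$. Condition (i) of Theorem \ref{thm:rho} then demands finiteness of $\mathbb{E}^{\mathbb{Q}}\bigl[e^{\epsilon_0\int_0^T X_s^{-2}\,ds}\bigr]$, which is not supplied by Proposition D.2 of the cited reference (that only controls $e^{\epsilon_0\int_0^T X_s^{-1}\,ds}$ under the Feller condition) nor by the CIR comparison Lemma \ref{lem:U_Hes}; indeed even $\mathbb{E}[X_s^{-2}]$ already requires $2k\overline{m}>2\sigma^2$, well beyond the standing Feller condition. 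Your stated envelope $\hat g(\check x,t;T)\leq c(\sqrt{X_t}+1/\sqrt{X_t})$ is therefore false for your choice of $c$, and the integrability verifications you import from Proposition \ref{prop:Hes_k} do not apply.

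The paper sidesteps this precisely by taking $c=0$, so that $\ell^{-1}(\check x)=(\sigma\check x/2)^2$ is a pure square, $\partial_\epsilon\sqrt{x_\epsilon}=\sqrt{x_\epsilon}/(\sigma+\epsilon)$, and the $\epsilon$-derivative of the drift only produces $\sqrt{x}$ and $1/\sqrt{x}$ terms --- exactly the growth handled in Proposition \ref{prop:Hes_k}. The price is that $\check X_0=\tfrac{2}{\sigma}\sqrt{\chi}$ is perturbed, so the chain rule leaves the extra term $-\tfrac{2\sqrt{\chi}}{\sigma^2}\,\partial_{\check x}\tilde w$; but this is harmless, since the initial-factor sensitivity result (the convergence of the derivative in Eq.\eqref{eqn:Hes_delta_go_zero}) shows that term is bounded in $T$ and hence vanishes after division by $T$. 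Replacing your choice $c=\chi$ by $c=0$ and adding this one observation repairs the proof and brings it in line with the paper's.
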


\begin{proof}
	We analyze the expectation term $\mathbb{E}^{\mathbb{Q}}[\frac{1}{\phi({X}_T)}\, e^{\int_0^Tf(X_s,s;T)\,ds}]$ appearing in the decomposition 
	\[
			v(\chi,T)=e^{-\lambda T}\phi(\chi)\mathbb{E}^{\mathbb{Q}}\Bigl[\frac{1}{\phi({X}_T)}\, e^{\int_0^Tf(X_s,s;T)\,ds}\Bigr]
		\]
	by using the method explained in Section \ref{sec:vega}. To apply Theorem \ref{thm:vega_vari}, consider the Lamperti transform
	\[
			\ell(x):=\int_{0}^x \frac{1}{\sigma\sqrt{y}}\,dy=\frac{2}{\sigma}\sqrt{x}.
		\]
	The process $\check{X}$ defined by $\check{X}_t:=\ell(X_t)=\frac{2}{\sigma}\sqrt{X_t},t\ge0,$ satisfies
	\[
			d\check{X}_t = \gamma(\check{X}_t,t;T)\,dt+dB_t, \qquad \check{X}_0=\frac{2}{\sigma}\sqrt{\chi},
		\]
	where the drift function is
	\[
			\gamma(\check{x},t;T) :=-\frac{1}{2}\Bigl(k+\frac{q\mu\rho\sigma}{\varsigma}+\sigma^2B+\frac{q(1-\rho^2)\sigma^2}{1-q}\beta(T-t)\Bigr)\check{x}+\Bigl(\frac{2k\overline{m}}{\sigma^2}-\frac{1}{2}\Bigr) \frac{1}{\check{x}}.
		\]
	Define
	\begin{align*}
			\Phi(\check{x};\sigma) & := e^{-\frac{1}{4}\sigma^2B\check{x}^2},\\
			F(\check{x},t;T;\sigma) & := -\frac{q(1-\rho^2)\sigma^4\check{x}^2}{8(1-q)}\bigl(B-\beta(T-t)\bigr)^2,
		\end{align*}
	and
	\[
			\tilde{w}(\check{x},T;\sigma) :=\mathbb{E}^{\mathbb{Q}}\Bigl[\frac{1}{\Phi(\check{X}_T;\sigma)}\, e^{\int_0^TF(\check{X}_s,s;T;\sigma)\,ds} \, \Big\vert \, \check{X}_0=\check{x}\Bigr]
		\]
	so that 
	\[
			\mathbb{E}^{\mathbb{Q}}\Bigl[\frac{1}{\phi({X}_T)}\, e^{\int_0^Tf(X_s,s;T)\,ds} \, \Big\vert \, X_0=\chi\Bigr] = \tilde{w}\Bigl(\frac{2}{\sigma}\sqrt{\chi},T;\sigma\Bigr).
		\]

	We want to analyze the large time behavior of $\frac{\partial}{\partial\sigma}\tilde{w}(\frac{2}{\sigma}\sqrt{\chi},T;\sigma).$ The perturbation parameter $\sigma$ appears only  in the functionals $\Phi$ and  $F$ as well as in the drift term and the initial value of $\check{X}$, but not in the volatility term of $\check{X}.$ It is easy to check that the map $(\check{x},\sigma)\mapsto \tilde{w}(\check{x},T;\sigma)$ is continuously differentiable by using Eq.\eqref{eqn:alpha_beta_Hes}. Using the chain rule, we have
	\[
			\frac{\partial}{\partial\sigma}\tilde{w}\Bigl(\frac{2}{\sigma}\sqrt{\chi},T;\sigma\Bigr) = -\frac{2\sqrt{\chi}}{\sigma^2} \bigl(\frac{\partial\tilde{w}}{\partial\check{x}}\bigr) \Bigl(\frac{2}{\sigma}\sqrt{\chi},T;\sigma\Bigr) +\bigl(\frac{\partial\tilde{w}}{\partial\sigma}\bigr)\Bigl(\frac{2}{\sigma}\sqrt{\chi},T;\sigma\Bigr).
		\]
	In the first derivative on the right hand, observe that
	\[
			\lim_{T\to\infty}\frac{1}{T} \bigl(\frac{\partial\tilde{w}}{\partial\check{x}}\bigr)\Bigl(\frac{2}{\sigma}\sqrt{\chi},T;\sigma\Bigr)=0
		\]
	because we already proved that the derivative in Eq.\eqref{eqn:Hes_delta_go_zero} is convergent as $T\to\infty.$ For the second derivative, one can show that
	\[
			\lim_{T\to\infty}\frac{1}{T}\bigl(\frac{\partial\tilde{w}}{\partial\sigma}\bigr)\Bigl(\frac{2}{\sigma}\sqrt{\chi},T;\sigma\Bigr) = 0
		\]
	by the same method as in Proposition \ref{prop:Hes_k} because the $\sigma$-perturbation happens only in the drift term in $\check{X}.$ This gives the desired result.
\end{proof}

\end{document}